\DeclareMathOperator*{\argmax}{arg\,max}
\DeclareMathOperator*{\argmaxe}{arg\,max-e}
\DeclarePairedDelimiterX{\infdivx}[2]{(}{)}{%
  #1\;\delimsize\|\;#2%
}
\newcommand{\KL}{D_{\mathrm{KL}}\infdivx}
\newtheorem{lemma}{Lemma}
\newtheorem{theorem}{Theorem}
\newtheorem{proposition}{Proposition}
\newtheorem{definition}{Definition}
\newtheorem{assumption}{Assumption}
\newenvironment{deferredproof}[1][\proofname]{%
  \begin{proof}[#1]%
}{%
  \end{proof}%
}
\newenvironment{subproof}[1][\proofname]{%
  \begin{proof}[#1]%
}{%
  \end{proof}%
}
\begin{document}

%
\runningtitle{Approximately Solving Mean Field Games via Entropy-Regularized Deep Reinforcement Learning}

%

\twocolumn[

\aistatstitle{Approximately Solving Mean Field Games via \\Entropy-Regularized Deep Reinforcement Learning}

\aistatsauthor{ Kai~Cui \And Heinz~Koeppl }

\aistatsaddress{ Technische~Universität~Darmstadt\\
  \texttt{kai.cui@bcs.tu-darmstadt.de} \And Technische~Universität~Darmstadt\\
  \texttt{heinz.koeppl@bcs.tu-darmstadt.de} } ]

\begin{abstract}
  The recent mean field game (MFG) formalism facilitates otherwise intractable computation of approximate Nash equilibria in many-agent settings. In this paper, we consider discrete-time finite MFGs subject to finite-horizon objectives. We show that all discrete-time finite MFGs with non-constant fixed point operators fail to be contractive as typically assumed in existing MFG literature, barring convergence via fixed point iteration. Instead, we incorporate entropy-regularization and Boltzmann policies into the fixed point iteration. As a result, we obtain provable convergence to approximate fixed points where existing methods fail, and reach the original goal of approximate Nash equilibria. All proposed methods are evaluated with respect to their exploitability, on both instructive examples with tractable exact solutions and high-dimensional problems where exact methods become intractable. In high-dimensional scenarios, we apply established deep reinforcement learning methods and empirically combine fictitious play with our approximations.
\end{abstract}

\section{Introduction}

The framework of mean field games (MFG) was introduced independently by the seminal works of \citet{huang2006large} and \citet{lasry2007mean} in the fully continuous setting of stochastic differential games. In the meantime, it has sparked great interest and investigation both in the mathematical community, where interests lie in the theoretical properties of MFGs, and in the applied research communities as a framework for solving and analyzing large-scale multi-agent problems. 

At its core lies the idea of reducing the classical, intractable multi-agent solution concept of Nash equilibria to the interaction between a representative agent and the `mass' of infinitely many other agents -- the so-called mean field. The solution to this limiting problem is the so-called mean field equilibrium (MFE), characterized by a forward evolution equation for the agent's state distributions, and a backward optimality equation of representative agent optimality. Importantly, the MFE constitutes an approximate Nash equilibrium in the corresponding finite agent game of sufficiently many agents (\citet{huang2006large}), which would otherwise be intractable to compute (\citet{daskalakis2009complexity}).

Nonetheless, computing an MFE remains difficult in the general case. Standard assumptions in existing literature are MFE uniqueness and operator contractivity (\citet{huang2006large}, \citet{anahtarci2020value}, \citet{guo2019learning}) to obtain convergence via simple fixed point iteration. While these assumptions hold true for some games, we address the case where such restrictive assumptions fail. Applications for such mean field models are manifold and include e.g. finance (\citet{gueant2011mean}), power control (\citet{kizilkale2016collective}), wireless communication (\citet{aziz2016mean}) or public health models (\citet{laguzet2015individual}).

\paragraph{A motivating example.}
Consider the following trivial situation informally: Let a large number of agents choose simultaneously between going left ($L$) or right ($R$). Afterwards, each agent shall be punished proportional to the number of agents that chose the same action. If we had infinitely many independent, identically acting agents, the only stable solution would be to have all agents pick uniformly at random.

The MFG formalism models this problem by picking one representative agent and abstracting all other agents into their state distribution. Unfortunately, analytically obtaining fixed points in general proves difficult and existing computational methods can fail.

\paragraph{Our contribution.}
We begin by formulating the mean field analogue to finite games in game theory. In this setting we give simplified proofs for both existence and the approximate Nash equilibrium property of mean field equilibria. Moreover, we show that in finite MFGs, all non-constant fixed point operators are non-contractive, necessitating a different approach than naive fixed point iteration as in \citet{anahtarci2020value}. 

Consequently, we approximate the fixed point operator by introducing relative entropy regularization and Boltzmann policies. We prove guaranteed convergence for sufficiently high temperatures, while remaining arbitrarily exact for sufficiently low temperatures. Furthermore, repeatedly iterating on the prior policy allows us to perform an iterative descent on exploitability, successively improving the equilibrium approximation.

Finally, our methods are extensively evaluated and compared to other methods such as fictitious play (FP, see \citet{perrin2020fictitious}), which in general fail to converge to a fixed point. We outperform existing state-of-the-art methods in terms of exploitability in our problems, allowing us to find approximate mean field equilibria in the general case and paving the way to practical application of mean field games. In otherwise intractable problems, we apply deep reinforcement learning techniques together with particle-based simulations.

\section{Finite mean field games}

\subsection{Finite agent games}
Consider a discrete-time $N$-agent stochastic game with finite agent state space $\mathcal S$ and finite agent action space $\mathcal A$, equipped with the discrete metric. Let $\mathcal T = \{0, 1, \ldots, T - 1\}$ denote the time index set. Denote by $\mathcal P(\mathcal X)$ the set of all Borel probability measures on a metric space $\mathcal X$. Since we work with finite spaces, we abuse notation and denote both a measure $\nu$ and its probability mass function by $\nu(\cdot)$. For each agent, the dynamical behavior is described by the state transition function $p: \mathcal S \times \mathcal S \times \mathcal A \times \mathcal P (\mathcal S) \to [0, 1]$ and the initial state distribution $\mu_0: \mathcal S \to [0, 1]$. For agents $i = 1, \ldots, N$ at times $t \in \mathcal T$, their states $S_{t}^i$ and actions $A_{t}^i$ are random variables with values in $\mathcal S$ and $\mathcal A$ respectively. Let $\mathbb G^N_s \equiv \frac{1}{N} \sum_{i = 1}^N \delta_{s_i}$ denote the empirical measure of agent states $s = (s_1, \ldots, s_N) \in \mathcal S^N$, where $\delta$ is the Dirac measure. Consider for each agent $i$ a Markov policy $\pi^i = (\pi^i_t)_{t\in\mathcal T} \in \Pi$, where $\pi^i_t: \mathcal A \times \mathcal S \to [0, 1]$ and $\Pi$ is the space of all Markov policies. The state evolution of agent $i$ begins with $S_{0}^i \sim \mu_0$ and subsequently for all applicable times $t$ follows
\begin{align*}
    \mathbb P(A_{t}^i = a \mid S_{t}^i = s_i) &\equiv \pi^i_t(a \mid s_i) \, , \\
    \mathbb P(S_{t+1}^i = s_i' \mid S_{t} = s, A_{t}^i = a) &\equiv p(s_i' \mid s_i, a, \mathbb G^N_s) \, ,
\end{align*}
for arbitrary $s_i, s_i' \in \mathcal S$, $a \in \mathcal A$, $s = (s_1, \ldots, s_N) \in \mathcal S^N$ and $S_{t} = (S_{t}^1, \ldots, S_{t}^N)$. Finally, define agent $i$'s finite horizon objective function 
\begin{align*}
    J_i^N(\pi^1, \ldots, \pi^N) \equiv \mathbb E \left[ \sum_{t=0}^{T-1} r(S_{t}^i, A_{t}^i, \mathbb G^N_{S_t}) \right]
\end{align*} to be maximized, where $r: \mathcal S \times \mathcal A \times \mathcal P (\mathcal S) \to \mathbb R$ is the agent reward function. With this, we can give the notion of optimality used by \citet{saldi2018markov}.

\begin{definition}
A Markov-Nash equilibrium is a $0$-Markov-Nash equilibrium. For $\varepsilon \geq 0$, an $\varepsilon$-Markov-Nash equilibrium (approximate Markov-Nash equilibrium) is defined as a tuple of policies $({\pi^1}, \ldots, {\pi^N}) \in \Pi^N$ such that for any $i = 1, \ldots, N$, we have
\begin{align*}
    &J_i^N({\pi^1}, \ldots, {\pi^N}) \geq \\
    &\quad \max_{\pi \in \Pi} J_i^N({\pi^1}, \ldots, {\pi^{i-1}}, \pi, {\pi^{i+1}}, \ldots, {\pi^N}) - \varepsilon \, .
\end{align*}
\end{definition}

Since analyzing policies acting on joint state information or the state history is difficult, optimality has been restricted to the set of Markov policies $\Pi$ acting on the agent's own state. Although this may seem like a significant restriction, in the $N \to \infty$ limit, the evolution of all other agents -- the mean field -- becomes deterministic and therefore non-informative.

\subsection{Mean field games}
The $N \to \infty$ limit of the $N$-agent game constitutes its corresponding finite mean field game (i.e. with a finite state and action space). It consists of the same elements $\mathcal T, \mathcal S, \mathcal A, p, r, \mu_0$. However, instead of modeling $N$ separate agents, it models a single representative agent and collapses all other agents into their common state distribution, i.e. the mean field $\mu = ( \mu_t )_{t \in \mathcal T} \in \mathcal M$ with $\mu_t: \mathcal S \to [0, 1]$, where $\mathcal M$ is the space of all mean fields and $\mu_0$ is given. The deterministic mean field $\mu$ replaces the empirical measure of the finite game. Consider a Markov policy $\pi \in \Pi$ as before. For some fixed mean field $\mu$, the evolution of random states $S_t$ and actions $A_t$ begins with $S_0 \sim \mu_0$ and subsequently for all applicable times $t$ follows
\begin{align*} 
    \mathbb P(A_{t} = a \mid S_{t} = s) &\equiv \pi_t(a \mid s), \\
    \mathbb P(S_{t+1} = s' \mid S_{t} = s, A_{t} = a) &\equiv p(s' \mid s, a, \mu_t) \, ,
\end{align*}
and the objective analogously becomes
\begin{align*}
    J^\mu(\pi) \equiv \mathbb E \left[ \sum_{t=0}^{T-1} r(S_t, A_t, \mu_t) \right] \, .
\end{align*}
The mean field $\mu$ induced by some fixed policy $\pi$ begins with the given $\mu_0$ and is defined recursively by
\begin{align*}
    \mu_{t+1}(s') \equiv \sum_{s \in \mathcal S} \mu_t(s) \sum_{a \in \mathcal A} \pi_t(a \mid s) p(s' \mid s, a, \mu_t) \, .
\end{align*}

By fixing a mean field $\mu \in \mathcal M$, we obtain an induced Markov Decision Process (MDP) with time-dependent transition function $p(s' \mid s, a, \mu_t)$ and reward function $r(s, a, \mu_t)$. Denote the set-valued map from mean field to optimal policies $\pi$ of the induced MDP as $\hat \Phi: \mathcal M \to 2^{\Pi}$ (i.e. such that $\pi$ is optimal at any time and state). Analogously, define the map from a policy to its induced mean field as $\Psi: \Pi \to \mathcal M$. Finally, we can define the $N \to \infty$ analogue to Markov-Nash equilibria.

\begin{definition}
A mean field equilibrium (MFE) is a pair $(\pi, \mu) \in \Pi \times \mathcal M$ such that $\pi \in \hat \Phi(\mu)$ and $\mu = \Psi(\pi)$ holds.
\end{definition}

By defining any single-valued map $\Phi: \mathcal M \to \Pi$ to an optimal policy, we obtain a composition $\Gamma = \Psi \circ \Phi: \mathcal M \to \mathcal M$, henceforth MFE operator. Shown by \citet{saldi2018markov} for general Polish $\mathcal S$ and $\mathcal A$, the MFE exists and constitutes an approximate Markov-Nash equilibrium for sufficiently many agents under technical conditions. In the Appendix, we give simplified proofs for finite MFGs under the following standard assumption.

\begin{assumption} \label{assumption}
The functions $r(s,a,\mu_t)$ and $p(s' \mid s,a,\mu_t)$ are continuous, therefore bounded. 
\end{assumption}

Note that we metrize probability measure spaces $\mathcal P(\mathcal X)$ with the total variation distance $d_{TV}$. For probability measures $\nu, \nu'$ on finite spaces $\mathcal X$, $d_{TV}$ simplifies to
\begin{align*}
    d_{TV}(\nu, \nu') = \frac{1}{2} \sum_{x \in \mathcal X} | \nu(x) - \nu'(x) | \, .
\end{align*}
Accordingly, we equip $\Pi, \mathcal M$ with sup metrics, i.e. for policies $\pi, \pi' \in \Pi$ and mean fields $\mu, \mu' \in \mathcal M$ we define the metric spaces $(\Pi, d_{\Pi})$ and $(\mathcal M, d_{\mathcal M})$ with
\begin{align*}
    d_{\Pi}(\pi, \pi') &\equiv \max_{t \in \mathcal T} \max_{s \in \mathcal S} d_{TV}(\pi_t(\cdot \mid s), \pi'_t(\cdot \mid s)) \, , \\
    d_{\mathcal M}(\mu, \mu') &\equiv \max_{t \in \mathcal T} d_{TV}(\mu_t, \mu'_t) \, .
\end{align*}

\begin{proposition} \label{prop:kakutani}
Under Assumption~\ref{assumption}, there exists at least one MFE $(\pi^*, \mu^*) \in \Pi \times \mathcal M$.
\end{proposition}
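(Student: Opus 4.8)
The plan is to recast the existence of an MFE as a fixed-point problem on a compact convex set and invoke Kakutani's fixed-point theorem. Since $\mathcal S$, $\mathcal A$ and $\mathcal T$ are finite, both $\Pi$ and $\mathcal M$ embed as products of probability simplices in finite-dimensional Euclidean space: $\Pi \cong \prod_{t\in\mathcal T}\prod_{s\in\mathcal S}\mathcal P(\mathcal A)$ and, with $\mu_0$ fixed, $\mathcal M \cong \{\mu_0\}\times\prod_{t=1}^{T-1}\mathcal P(\mathcal S)$, each of which is nonempty, compact and convex. I would then define a single set-valued map on the product space, $F:\Pi\times\mathcal M\to 2^{\Pi\times\mathcal M}$ by $F(\pi,\mu) = \hat\Phi(\mu)\times\{\Psi(\pi)\}$, and observe that $(\pi^*,\mu^*)$ is a fixed point of $F$ exactly when $\pi^*\in\hat\Phi(\mu^*)$ and $\mu^*=\Psi(\pi^*)$, i.e. precisely when $(\pi^*,\mu^*)$ is an MFE. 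Working on the product space rather than with the composition $\Gamma=\Psi\circ\Phi$ on $\mathcal M$ is deliberate: because $\Psi$ is nonlinear in $\pi$, the image $\Psi(\hat\Phi(\mu))$ need not be convex, whereas the second coordinate of $F$ is the singleton $\{\Psi(\pi)\}$, so convexity is retained.

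Next I would verify Kakutani's hypotheses for $F$. For nonemptiness of values, each fixed $\mu$ yields an induced finite-horizon, finite MDP, which admits an optimal Markov policy by backward induction, so $\hat\Phi(\mu)\neq\emptyset$, while $\{\Psi(\pi)\}$ is trivially nonempty. For convexity of values, $\{\Psi(\pi)\}$ is a singleton, and $\hat\Phi(\mu)$ is convex because, writing $Q^\mu_t$ for the optimal state-action value function of the induced MDP, a policy lies in $\hat\Phi(\mu)$ iff for every $(t,s)$ it is supported on the argmax set $\{a: Q^\mu_t(s,a)=\max_{a'}Q^\mu_t(s,a')\}$; the set of such action distributions is a face of $\mathcal P(\mathcal A)$, hence convex, and $\hat\Phi(\mu)$ is the product of these faces over the finitely many $(t,s)$.

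The hard part will be the closed-graph (upper hemicontinuity) condition, which is where Assumption~\ref{assumption} enters. For $\Psi$ I would argue continuity directly: the forward recursion for $\mu_{t+1}$ is a finite sum of products of $\mu_t$, $\pi_t$ and $p(\,\cdot\mid s,a,\mu_t)$, so each step is continuous in $\pi$ and in the previously computed $\mu_t$, and continuity of $\Psi$ follows by composition over the finitely many time steps. For $\hat\Phi$ I would first establish that the optimal value functions $Q^\mu_t$ depend continuously on $\mu$: by backward induction each Bellman step composes the continuous $r$ and $p$ of Assumption~\ref{assumption} with a maximum over the finite set $\mathcal A$ and an expectation over the finite $\mathcal S$, all of which preserve continuity. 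Given continuity of $Q^\mu_t$ in $\mu$, Berge's maximum theorem—equivalently, a direct verification that $\mu_n\to\mu$, $\pi_n\to\pi$ with $\pi_n\in\hat\Phi(\mu_n)$ force $\pi$ to be supported on the argmax actions of $Q^\mu_t$—yields that $\hat\Phi$ is upper hemicontinuous with closed graph. Since $\Pi\times\mathcal M$ is compact, the closed graph of $F$ follows, and Kakutani's theorem delivers a fixed point $(\pi^*,\mu^*)$, which is the desired MFE.
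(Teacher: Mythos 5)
Your proof is correct, and it rests on the same core machinery as the paper's: Kakutani's fixed-point theorem on (products of) finite-dimensional simplices, the characterization of optimal policies as exactly those supported on the argmax sets of the optimal action-value function (which renders the optimal-policy set a product of faces of $\mathcal P(\mathcal A)$, hence convex), and continuity of $\Psi$ and of $\mu \mapsto Q^*(\mu,t,s,a)$ under Assumption~\ref{assumption}. The genuine difference is in the decomposition: the paper applies Kakutani on the policy space alone, to the composed correspondence $\hat\Gamma(\pi) = \hat\Phi(\Psi(\pi))$ on $\Pi$ --- because the single-valued continuous map $\Psi$ is applied \emph{first} and the set-valued optimal-policy map \emph{last}, each value $\hat\Gamma(\pi)$ is itself a set of optimal policies for a fixed induced MDP and is therefore convex, and a fixed point $\pi^*$ yields the MFE $(\pi^*, \Psi(\pi^*))$. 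You instead work on the product $\Pi \times \mathcal M$ with the decoupled map $F(\pi,\mu) = \hat\Phi(\mu) \times \{\Psi(\pi)\}$. The two formulations are interchangeable here: yours makes ``fixed point $=$ MFE'' hold by definition and keeps the convexity and continuity checks coordinatewise, while the paper's keeps the domain smaller and reduces the closed-graph step to the single continuity statement for $\pi \mapsto Q^*(\Psi(\pi),t,s,a)$, which it verifies by hand via an explicit action-gap contradiction argument (your appeal to Berge's maximum theorem is a legitimate shortcut for the same step). Notably, your side remark explaining why one should not iterate on $\mathcal M$ via $\Psi(\hat\Phi(\mu))$ --- convexity of values would be lost since $\Psi$ is nonlinear in $\pi$ --- identifies precisely the obstruction that the paper's ordering $\hat\Phi \circ \Psi$ is designed to avoid.
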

\begin{deferredproof}
See Appendix.
\end{deferredproof}

\begin{theorem} \label{th:epsopt}
Under Assumption~\ref{assumption}, if $(\pi^*, \mu^*)$ is an MFE, then for any $\varepsilon > 0$ there exists $N' \in \mathbb N$ such that for all $N > N'$, the policy $(\pi^*, \ldots, \pi^*)$ is an $\varepsilon$-Markov-Nash equilibrium in the $N$-agent game.
\end{theorem}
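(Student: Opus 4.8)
The plan is to exploit the propagation-of-chaos phenomenon: when all agents use $\pi^*$, the empirical state distribution $\mathbb G^N_{S_t}$ concentrates on the mean field $\mu^*_t$, so the $N$-agent objective converges to the mean field objective $J^{\mu^*}$, and the optimality built into the MFE then yields the approximate Nash guarantee. Concretely, I would first establish a \emph{uniform law of large numbers} for the empirical measure: fix an agent $i$, let the other $N-1$ agents follow $\pi^*$, and let agent $i$ follow an arbitrary deviation $\pi \in \Pi$. I claim $\mathbb E[d_{TV}(\mathbb G^N_{S_t}, \mu^*_t)] \to 0$ as $N \to \infty$, with a rate that is \emph{independent of the deviation} $\pi$. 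The key observation making uniformity possible is that agent $i$'s only channel of influence on the remaining population is through the shared argument $\mathbb G^N_{S_t}$ of the transition kernel, and a single coordinate perturbs the empirical measure by at most $1/N$ in total variation.

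This concentration is proved by induction on $t \in \mathcal T$. The base case $t = 0$ follows from the standard LLN for empirical measures of i.i.d.\ samples on a finite space, giving $\mathbb E[d_{TV}(\mathbb G^N_{S_0}, \mu_0)] = O(N^{-1/2})$. For the inductive step I would decompose the error at time $t+1$ into a \emph{fluctuation} term—the deviation of the conditionally independent transitions of the $N-1$ conforming agents about their conditional mean, again vanishing by a finite-space LLN—and a \emph{bias} term arising because each conforming agent transitions under $p(\cdot \mid s, a, \mathbb G^N_{S_t})$ rather than $p(\cdot \mid s, a, \mu^*_t)$. The bias is controlled by Assumption~\ref{assumption}: since $p$ is continuous on the compact domain $\mathcal S \times \mathcal S \times \mathcal A \times \mathcal P(\mathcal S)$, it is uniformly continuous, so the inductive hypothesis $d_{TV}(\mathbb G^N_{S_t}, \mu^*_t) \to 0$ forces the kernels to agree in the limit. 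The deviating agent's $O(1/N)$ contribution to $\mathbb G^N_{S_t}$ keeps the whole estimate uniform in $\pi$.

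With uniform concentration in hand, the second step transfers it to the objective. Because $r$ is bounded and uniformly continuous in its measure argument, a telescoping argument over the horizon shows both
\begin{align*}
    \bigl| J_i^N(\pi^*, \ldots, \pi^*) - J^{\mu^*}(\pi^*) \bigr| \to 0
\end{align*}
and, uniformly over deviations placed in the $i$-th slot,
\begin{align*}
    \sup_{\pi \in \Pi} \bigl| J_i^N(\pi^*, \ldots, \pi, \ldots, \pi^*) - J^{\mu^*}(\pi) \bigr| \to 0 \, .
\end{align*}
Choose $N'$ so that both quantities fall below $\varepsilon/2$ for all $N > N'$. Finally I invoke the MFE property: since $\pi^* \in \hat\Phi(\mu^*)$, the policy $\pi^*$ is optimal in the MDP induced by $\mu^*$, i.e.\ $J^{\mu^*}(\pi^*) \geq J^{\mu^*}(\pi)$ for all $\pi \in \Pi$. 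Chaining the two approximations with this inequality gives, for any deviation $\pi$ and any $N > N'$,
\begin{align*}
    J_i^N(\pi^*, \ldots, \pi^*) &\geq J^{\mu^*}(\pi^*) - \tfrac{\varepsilon}{2} \geq J^{\mu^*}(\pi) - \tfrac{\varepsilon}{2} \\
    &\geq J_i^N(\pi^*, \ldots, \pi, \ldots, \pi^*) - \varepsilon \, ,
\end{align*}
which is exactly the $\varepsilon$-Markov-Nash condition. I expect the main obstacle to be the \emph{uniformity} of the concentration over the deviating agent's policy: one must carefully verify that agent $i$ enters the dynamics of the remaining population only through the empirical measure, whose sensitivity to a single coordinate is $O(1/N)$, so that the limiting mean field is genuinely deviation-independent. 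Once this is isolated, continuity and boundedness from Assumption~\ref{assumption} do the rest.
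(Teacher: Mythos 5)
Your proposal is correct and takes essentially the same route as the paper's proof (a simplified version of \citet{saldi2018markov}): propagation of chaos for the empirical measure established by induction over time with a fluctuation/bias decomposition (the deviating agent entering only through its $O(1/N)$ weight in $\mathbb G^N_{S_t}$), transfer to the objectives via boundedness and continuity of $r$ and $p$ with a telescoping over the finite horizon, and finally the MFE optimality $J^{\mu^*}(\pi^*) \geq J^{\mu^*}(\pi)$ to chain the two approximations. The only presentational difference is that you assert a law of large numbers uniform over the deviating policy $\pi$, whereas the paper encodes the same uniformity by proving convergence along arbitrary sequences of deviating policies $\{\pi^N\}$ together with equicontinuous families of test functions; the two devices are equivalent.
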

\begin{deferredproof}
See Appendix.
\end{deferredproof}

Importantly, finding Nash equilibria in large-$N$ games is hard (\citet{daskalakis2009complexity}), whereas an MFE can be significantly more tractable to compute. Accordingly, solving the limiting MFG approximately solves the finite-$N$ game for large $N$ in a tractable manner.

\section{Exact fixed point iteration} 

Repeated application of the MFE operator constitutes the exact fixed point iteration approach to finding MFE. The standard assumption for convergence in the literature is contractivity and thereby MFE uniqueness (e.g. \citet{caines2019graphon, guo2019learning}).

\begin{proposition} \label{th:banach}
Let $\Phi, \Psi$ be Lipschitz with constants $c_1, c_2$, fulfilling $c_1 c_2 < 1$. Then, the fixed point iteration $\mu^{n+1} = \Psi(\Phi(\mu^{n}))$ converges to the mean field of the unique MFE for any initial $\mu^0 \in \mathcal M$.
\end{proposition}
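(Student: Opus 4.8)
The plan is to recognize the statement as an instance of the Banach fixed point theorem applied to the MFE operator $\Gamma = \Psi \circ \Phi$. Three ingredients are required: completeness of the underlying metric space, the contraction property of $\Gamma$, and the identification of its unique fixed point with the MFE mean field.

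First I would establish that $(\mathcal M, d_{\mathcal M})$ is a complete metric space. Since $\mathcal S$ is finite, each marginal $\mu_t$ lives in the probability simplex $\mathcal P(\mathcal S)$, a closed and bounded subset of a finite-dimensional Euclidean space, hence complete under $d_{TV}$. As $\mathcal M$ is a finite product (over $t \in \mathcal T$) of such complete spaces equipped with the sup metric $d_{\mathcal M}$, it is itself complete: a Cauchy sequence in $\mathcal M$ is Cauchy coordinatewise, and its limit again satisfies the fixed initial condition $\mu_0$.

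Next I would verify that $\Gamma$ is a contraction. For any $\mu, \mu' \in \mathcal M$, Lipschitz continuity of $\Phi$ and $\Psi$ gives
\begin{align*}
d_{\mathcal M}(\Gamma(\mu), \Gamma(\mu')) &= d_{\mathcal M}(\Psi(\Phi(\mu)), \Psi(\Phi(\mu'))) \\
&\leq c_2 \, d_{\Pi}(\Phi(\mu), \Phi(\mu')) \leq c_1 c_2 \, d_{\mathcal M}(\mu, \mu') \, ,
\end{align*}
so $\Gamma$ is Lipschitz with constant $c_1 c_2 < 1$. The Banach fixed point theorem then yields a unique $\mu^* \in \mathcal M$ with $\Gamma(\mu^*) = \mu^*$, together with convergence $\mu^n = \Gamma^n(\mu^0) \to \mu^*$ for every initial $\mu^0 \in \mathcal M$.

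Finally I would translate this back into the language of MFEs. Setting $\pi^* = \Phi(\mu^*)$, the fixed point equation $\mu^* = \Psi(\Phi(\mu^*))$ reads $\mu^* = \Psi(\pi^*)$ with $\pi^* \in \hat \Phi(\mu^*)$ optimal, so $(\pi^*, \mu^*)$ is indeed an MFE. Conversely, any MFE mean field $\mu$ satisfies $\mu = \Psi(\Phi(\mu)) = \Gamma(\mu)$ and is therefore a fixed point of $\Gamma$; uniqueness from Banach forces $\mu = \mu^*$, giving uniqueness of the MFE mean field. The step needing the most care is precisely this last identification: it relies on $\Phi$ being a genuine single-valued Lipschitz map rather than merely an arbitrary selection from the set-valued $\hat \Phi$, so that the policy attached to any MFE mean field coincides with $\Phi(\mu)$ and no ambiguity from multiple optimal policies survives. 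Verifying completeness and the contraction bound is routine; stating this uniqueness claim cleanly is the only genuinely delicate part.
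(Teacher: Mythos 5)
Your proof is correct and follows essentially the same route as the paper: the identical chain of Lipschitz estimates showing $\Gamma = \Psi \circ \Phi$ is a contraction with constant $c_1 c_2 < 1$, completeness of $(\mathcal M, d_{\mathcal M})$ (which the paper defers to an appendix lemma proved by the same coordinatewise argument you sketch), and an application of Banach's fixed point theorem. Your closing discussion identifying the fixed point with an MFE and flagging that uniqueness of the MFE mean field hinges on the single-valued selection $\Phi$ is a careful addition that the paper's proof leaves implicit.
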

\begin{proof}
Let $\mu, \mu' \in \mathcal M$ arbitrary, then
\begin{align*}
    d_{\mathcal M}(\Gamma(\mu), \Gamma(\mu')) 
    &= d_{\mathcal M}(\Psi(\Phi(\mu)), \Psi(\Phi(\mu'))) \\
    &\leq c_2 \cdot d_{\Pi}(\Phi(\mu), \Phi(\mu')) \\
    &\leq c_2 \cdot c_1 \cdot d_{\mathcal M}(\mu, \mu') \, .
\end{align*}
Since $\mu, \mu'$ are arbitrary, $\Gamma$ is Lipschitz with constant $c_1 \cdot c_2 < 1$. $(\Pi, d_{\Pi})$ and $(\mathcal M, d_{\mathcal M})$ are complete metric spaces (see Appendix). Therefore, Banach's fixed point theorem implies convergence to the unique fixed point for any starting $\mu^0 \in \mathcal M$.
\end{proof}

Unfortunately, it remains unclear how to proceed if multiple optimal policies of an induced MDP exist, or if contractivity fails, e.g. when multiple MFE exist. In the following, consider again the illuminating example from the introduction.

\subsection{Toy example}
Consider $\mathcal S = \{C, L, R\}$, $\mathcal A = \mathcal S \setminus \{C\}$, $\mu_0(C) = 1$, $r(s,a,\mu_t) = - \mathbf 1_{\{L\}}(s) \cdot \mu_t(L) - \mathbf 1_{\{R\}}(s) \cdot \mu_t(R)$ and $\mathcal T = \{0, 1\}$. The transition function allows picking the next state directly, i.e. for all $s,s' \in \mathcal S, a \in \mathcal A$,
\begin{align*}
    \mathbb P(S_{t+1} = s' \mid S_{t} = s, A_t = a) &= \mathbf 1_{\{s'\}}(a) \, .
\end{align*}
Clearly, any MFE $(\pi^*, \mu^*)$ must fulfill $\pi_0^*(L \mid C) = \pi_0^*(R \mid C) = 1/2$, while $\pi_1^*$ can be arbitrary. Even if the operator $\Phi$ chooses suitable optimal policies, the fixed point operator $\Gamma$ remains non-contractive, as the mean field will necessarily alternate between left and right for any non-uniform starting $\mu^0 \in \mathcal M$. 

We observe that the example has infinitely many MFE, but no deterministic MFE, i.e. an MFE such that for all $t \in \mathcal T, s \in \mathcal S, a \in \mathcal A$ either $\pi_t(a \mid s) = 0$ or $\pi_t(a \mid s) = 1$ holds, similar to the classical game-theoretical insight of mixed Nash equilibrium existence (cf. \citet{fudenberg1991game}). Therefore, choosing optimal, deterministic policies will typically fail.

Most existing work assumes contractivity, which is too restrictive. In many scenarios, agents need to "coordinate" with each other. For example, a herd of hunting animals may collectively choose one of multiple hunting grounds, allowing for multiple MFEs. Hence, it can be difficult to apply existing MFG methodologies in practice, as many problems automatically fail contractivity.

\subsection{General non-contractivity}
From the previous example, we may be led to believe that non-contractivity is a general property of finite MFGs. And indeed, regardless of number of MFEs, it turns out that in any finite MFG with non-constant MFE operator, a policy selection operator $\Phi$ with finite image $\Pi_{\Phi}$ will lead to non-contractivity. Note that this includes both the conventional $\argmax$ and the $\argmaxe$ (cf. \citet{guo2019learning}) choice of actions.

\begin{theorem} \label{th:int}
Let the image of $\Phi$ be a finite set $\Pi_{\Phi} \subseteq \Pi$. Then, either it holds that $\Gamma = \Psi \circ \Phi$ is a constant, or $\Gamma$ is not Lipschitz continuous and thus not a contraction.
\end{theorem}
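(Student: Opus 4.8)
The plan is to exploit a mismatch between the topology of the domain and the range of $\Gamma$: the domain $\mathcal M$ is connected, whereas the image of $\Gamma$ is a finite set and is therefore topologically discrete. First I would observe that the image of $\Gamma$ is $\Gamma(\mathcal M) = \Psi(\Phi(\mathcal M)) = \Psi(\Pi_\Phi)$, which is finite since it is the image of the finite set $\Pi_\Phi$ under $\Psi$. If this set is a singleton then $\Gamma$ is constant and we are in the first case; so from here on I assume it contains at least two distinct mean fields.

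Next I would establish that $(\mathcal M, d_{\mathcal M})$ is connected. Since $\mu_0$ is fixed and each later marginal $\mu_t$ ranges over the probability simplex $\mathcal P(\mathcal S)$, the space $\mathcal M$ is a product of simplices and hence convex; any two mean fields $\mu, \mu'$ are joined by the segment $\mu^\lambda \equiv (1-\lambda)\mu + \lambda \mu'$, which lies in $\mathcal M$ and is continuous in $\lambda$ under $d_{\mathcal M}$. Thus $\mathcal M$ is path-connected, and in particular connected.

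I would then invoke the standard fact that a finite subset of a metric space carries the discrete topology: for each point of the finite image, the minimum distance to the remaining finitely many points is strictly positive, so that point is isolated. A continuous map from a connected space into a discrete space is necessarily constant, since otherwise the preimages of two distinct isolated values would partition $\mathcal M$ into two nonempty disjoint open sets, contradicting connectedness. Hence, under the standing assumption that $\Gamma$ is non-constant, $\Gamma$ cannot be continuous. Because Lipschitz continuity implies continuity, $\Gamma$ fails to be Lipschitz, and a fortiori fails to be a contraction.

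The only genuinely delicate ingredients are the two structural facts feeding the topological argument, and I expect the verification of connectedness to be the one worth spelling out, as it rests on $\mathcal M$ being convex under the chosen sup-of-total-variation metric. An alternative, more hands-on route avoids topology entirely: fix $\mu, \mu'$ with $\Gamma(\mu) \neq \Gamma(\mu')$, restrict $\Gamma$ to the segment $\lambda \mapsto \mu^\lambda$, and note that a function taking finitely many values while changing value at least once must jump, producing inputs that are arbitrarily close yet whose images stay a fixed distance apart, which directly violates any Lipschitz bound. I would present the topological version as the main argument since it is shortest, and can mention the segment argument as the concrete witness of the discontinuity.
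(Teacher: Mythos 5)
Your proposal is correct, and it takes a genuinely different route from the paper. The paper's proof is quantitative and self-contained: assuming a Lipschitz constant $C$, it picks $\mu, \mu'$ with $\Gamma(\mu) \neq \Gamma(\mu')$, subdivides the segment $\mu^i = \tfrac{i}{N}\mu + \tfrac{N-i}{N}\mu'$ finely enough that each step has length below $\delta/C$, where $\delta \equiv \min_{\pi,\pi' \in \Pi_\Phi,\, \Psi(\pi)\neq\Psi(\pi')} d_{\mathcal M}(\Psi(\pi),\Psi(\pi')) > 0$ is the minimal gap guaranteed by finiteness of $\Pi_\Phi$, and uses the triangle inequality to find a consecutive pair with distinct images, contradicting the Lipschitz bound — essentially your fallback ``segment jump'' argument made explicit. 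Your main argument instead goes through topology: $\Gamma(\mathcal M) \subseteq \Psi(\Pi_\Phi)$ is finite, hence discrete; $\mathcal M$ is convex (convex combinations of probability vectors, taken componentwise in $t$ and $s$, remain in $\mathcal M$, and the segment is continuous under $d_{\mathcal M}$), hence path-connected; and a continuous map from a connected space into a discrete one is constant. This proves the strictly stronger conclusion that a non-constant $\Gamma$ with finite image is not even continuous, from which non-Lipschitzness and non-contractivity follow a fortiori. Both arguments rest on exactly the same two structural facts — convexity of $\mathcal M$ and the positive minimal gap among the finitely many distinct values of $\Psi$ on $\Pi_\Phi$ — so what your version buys is brevity, generality (any connected domain works, no metric quantification needed), and the sharper discontinuity statement, while the paper's version buys an elementary, fully explicit contradiction that exhibits the scale $\delta/C$ at which any putative Lipschitz bound fails, matching the quantitative style of its other results. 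One cosmetic point: since the theorem hypothesizes that $\Pi_\Phi$ \emph{is} the image of $\Phi$, your equality $\Gamma(\mathcal M) = \Psi(\Pi_\Phi)$ is exact, though mere inclusion would already suffice for finiteness.
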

\begin{deferredproof}
See Appendix.
\end{deferredproof}

Therefore, typical discrete-time finite MFGs have non-contractive fixed point operators and we must change our approach. Note that although non-contractivity does not imply non-convergence, the trivial example from before strongly suggests that non-convergence is the case for many finite MFGs.

\section{Approximate mean field equilibria}

Exact fixed point iteration fails to solve most finite MFGs. Therefore, a different solution approach is necessary. In the following, we present two related approaches that guarantee convergence while plausibly remaining approximate Nash equilibria in the finite-$N$ case. For our results, we require a stronger Lipschitz assumption that implies Assumption~\ref{assumption}.
\begin{assumption} \label{assumption2}
The functions $r(s,a,\mu_t)$ and $p(s' \mid s,a,\mu_t)$ are Lipschitz continuous, therefore bounded.
\end{assumption}

\subsection{Relative entropy mean field games}
A straightforward idea is regularization by replacing the objective by the well-known (see e.g. \citet{abdolmaleki2018maximum}) relative entropy objective  
\begin{align*}
    \tilde J^\mu(\pi) \equiv \mathbb E \left[ \sum_{t=0}^{T-1} r(S_t, A_t, \mu_t) - \eta \log \frac{ \pi_t(A_t \mid S_t) } { q_t(A_t \mid S_t) } \right]
\end{align*} 
with temperature $\eta > 0$ and positive prior policy $q \in \Pi$, i.e. $q_t(a \mid s) > 0$ for all $t \in \mathcal T, s \in \mathcal S, a \in \mathcal A$. Shown in the Appendix, the unique optimal policy $\tilde \pi_t^{\mu,\eta}$ fulfills
\begin{align*} 
    \tilde \pi_t^{\mu,\eta}(a \mid s) 
    &= \frac{ q_t(a \mid s) \exp \left( \frac{\tilde Q_\eta(\mu, t, s, a)}{\eta} \right) }{ \sum_{a' \in \mathcal A} q_t(a' \mid s) \exp \left( \frac{\tilde Q_\eta(\mu, t, s, a')}{\eta} \right) }
\end{align*}
for the MDP induced by fixed $\mu \in \mathcal M$, with the soft action-value function $\tilde Q_\eta(\mu, t, s, a)$ given by the smooth-maximum Bellman recursion
\begin{multline*}
    \tilde Q_\eta(\mu, t, s, a) = r(s, a, \mu_t) + \sum_{s' \in \mathcal S} p(s'\mid s, a, \mu_t) \\
    \cdot \eta \log \left( \sum_{a' \in \mathcal A} q_{t+1}(a' \mid s') \exp \frac{\tilde Q_\eta(\mu, t+1, s', a')}{\eta} \right)
\end{multline*} 
of the MDP induced by fixed $\mu \in \mathcal M$, with terminal condition $\tilde Q_\eta(\mu,T-1,s,a) \equiv r(s, a, \mu_{T-1})$. Note that we recover optimality as $\eta \to 0$, see Theorem \ref{th:epsepsopt}. Define the relative entropy MFE operator $\tilde \Gamma_\eta \equiv \Psi \circ \tilde \Phi_\eta$ with policy selection $\tilde \Phi_\eta(\mu) \equiv \tilde \pi^{\mu,\eta}$ for all $\mu \in \mathcal M$.

\begin{definition}
An $\eta$-relative entropy mean field equilibrium ($\eta$-RelEnt MFE) for some positive prior policy $q \in \Pi$ is a pair $(\pi^E, \mu^E) \in \Pi \times \mathcal M$ such that $\pi^E = \tilde \Phi_\eta(\mu^E)$ and $\mu^E = \Psi(\pi^E)$ hold. An $\eta$-maximum entropy mean field equilibrium ($\eta$-MaxEnt MFE) is an $\eta$-RelEnt MFE with uniform prior policy $q$.
\end{definition}

RelEnt MFE are guaranteed to exist for any $\eta > 0$ by Proposition~\ref{prop:brouwer}. Furthermore, convergence to the regularized solution is guaranteed for large $\eta$ by Theorem~\ref{th:contractive}.

\subsection{Boltzmann iteration}
Since only deterministic policies fail, a derivative approach is to use softmax policies directly with the unregularized action-value function, also called Boltzmann policies. Assume that the action-value function $Q^*$ fulfilling the Bellman equation
\begin{multline*}
    Q^*(\mu, t,s,a) = r(s,a,\mu_t) + \sum_{s' \in \mathcal S} p(s'\mid s, a, \mu_t) \\
    \cdot \max_{a' \in \mathcal A} Q^*(\mu, t+1,s',a') \, .
\end{multline*}
of the MDP induced by fixed $\mu \in \mathcal M$ with terminal condition $Q^*(\mu, T-1, s, a) \equiv r(s,a,\mu_{T-1})$ is known. Define the map $\Phi_\eta(\mu) \equiv \pi^{\mu,\eta}$ for any $\mu \in \mathcal M$, where
\begin{align*}
    \pi_t^{\mu,\eta}(a \mid s) 
    &\equiv \frac{ q_t(a \mid s) \exp \left( \frac{Q^*(\mu, t, s, a)}{\eta} \right) }{ \sum_{a' \in \mathcal A} q_t(a' \mid s) \exp \left( \frac{Q^*(\mu, t, s, a')}{\eta} \right) }
\end{align*}
for all $t \in \mathcal T, s \in \mathcal S, a \in \mathcal A$ and temperature $\eta > 0$.

\begin{definition}
An $\eta$-Boltzmann mean field equilibrium ($\eta$-Boltzmann MFE) for some positive prior policy $q \in \Pi$ is a pair $(\pi^B, \mu^B) \in \Pi \times \mathcal M$ such that $\pi^B = \Phi_\eta(\mu^B)$ and $\mu^B = \Psi(\pi^B)$ hold.
\end{definition}

\subsection{Theoretical properties}

Both $\eta$-RelEnt MFE and $\eta$-Boltzmann MFE are guaranteed to exist for any temperature $\eta > 0$.

\begin{proposition} \label{prop:brouwer}
Under Assumption~\ref{assumption}, $\eta$-Boltzmann and $\eta$-RelEnt MFE exist for any temperature $\eta > 0$.
\end{proposition}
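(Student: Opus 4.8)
The plan is to recast both equilibrium notions as fixed points of a single-valued self-map on $\mathcal M$ and invoke Brouwer's fixed point theorem, as the label suggests. Observe that a pair $(\pi^E,\mu^E)$ is an $\eta$-RelEnt MFE precisely when $\mu^E$ is a fixed point of $\tilde\Gamma_\eta = \Psi\circ\tilde\Phi_\eta$: given such a $\mu^E$, set $\pi^E = \tilde\Phi_\eta(\mu^E)$, so that $\mu^E = \Psi(\tilde\Phi_\eta(\mu^E)) = \Psi(\pi^E)$ and $\pi^E = \tilde\Phi_\eta(\mu^E)$ hold simultaneously; the converse is immediate. The $\eta$-Boltzmann case is identical with $\tilde\Phi_\eta$ replaced by $\Phi_\eta$. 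It therefore suffices to show that $\Psi\circ\tilde\Phi_\eta$ and $\Psi\circ\Phi_\eta$ are continuous self-maps of a nonempty compact convex set.

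Next I would verify the domain is amenable to Brouwer. Since $\mathcal S$ and $\mathcal T$ are finite and $\mu_0$ is fixed, $\mathcal M$ is identified with the product $\prod_{t\in\mathcal T\setminus\{0\}}\mathcal P(\mathcal S)$ of probability simplices over the finite set $\mathcal S$, which is a nonempty, compact, convex subset of the finite-dimensional Euclidean space $\mathbb R^{(|\mathcal T|-1)|\mathcal S|}$. On this space the metric $d_{\mathcal M}$ is equivalent to the Euclidean metric (all norms on a finite-dimensional space are equivalent), so continuity with respect to $d_{\mathcal M}$ is exactly the topological continuity required by Brouwer.

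The core of the argument is continuity of the operators. For $\Psi$ I would argue by forward induction on $t$: the step $\mu_{t+1}(s') = \sum_{s}\mu_t(s)\sum_a \pi_t(a\mid s)\,p(s'\mid s,a,\mu_t)$ is continuous in $(\mu_t,\pi_t)$ because $p$ is continuous by Assumption~\ref{assumption}, and composing these continuous steps starting from the fixed $\mu_0$ shows $\Psi$ is continuous in $\pi$. For $\tilde\Phi_\eta$ I would argue by backward induction on $t$ that each soft action-value $\tilde Q_\eta(\mu,t,s,a)$ is continuous in $\mu$: the terminal value $r(s,a,\mu_{T-1})$ is continuous, and the Bellman recursion combines the continuous $r$ and $p$ with the smooth log-sum-exp map, preserving continuity. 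Since $q_t(a\mid s)>0$, the softmax denominator is bounded away from zero, so $\tilde\pi^{\mu,\eta}$ depends continuously on the $\tilde Q_\eta$ values and hence on $\mu$; thus $\tilde\Phi_\eta$ is continuous. The Boltzmann map $\Phi_\eta$ is handled the same way with $Q^*$ in place of $\tilde Q_\eta$; the only change is that the backward recursion uses $\max_{a'}$, and since the maximum of finitely many continuous functions is continuous, $Q^*(\mu,t,s,a)$ remains continuous in $\mu$, after which the softmax is again continuous.

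Finally, composing with the continuous $\Psi$ shows $\Psi\circ\tilde\Phi_\eta$ and $\Psi\circ\Phi_\eta$ are continuous maps from $\mathcal M$ into $\mathcal M$. Brouwer's fixed point theorem then yields fixed points $\mu^E,\mu^B\in\mathcal M$, and setting $\pi^E=\tilde\Phi_\eta(\mu^E)$ and $\pi^B=\Phi_\eta(\mu^B)$ produces the desired $\eta$-RelEnt and $\eta$-Boltzmann MFE. I expect the main obstacle to be the continuity of the value-function recursions: one must carefully propagate continuity through every backward step (including the nonsmooth $\max$ in the Boltzmann case) and confirm that strict positivity of the prior $q$ keeps the softmax well-behaved, after which the Brouwer application itself is routine.
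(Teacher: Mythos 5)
Your proposal is correct and follows essentially the same route as the paper's proof: identify $\mathcal M$ with a nonempty compact convex subset of finite-dimensional Euclidean space, verify that $\Gamma_\eta = \Psi\circ\Phi_\eta$ and $\tilde\Gamma_\eta = \Psi\circ\tilde\Phi_\eta$ are continuous (being built from sums, products, compositions, log-sum-exp, finite maxima, and a softmax with strictly positive prior applied to the continuous $r$ and $p$), and invoke Brouwer's fixed point theorem. Your version simply spells out the inductive continuity arguments that the paper leaves implicit.
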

\begin{deferredproof}
See Appendix.
\end{deferredproof}

Contractivity of both $\eta$-Boltzmann MFE operator $\Gamma_\eta \equiv \Psi \circ \Phi_\eta$ and $\eta$-RelEnt MFE operator $\tilde \Gamma_\eta \equiv \Psi \circ \tilde \Phi_\eta$ is guaranteed for sufficiently high temperatures, even if all possible original $\Phi$ are not Lipschitz continuous.

\begin{theorem} \label{th:contractive}
Under Assumption~\ref{assumption2}, $\mu \mapsto Q^*(\mu,t,s,a)$, $\mu \mapsto \tilde Q_\eta(\mu,t,s,a)$ and $\Psi(\pi)$ are Lipschitz continuous with constants $K_{Q^*}$, $K_{\tilde Q}$ and $K_{\Psi}$ for arbitrary $t \in \mathcal T, s \in \mathcal S, a \in \mathcal A, \eta > \eta', \eta' > 0$. Furthermore, $\Gamma_\eta$ and $\tilde \Gamma_\eta$ are a contraction for
\begin{align*}
    \eta > \max \left( \eta', \frac{\left| \mathcal A \right| (\left| \mathcal A \right| - 1) K_{Q} K_{\Psi} q_{\mathrm{max}}^2}{ 2 q_{\mathrm{min}}^2} \right)
\end{align*} 
where $K_Q = K_{Q^*}$ for $\Gamma_\eta$, $K_Q = K_{\tilde Q}$ for $\tilde \Gamma_\eta$, $q_{\mathrm{max}} \equiv \max_{t \in \mathcal T, s \in \mathcal S, a \in \mathcal A} q_t(a \mid s) > 0$ and $q_{\mathrm{min}} \equiv \min_{t \in \mathcal T, s \in \mathcal S, a \in \mathcal A} q_t(a \mid s) > 0$.
\end{theorem}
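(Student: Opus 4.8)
The plan is to prove the two assertions in sequence: first that the three maps are Lipschitz with finite constants (via induction on time), and then that composing the softmax selection with $\Psi$ produces a contraction factor that decays like $1/\eta$, so that it drops below one past the stated threshold. Throughout I denote by $L_r, L_p$ the Lipschitz constants of $r$ and $p$ in the mean-field argument granted by Assumption~\ref{assumption2}.

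For the Lipschitz continuity of $\mu \mapsto Q^*(\mu,t,s,a)$ I would argue by backward induction on $t$. The terminal layer $Q^*(\mu,T-1,s,a) = r(s,a,\mu_{T-1})$ is Lipschitz in $\mu$ directly, with constant $L_r$. For the step I would split $Q^*(\mu,t,s,a) - Q^*(\mu',t,s,a)$ into (i) the reward difference, (ii) $\sum_{s'} p(s'\mid s,a,\mu_t)\,[V(\mu,t+1,s') - V(\mu',t+1,s')]$ with $V(\mu,t+1,s') \equiv \max_{a'} Q^*(\mu,t+1,s',a')$, and (iii) $\sum_{s'} [p(s'\mid s,a,\mu_t) - p(s'\mid s,a,\mu'_t)]\,V(\mu',t+1,s')$. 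Term (i) is controlled by $L_r$; term (ii) uses that $\max$ is nonexpansive together with the induction hypothesis and $\sum_{s'} p = 1$; term (iii) uses $L_p$ together with the uniform bound $\lVert Q^* \rVert_\infty \le T\lVert r \rVert_\infty$ from Assumption~\ref{assumption}. This gives a finite recursion $K_t = L_r + K_{t+1} + \lvert \mathcal S \rvert L_p\, T \lVert r \rVert_\infty$ and hence a finite $K_{Q^*} = K_0$. For $\tilde Q_\eta$ the identical induction applies once $\max_{a'}$ is replaced by the smooth maximum $\eta \log \sum_{a'} q_{t+1}(a'\mid s') \exp(\cdot/\eta)$; the two facts I need are that this operator is nonexpansive in the sup norm uniformly in $\eta$ (its gradient is a probability vector, so $\lVert \nabla \rVert_1 = 1$), and that the soft value lies between $\min_{a'}\tilde Q_\eta$ and $\max_{a'}\tilde Q_\eta$ because $q_{t+1}(\cdot\mid s')$ sums to one, again giving $\lVert \tilde Q_\eta \rVert_\infty \le T\lVert r\rVert_\infty$. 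This uniformity is exactly what makes $K_{\tilde Q}$ independent of $\eta$ for $\eta > \eta'$.

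For $\Psi$ I would run the symmetric forward induction on $\mu_{t+1}(s') = \sum_{s}\mu_t(s)\sum_a \pi_t(a\mid s) p(s'\mid s,a,\mu_t)$, decomposing $\mu_{t+1}(s') - \mu'_{t+1}(s')$ into the contributions of the changes in $\mu_t$, in $\pi_t$, and in $p$ through $\mu_t$, and bounding each via $L_p$, the fact that $\mu_t,\pi_t$ are probabilities, the hypothesis $d_{TV}(\mu_t,\mu'_t) \le c_t\, d_\Pi(\pi,\pi')$, and $d_{TV}(\pi_t(\cdot\mid s),\pi'_t(\cdot\mid s)) \le d_\Pi(\pi,\pi')$. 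This yields an affine recursion $c_{t+1} = \alpha c_t + \beta$ and a finite $K_\Psi = \max_t c_t$. With all three constants in hand, Lipschitz continuity of $\Psi$ gives $d_{\mathcal M}(\Gamma_\eta(\mu),\Gamma_\eta(\mu')) \le K_\Psi\, d_\Pi(\Phi_\eta(\mu),\Phi_\eta(\mu'))$ and likewise for $\tilde\Gamma_\eta$, reducing everything to the distance between two Boltzmann policies.

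The decisive and hardest step is the softmax sensitivity estimate: I must show that for each $t,s$, $d_{TV}(\pi^{\mu,\eta}_t(\cdot\mid s),\pi^{\mu',\eta}_t(\cdot\mid s)) \le \frac{1}{\eta}\cdot\frac{\lvert\mathcal A\rvert(\lvert\mathcal A\rvert-1)q_{\mathrm{max}}^2}{2 q_{\mathrm{min}}^2}\, K_Q\, d_{\mathcal M}(\mu,\mu')$. I would view the Boltzmann policy as a function of the logits $Q^*(\mu,t,s,\cdot)$, whose Jacobian is $\frac1\eta(\operatorname{diag}(\pi) - \pi\pi^\top)$, integrate along the segment joining the two logit vectors, and bound each off-diagonal weight by $\pi(a)\pi(c) \le (q_{\mathrm{max}}/q_{\mathrm{min}})^2$ (from $\pi_t(a\mid s) \le (q_{\mathrm{max}}/q_{\mathrm{min}})$ times a uniform-prior softmax weight bounded by one); there are $\lvert\mathcal A\rvert(\lvert\mathcal A\rvert-1)$ such ordered pairs, and each logit gap is controlled by $\max_a\lvert Q^*(\mu,t,s,a)-Q^*(\mu',t,s,a)\rvert \le K_{Q^*}\, d_{\mathcal M}(\mu,\mu')$ from the first part. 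The main obstacle is precisely this bookkeeping: holding the $\eta$-dependence exactly at order $1/\eta$ while pushing every prior-policy and cardinality factor into the constant, so the overall coefficient becomes $\frac{K_Q K_\Psi}{\eta}\cdot\frac{\lvert\mathcal A\rvert(\lvert\mathcal A\rvert-1)q_{\mathrm{max}}^2}{2q_{\mathrm{min}}^2}$. Requiring this to be below one gives $\eta > \frac{\lvert\mathcal A\rvert(\lvert\mathcal A\rvert-1)K_Q K_\Psi q_{\mathrm{max}}^2}{2 q_{\mathrm{min}}^2}$, and intersecting with the range $\eta > \eta'$ on which the Lipschitz constants hold yields the stated threshold, with $K_Q = K_{Q^*}$ for $\Gamma_\eta$ and $K_Q = K_{\tilde Q}$ for $\tilde\Gamma_\eta$; contractivity then follows exactly as in Proposition~\ref{th:banach}.
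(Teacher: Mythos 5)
Your proposal is correct, and its overall architecture coincides with the paper's: a uniform bound $\lVert Q\rVert_\infty \le T\lVert r\rVert_\infty$ on all action-value functions, backward-induction Lipschitz estimates for $\mu \mapsto Q^*$ and $\mu \mapsto \tilde Q_\eta$, a Lipschitz estimate for $\Psi$ (you run an explicit forward affine recursion where the paper invokes its generic lemma on sums and products of bounded Lipschitz functions, but the content is the same), a softmax sensitivity bound of order $1/\eta$ with per-$(t,s)$ constant $\frac{(\lvert\mathcal A\rvert-1)K_Q q_{\mathrm{max}}^2}{2\eta q_{\mathrm{min}}^2}$, and finally composition with $K_\Psi$ plus Banach's theorem. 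Two local differences are worth noting. First, for $\tilde Q_\eta$ you observe that the weighted log-sum-exp $x \mapsto \eta\log\sum_a q(a)e^{x_a/\eta}$ has a probability vector as its gradient and is therefore sup-norm nonexpansive uniformly in $\eta > 0$; the paper instead applies the mean value theorem and bounds the resulting softmax weights crudely by $\frac{q_{\mathrm{max}}}{q_{\mathrm{min}}}\exp(2M_Q/\eta)$, which is precisely what forces the restriction $\eta > \eta'$ in its Lemma on $K_{\tilde Q}$. Your argument is sharper: it would make $K_{\tilde Q}$ uniform over all $\eta > 0$, with $\eta'$ surviving only as an artifact of the theorem statement. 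Second, for the softmax sensitivity you integrate the Jacobian $\frac{1}{\eta}\left(\operatorname{diag}(\pi) - \pi\pi^\top\right)$ along the logit segment, whereas the paper applies the mean value theorem coordinatewise to the representation $\pi(a) = \bigl(1 + \sum_{a'\neq a}\tfrac{q(a')}{q(a)}e^{\Delta_a f_{a'}/\eta}\bigr)^{-1}$ and uses the maximum $\frac{1}{4c}$ of $e^{x/\eta}/(1+c\,e^{x/\eta})^2$; these are equivalent in substance. One piece of your bookkeeping is loose: counting only the $\lvert\mathcal A\rvert(\lvert\mathcal A\rvert-1)$ off-diagonal pairs ignores the diagonal entries $\pi_a(1-\pi_a)$ of the Jacobian; however, since $\pi_a(1-\pi_a) = \sum_{c\neq a}\pi_a\pi_c$, the diagonal merely duplicates the off-diagonal mass, and the factor $\frac{1}{2}$ in the total variation distance absorbs the duplication (using $\pi_a\pi_c \le \frac{1}{4} \le \frac{q_{\mathrm{max}}^2}{2q_{\mathrm{min}}^2}$ per pair), so the stated constant is still attained. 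This is a minor imprecision, not a gap.
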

\begin{deferredproof}
See Appendix.
\end{deferredproof}

Sufficiently large $\eta$ hence implies convergence via fixed point iteration. On the other hand, for sufficiently low temperatures $\eta$, both $\eta$-Boltzmann and $\eta$-RelEnt MFE will also constitute an approximate Markov-Nash equilibrium of the finite-$N$ game.

\begin{theorem} \label{th:epsepsopt}
Under Assumption~\ref{assumption2}, if $(\pi^*_{n}, \mu^*_{n})_{n \in \mathbb N}$ is a sequence of ${\eta_n}$-Boltzmann or ${\eta_n}$-RelEnt MFE with $\eta_n \to 0$, then for any $\varepsilon > 0$ there exist $n', N' \in \mathbb N$ such that for all $n > n', N > N'$, the policy $(\pi^*_{n}, \ldots, \pi^*_{n}) \in \Pi^N$ is an $\varepsilon$-Markov-Nash equilibrium of the $N$-agent game, i.e.
\begin{align*}
    &J_i^N(\pi^*_{n}, \ldots, \pi^*_{n}) \geq \\
    &\quad \max_{\pi_i \in \Pi} J_i^N(\pi^*_{n}, \ldots, \pi^*_{n}, \pi_i, \pi^*_{n}, \ldots, \pi^*_{n}) - \varepsilon \, .
\end{align*}
\end{theorem}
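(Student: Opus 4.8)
The plan is to bound the finite-$N$ exploitability of $(\pi^*_n,\ldots,\pi^*_n)$ by splitting it into two pieces that can be controlled by the two independent parameters $\eta_n\to0$ and $N\to\infty$: the suboptimality of the regularized policy $\pi^*_n$ inside its own induced mean-field MDP, and the approximation error between the $N$-agent objective and the mean-field objective. Writing $V^\mu \equiv \max_{\pi'\in\Pi} J^\mu(\pi')$ for the optimal value of the MDP induced by $\mu$, and recalling that consistency $\mu^*_n = \Psi(\pi^*_n)$ holds at every $\eta_n$-MFE, I would first establish the telescoping estimate
\begin{align*}
\max_{\pi_i\in\Pi} J_i^N(\pi^*_n,\ldots,\pi_i,\ldots,\pi^*_n) - J_i^N(\pi^*_n,\ldots,\pi^*_n) \leq \bigl(V^{\mu^*_n} - J^{\mu^*_n}(\pi^*_n)\bigr) + 2\delta_N,
\end{align*}
where $\delta_N$ is a uniform bound on $\abs{J_i^N(\pi^*_n,\ldots,\pi_i,\ldots,\pi^*_n) - J^{\mu^*_n}(\pi_i)}$ over all deviations $\pi_i\in\Pi$. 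Choosing $n',N'$ so that each summand is below $\varepsilon/2$ then finishes the proof.

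For the first summand I would obtain a clean $O(\eta_n)$ bound that is uniform in $\mu$. In the Boltzmann case I would apply the finite-horizon performance-difference identity
$$V^\mu - J^\mu(\pi^{\mu,\eta}) = \sum_{t=0}^{T-1}\mathbb E\Bigl[\sum_{a\in\mathcal A}\pi^{\mu,\eta}_t(a\mid S_t)\bigl(\max_{a'}Q^*(\mu,t,S_t,a') - Q^*(\mu,t,S_t,a)\bigr)\Bigr],$$
the expectation being over the states visited by $\pi^{\mu,\eta}$, and note that the Boltzmann weight on an action with gap $g$ is at most $\tfrac{q_{\mathrm{max}}}{q_{\mathrm{min}}}\exp(-g/\eta)$, so each contribution is bounded by $\tfrac{q_{\mathrm{max}}}{q_{\mathrm{min}}}\,g\,e^{-g/\eta}\le \tfrac{q_{\mathrm{max}}}{q_{\mathrm{min}}}\tfrac{\eta}{e}$ via $\max_{g\ge0} g e^{-g/\eta}=\eta/e$; summing over $\mathcal A$ and $\mathcal T$ gives $V^\mu - J^\mu(\pi^{\mu,\eta}) \le T\abs{\mathcal A}\tfrac{q_{\mathrm{max}}}{q_{\mathrm{min}}}\tfrac{\eta}{e}$, independent of $\mu$. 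In the RelEnt case I would instead use that $\tilde\pi^{\mu,\eta}$ maximizes $\tilde J^\mu$ and that the regularizer $\eta\sum_t\mathbb E[\KL{\pi_t(\cdot\mid S_t)}{q_t(\cdot\mid S_t)}]$ is nonnegative and, for a deterministic optimal policy $\pi^\dagger$, bounded by $\eta T\log(1/q_{\mathrm{min}})$: comparing $\tilde\pi^{\mu,\eta}$ against $\pi^\dagger$ yields $J^\mu(\tilde\pi^{\mu,\eta})\ge \tilde J^\mu(\tilde\pi^{\mu,\eta})\ge \tilde J^\mu(\pi^\dagger)\ge V^\mu - \eta T\log(1/q_{\mathrm{min}})$, again uniform in $\mu$. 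In both cases the first summand is at most $C\eta_n\to0$, which is exactly the ``recover optimality as $\eta\to0$'' claim.

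For the second summand I would reuse the propagation-of-chaos estimate that underlies Theorem~\ref{th:epsopt}: under Assumption~\ref{assumption2} the empirical measure $\mathbb G^N_{S_t}$ concentrates on $\mu^*_n$ (componentwise) at rate $O(1/\sqrt N)$, and propagating this through the Lipschitz reward and transition kernels over the finite horizon $T$ yields $\abs{J_i^N(\pi^*_n,\ldots,\pi_i,\ldots,\pi^*_n) - J^{\mu^*_n}(\pi_i)}\le\delta_N$. The key point, and the main obstacle, is that $\delta_N$ must be uniform both over the deviating policy $\pi_i$ (so it survives the $\max_{\pi_i}$ and the evaluation at $\pi_i=\pi^*_n$, giving the factor $2\delta_N$ above) and over the sequence index $n$ (so that a single $N'$ works for all large $n$). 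This uniformity is precisely what Assumption~\ref{assumption2} buys: the error bound depends on the policies only through the fixed Lipschitz constants of $r$ and $p$ and on the concentration rate of empirical measures, which depends on $\abs{\mathcal S}$ and $N$ but not on the policy profile. Taking $N'$ with $2\delta_N<\varepsilon/2$ and $n'$ with $C\eta_n<\varepsilon/2$ then forces exploitability below $\varepsilon$ for all $n>n',\,N>N'$, which is the claimed $\varepsilon$-Markov-Nash property.
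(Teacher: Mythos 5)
Your proposal is correct, and its overall skeleton matches the paper's (first show the MFG exploitability $\max_\pi J^{\mu^*_n}(\pi) - J^{\mu^*_n}(\pi^*_n)$ vanishes as $\eta_n \to 0$, then transfer to the $N$-agent game via the mean-field approximation lemmas behind Theorem~\ref{th:epsopt}), but you prove the crucial first half by a genuinely different and substantially shorter route. The paper has no rate: it shows pointwise convergence of $\mu \mapsto Q^{\Phi_{\eta_n}(\mu)}(\mu,t,s,a)$ to $Q^*$ using the minimal action gap $\Delta Q^{s',\mu}_{\mathrm{min}}$ --- a quantity that depends on $\mu$ and degenerates as gaps shrink --- and must therefore establish equicontinuity of the whole family $\{\mu \mapsto Q^{\Phi_\eta(\mu)}(\mu,t,s,a)\}_{\eta>0}$ and invoke compactness of $\mathcal M$ to upgrade to uniform convergence (since $\mu^*_n$ need not converge), with an additional detour for the RelEnt case through monotonicity of $\tilde Q_\eta$ in $\eta$ and Dini's theorem. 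Your performance-difference argument kills the $\mu$-dependence at the source: in $\pi_t^{\mu,\eta}(a\mid s)\,g_a \le \tfrac{q_{\mathrm{max}}}{q_{\mathrm{min}}}\, g_a e^{-g_a/\eta} \le \tfrac{q_{\mathrm{max}}}{q_{\mathrm{min}}}\tfrac{\eta}{e}$, small gaps are controlled by the factor $g_a$ and large gaps by $e^{-g_a/\eta}$, so no gap quantity ever appears in the final bound $T|\mathcal A|\tfrac{q_{\mathrm{max}}}{q_{\mathrm{min}}}\tfrac{\eta}{e}$, and likewise your KL comparison against a deterministic optimal policy gives the uniform bound $\eta T \log(1/q_{\mathrm{min}})$ for RelEnt without touching $\tilde Q_\eta$'s convergence at all. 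This buys explicit $O(\eta_n)$ rates and replaces roughly six lemmas of soft analysis (Lemmas on pointwise convergence, equicontinuity, compactness-based uniformization, Dini) with two displays. Your explicit insistence that $\delta_N$ be uniform over the deviation $\pi_i$ \emph{and} over $n$ is also well placed: the paper's Lemma on objective convergence is stated for a fixed crowd policy, and its final display quietly assumes a single $N'$ works for all $n > n'$, whereas you correctly identify that Assumption~\ref{assumption2} makes the propagation-of-chaos constants depend only on the Lipschitz moduli of $r$ and $p$, $|\mathcal S|$, and $T$, not on the policy profile. The only part of your plan left as a sketch is carrying out that quantitative propagation, but it is the same induction as in the paper's proof of Theorem~\ref{th:epsopt} with Lipschitz constants substituted for bare continuity, so there is no gap in the idea.
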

\begin{deferredproof}
See Appendix.
\end{deferredproof}

If we can obtain contractivity for sufficiently low $\eta$, we can find good approximate Markov-Nash equilibria. As it is impossible to have both $\eta \to 0$ and $\eta \to \infty$, it depends on the problem and prior whether we can converge to a good solution. Nonetheless, we find that it is often possible to empirically find low $\eta$ that provide convergence as well as a good approximate MFE.

\subsection{Prior descent}
In principle, we can insert arbitrary prior policies $q \in \Pi$. Under Assumption~\ref{assumption}, by boundedness of both $\tilde Q_\eta$ and $Q^*$ (see Appendix), both $\eta$-RelEnt and $\eta$-Boltzmann MFE policies converge to the prior policy as $\eta \to \infty$. Therefore, in principle we can show that for any $\varepsilon > 0$, for sufficiently large $\eta$ and $N$, the $\eta$-RelEnt and $\eta$-Boltzmann MFE under $q$ will be at most an $\varepsilon$-worse approximate Nash equilibrium than the prior policy. Furthermore, we obtain guaranteed contractivity by Theorem~\ref{th:contractive}. Thus, any prior policy gives a worst-case bound on the performance achievable over all $\eta > 0$. On the other hand, if we obtain better results for sufficiently low $\eta$, we may iteratively improve our policy and thus our equilibrium quality.

\section{Related work}

The original work of \citet{huang2006large} introduces contractivity and uniqueness assumptions into the continuous MFG setting. Analogously, \citet{guo2019learning} and \citet{caines2019graphon} assume contractivity for discrete-time MFGs and dense graph limit MFGs respectively. Further existing work on discrete-time MFGs similarly assumes uniqueness of the MFE, which includes \citet{saldi2018markov} and \citet{gomes2010discrete} for approximate optimality and existence results, and \citet{anahtarci2020value} for an analysis on contractivity requirements. \citet{mguni2018decentralised} solve discrete-time continuous state MFG problems under the classical uniqueness conditions of \citet{lasry2007mean}. Further extensions of the MFG formula include partial observability (\citet{saldi2019approximate}) or major agents (\citet{nourian2013e}).

The work of \citet{anahtarci2020q} is related and studies theoretical properties of finite-$N$ regularized games and their limiting MFG. In their work, the existence and approximate Nash property of MFE in stationary regularized games is shown, and Q-Learning error propagation is investigated. In comparison, we consider the original, unregularized finite-$N$ game in a transient setting and perform extensive empirical evaluations. \citet{guo2019learning} and \citet{yang2018mean} previously proposed to apply Boltzmann policies. The former applies the approximation heuristically, while the latter focuses on directly solving finite-$N$ games.

An orthogonal approach to computing MFE is fictitious play. Rooted in game-theory and classical economic works (\citet{brown1951iterative}), it has since been adapted to MFGs. In fictitious play, all past mean fields (\cite{cardaliaguet2017learning}) and policies (\cite{perrin2020fictitious}) are averaged to produce a new mean field or policy. Importantly, convergence is guaranteed in certain special cases only (cf. \cite{elie2019approximate}). Although introduced in a differentiable setting, we evaluate fictitious play empirically in our setting and find that both our regularization and fictitious play may be combined successfully.

\begin{figure*}[t]
	\centering
    \includegraphics[width=\textwidth]{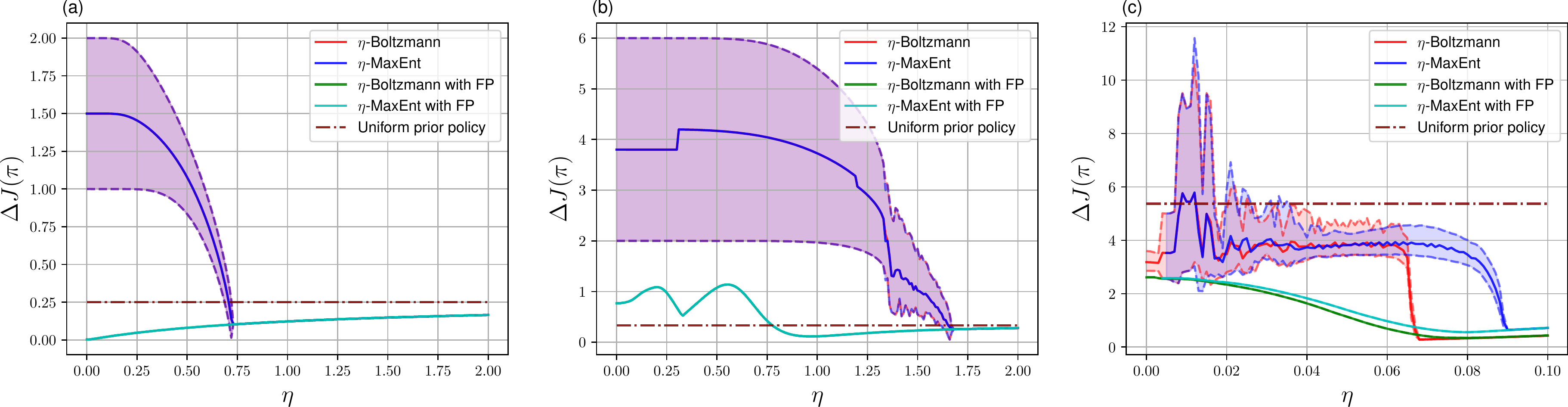}
    \caption{Mean exploitability over the final 10 iterations. Dashed lines represent maximum and minimum over the final 10 iterations. (a) LR, 10000 iterations; (b) RPS, 10000 iterations; (c) SIS, 10000 iterations. Maximum entropy (MaxEnt) results begin at higher temperatures due to limited floating point accuracy. Temperature zero depicts the exact fixed point iteration for both $\eta$-MaxEnt and $\eta$-Boltzmann MFE. In LR and RPS, $\eta$-MaxEnt and $\eta$-Boltzmann MFE coincide both with and without fictitious play (FP), here averaging both policy and mean field over all past iterations. The exploitability of the prior policy is indicated by the dashed horizontal line. } \label{fig:exact}
\end{figure*}

\section{Evaluation} \label{sec:eval}

In practice, we find that our approaches are capable of generating solutions of lower exploitability than otherwise obtained. Unless stated otherwise, we compute everything exactly, use the maximum entropy objective (MaxEnt) with the uniform prior policy $q$ where $q_t(a \mid s) = 1/|\mathcal A|$ for all $t \in \mathcal T, s \in \mathcal S, a \in \mathcal A$, and initialize with $\mu^0 = \Psi(q)$ generated by $q$. As the main evaluation metric, we define the exploitability of a policy $\pi \in \Pi$ with induced mean field $\mu \equiv \Psi(\pi)$ as
\begin{align*}
    \Delta J(\pi) \equiv \max_{\pi^*} J^{\mu}(\pi^*) - J^{\mu}(\pi) \ .
\end{align*}
Clearly, the exploitability of $\pi$ is zero if and only if $(\pi, \mu)$ is an MFE. Indeed, for any $\varepsilon > 0$, any policy $\pi \in \Pi$ is a $(\Delta J(\pi)+\varepsilon)$-Markov Nash equilibrium if $N$ sufficiently large, i.e. the exploitability translates directly to the limiting equilibrium quality in the finite-$N$ game, see also Theorem~\ref{th:epsepsopt} and its proof. 

We evaluate the algorithms on the LR, RPS, SIS and Taxi problems, ordered in increasing complexity. Details of the algorithms, hyperparameters, problems and experiment configurations as well as further experimental results can be found in the Appendix.

\subsection{Exploitability}

In Figure~\ref{fig:exact}, we plot the minimum, maximum and mean exploitability for varying temperatures $\eta$ during the last 10 fixed point iterations, i.e. a single value when the exploitability (and usually mean field) converges. Observe that the lowest convergent temperature outperforms not only the exact fixed point iteration (drawn at temperature zero), but also the uniform prior policy. 

Although developed for a different setting, we also show results of fictitious play similar to the version from \citet{perrin2020fictitious}, i.e. both policies and mean fields are averaged over all past iterations. It can be seen that fictitious play only converges to the optimal solution in the LR problem. In the other examples, supplementing fictitious play with entropy regularization is effective at producing better results. A non-existent fictitious play variant averaging only the policies finds the exact MFE in RPS, but nevertheless fails in SIS. See the Appendix for further results.

Evaluating and solving finite-$N$ games is highly intractable by the curse of dimensionality, as the local state is no longer sufficient to perform dynamic programming in the presence of the random empirical state measure. Since it has already been proven that the exploitability for $N \to \infty$ will converge to the exploitability of the corresponding mean field game, we refrain from evaluating on finite-$N$ games.

Note that the plots are entirely deterministic and not stochastic as it would seem at first glance, since the depicted shaded area visualizes the non-convergence of exploitability and is a result of the fixed point updates running into a limit cycle (cf. Figure~\ref{fig:convergence}). 

\subsection{Convergence}
\begin{figure*}[t]
	\centering
    \includegraphics[width=0.9\textwidth]{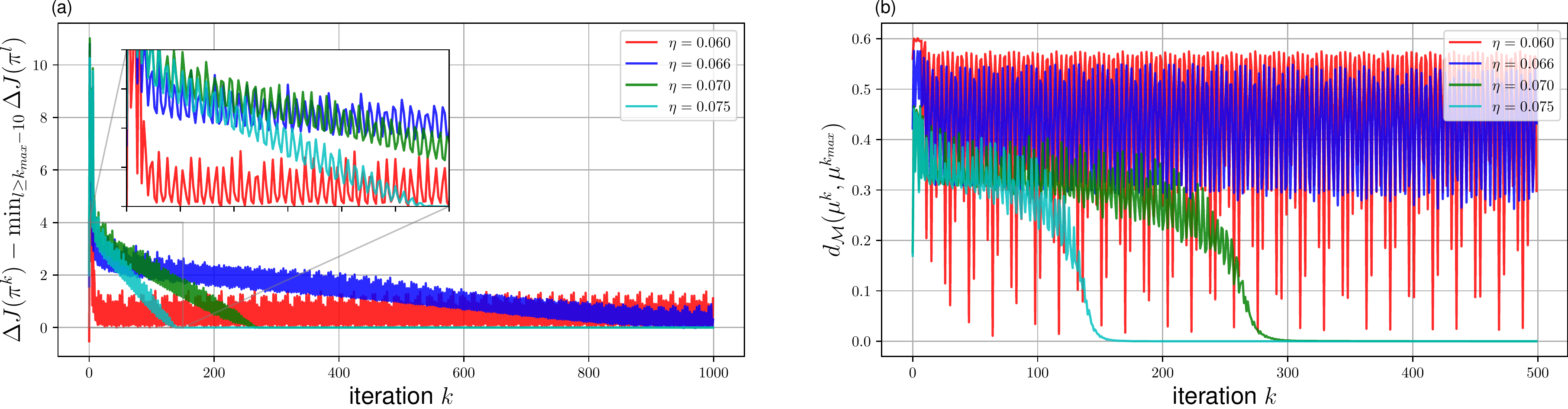}
    \caption{(a) Difference between current and final minimum exploitability over the last 10 iterations; (b) Distance between current and final mean field. Plotted for the $\eta$-Boltzmann MFE iterations in SIS for different indicated temperature settings. Note the periodicity of the lowest temperature setting, indicating a limit cycle. } \label{fig:convergence}
\end{figure*}

In Figure~\ref{fig:convergence}, the difference between the exploitability of the current policy and the minimal exploitability reached during the final 10 iterations is shown for $\eta$-Boltzmann MFE. As the temperature $\eta$ decreases, time to convergence increases until non-convergence is reached in form of a limit cycle. Analogous results for $\eta$-RelEnt MFE can be found in the Appendix. 

Note also that in LR, we can analytically find $K_Q = 1$ and $K_{\Psi} = 1$. Thus, we obtain guaranteed convergence via $\eta$-Boltzmann MFE iteration if $\eta > 1$. In Figure~\ref{fig:exact}, we see convergence already for $\eta \geq 0.7$. Note further that the non-converged regime can allow for lower exploitability. However, it is unclear a priori when to stop, and for approximate solutions where DQN is used for evaluation, the evaluation of exploitability may become inaccurate. 

\begin{figure*}[ht]
	\centering
    \includegraphics[width=1\textwidth]{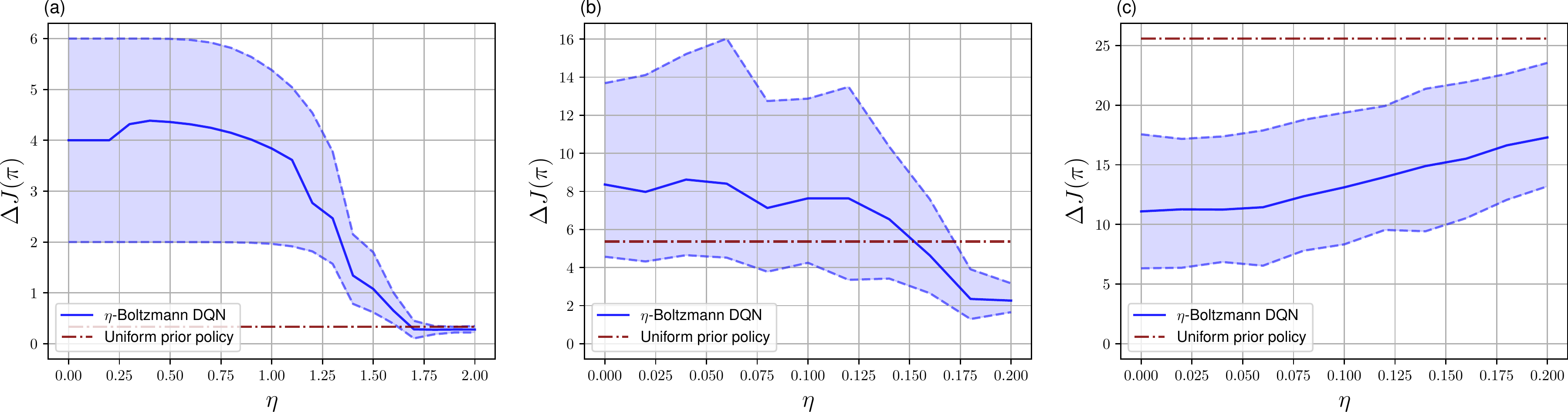}
    \caption{Mean exploitability over the final 5 iterations using DQN, averaged over 5 seeds. Dashed lines represent the averaged maximum and minimum exploitability over the last 5 iterations. (a) RPS, 1000 iterations; (b) SIS, 50 iterations; (c) Taxi, 15 iterations. Evaluation of exploitability is exact except in Taxi, which uses DQN and averages over 1000 episodes. The point of zero temperature depicts fixed point iteration using exact DQN policies. } \label{fig:approximate}
\end{figure*}
\begin{figure}[t]
	\centering
    \includegraphics[width=0.45\textwidth]{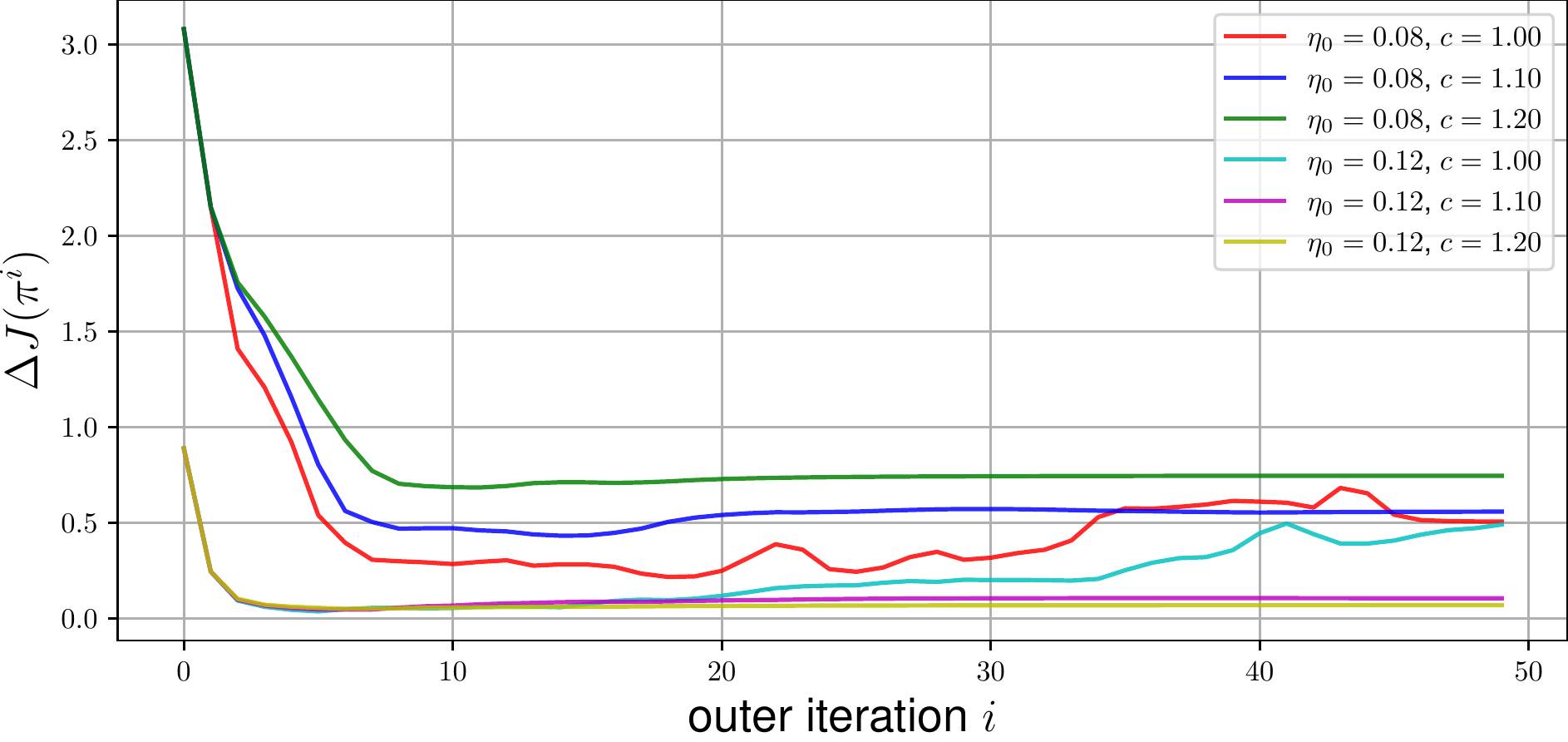}
    \caption{Exploitability over outer iterations in SIS, using 100 $\eta$-RelEnt MFE iterations per outer iteration. Note that the results are deterministic. Not shown: Running the fixed temperature settings $c = 1$ for longer does not converge for at least $1000$ iterations. } \label{fig:rep}
\end{figure}

\subsection{Deep reinforcement learning}
For problems with intractably large state spaces, we adopt the DQN algorithm (\citet{mnih2013playing}), using the implementation of \citet{cleanrl} as a base.  Particle-based simulations are used for the mean field, and stochastic performance evaluation on the induced MDP is performed (see Appendix). Note that the approximation introduces three sources of stochasticity into the otherwise deterministic algorithms, i.e. stochastic evaluation, mean field simulation and DQN. To counteract the randomness, we average our results over multiple runs. The hyperparameters and architectures used are standard and can be found in the Appendix.

Fitting the soft action-value function directly using a network is numerically problematic, as the log-exponential transformation of approximated action-values quickly fails due to limited floating point accuracy. Thus, we limit ourselves to the classical Bellman equation with Boltzmann policies only. 

In Figure~\ref{fig:approximate}, we evaluate the exploitability of Boltzmann DQN iteration, evaluated exactly in SIS and RPS, and stochastically in Taxi over 2000 realizations. Minimum, maximum and mean exploitability are taken over the final 5 iterations and averaged over 5 seeds. Note that it is very time-consuming to solve a full reinforcement learning problem using DQN repeatedly in every iteration. Nonetheless, we observe that a temperature larger than zero appears to improve exploitability and convergence in the SIS example. Both due to the noisy nature of approximate solutions and the lower number of iterations, it can be seen that a higher temperature is required to converge than in the exact case. 

In the intractable Taxi environment, the policy oscillates between two modes as in exact LR, and regularization fails to obtain better results, see also the Appendix. An important reason is that the prior policy performs extremely bad (exploitability of $\sim 35$) as most states require specific actions for optimality. Hence we cannot find an $\eta > 0$ for which the algorithm both converges and performs well. Using prior descent and iteratively refining a better prior policy would likely increase performance, but is deferred to future investigations as the required computations grow very large.

Fictitious play is expensive in combination with approximate Q-Learning and particle simulations, as policies and particles of past iterations must be kept to perform exact fictitious play. For this reason, we do not attempt approximate fictitious play with approximate solution methods. In theory, supervised learning for fitting summarizing policies and randomly sampling particles may help, but is out of scope of this paper.

\subsection{Prior descent}
In Figure~\ref{fig:rep}, we repeatedly perform outer iterations consisting of 100 $\eta$-RelEnt MFE iterations each with the indicated fixed temperature parameters in SIS. After each outer iteration, the prior policy is updated to the newest resulting policy. Note again that the results are entirely deterministic. 

Searching for a suitable $\eta$ dynamically every iteration would keep the exploitability from increasing, as for $\eta \to \infty$ we obtain the original prior policy. Since it is expensive to scan over all temperatures in each outer iteration, we use a heuristic. Intuitively, since the prior will become increasingly good, it will be increasingly difficult to obtain a better policy. Thus, increasing the temperature will help sticking close to the prior and converge. Consequently, we use the simple heuristic
\begin{align*}
    \eta_{i+1} = \eta_i \cdot c
\end{align*}
for each outer iteration $i$, where $c \geq 1$ adjusts the temperature after each outer iteration.
 
Importantly, even for our simple heuristic, prior descent already achieves an exploitability of $\sim 0.068$, whereas the best results for the fixed uniform policy from Figure~\ref{fig:exact} show an optimal mean exploitability of $\sim 0.281$. Furthermore, repeated prior policy updates succeed in computing the exact MFE in RPS and LR under a fixed temperature (see Appendix). 

Note that prior descent creates a double loop around solving the optimal control problem, becoming highly expensive under deep reinforcement learning. Hence, we refrain from prior descent with DQN. Automatically adjusting temperatures to monotonically improve exploitability is left for potential future work.

\section{Conclusion}
In this work, we have investigated the necessity and feasibility of approximate MFG solution approaches -- entropy regularization, Boltzmann policies and prior descent -- in the context of finite MFGs. We have shown that the finite MFG case typically cannot be solved by exact fixed point iteration or fictitious play alone. Entropy regularization and Boltzmann policies in combination with deep reinforcement learning may enable feasible computation of approximate MFE. We believe that lifting the restriction of inherent contractivity is an important step in ensuring applicability of MFG models in practical problems. We hope that entropy regularization and the insight for finite MFGs can help transfer the MFG formalism from its so-far mostly theory-focused context into real world application scenarios. Nonetheless, there still remain many restrictions to the applicability of the MFG formalism.

For future work, an efficient, automatic temperature adjustment for prior descent could be fruitful. Furthermore, it would be interesting to generalize relative entropy MFGs to infinite horizon discounted problems, continuous time, and continuous state and action spaces. Moreover, it could be of interest to investigate theoretical properties of fictitious play in finite MFGs in combination with entropy regularization. For non-Lipschitz mappings from policy to induced mean field, the proposed approach does not provide a solution. It could nonetheless be important to consider problems with threshold-type dynamics and rewards, e.g. majority vote problems. Most notably, the current formalism precludes common noise entirely, i.e. any games with common observations. In practice, many problems will allow for some type of common observation between agents, leading to non-independent agent distributions and stochastic as opposed to deterministic mean fields.

\subsubsection*{Acknowledgements}
This work has been funded by the LOEWE initiative (Hesse, Germany) within the emergenCITY center. The authors acknowledge the Lichtenberg high performance computing cluster of the TU Darmstadt for providing computational facilities for the calculations of this research.

\bibliographystyle{unsrtnat}
\bibliography{main}

\begin{thebibliography}{31}
\providecommand{\natexlab}[1]{#1}
\providecommand{\url}[1]{\texttt{#1}}
\expandafter\ifx\csname urlstyle\endcsname\relax
  \providecommand{\doi}[1]{doi: #1}\else
  \providecommand{\doi}{doi: \begingroup \urlstyle{rm}\Url}\fi

\bibitem[Huang et~al.(2006)Huang, Malham{\'e}, Caines, et~al.]{huang2006large}
Minyi Huang, Roland~P Malham{\'e}, Peter~E Caines, et~al.
\newblock Large population stochastic dynamic games: closed-loop mckean-vlasov
  systems and the nash certainty equivalence principle.
\newblock \emph{Communications in Information \& Systems}, 6\penalty0
  (3):\penalty0 221--252, 2006.

\bibitem[Lasry and Lions(2007)]{lasry2007mean}
Jean-Michel Lasry and Pierre-Louis Lions.
\newblock Mean field games.
\newblock \emph{Japanese journal of mathematics}, 2\penalty0 (1):\penalty0
  229--260, 2007.

\bibitem[Daskalakis et~al.(2009)Daskalakis, Goldberg, and
  Papadimitriou]{daskalakis2009complexity}
Constantinos Daskalakis, Paul~W Goldberg, and Christos~H Papadimitriou.
\newblock The complexity of computing a nash equilibrium.
\newblock \emph{SIAM Journal on Computing}, 39\penalty0 (1):\penalty0 195--259,
  2009.

\bibitem[Anahtarc{\i} et~al.(2020)Anahtarc{\i}, Kar{\i}ks{\i}z, and
  Saldi]{anahtarci2020value}
Berkay Anahtarc{\i}, Can~Deha Kar{\i}ks{\i}z, and Naci Saldi.
\newblock Value iteration algorithm for mean-field games.
\newblock \emph{Systems \& Control Letters}, 143:\penalty0 104744, 2020.

\bibitem[Guo et~al.(2019)Guo, Hu, Xu, and Zhang]{guo2019learning}
Xin Guo, Anran Hu, Renyuan Xu, and Junzi Zhang.
\newblock Learning mean-field games.
\newblock In \emph{Advances in Neural Information Processing Systems}, pages
  4966--4976, 2019.

\bibitem[Gu{\'e}ant et~al.(2011)Gu{\'e}ant, Lasry, and Lions]{gueant2011mean}
Olivier Gu{\'e}ant, Jean-Michel Lasry, and Pierre-Louis Lions.
\newblock Mean field games and applications.
\newblock In \emph{Paris-Princeton lectures on mathematical finance 2010},
  pages 205--266. Springer, 2011.

\bibitem[Kizilkale and Malhame(2016)]{kizilkale2016collective}
Arman~C Kizilkale and Roland~P Malhame.
\newblock Collective target tracking mean field control for markovian
  jump-driven models of electric water heating loads.
\newblock In \emph{Control of Complex Systems}, pages 559--584. Elsevier, 2016.

\bibitem[Aziz and Caines(2016)]{aziz2016mean}
Mohamad Aziz and Peter~E Caines.
\newblock A mean field game computational methodology for decentralized
  cellular network optimization.
\newblock \emph{IEEE transactions on control systems technology}, 25\penalty0
  (2):\penalty0 563--576, 2016.

\bibitem[Laguzet and Turinici(2015)]{laguzet2015individual}
Laetitia Laguzet and Gabriel Turinici.
\newblock Individual vaccination as nash equilibrium in a sir model with
  application to the 2009--2010 influenza a (h1n1) epidemic in france.
\newblock \emph{Bulletin of Mathematical Biology}, 77\penalty0 (10):\penalty0
  1955--1984, 2015.

\bibitem[Perrin et~al.(2020)Perrin, Perolat, Lauri{\`e}re, Geist, Elie, and
  Pietquin]{perrin2020fictitious}
Sarah Perrin, Julien Perolat, Mathieu Lauri{\`e}re, Matthieu Geist, Romuald
  Elie, and Olivier Pietquin.
\newblock Fictitious play for mean field games: Continuous time analysis and
  applications.
\newblock \emph{arXiv preprint arXiv:2007.03458}, 2020.

\bibitem[Saldi et~al.(2018)Saldi, Basar, and Raginsky]{saldi2018markov}
Naci Saldi, Tamer Basar, and Maxim Raginsky.
\newblock Markov--nash equilibria in mean-field games with discounted cost.
\newblock \emph{SIAM Journal on Control and Optimization}, 56\penalty0
  (6):\penalty0 4256--4287, 2018.

\bibitem[Caines and Huang(2019)]{caines2019graphon}
Peter~E Caines and Minyi Huang.
\newblock Graphon mean field games and the gmfg equations: $\varepsilon$-nash
  equilibria.
\newblock In \emph{2019 IEEE 58th Conference on Decision and Control (CDC)},
  pages 286--292. IEEE, 2019.

\bibitem[Fudenberg and Tirole(1991)]{fudenberg1991game}
Drew Fudenberg and Jean Tirole.
\newblock \emph{Game theory}.
\newblock MIT press, 1991.

\bibitem[Abdolmaleki et~al.(2018)Abdolmaleki, Springenberg, Tassa, Munos,
  Heess, and Riedmiller]{abdolmaleki2018maximum}
Abbas Abdolmaleki, Jost~Tobias Springenberg, Yuval Tassa, Remi Munos, Nicolas
  Heess, and Martin Riedmiller.
\newblock Maximum a posteriori policy optimisation.
\newblock In \emph{International Conference on Learning Representations}, 2018.

\bibitem[Gomes et~al.(2010)Gomes, Mohr, and Souza]{gomes2010discrete}
Diogo~A Gomes, Joana Mohr, and Rafael~Rigao Souza.
\newblock Discrete time, finite state space mean field games.
\newblock \emph{Journal de math{\'e}matiques pures et appliqu{\'e}es},
  93\penalty0 (3):\penalty0 308--328, 2010.

\bibitem[Mguni et~al.(2018)Mguni, Jennings, and
  de~Cote]{mguni2018decentralised}
David Mguni, Joel Jennings, and Enrique~Munoz de~Cote.
\newblock Decentralised learning in systems with many, many strategic agents.
\newblock \emph{Thirty-Second AAAI Conference on Artificial Intelligence},
  2018.

\bibitem[Saldi et~al.(2019)Saldi, Ba{\c{s}}ar, and
  Raginsky]{saldi2019approximate}
Naci Saldi, Tamer Ba{\c{s}}ar, and Maxim Raginsky.
\newblock Approximate nash equilibria in partially observed stochastic games
  with mean-field interactions.
\newblock \emph{Mathematics of Operations Research}, 44\penalty0 (3):\penalty0
  1006--1033, 2019.

\bibitem[Nourian and Caines(2013)]{nourian2013e}
Mojtaba Nourian and Peter~E Caines.
\newblock Epsilon-nash mean field game theory for nonlinear stochastic
  dynamical systems with major and minor agents.
\newblock \emph{SIAM Journal on Control and Optimization}, 51\penalty0
  (4):\penalty0 3302--3331, 2013.

\bibitem[Anahtarci et~al.(2020)Anahtarci, Kariksiz, and Saldi]{anahtarci2020q}
Berkay Anahtarci, Can~Deha Kariksiz, and Naci Saldi.
\newblock Q-learning in regularized mean-field games.
\newblock \emph{arXiv preprint arXiv:2003.12151}, 2020.

\bibitem[Yang et~al.(2018)Yang, Luo, Li, Zhou, Zhang, and Wang]{yang2018mean}
Yaodong Yang, Rui Luo, Minne Li, Ming Zhou, Weinan Zhang, and Jun Wang.
\newblock Mean field multi-agent reinforcement learning.
\newblock In \emph{International Conference on Machine Learning}, pages
  5571--5580, 2018.

\bibitem[Brown(1951)]{brown1951iterative}
George~W Brown.
\newblock Iterative solution of games by fictitious play.
\newblock \emph{Activity analysis of production and allocation}, 13\penalty0
  (1):\penalty0 374--376, 1951.

\bibitem[Cardaliaguet and Hadikhanloo(2017)]{cardaliaguet2017learning}
Pierre Cardaliaguet and Saeed Hadikhanloo.
\newblock Learning in mean field games: the fictitious play.
\newblock \emph{ESAIM: Control, Optimisation and Calculus of Variations},
  23\penalty0 (2):\penalty0 569--591, 2017.

\bibitem[Elie et~al.(2019)Elie, P{\'e}rolat, Lauri{\`e}re, Geist, and
  Pietquin]{elie2019approximate}
Romuald Elie, Julien P{\'e}rolat, Mathieu Lauri{\`e}re, Matthieu Geist, and
  Olivier Pietquin.
\newblock Approximate fictitious play for mean field games.
\newblock \emph{arXiv preprint arXiv:1907.02633}, 2019.

\bibitem[Mnih et~al.(2013)Mnih, Kavukcuoglu, Silver, Graves, Antonoglou,
  Wierstra, and Riedmiller]{mnih2013playing}
Volodymyr Mnih, Koray Kavukcuoglu, David Silver, Alex Graves, Ioannis
  Antonoglou, Daan Wierstra, and Martin Riedmiller.
\newblock Playing atari with deep reinforcement learning.
\newblock \emph{arXiv preprint arXiv:1312.5602}, 2013.

\bibitem[Shengyi et~al.(2020)Shengyi, Rousslan, and Ye]{cleanrl}
Huang Shengyi, Dossa Rousslan, and Chang Ye.
\newblock Cleanrl: High-quality single-file implementation of deep
  reinforcement learning algorithms.
\newblock \url{https://github.com/vwxyzjn/cleanrl/}, 2020.

\bibitem[Wang et~al.(2016)Wang, Schaul, Hessel, Hasselt, Lanctot, and
  Freitas]{wang2016dueling}
Ziyu Wang, Tom Schaul, Matteo Hessel, Hado Hasselt, Marc Lanctot, and Nando
  Freitas.
\newblock Dueling network architectures for deep reinforcement learning.
\newblock In \emph{International conference on machine learning}, pages
  1995--2003, 2016.

\bibitem[Shapley(1964)]{shapley1964some}
Lloyd Shapley.
\newblock Some topics in two-person games.
\newblock \emph{Advances in game theory}, 52:\penalty0 1--29, 1964.

\bibitem[Puterman(2014)]{puterman2014markov}
Martin~L Puterman.
\newblock \emph{Markov decision processes: discrete stochastic dynamic
  programming}.
\newblock John Wiley \& Sons, 2014.

\bibitem[Neu et~al.(2017)Neu, Jonsson, and G{\'o}mez]{neu2017unified}
Gergely Neu, Anders Jonsson, and Vicen{\c{c}} G{\'o}mez.
\newblock A unified view of entropy-regularized markov decision processes.
\newblock \emph{arXiv preprint arXiv:1705.07798}, 2017.

\bibitem[Haarnoja et~al.(2017)Haarnoja, Tang, Abbeel, and
  Levine]{haarnoja2017reinforcement}
Tuomas Haarnoja, Haoran Tang, Pieter Abbeel, and Sergey Levine.
\newblock Reinforcement learning with deep energy-based policies.
\newblock In \emph{Proceedings of the 34th International Conference on Machine
  Learning-Volume 70}, pages 1352--1361, 2017.

\bibitem[Belousov and Peters(2019)]{belousov2019entropic}
Boris Belousov and Jan Peters.
\newblock Entropic regularization of markov decision processes.
\newblock \emph{Entropy}, 21\penalty0 (7):\penalty0 674, 2019.

\end{thebibliography}

\onecolumn

\appendix
\allowdisplaybreaks
\section{Experimental Details} \label{app:impl}
\subsection{Algorithms}

\begin{algorithm}[ht]
    \caption{\textbf{Exact fixed point iteration}}
    \label{alg:exact}
    \begin{algorithmic}[1]
        \STATE Initialize $\mu^0 = \Psi(q)$ as the mean field induced by the uniformly random policy $q$.
        \FOR {$k=0, 1, \cdots$}
        \STATE Compute the Q-function $Q^*(\mu^k,t,s,a)$ for fixed $\mu^k$.
        \STATE Choose $\pi^k \in \Pi$ such that $\pi^k_t(a \mid s) \implies a \in \argmax_{a \in \mathcal A} Q^k(\mu^k,t,s,a)$ for all $t \in \mathcal T, s \in \mathcal S, a \in \mathcal A$ by putting all probability mass on the first optimal action, or evenly on all optimal actions.
        \STATE \textbf{Optionally}: Overwrite $\pi^k \leftarrow \frac{1}{k+1} \pi^k + \frac{k}{k+1} \pi^{k-1}$. (FP averaged policy)
        \STATE Compute the mean field $\mu^{k+1} = \Psi(\pi^k)$ induced by $\pi^k$.
        \STATE \textbf{Optionally}: Overwrite $\mu^{k+1} \leftarrow \frac{1}{k+1} \mu^{k+1} + \frac{k}{k+1} \mu^{k}$. (FP averaged mean field)
        \ENDFOR
    \end{algorithmic}
\end{algorithm}

\begin{algorithm}[ht]
    \caption{\textbf{Boltzmann / RelEnt iteration}}
    \label{alg:boltzmann}
    \begin{algorithmic}[1]
        \STATE \textbf{Input}: Temperature $\eta > 0$, prior policy $q \in \Pi$.
        \STATE Initialize $\mu^0 = \Psi(q)$ as the mean field induced by $q$.
        \FOR {$k=0, 1, \cdots$}
        \STATE Compute the Q-function (Boltzmann) or soft Q-function (RelEnt) $Q(\mu^k,t,s,a)$ for fixed $\mu^k$.
        \STATE Define $\pi^k$ by $\pi^k_t(a \mid s) = \frac{ q_t(a \mid s) \exp \left( \frac{Q(\mu^k, t, s, a)}{\eta} \right) }{ \sum_{a' \in \mathcal A} q_t(a' \mid s) \exp \left( \frac{Q(\mu^k, t, s, a')}{\eta} \right) } $ for all $t \in \mathcal T, s \in \mathcal S, a \in \mathcal A$.
        \STATE \textbf{Optionally}: Overwrite $\pi^k \leftarrow \frac{1}{k+1} \pi^k + \frac{k}{k+1} \pi^{k-1}$. (FP averaged policy)
        \STATE Compute the mean field $\mu^{k+1} = \Psi(\pi^k)$ induced by $\pi^k$.
        \STATE \textbf{Optionally}: Overwrite $\mu^{k+1} \leftarrow \frac{1}{k+1} \mu^{k+1} + \frac{k}{k+1} \mu^{k}$. (FP averaged mean field)
        \ENDFOR
    \end{algorithmic}
\end{algorithm}

\begin{algorithm}[ht]
    \caption{\textbf{Boltzmann DQN iteration}}
    \label{alg:approx}
    \begin{algorithmic}[1]
        \STATE \textbf{Input}: Temperature $\eta > 0$, prior policy $q \in \Pi$.
        \STATE \textbf{Input}: Simulation parameters, DQN hyperparameters.
        \STATE Initialize $\mu^0 \approx \Psi(q)$ as the mean field induced by $q$ using Algorithm~\ref{alg:sim}.
        \FOR {$k=0, 1, \cdots$}
        \STATE Approximate the Q-function $Q^*(\mu^k,t,s,a)$ using Algorithm~\ref{alg:softdqn} on the MDP induced by $\mu^k$.
        \STATE Define $\pi^k$ by $\pi^k_t(a \mid s) = \frac{ q_t(a \mid s) \exp \left( \frac{Q^*(\mu^k, t, s, a)}{\eta} \right) }{ \sum_{a' \in \mathcal A} q_t(a' \mid s) \exp \left( \frac{Q^*(\mu^k, t, s, a')}{\eta} \right) } $ for all $t \in \mathcal T, s \in \mathcal S, a \in \mathcal A$.
        \STATE Approximately simulate mean field $\mu^{k+1} \approx \Psi(\pi^k)$ induced by $\pi^k$ using Algorithm~\ref{alg:sim}.
        \ENDFOR
    \end{algorithmic}
\end{algorithm}

\begin{algorithm}[ht]
    \caption{\textbf{DQN}}
    \label{alg:softdqn}
    \begin{algorithmic}[1]
        \STATE \textbf{Input}: Number of epochs $L$, mini-batch size $N$, target update frequency $M$, replay buffer size $D$.
        \STATE \textbf{Input}: Probability of random action $\epsilon$, Discount factor $\gamma$, ADAM and gradient clipping parameters.
        \STATE Initialize network $Q_\theta$, target network $Q_{\theta'} \leftarrow Q_\theta$ and replay buffer $\mathcal D$ of size $D$.
        \FOR {$L$ epochs}
            \FOR {$t = 1, \ldots, \mathcal T$}
                \STATE \textbf{One environment step}
                \STATE \quad Let new action $a_{t} \leftarrow \argmax_{a \in \mathcal A} Q_\theta(t,s,a)$, or with probability $\epsilon$ sample uniformly random instead.
                \STATE \quad Sample new state $s_{t+1} \sim p(\cdot \mid s_{t},a_{t})$.
                \STATE \quad Add transition tuple $(s_{t}, a_{t}, r(s_{t}, a_{t}), s_{t+1})$ to replay buffer $\mathcal D$.
                \STATE \textbf{One mini-batch descent step}
                \STATE \quad Sample from the replay buffer: $\{ (s_t^i, a_t^i, r_t^i, s_{t+1}^i) \}_{i=1, \ldots, N} \sim \mathcal D$.
                \STATE \quad Compute loss $J_Q = \sum_{i=1}^N \left( r_t^i + \gamma \max_{a' \in \mathcal A} Q(t+1,s_{t+1}^i,a') - Q(t,s_t^i,a_t^i) \right)^2$.
                \STATE \quad Update $\theta$ according to $\nabla_\theta J_Q$ using ADAM with gradient norm clipping.
                \IF{number of steps $\operatorname{mod} M = 0$}
                    \STATE Update target network $\theta' \leftarrow \theta$.
                    
                \ENDIF
            \ENDFOR
        \ENDFOR
    \end{algorithmic}
\end{algorithm}

\begin{algorithm}[ht]
    \caption{\textbf{Stochastic mean field simulation}}
    \label{alg:sim}
    \begin{algorithmic}[1]
        \STATE \textbf{Input}: Number of mean fields $K$, number of particles $M$, policy $\pi$.
        \FOR {$k = 1, \ldots, K$}
            \STATE Initialize particles $x_m^0 \sim \mu_0$ for all $m = 1, \ldots, M$.
            \FOR {$t \in \mathcal T$}
                \STATE Define empirical measure $\mathbb G_{t}^k \leftarrow \frac 1 M \sum_{m=1}^M \delta_{x_m^t}$.
                \FOR {$m = 1, \ldots, M$}
                    \STATE Sample action $a \sim \pi_t(\cdot \mid x_m^t)$.
                    \STATE Sample new particle state $x_m^{t+1} \sim p(\cdot \mid x_m^t, a, \mathbb G_{t}^k)$.
                \ENDFOR
            \ENDFOR
        \ENDFOR
        \RETURN average empirical mean field $(\frac{1}{K} \sum_{k=1}^K \mathbb G_{t}^k)_{t \in \mathcal T}$
    \end{algorithmic}
\end{algorithm}

\begin{algorithm}[ht]
    \caption{\textbf{Prior descent}}
    \label{alg:prior}
    \begin{algorithmic}[1]
        \STATE \textbf{Input}: Number of outer iterations $I$.
        \STATE \textbf{Input}: Initial prior policy $q \in \Pi$.
        \FOR {outer iteration $i=1, \ldots, I$}
            \STATE Find $\eta$ heuristically or minimally such that Algorithm~\ref{alg:boltzmann} with temperature $\eta$ and prior $q$ converges.
            \IF{no such $\eta$ exists}
                \RETURN $q$
            \ENDIF
            \STATE $q \leftarrow $ solution of Algorithm~\ref{alg:boltzmann} with temperature $\eta$ and prior $q$.
        \ENDFOR
    \end{algorithmic}
\end{algorithm}

\subsection{Implementation details}
For all the DQN experiments, we use the configurations given in Table~\ref{tab:params} and hyperparameters given in Table~\ref{hyperparameters}. Note that we add epsilon scheduling and a discount factor to DQN for stability reasons, i.e. the loss term has an additional factor smaller than one before the maximum operation, cf. \citet{mnih2013playing}. For the action-value network, we use a fully connected dueling architecture (\citet{wang2016dueling}) with one shared hidden layer of 256 neurons, and one separate hidden layer of 256 neurons for value and advantage stream each. As the activation function, we use ReLU. Further, we use gradient norm clipping and the ADAM optimizer. To allow for time-dependent policies, we append the current time to the observations. 

We transform all discrete-valued observations except time to corresponding one-hot vectors, except in the intractably large Taxi environment where we simply observe one value in $\{0,1\}$ for each tile's passenger status. For evaluation of exploitability, we compare the values of the optimal policy and the evaluated policy in the MDP induced by the mean field generated by the evaluated policy. In intractable cases, we use DQN to approximately obtain the optimal policy. In this case, we obtain the values by averaging over many episodes in the MDP induced by the mean field generated by the evaluated policy via Algorithm~\ref{alg:sim}.

\begin{table}[ht]
  \caption{Boltzmann DQN Iteration Parameters}
  \label{tab:params}
  \centering
  \begin{tabular}{lllll}
    \\
    Parameter     & RPS & SIS & Taxi \\
    \midrule
    Fixed point iteration count & $1000$ & $50$ & $15$ \\
    Number of particles for mean field & $1000$ & $1000$ & $200$ \\
    Number of mean fields & $5$ & $5$ & $5$ \\
    Number of episodes for evaluation & $2000$ & $2000$ & $500$
  \end{tabular}
\end{table}

\begin{table}[ht]
  \caption{DQN Hyperparameters}
  \label{hyperparameters}
  \centering
  \begin{tabular}{ll}
    \\
    Hyperparameter     & Value \\
    \midrule
    Replay buffer size & $10000$ \\
    ADAM Learning rate     & $0.0005$ \\
    Discount factor     & $0.99$ \\
    Target update frequency     & $500$ \\
    Gradient clipping norm     & $40$ \\
    Mini-batch size     & $128$ \\
    Epsilon schedule     & $1$ linearly down to $0.02$ at $0.8$ times maximum steps \\
    Total epochs     & $1000$
  \end{tabular}
\end{table}

\subsection{Problems}

\begin{table}[ht]
  \caption{Problem Properties}
  \label{tab:problems}
  \centering
  \begin{tabular}{llll}
    \\
    Problem        & $\left| \mathcal T \right|$    & $\left| \mathcal S \right|$    & $|\mathcal A|$ \\
    \midrule
    LR          & $2$               & $3$               & $2$ \\
    RPS         & $2$               & $4$               & $3$ \\
    SIS         & $50$              & $2$               & $2$ \\
    Taxi        & $100$             & ${\sim} 2^{27}$   & $5$
  \end{tabular}
\end{table}

Summarizing properties of the considered problems are given in Table~\ref{tab:problems}.

\paragraph{LR.}
Similar to the example mentioned in the main text, we let a large number of agents choose simultaneously between going left ($L$) or right ($R$). Afterwards, each agent shall be punished proportional to the number of agents that chose the same action, but more-so for choosing right than left. 

More formally, let $\mathcal S = \{C, L, R\}$, $\mathcal A = \mathcal S \setminus \{C\}$, $\mu_0(C) = 1$, $r(s,a,\mu_t) = - \mathbf 1_{\{L\}}(s) \cdot \mu_t(L) - 2 \cdot \mathbf 1_{\{R\}}(s) \cdot \mu_t(R)$ and $\mathcal T = \{0, 1\}$. Note the difference to the toy example in the main text: right is punished more than left. The transition function allows picking the next state directly, i.e. for all $s,s' \in \mathcal S, a \in \mathcal A$,
\begin{align*}
    \mathbb P(S_{t+1} = s' \mid S_{t} = s, A_t = a) &= \mathbf 1_{\{s'\}}(a) \, .
\end{align*}

For this example, we have $K_Q = 1$ since the return $Q$ of the initial state changes linearly with $\mu_1$ and lies between $0$ and $-2$, while the distance between two mean fields is also bounded by $2$. Analogously, $K_{\Psi} = 1$ since $(\Psi(\pi))_1$ similarly changes linearly with $\pi_0$, and both can change at most by $2$. Thus, we obtain guaranteed convergence via Boltzmann iteration if $\eta > 1$. In numerical evaluations, we see convergence already for $\eta \geq 0.7$.

\paragraph{RPS.}
This game is inspired by \citet{shapley1964some} and their generalized non-zero-sum version of Rock-Paper-Scissors, for which classical fictitious play would not converge. Each of the agents can choose between rock, paper and scissors, and obtains a reward proportional to double the number of beaten agents minus the number of agents beating the agent. We modify the proportionality factors such that a uniformly random prior policy does not constitute a mean field equilibrium.

Let $\mathcal S = \{0, R, P, S\}$, $\mathcal A = \mathcal S \setminus \{0\}$, $\mu_0(0) = 1$, $\mathcal T = \{0, 1\}$, and for any $a \in \mathcal A, \mu_t \in \mathcal P(\mathcal S)$,
\begin{align*}
    r(R,a,\mu_t) &= 2 \cdot \mu_t(S) - 1 \cdot \mu_t(P), \\
    r(P,a,\mu_t) &= 4 \cdot \mu_t(R) - 2 \cdot \mu_t(S), \\
    r(S,a,\mu_t) &= 6 \cdot \mu_t(P) - 3 \cdot \mu_t(R) \, .
\end{align*} 
The transition function allows picking the next state directly, i.e. for all $s,s' \in \mathcal S, a \in \mathcal A$,
\begin{align*}
    \mathbb P(S_{t+1} = s' \mid S_{t} = s, A_t = a) &= \mathbf 1_{\{s'\}}(a) \, .
\end{align*}

\paragraph{SIS.}
In this problem, a large number of agents can choose between social distancing (D) or going out (U). If a susceptible (S) agent chooses social distancing, they may not become infected (I). Otherwise, an agent may become infected with a probability proportional to the number of agents being infected. If infected, an agent will recover with a fixed chance every time step. Both social distancing and being infected have an associated cost.

Let $\mathcal S = \{S, I\}$, $\mathcal A = \{U, D\}$, $\mu_0(I) = 0.6$, $r(s,a,\mu_t) = - \mathbf 1_{\{I\}}(s) - 0.5 \cdot \mathbf 1_{\{D\}}(a)$ and $\mathcal T = \{0, \ldots, 50\}$. We find that similar parameters produce similar results, and set the transition probability mass functions as
\begin{align*}
    \mathbb P(S_{t+1} = S \mid S_{t} = I) &= 0.3 \\
    \mathbb P(S_{t+1} = I \mid S_{t} = S, A_{t} = U) &= 0.9^2 \cdot \mu_t(I) \\
    \mathbb P(S_{t+1} = I \mid S_{t} = S, A_{t} = D) &= 0 \, .
\end{align*}

\paragraph{Taxi.}
In this problem, we consider a $K \times L$ grid. The state is described by a tuple $(x,y,x',y',p,B)$ where $(x,y)$ is the agent's position, $(x',y')$ indicates the current desired destination of the passenger or is $(0,0)$ otherwise, and $p \in \{0,1\}$ indicates whether a passenger is in the taxi or not. Finally, $B$ is a $K \times L$ matrix indicating whether a new passenger is available for the taxi on the corresponding tile. All taxis start on the same tile and have no passengers in the queue or on the map at the beginning. The problem runs for 100 time steps.

The taxi can choose between five actions $W, U, D, L, R$, where $W$ (Wait) allows the taxi to pick up / deliver passengers, and $U, D, L, R$ (Up, Down, Left, Right) allows it to move in all four directions. As there are many taxis, there is a chance of a jam on tile $s$ given by $\min(0.7, 10 \cdot \mu_t(s))$, i.e. the taxi will not move with this probability. The taxi also cannot move into walls or back into the starting tile, in which case it will stay on its current tile. With a probability of $0.8$, a new passenger spawns on one randomly chosen free tile of each region. On picking up a passenger, the destination is generated by randomly picking any free tile of the same region. Delivering passengers to a destination and picking them up gives a reward of $1$ in region $1$ and $1.2$ in region $2$. 

For our experiments, we use the following small map, where $S$ denotes the starting tile, $1$ denotes a free tile from region 1, $2$ denotes a free tile from region 2 and $H$ denotes an impassable wall:
\begin{align*}
    \begin{pmatrix}
    1 & 1 & 1 \\
    1 & 1 & 1 \\
    1 & 1 & 1 \\
    H & S & H \\
    2 & 2 & 2 \\
    2 & 2 & 2 \\
    2 & 2 & 2
    \end{pmatrix}
\end{align*}

This produces a similar situation as in LR, where a fraction of taxis should choose each region so the values balance out, while also requiring solution of a problem that is intractable to solve exactly via dynamic programming. 

\newpage
\subsection{Further experiments}

\begin{figure}[ht]
	\centering
    \includegraphics[width=\textwidth]{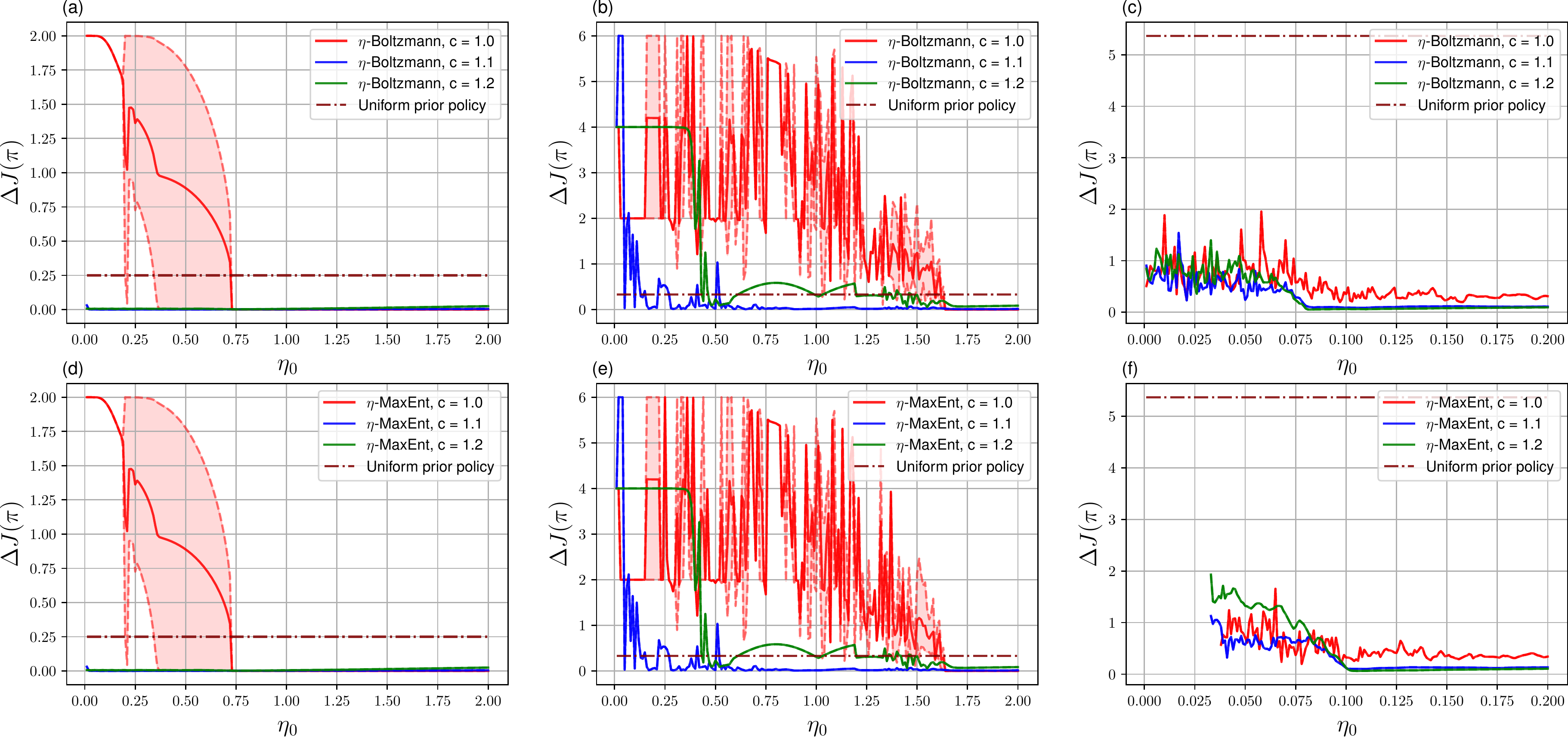}
    \caption{Mean exploitability (straight lines), maximum and minimum (dashed lines) over the final 10 iterations of the last outer iteration. 50 outer iterations and 100 inner iterations each; (a, d) LR; (b, e) RPS; (c, f) SIS. Maximum entropy (MaxEnt) results begin at higher temperatures due to limited floating point accuracy. The exploitability of the initial uniform prior policy is indicated by the dashed horizontal line. } \label{fig:exploitability-a1}
\end{figure}

\begin{figure}[ht]
	\centering
    \includegraphics[width=\textwidth]{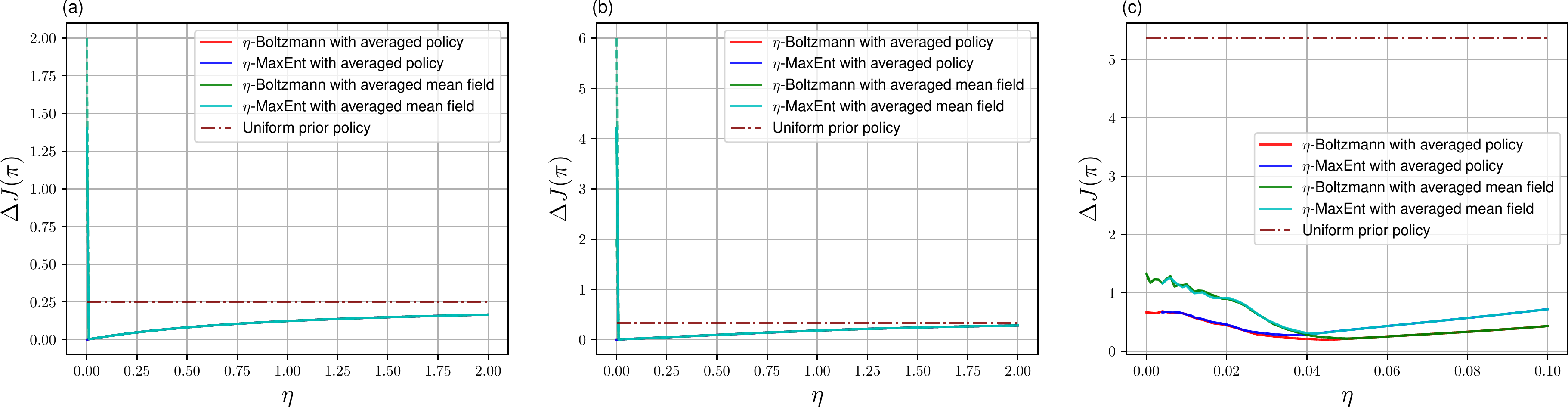}
    \caption{Mean exploitability over the final 10 iterations. Dashed lines represent maximum and minimum over the final 10 iterations. (a) LR, 10000 iterations; (b) RPS, 10000 iterations; (c) SIS, 1000 iterations. The exploitability of the uniform prior policy is indicated by the dashed horizontal line. } \label{fig:exploitability-a2}
\end{figure}

\begin{figure}[ht]
	\centering
    \includegraphics[width=\textwidth]{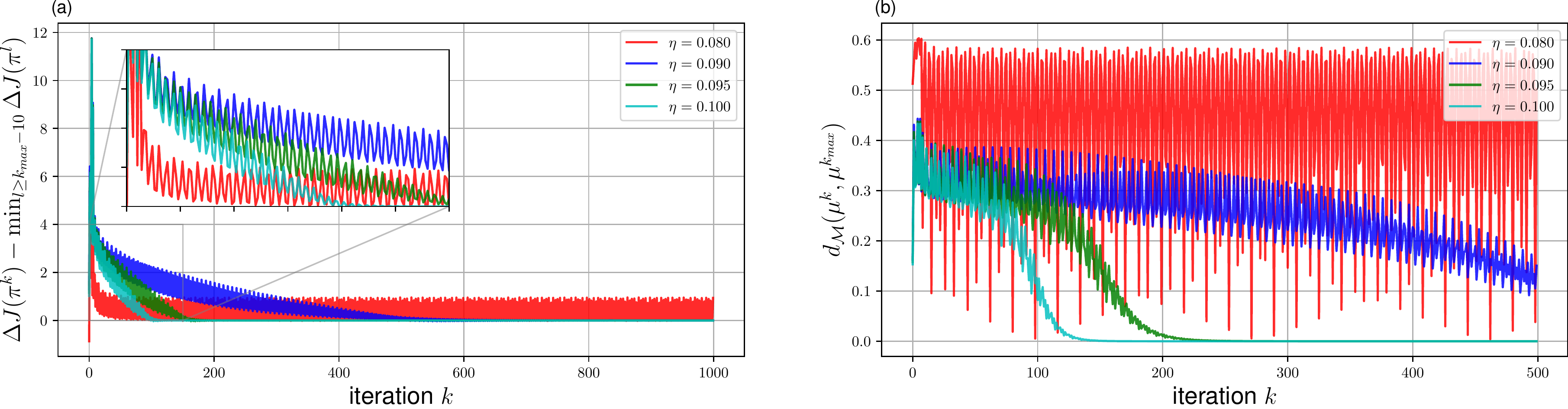}
    \caption{(a) Difference between current and final minimum exploitability over the last 10 iterations; (b) Distance between current and final mean field, cut off at 500 iterations for readability. Plotted for the $\eta$-RelEnt iterations in SIS for the indicated temperature settings and uniform prior policy. } \label{fig:convergence-a3}
\end{figure}

\begin{figure}[ht]
	\centering
    \includegraphics[width=\textwidth]{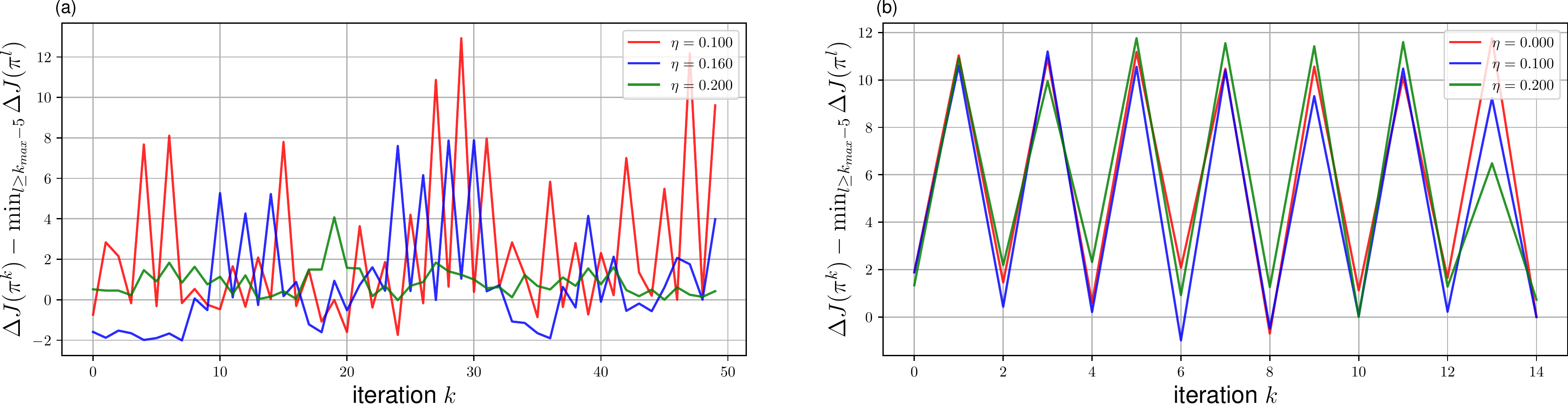}
    \caption{Difference between current and final estimated minimum exploitability over the last 5 iterations. (a) SIS, 50 iterations; (b) Taxi, 15 iterations. Plotted for the $\eta$-Boltzmann DQN iteration for the indicated temperature settings and uniform prior policy. } \label{fig:convergence-a4}
\end{figure}

In Figure~\ref{fig:exploitability-a1}, we observe that prior descent for both Boltzmann and RelEnt MFE with the same uniform prior policy performs qualitatively similarly, and coincide in LR and SIS except for numerical inaccuracies. It can be seen that using a temperature sufficiently low to converge in LR and RPS allows prior descent to descend to the exact MFE iteratively. In SIS on the other hand, picking a fixed temperature that converges for the initial uniform prior policy does not guarantee monotonic improvement of exploitability afterwards. Instead, by applying the heuristic
\begin{align*}
    \eta_{i+1} = \eta_i \cdot c
\end{align*}
for each outer iteration $i$, where $c \geq 1$ adjusts the temperature after each outer iteration, we avoid scanning over all temperatures in each step and reach convergence to a good approximate mean field equilibrium for both Boltzmann and MaxEnt iteration.

In Figure~\ref{fig:exploitability-a2} empirical results are shown for fictitious play variants averaging only policy or mean field. In the simple one-step toy problems LR and RPS, averaging the policies appears to converge to the exact solution without regularization and to the regularized solution with regularization. Averaging the mean fields on the other hand fails, since this method can only produce deterministic policies. By applying any amount of regularization, averaging the mean fields is led to success in LR and SIS. Nonetheless, both methods fail to converge to the MFE in SIS and produce worse results than obtained by prior descent in Figure~\ref{fig:exploitability-a1}.

In Figure~\ref{fig:convergence-a3} we depict the convergence of exploitability and mean field of MaxEnt iteration in SIS. The results are qualitatively similar with Boltzmann iteration and, as in the main text, show the convergence behaviour near the critical temperature leading to convergence.

In Figure~\ref{fig:convergence-a4} we depict the convergence of exploitability for Boltzmann DQN iteration in SIS and Taxi during one of the runs. All 4 other runs show similar qualitative behaviour. As can be seen, the highest temperature of $0.2$ shows less oscillatory behaviour, stabilizing Boltzmann DQN iteration. In Taxi, it can be seen that the used temperatures are insufficient to allow Boltzmann DQN iteration to converge. We believe that using prior descent could allow for better results. We could not verify this due to the high computational cost, as this includes repeatedly and sequentially solving an expensive reinforcement learning problem.
	
\begin{figure}[t]
	\centering
    \includegraphics[width=1\textwidth]{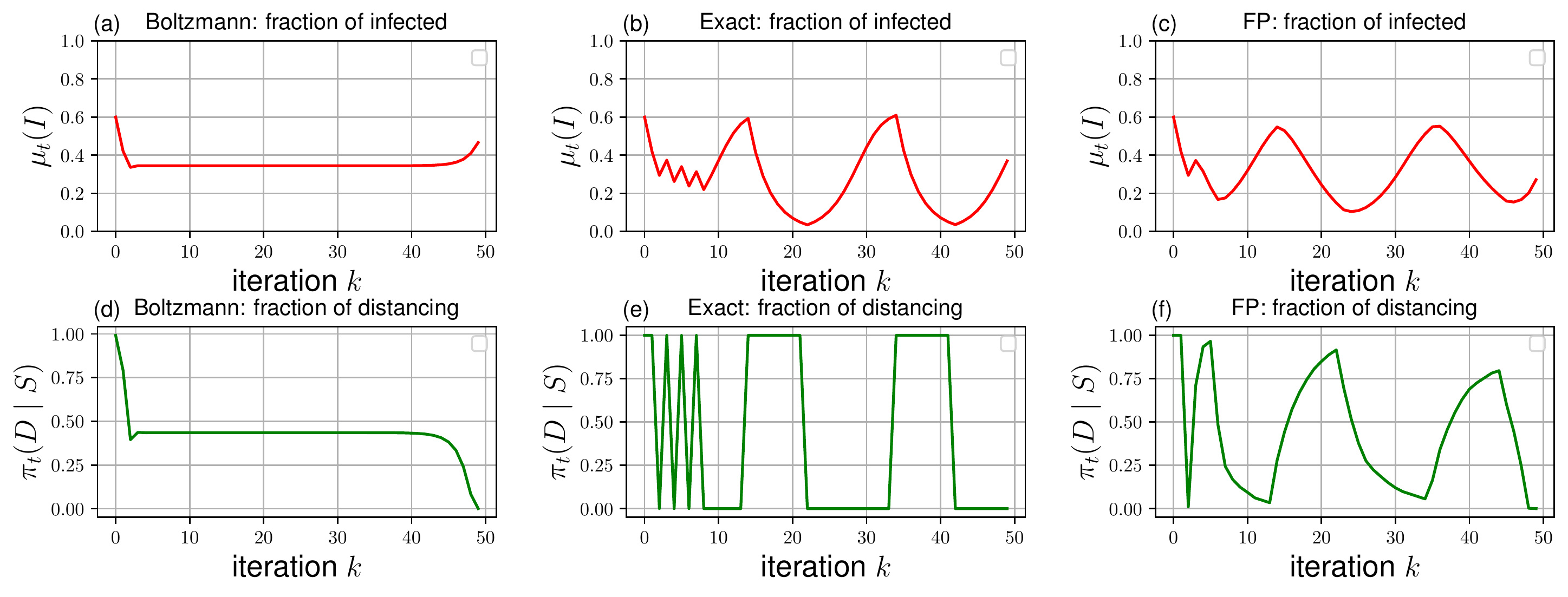}
    \caption{Fraction of infected agents and fraction of susceptible agents picking social distancing over time. (a, d): Boltzmann iteration ($\eta = 0.07$); (b, e): exact fixed point iteration; (c, f): fictitious play (averaging both policy and mean field) results in SIS after 500 iterations. More iterations and averaging only policy or mean field show same qualitative results. } \label{fig:sis-example}
\end{figure}

Finally, in Figure~\ref{fig:sis-example} we depict the resulting behavior in the SIS case. In the Boltzmann iteration result, at the beginning the number of infected is high enough to make social distancing the optimal action to take. As the number of infected falls, it reaches an equilibrium point where both social distancing or potentially getting infected are of equal value. Finally, as the game ends at time $t=T=50$, there is no point in social distancing any more. Our approach yields intuitive results here, while exact fixed point iteration and FP fail to converge.

\section{Proofs} 
\subsection{Completeness of mean field and policy space} \label{app:comp}
\begin{lemma}
The metric spaces $(\Pi, d_{\Pi})$ and $(\mathcal M, d_{\mathcal M})$ are complete metric spaces.
\end{lemma}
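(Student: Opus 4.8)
The plan is to exhibit both $\Pi$ and $\mathcal{M}$ as closed (indeed compact) subsets of a finite-dimensional Euclidean space and to invoke the standard fact that a closed subset of a complete metric space is itself complete. Since $\mathcal{S}$, $\mathcal{A}$ and $\mathcal{T}$ are all finite, a policy $\pi$ is just a finite tuple of numbers $(\pi_t(a \mid s))_{t \in \mathcal{T}, s \in \mathcal{S}, a \in \mathcal{A}}$, so $\Pi$ embeds naturally into $\mathbb{R}^{|\mathcal{T}||\mathcal{S}||\mathcal{A}|}$; likewise $\mathcal{M}$ embeds into $\mathbb{R}^{|\mathcal{T}||\mathcal{S}|}$ via $(\mu_t(s))_{t,s}$. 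Under these identifications, $\Pi$ is a product of probability simplices $\mathcal{P}(\mathcal{A})$ (one for each pair $(t,s)$) and $\mathcal{M}$ is a product of simplices $\mathcal{P}(\mathcal{S})$ (one for each $t$), each cut out by the closed conditions of non-negativity of every coordinate and equality of the appropriate coordinate sums to one.

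First I would recall that on a finite space the total variation distance is, up to the factor $\tfrac12$, the $\ell^1$ distance, so that $d_{\Pi}$ and $d_{\mathcal{M}}$ are maxima over the finitely many indices $t$ (and $s$) of $\ell^1$ distances between coordinate blocks. I would then argue completeness directly: given a $d_{\Pi}$-Cauchy sequence $(\pi^{(n)})$, the inequality $|\pi_t^{(n)}(a \mid s) - \pi_t^{(m)}(a \mid s)| \le 2\, d_{\Pi}(\pi^{(n)}, \pi^{(m)})$ shows that each scalar coordinate sequence is Cauchy in $\mathbb{R}$, hence converges; collecting these coordinate limits defines a candidate limit $\pi^{\infty}$. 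The simplex constraints $\pi_t^{\infty}(a\mid s) \ge 0$ and $\sum_{a} \pi_t^{\infty}(a \mid s) = 1$ survive the passage to the limit because non-strict inequalities and linear equalities are closed conditions, so $\pi^{\infty} \in \Pi$; finally $d_{\Pi}(\pi^{(n)}, \pi^{\infty}) \to 0$ because the metric is a maximum over finitely many indices of quantities that each tend to zero. The argument for $\mathcal{M}$ is identical, with $s$ and $a$ replaced by $s$ alone.

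The only subtlety worth flagging is the relationship between the sup-of-$\ell^1$ metric and the ordinary Euclidean metric: topological equivalence alone would not preserve Cauchy sequences, but on a finite-dimensional space all norms are equivalent up to multiplicative constants, so the equivalence is bi-Lipschitz and completeness transfers automatically. The direct coordinate-wise argument above sidesteps this point entirely, which is why I prefer it; the one place requiring genuine (if elementary) care is verifying that the coordinate-wise limit still lies in the constraint set, i.e. that the product of simplices is closed, rather than anything about the particular metric chosen. This closedness, together with completeness of $\mathbb{R}^N$, is the entire content of the lemma.
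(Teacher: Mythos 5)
Your proof is correct and takes essentially the same route as the paper's: a direct coordinate-wise argument showing each scalar sequence is Cauchy in $\mathbb{R}$, passing to the limit, and verifying the limit stays in the space. If anything, you are slightly more explicit than the paper, which asserts without detail that the coordinate-wise limit lies in $\mathcal{M}$ (respectively $\Pi$) and converges in the sup-of-$d_{TV}$ metric, points you justify via closedness of the simplex constraints and finiteness of the index set.
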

\begin{proof}
The metric space $(\mathcal M, d_{\mathcal M})$ is a complete metric space. Let $(\mu^n)_{n \in \mathbb N} \in \mathcal M^{\mathbb N}$ be a Cauchy sequence of mean fields. Then by definition, for any $\varepsilon > 0$ there exists integer $N > 0$ such that for any $m,n > N$ we have
\begin{align*}
    d_{\mathcal M}(\mu^n, \mu^m) &< 0.5\varepsilon \\
    \implies \forall t \in \mathcal T: d_{TV}(\mu^n_t, \mu^m_t) = \frac{1}{2} \sum_{s \in \mathcal S} | \mu^n_t(s) - \mu^m_t(s) | &< 0.5\varepsilon \\
    \implies \forall t \in \mathcal T, s \in \mathcal S: | \mu^n_t(s) - \mu^m_t(s) | &< \varepsilon \, .
\end{align*}
By completeness of $\mathbb R$ there exists the limit of $(\mu^n_t(s))_{n \in \mathbb N}$ for all $t \in \mathcal T, s \in \mathcal S$, suggestively denoted by $\mu_t(s)$. The mean field $\mu = \{ \mu_t \}_{t \in \mathcal T}$ with the probabilities defined by the aforementioned limits fulfills $\mu^n \to \mu$ and is in $\mathcal M$, showing completeness of $\mathcal M$. 

We do this analogously for $(\Pi, d_{\Pi})$. Thus, $(\Pi, d_{\Pi})$ and $(\mathcal M, d_{\mathcal M})$ are complete metric spaces.
\end{proof}

\subsection{Lipschitz continuity} \label{app:lipschitz}
\begin{lemma} \label{lem:lip}
Assume bounded and Lipschitz functions $f: X \to \mathbb R$ and $g: X \to \mathbb R$ mapping from a metric space $(X,d_X)$ into $\mathbb R$ with Lipschitz constants $C_f, C_g$ and bounds $\left| f(x) \right| \leq M_f$, $\left| g(x) \right| \leq M_g$. The sum of both functions $f+g$, the product of both functions $f \cdot g$ and the maximum of both functions $\max(f,g)$ are all Lipschitz and bounded with Lipschitz constants $C_f + C_g$, $(M_f C_g + M_g C_f)$, $\max(C_f, C_g)$ and bounds $M_f + M_g$, $M_f M_g$, $\max(M_f, M_g)$.
\end{lemma}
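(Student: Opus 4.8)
The plan is to handle the three operations one at a time and, in each case, to establish the bound first (immediate from the pointwise bounds on $f$ and $g$) and then the Lipschitz estimate by a direct computation on an arbitrary pair $x,y \in X$. Throughout I would fix $x,y \in X$ and work from the three ingredients I am allowed to assume: $|f(x)-f(y)| \leq C_f\, d_X(x,y)$, $|g(x)-g(y)| \leq C_g\, d_X(x,y)$, and the uniform bounds $|f| \leq M_f$, $|g| \leq M_g$.

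For the sum, both assertions are the triangle inequality. The bound is $|f(x)+g(x)| \leq |f(x)| + |g(x)| \leq M_f + M_g$, and the Lipschitz estimate is $|(f+g)(x) - (f+g)(y)| \leq |f(x)-f(y)| + |g(x)-g(y)| \leq (C_f + C_g)\, d_X(x,y)$. For the product, the bound is $|f(x)g(x)| = |f(x)|\,|g(x)| \leq M_f M_g$, and the Lipschitz estimate would use the standard add-and-subtract decomposition $f(x)g(x) - f(y)g(y) = f(x)\bigl(g(x)-g(y)\bigr) + g(y)\bigl(f(x)-f(y)\bigr)$; applying the triangle inequality and then the bounds $|f(x)| \leq M_f$, $|g(y)| \leq M_g$ yields the constant $M_f C_g + M_g C_f$.

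For the maximum, the bound $|\max(f(x),g(x))| \leq \max(M_f,M_g)$ follows because both $f(x)$ and $g(x)$ lie in the interval $[-\max(M_f,M_g),\,\max(M_f,M_g)]$, and the maximum of two such numbers stays in that interval. The Lipschitz estimate rests on the elementary pointwise inequality $|\max(a,b) - \max(c,d)| \leq \max(|a-c|,|b-d|)$, which I would prove by a short argument: assuming without loss of generality $\max(a,b) \geq \max(c,d)$ and that the left maximum is attained at $a$, one has $\max(a,b) - \max(c,d) \leq a - c \leq |a-c| \leq \max(|a-c|,|b-d|)$, and symmetry in the two pairs closes the case. Specializing to $a = f(x)$, $b = g(x)$, $c = f(y)$, $d = g(y)$ then gives $|\max(f,g)(x) - \max(f,g)(y)| \leq \max(C_f\, d_X(x,y),\, C_g\, d_X(x,y)) = \max(C_f,C_g)\, d_X(x,y)$.

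The only step that is not purely mechanical is the pointwise inequality for the maximum, and even that reduces to a one-line case analysis; the sum and product cases are routine manipulations with the triangle inequality. Hence I expect no genuine obstacle, and the writeup is essentially a matter of recording these three short computations cleanly.
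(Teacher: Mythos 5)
Your proposal is correct and follows essentially the same route as the paper: the triangle inequality for the sum, the add-and-subtract decomposition $f(x)g(x)-f(y)g(y)=f(x)(g(x)-g(y))+g(y)(f(x)-f(y))$ for the product, and a case analysis on where the maxima are attained for $\max(f,g)$. Your only departure is organizational: you package the max case as the standalone pointwise inequality $\lvert \max(a,b)-\max(c,d)\rvert \leq \max(\lvert a-c\rvert,\lvert b-d\rvert)$ and specialize, which is tidier than the paper's inline four-way case split but is mathematically the same argument.
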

\begin{proof}
Let $x,y \in X$ be arbitrary. By the triangle inequality, we obtain 
\begin{align*}
    \lvert f(x) + g(x) - (f(y) + g(y)) \rvert 
    &\leq \lvert f(x) - f(y) \rvert + \lvert g(x) - g(y) \rvert \leq (C_f + C_g) d_X(x,y) \, .
\end{align*}
Analogously, we obtain
\begin{align*}
    \lvert f(x)g(x) - f(y)g(y) \rvert 
    &\leq \lvert f(x)g(x) - f(x)g(y) \rvert + \lvert f(x)g(y) - f(y)g(y) \rvert \leq (M_f C_g + M_g C_f) d_X(x,y) \, .
\end{align*}
For the maximum of both functions, consider case by case. If $f(x) \geq g(x)$ and $f(y) \geq g(y)$ we obtain
\begin{align*}
    \lvert \max(f(x),g(x)) - \max(f(y),g(y)) \rvert = \lvert f(x) - f(y) \rvert
    &\leq C_f d_X(x,y)
\end{align*}
and analogously for $g(x) \geq f(x)$ and $g(y) \geq f(y)$ 
\begin{align*}
    \lvert \max(f(x),g(x)) - \max(f(y),g(y)) \rvert = \lvert g(x) - g(y) \rvert
    &\leq C_g d_X(x,y) \, .
\end{align*}
On the other hand, if $g(x) < f(x)$ and $g(y) \geq f(y)$ , we have either $g(y) \geq f(x)$ and thus
\begin{align*}
    \lvert \max(f(x),g(x)) - \max(f(y),g(y)) \rvert = \lvert f(x) - g(y) \rvert = g(y) - f(x) < g(y) - g(x)
    &\leq C_g d_X(x,y)
\end{align*}
or $g(y) < f(x)$ and thus
\begin{align*}
    \lvert \max(f(x),g(x)) - \max(f(y),g(y)) \rvert = \lvert f(x) - g(y) \rvert = f(x) - g(y) \leq f(x) - f(y)
    &\leq C_f d_X(x,y) \, .
\end{align*}
The case for $f(x) < g(x)$ and $f(y) \geq g(y)$ as well as boundedness is analogous.
\end{proof}

\subsection{Proof of Proposition~1} \label{app:kakutani}
\begin{proof}
Since we work with finite $\mathcal T, \mathcal S, \mathcal A$, we identify the space of mean fields $\mathcal M$ with the $|\mathcal T| (|\mathcal S|-1)$-dimensional simplex $S_{|\mathcal T| (|\mathcal S|-1)} \subseteq \mathbb R^{|\mathcal T| (|\mathcal S|-1)}$ via the values of the probability mass functions at all times and states. Analogously the space of policies $\Pi$ is identified with $S_{|\mathcal T| |\mathcal S| (|\mathcal A|-1)} \subseteq \mathbb R^{|\mathcal T| |\mathcal S| (|\mathcal A|-1)}$.

Define the set-valued map $\hat \Gamma: S_{|\mathcal T| |\mathcal S| (|\mathcal A|-1)} \to 2^{S_{|\mathcal T| |\mathcal S| (|\mathcal A|-1)}}$ mapping from a policy $\pi$ represented by the input vector, to the set of vector representations of optimal policies in the MDP induced by $\Psi(\pi)$.

A policy $\pi$ is optimal in the MDP induced by $\mu \in \mathcal M$ if and only if its value function defined by
\begin{align*}
    V^{\pi}(\mu, t, s) &= \sum_{a \in \mathcal A} \pi_t(a \mid s) \left( r(s,a,\mu_t) + \sum_{s' \in \mathcal S} p(s'\mid s, a, \mu_t) V^{\pi}(\mu, t+1,s') \right) \, ,
\end{align*}
is equal to the optimal action-value function defined by
\begin{align*}
    V^*(\mu, t, s) &= \max_{a \in \mathcal A} \left( r(s,a,\mu_t) + \sum_{s' \in \mathcal S} p(s'\mid s, a, \mu_t) V^*(\mu, t+1, s') \right)
\end{align*}
for every $t \in \mathcal T, s \in \mathcal S$, with terminal conditions $V^*(\mu,T,s) \equiv V^{\pi}(\mu,T,s) \equiv 0$. Moreover, an optimal policy always exists. For more details, see e.g. \citet{puterman2014markov}. Define the optimal action-value function for every $t \in \mathcal T, s \in \mathcal S, a \in \mathcal A$ via
\begin{align*}
    Q^*(\mu, t,s,a) &= r(s,a,\mu_t) + \sum_{s' \in \mathcal S} p(s'\mid s, a, \mu_t) V^*(\mu, t+1, s')
\end{align*}
with terminal condition $Q^*(\mu,T,s,a) \equiv 0$. Then, the following lemma characterizes optimality of policies.

\begin{lemma} \label{lem:policyoptimality}
A policy $\pi$ fulfills $\pi \in \hat \Gamma(\hat \pi)$ if and only if 
\begin{align*}
    \pi_t(a \mid s) > 0 \implies a \in \argmax_{a' \in \mathcal A} Q^*(\Psi(\hat \pi), t, s, a')
\end{align*} 
for all $t \in \mathcal T, s \in \mathcal S, a \in \mathcal A$.
\end{lemma}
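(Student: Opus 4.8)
The plan is to fix the mean field $\mu = \Psi(\hat\pi)$ and exploit the definition of optimality already recorded above, namely that $\pi \in \hat\Gamma(\hat\pi)$ is equivalent to $V^\pi(\mu,t,s) = V^*(\mu,t,s)$ for every $t \in \mathcal T, s \in \mathcal S$. The whole argument rests on one elementary observation: for nonnegative weights $w_a$ with $\sum_a w_a = 1$ and reals $x_a$, the convex combination $\sum_a w_a x_a$ equals $\max_a x_a$ if and only if $w_a > 0$ implies $x_a = \max_{a'} x_{a'}$. I would first note that, by the definitions of $V^*$ and $Q^*$, we always have $V^*(\mu,t,s) = \max_{a} Q^*(\mu,t,s,a)$, so the $\argmax$ appearing in the statement is exactly the set of maximizers of $Q^*(\mu,t,s,\cdot)$.

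For the direction ``$\Leftarrow$'', I would assume the support condition holds at every $(t,s,a)$ and prove $V^\pi = V^*$ by backward induction on $t$. The base case is the terminal condition $V^\pi(\mu,T,s) = V^*(\mu,T,s) = 0$. For the inductive step, suppose $V^\pi(\mu,t+1,\cdot) = V^*(\mu,t+1,\cdot)$. Substituting this equality into the recursion for $V^\pi$ collapses the bracketed term into $Q^*(\mu,t,s,a)$, giving the key identity
\begin{align*}
    V^\pi(\mu,t,s) = \sum_{a \in \mathcal A} \pi_t(a \mid s)\, Q^*(\mu,t,s,a).
\end{align*}
Since $\pi_t(\cdot \mid s)$ is supported only on maximizers of $Q^*(\mu,t,s,\cdot)$, the convex-combination observation turns the right-hand side into $\max_{a} Q^*(\mu,t,s,a) = V^*(\mu,t,s)$, closing the induction.

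For the direction ``$\Rightarrow$'', I would assume $\pi$ is optimal, i.e. $V^\pi = V^*$ everywhere; here no induction is needed. Fixing any $(t,s)$ and substituting $V^\pi(\mu,t+1,\cdot) = V^*(\mu,t+1,\cdot)$ into the $V^\pi$ recursion yields the same identity $V^*(\mu,t,s) = \sum_{a \in \mathcal A} \pi_t(a \mid s) Q^*(\mu,t,s,a)$. Because the left-hand side equals $\max_a Q^*(\mu,t,s,a)$, the convex-combination observation forces $\pi_t(a \mid s) > 0 \implies a \in \argmax_{a'} Q^*(\mu,t,s,a')$, which is precisely the claimed support condition.

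I expect no deep obstacle here, as this is the standard finite-horizon Bellman characterization of optimal policies. The only points requiring care are making the convex-combination lemma precise, keeping the terminal-time indexing consistent with the stated $V^*(\mu,T,s) \equiv V^\pi(\mu,T,s) \equiv 0$, and noting that because optimality is defined pointwise for every $t$ and $s$ (including states unreachable under $\pi$), the support condition is likewise required at all $(t,s)$ and no reachability bookkeeping enters the argument.
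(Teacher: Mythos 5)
Your proof is correct and takes essentially the same approach as the paper's: a backward-induction Bellman argument whose core is the observation that a convex combination $\sum_a \pi_t(a \mid s)\,Q^*(\mu,t,s,a)$ equals $\max_a Q^*(\mu,t,s,a)$ if and only if $\pi_t(\cdot \mid s)$ is supported on the argmax, together with the pointwise (all $t,s$) definition of optimality. The only difference is cosmetic: the paper handles the forward direction by contradiction via a maximal failing time $t$, whereas you substitute the assumed equality $V^\pi = V^*$ at time $t+1$ directly, which is if anything slightly cleaner and, as you note, avoids any induction in that direction.
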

\begin{subproof}
To see the implication, consider $\pi \in \hat \Gamma(\hat \pi)$. Then, if the right-hand side was false, there exists a maximal $t \in \mathcal T$ and $s \in \mathcal S, a \in \mathcal A$ such that $\pi_t(a \mid s) > 0$ but $a \centernot\in \argmax_{a' \in \mathcal A} Q^*(\Psi(\hat \pi), t, s, a')$. Since for any $t' > t$ we have optimality, $V^{\pi}(\mu, t+1,s') = V^*(\mu, t+1,s')$ by induction. However, $V^{\pi}(\mu, t, s) < V^*(\mu, t, s)$ since the suboptimal action is assigned positive probability, contradicting optimality of $\pi$. On the other hand, if the right-hand side is true, then $V^{\pi}(\mu, t, s) = V^*(\mu, t, s)$ by induction, which implies that $\pi$ is optimal.
\end{subproof}

We will now check that the requirements of Kakutani's fixed point theorem hold for $\hat \Gamma$. The finite-dimensional simplices are convex, closed and bounded, hence compact. $\hat \Gamma$ maps to a non-empty set, as the induced mean field is uniquely defined and any finite MDP (induced by this mean field) has an optimal policy. 

For any $\pi$, $\hat \Gamma(\pi)$ is convex, since the set of optimal policies is convex as shown in the following. Consider a convex combination $\tilde \pi = \lambda \pi + (1-\lambda) \pi'$ of optimal policies $\pi, \pi'$ for $\lambda \in [0,1]$. Then, the resulting policy will be optimal, since we have
\begin{align*}
    \tilde \pi_t(a \mid s) > 0 \implies \pi_t(a \mid s) > 0 \vee \pi_t'(a \mid s) > 0 \implies a \in \argmax_{a \in \mathcal A} Q^*(\Psi(\hat \pi), t, s, a)
\end{align*}
for any $t \in \mathcal T, s \in \mathcal S, a \in \mathcal A$ and thus optimality by Lemma~\ref{lem:policyoptimality}.

Finally, we show that $\hat \Gamma$ has a closed graph. Consider arbitrary sequences $(\pi_n, \pi_n') \to (\pi, \pi')$ with $\pi_n' \in \hat \Gamma(\pi_n)$. It is then sufficient to show that $\pi' \in \hat \Gamma(\pi)$. By the standing assumption, we have continuity of $\Psi$ and $\mu \to Q^*(\mu, t, s, a)$ for any $t \in \mathcal T, s \in \mathcal S, a \in \mathcal A$, as sums, products and compositions of continuous functions remain continuous. Therefore, the composition $\pi \to Q^*(\Psi(\pi), t, s, a)$ is continuous. To show that $\pi' \in \hat \Gamma(\pi)$, assume that $\pi' \centernot\in \hat \Gamma(\pi)$. By Lemma~\ref{lem:policyoptimality} there exists $t \in \mathcal T, s \in \mathcal S, a \in \mathcal A$ such that $\pi_t'(a \mid s) > 0$ and further there exists $a' \in \mathcal A$ such that $Q^*(\Psi(\pi), t, s, a') > Q^*(\Psi(\pi), t, s, a)$. Fix such an $a' \in \mathcal A$. Let $\delta \equiv Q^*(\Psi(\pi), t, s, a') - Q^*(\Psi(\pi), t, s, a)$, then by continuity there exists $\varepsilon > 0$ such that for all $\hat \pi \in \Pi$ we have 
\begin{align*}
    d_{\Pi}(\hat \pi, \pi) < \varepsilon \implies \left| Q^*(\Psi(\hat \pi), t, s, a) - Q^*(\Psi(\pi), t, s, a) \right| < \frac{\delta}{2} \, .
\end{align*} 
By convergence, there is an integer $N \in \mathbb N$ such that for all $n > N$ we have $d_{\Pi}(\pi_n, \pi) < \varepsilon$ and therefore
\begin{align*}
    Q^*(\Psi(\pi_n), t, s, a') > Q^*(\Psi(\pi), t, s, a') - \frac{\delta}{2} = Q^*(\Psi(\pi), t, s, a) + \frac{\delta}{2} > Q^*(\Psi(\pi_n), t, s, a) \, .
\end{align*} 
Since $(\pi'_n)_t(a \mid s) \to \pi'_t(a \mid s) > 0$, there also exists $M \in \mathbb N$ such that for all $m > M$,
\begin{align*}
    \left| (\pi'_m)_t(a \mid s) - \pi'_t(a \mid s) \right| < \pi'_t(a \mid s) \, .
\end{align*} 
Let $n > \max(N, M)$, then it follows that $(\pi'_n)_t(a \mid s) > 0$ but $a \centernot\in \argmax_{a' \in \mathcal A} Q^*(\Psi(\pi), t, s, a')$ since we have $Q^*(\Psi(\pi_n), t, s, a') > Q^*(\Psi(\pi_n), t, s, a)$, contradicting $\pi'_n \in \hat\Gamma(\pi_n)$ by Lemma~\ref{lem:policyoptimality}. Hence, $\hat \Gamma$ must have a closed graph.

By Kakutani's fixed point theorem, there exists a fixed point $\pi^*$ that generates some mean field $\Psi(\pi^*)$. The associated pair $(\pi^*, \Psi(\pi^*))$ is an MFE by definition.
\end{proof}

\subsection{Proof of Proposition~3} \label{app:existence}
\begin{proof}
The space of mean fields $(\mathcal M, d_{\mathcal M})$ is equivalent to convex and compact finite-dimensional simplices. In this representation, each coordinate of the operators $\tilde \Gamma_\eta(\mu)$ and $\Gamma_\eta(\mu)$ consists of compositions, sums and products of continuous functions, since the functions $r(s,a,\mu_t)$ and $p(s' \mid s,a,\mu_t)$ are assumed to be continuous. Existence of a fixed point follows immediately by Brouwer's fixed point theorem. 
\end{proof}

\subsection{Proof of Theorem~1} \label{app:epsopt}
\begin{proof}
The proof is a slightly simplified version of the one found in \citet{saldi2018markov}. Note that we require the results later, so for convenience we give the full details.

The empirical measure $\mathbb G^N_{S_t}$ is a random variable on $\mathcal P(\mathcal S)$, i.e. its law $\mathcal L(\mathbb G^N_{S_t}) \in \mathcal P(\mathcal P(\mathcal S))$ is a distribution over probability measures. Since we want to show convergence of the empirical measure to the mean field, let us pick a metric on $\mathcal P(\mathcal P(\mathcal S))$. Remember that we metrized $\mathcal P(\mathcal S)$ with the total variation distance. We metrize $\mathcal P(\mathcal P(\mathcal S))$ with the 1-Wasserstein metric defined for any $\Phi, \Psi \in \mathcal P(\mathcal P(\mathcal S))$ by the infimum over couplings
\begin{align*}
    W_1(\Phi, \Psi) \equiv \inf_{\mathcal L(X_1) = \Phi, \mathcal L(X_2) = \Psi} \mathbb E \left[ d_{TV}(X_1, X_2) \right] \, .
\end{align*}

\begin{lemma} \label{lem:equiv-conv}
Let $\{ \Phi_n \}_{n \in \mathbb N}$ be a sequence of measures with $\Phi_n \in \mathcal P(\mathcal P(\mathcal S))$ for all $n \in \mathbb N$. Further, let $\mu \in \mathcal P(\mathcal S)$ arbitrary. Then, the following are equivalent.
\begin{enumerate}
    \item[(a)] $W_1(\Phi_n, \delta_\mu) \to 0$ as $n \to \infty$
    \item[(b)] $\mathbb E \left[ \left| F(X_n) - F(X) \right| \right] \to 0$ as $n \to \infty$ for any continuous, bounded $F: \mathcal P(\mathcal S) \to \mathbb R$, any sequence $\{ X_n \}_{n \in \mathbb N}$ of $\mathcal P(\mathcal S)$-valued random variables and any $\mathcal P(\mathcal S)$-valued random variable $X$ with $\mathcal L(X_n) = \Phi_n$ and $\mathcal L(X) = \delta_\mu$.
    \item[(c)] $\mathbb E \left[ \left| X_n(f) - X(f) \right| \right] \to 0$ as $n \to \infty$ for any $f: \mathcal S \to \mathbb R$, any sequence $\{ X_n \}_{n \in \mathbb N}$ of $\mathcal P(\mathcal S)$-valued random variables and any $\mathcal P(\mathcal S)$-valued random variable $X$ with $\mathcal L(X_n) = \Phi_n$ and $\mathcal L(X) = \delta_\mu$.
\end{enumerate}
\end{lemma}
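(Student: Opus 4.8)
The plan is to establish the cycle of implications $(a) \Rightarrow (b) \Rightarrow (c) \Rightarrow (a)$, exploiting two structural simplifications. First, because the second argument is the Dirac measure $\delta_\mu$, any random variable $X$ with $\mathcal{L}(X) = \delta_\mu$ equals $\mu$ almost surely, so there is essentially only one coupling of $\Phi_n$ with $\delta_\mu$; consequently $W_1(\Phi_n, \delta_\mu) = \mathbb{E}[d_{TV}(X_n, \mu)]$ for any $X_n$ with $\mathcal{L}(X_n) = \Phi_n$, and moreover $F(X) = F(\mu)$ and $X(f) = \mu(f)$ almost surely. This removes any dependence on the choice of representatives or couplings, so each of the three quantities depends only on the laws $\Phi_n$. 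Second, since $\mathcal{S}$ is finite, $\mathcal{P}(\mathcal{S})$ is a compact metric space under $d_{TV}$ and every $f \colon \mathcal{S} \to \mathbb{R}$ is automatically bounded.

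For $(a) \Rightarrow (b)$, I would use that a continuous $F$ on the compact space $\mathcal{P}(\mathcal{S})$ is uniformly continuous and bounded, say $\lvert F \rvert \leq M_F$. Given $\varepsilon > 0$, choose $\delta > 0$ with $d_{TV}(\nu, \mu) < \delta \Rightarrow \lvert F(\nu) - F(\mu) \rvert < \varepsilon$, then split the expectation $\mathbb{E}[\lvert F(X_n) - F(\mu) \rvert]$ over the event $\{ d_{TV}(X_n, \mu) < \delta \}$ and its complement. The first piece is at most $\varepsilon$, and the second is bounded via Markov's inequality by $(2 M_F / \delta)\, \mathbb{E}[d_{TV}(X_n, \mu)] = (2 M_F / \delta)\, W_1(\Phi_n, \delta_\mu) \to 0$. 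Taking $\limsup_n$ and then $\varepsilon \to 0$ yields (b).

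The step $(b) \Rightarrow (c)$ is immediate: for fixed $f$, the map $\nu \mapsto \nu(f) = \sum_{s \in \mathcal{S}} \nu(s) f(s)$ is a linear, hence continuous, and (by finiteness of $\mathcal{S}$) bounded functional on $\mathcal{P}(\mathcal{S})$, so (c) is the instance of (b) obtained by taking this $F$. For $(c) \Rightarrow (a)$, I would apply (c) to the indicator functions $f = \mathbf{1}_{\{s\}}$ for each $s \in \mathcal{S}$, obtaining $\mathbb{E}[\lvert X_n(s) - \mu(s) \rvert] \to 0$; summing over the finitely many states and recalling $W_1(\Phi_n, \delta_\mu) = \frac{1}{2} \sum_{s \in \mathcal{S}} \mathbb{E}[\lvert X_n(s) - \mu(s) \rvert]$ closes the cycle.

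I expect the only genuinely delicate point to be the Wasserstein reduction in the opening remark, namely verifying that the Dirac marginal forces the product coupling so that $W_1$ collapses to the ordinary expectation $\mathbb{E}[d_{TV}(X_n, \mu)]$. Once this reduction is secured, each implication is routine, with $(a) \Rightarrow (b)$ being the most technical due to the combination of uniform continuity with the Markov-inequality tail bound.
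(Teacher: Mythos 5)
Your proof is correct, and it shares the paper's two load-bearing observations: the Dirac marginal forces the product coupling, so $W_1(\Phi_n,\delta_\mu)=\mathbb E[d_{TV}(X_n,\mu)]=\frac12\sum_{s\in\mathcal S}\mathbb E[|X_n(s)-\mu(s)|]$, and the return direction to (a) goes through the coordinate functionals (your indicators $\mathbf 1_{\{s\}}$ are exactly the paper's $f_s$, and your $\nu\mapsto\nu(f)$ is its $F_s$). Where you genuinely diverge is the forward direction and the overall architecture. The paper proves (a)~$\Rightarrow$~(b),(c) in one stroke by citing that $W_1$-convergence implies weak convergence and then testing $\Phi_n\Rightarrow\delta_\mu$ against the continuous bounded function $\nu\mapsto|F(\nu)-F(\mu)|$, whose integral against $\delta_\mu$ vanishes; you instead give a self-contained quantitative estimate, splitting on $\{d_{TV}(X_n,\mu)<\delta\}$ and bounding the tail by $(2M_F/\delta)\,W_1(\Phi_n,\delta_\mu)$ via Markov's inequality. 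Your route buys independence from the $W_1$-implies-weak-convergence fact (and even yields an explicit rate in terms of the modulus of continuity of $F$ at $\mu$), at the cost of a slightly longer argument; the paper's buys brevity by leaning on a standard result. You also organize the equivalence as a cycle (a)$\Rightarrow$(b)$\Rightarrow$(c)$\Rightarrow$(a), with (b)$\Rightarrow$(c) by instantiation, whereas the paper proves (a)$\Rightarrow$(b),(c) and (b),(c)$\Rightarrow$(a) directly; both establish the full equivalence. Two minor simplifications you could make: uniform continuity via compactness is more than you need, since you only ever compare $F(X_n)$ with $F$ at the fixed point $\mu$, so continuity of $F$ at $\mu$ suffices; and boundedness of $F$ is already hypothesized in (b), so the compactness appeal is doing no essential work.
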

\begin{subproof} 
Define the only possible coupling $\Delta_n \equiv \Phi_n \times \delta_\mu$.

(b), (c) $\implies$ (a):

Define $F_s(x) \equiv x(s)$ and $f_s(s') \equiv \mathbf 1_{\{s\}}(s')$ for all $s \in \mathcal S$, where $F_s$ is continuous. By assumption,
\begin{align*}
    W_1(\Phi_n, \delta_\mu) 
    &= \inf_{\mathcal L(X_n) = \Phi_n, \mathcal L(X) = \delta_\mu} \mathbb E \left[ d_{TV}(X_n, X) \right] \\
    &= \frac{1}{2} \int_{\mathcal P(\mathcal S) \times \mathcal P(\mathcal S)} \sum_{s \in \mathcal S} | X_n(s) - X(s) | \, d\Delta_n \\
    &= \frac{1}{2} \sum_{s \in \mathcal S} \mathbb E \left[ \left| X_n(s) - X(s) \right| \right] \to 0
\end{align*}
since for any $s \in \mathcal S$, we have
\begin{align*}
    \mathbb E \left[ \left| X_n(s) - X(s) \right| \right] = \mathbb E \left[ \left| F_s(X_n) - F_s(X) \right| \right] = \mathbb E \left[ \left| X_n(f_s) - X(f_s) \right| \right] \, .
\end{align*}

(a) $\implies$ (b), (c):

We have
\begin{align*}
    \mathbb E \left[ \left| F(X_n) - F(X) \right| \right] &= \int_{\mathcal P(\mathcal S) \times \mathcal P(\mathcal S)} | F(\nu) - F(\nu') | \, \Delta_n(d\nu, d\nu') \\
    &= \int_{\mathcal P(\mathcal S)} | F(\nu) - F(\mu) | \, \Phi_n(d\nu) \\
    &\to \int_{\mathcal P(\mathcal S)} | F(\nu) - F(\mu) | \, \delta_\mu(d\nu) = 0
\end{align*}
by continuity and boundedness of $| F(\nu) - F(\mu) |$, and convergence in $W_1$ implying weak convergence. Analogously,
\begin{align*}
    \mathbb E \left[ \left| X_n(f) - X(f) \right| \right] 
    &= \int_{\mathcal P(\mathcal S)} | \nu(f) - \mu(f) | \, \Phi_n(d\nu) 
    \to \int_{\mathcal P(\mathcal S)} | \nu(f) - \mu(f) | \, \delta_\mu(d\nu) = 0
\end{align*}
since $f$ and thus $| \nu(f) - \mu(f) |$ is automatically bounded from finiteness of $\mathcal S$, and $\nu(f) = \sum_{s \in \mathcal S} \nu(s) f(s) \to \sum_{s \in \mathcal S} \mu(s) f(s)$ as $\nu \to \mu$ in total variation distance implies continuity of $| \nu(f) - \mu(f) |$.
\end{subproof}

First, it is shown that when all other agents follow the same policy $\pi$, then the empirical distribution is essentially the deterministic mean field as $N \to \infty$, i.e. $\mathcal L(\mathbb G^N_{S_t}) \to \mathcal L(\mu_t) \equiv \delta_{\mu_t}$ with $\mu = \Psi(\pi)$

\begin{lemma} \label{lem:law-conv}
Consider a set of policies $(\tilde \pi, \pi, \ldots, \pi) \in \Pi^N$ for all agents. Under this set of policies, the law of the empirical distribution $\mathcal L(\mathbb G^N_{S_t}) \in \mathcal P(\mathcal M)$ converges to $\delta_{\mu_t}$ where $\mu = \Psi(\pi)$ as $N \to \infty$ in 1-Wasserstein distance. 
\end{lemma}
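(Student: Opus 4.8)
The plan is to invoke the equivalence established in Lemma~\ref{lem:equiv-conv}: proving $W_1(\mathcal L(\mathbb G^N_{S_t}), \delta_{\mu_t}) \to 0$ is equivalent to verifying condition (c), i.e. $\mathbb E[\,|\mathbb G^N_{S_t}(f) - \mu_t(f)|\,] \to 0$ for every $f : \mathcal S \to \mathbb R$. By linearity and the triangle inequality it suffices to treat the indicators $f = \mathbf 1_{\{s'\}}$, so the target reduces to showing $\mathbb E[\,|\mathbb G^N_{S_t}(s') - \mu_t(s')|\,] \to 0$ for every $s' \in \mathcal S$. I would establish this by induction on $t$. For the base case $t = 0$, every agent---including the single deviator using $\tilde\pi$---draws its initial state i.i.d. from $\mu_0$, so $\mathbb G^N_{S_0}(s')$ is an average of $N$ i.i.d. indicators and an elementary variance bound gives $\mathbb E[\,|\mathbb G^N_{S_0}(s') - \mu_0(s')|\,] \le \tfrac{1}{2\sqrt N} \to 0$.

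For the inductive step the crucial device is to condition on the entire state vector $S_t$, under which the next states $S_{t+1}^1, \ldots, S_{t+1}^N$ become independent. I would then compute the conditional mean
\[
    \mathbb E[\,\mathbb G^N_{S_{t+1}}(s') \mid S_t\,] = \sum_{s'' \in \mathcal S} \mathbb G^N_{S_t}(s'') \sum_{a \in \mathcal A} \pi_t(a \mid s'')\, p(s' \mid s'', a, \mathbb G^N_{S_t}) + O(1/N),
\]
where the $O(1/N)$ term absorbs the lone deviating agent, whose contribution to the normalized sum is at most $1/N$. Writing $h_{s'}(\nu)$ for the displayed function of the measure $\nu$, observe that $h_{s'}(\mu_t) = \mu_{t+1}(s')$ by the definition of the induced mean field, and that $h_{s'}$ is continuous and bounded on the compact simplex $\mathcal P(\mathcal S)$ by Assumption~\ref{assumption}.

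I would then split the error by the triangle inequality into a \emph{fluctuation} term and a \emph{bias} term. The fluctuation $\mathbb E[\,|\mathbb G^N_{S_{t+1}}(s') - \mathbb E[\mathbb G^N_{S_{t+1}}(s') \mid S_t]|\,]$ is bounded by $\tfrac{1}{2\sqrt N}$ via the conditional variance of a normalized sum of independent indicators (each variance at most $\tfrac14$) together with Jensen's inequality. The bias $\mathbb E[\,|h_{s'}(\mathbb G^N_{S_t}) - h_{s'}(\mu_t)|\,]$ is handled by noting that the induction hypothesis yields $\mathbb G^N_{S_t} \to \mu_t$ in $L^1$, hence in probability; continuity of $h_{s'}$ then gives convergence in probability of $h_{s'}(\mathbb G^N_{S_t})$, and boundedness of $h_{s'}$ lets the bounded convergence theorem upgrade this to convergence in $L^1$. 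Both terms vanish, closing the induction, and Lemma~\ref{lem:equiv-conv} converts condition (c) back into the claimed $W_1$ convergence.

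The main obstacle is the correlation among agents: the transitions couple all of them through $\mathbb G^N_{S_t}$, so the states are neither independent nor identically distributed, and a naive law of large numbers does not apply. Conditioning on $S_t$ is what restores conditional independence and cleanly separates the sampling noise---controlled by a variance estimate---from the propagation of the mean field---controlled by continuity of $h_{s'}$ and the induction hypothesis. The only remaining care is to confirm that the single deviating agent under $\tilde\pi$ is genuinely negligible, which is immediate since it enters every empirical average with weight $1/N$.
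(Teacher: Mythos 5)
Your proposal is correct and follows essentially the same route as the paper's proof: induction over $t$, conditioning on $S_t$ to restore conditional independence, a conditional-variance bound of order $N^{-1/2}$ for the sampling fluctuation, continuity of the one-step map $\nu \mapsto \nu P^\pi_{t,\nu}(f)$ combined with the induction hypothesis for the bias term, and an $O(1/N)$ bound for the lone deviator (which the paper handles by explicitly comparing $\mathbb G^N_{S_t}$ with $\mathbb G^{N-1}_{S_t}$, while you absorb it into the conditional mean --- a cosmetic difference). Your reduction to indicator functions and your use of the continuous mapping theorem plus bounded convergence in place of the paper's direct appeal to Lemma~\ref{lem:equiv-conv}(b) are equally valid on the finite state space.
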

\begin{subproof}
Define the Markov kernel $P^\pi_{t,\nu}$ such that its probability mass function fulfills
\begin{align*}
    P^\pi_{t,\nu}(s' \mid s) \equiv \sum_{a \in \mathcal A} \pi_t(a \mid s) p(s' \mid s, a, \nu) 
\end{align*}
for any $t \in \mathcal T, s \in \mathcal S, \nu \in \mathcal P(\mathcal S), \pi \in \Pi$ and analogously
\begin{align*}
    \tilde \nu P^\pi_{t,\nu}(s') \equiv \sum_{s \in \mathcal S} \tilde \nu(s) \sum_{a \in \mathcal A} \pi_t(a \mid s) p(s' \mid s, a, \nu)
\end{align*}
for any $\tilde \nu \in \mathcal P(\mathcal S)$. Note that $\mu_{t+1} = \mu_t P^\pi_{t,\mu_t}(g)$ for mean fields $\mu = \Psi(\pi)$ induced by $\pi$.

We show that $\mathbb E \left[ \left| \mathbb G^N_{S_t}(f) - \mu_t(f) \right| \right] \to 0$ as $N \to \infty$ for any function $f: \mathcal S \to \mathbb R$ and any time $t \in \mathcal T$. From this, the desired result follows by Lemma~\ref{lem:equiv-conv}. Since $\mathbb G^N_{S_t}(\cdot) \equiv \frac{1}{N} \sum_{i=1}^N \delta_{S_t^i}(\cdot)$ and $S_0^i\sim \mu_0$ we have at time $t = 0$ that
\begin{align*}
    \lim_{N \to \infty} \mathbb E \left[ \left| \mathbb G^N_{S_0}(f) - \mu_0(f) \right| \right]
    &= \lim_{N \to \infty} \mathbb E \left[  \left| \frac{1}{N} \sum_{i=1}^N f(S^i_0) - \mathbb E \left[  f(S^i_0)  \right] \right|  \right] = 0
\end{align*}
by the strong law of large numbers and the dominated convergence theorem.

Assuming this holds for $t$, then for $t+1$ we have
\begin{align*}
    \mathbb E \left[ \left| \mathbb G^N_{S_{t+1}}(f) - \mu_{t+1}(f) \right| \right]
    &\leq \mathbb E \left[ \left| \mathbb G^N_{S_{t+1}}(f) - \mathbb G^{N-1}_{S_{t+1}}(f) \right| \right] \\
    &\quad + \mathbb E \left[ \left| \mathbb G^{N-1}_{S_{t+1}}(f) - \mathbb G^{N-1}_{S_{t}} P_{t, \mathbb G^{N}_{S_{t}}}^\pi (f) \right| \right] \\
    &\quad + \mathbb E \left[ \left| \mathbb G^{N-1}_{S_{t}} P_{t, \mathbb G^{N}_{S_{t}}}^\pi (f) - \mathbb G^{N}_{S_{t}} P_{t, \mathbb G^{N}_{S_{t}}}^\pi (f) \right| \right] \\
    &\quad + \mathbb E \left[ \left| \mathbb G^{N}_{S_{t}} P_{t, \mathbb G^{N}_{S_{t}}}^\pi (f) - \mu_{t} P_{t,\mu_{t}}^\pi (f) \right| \right]
\end{align*}
where we defined $\mathbb G^{N-1}_{S_t}(\cdot) \equiv \frac{1}{N-1} \sum_{i=2}^N \delta_{S_t^i}(\cdot)$. 

For the first term, we have as $N \to \infty$
\begin{align*}
    \mathbb E \left[ \left| \mathbb G^N_{S_{t+1}}(f) - \mathbb G^{N-1}_{S_{t+1}}(f) \right| \right]
    &= \mathbb E \left[ \left| \frac{1}{N} \sum_{i=1}^N f(S_{t+1}^i) - \frac{1}{N-1} \sum_{i=2}^N f(S_{t+1}^i) \right| \right] \\
    &\leq \frac{1}{N} \mathbb E \left[ \left| f(S_{t+1}^1) \right| \right] + \left| \frac{1}{N} - \frac{1}{N-1} \right| \sum_{i=2}^N \mathbb E \left[ \left| f(S_{t+1}^i) \right| \right] \\
    &\leq \left( \frac{1}{N} + \frac{N-1}{N (N-1)} \right) \max_{s \in \mathcal S} | f(s) | \to 0 \, .
\end{align*}

For the second term, as $N \to \infty$ we have by Jensen's inequality and bounds $|f| \leq M_f$ (by finiteness of $\mathcal S$)
\begin{align*}
    \mathbb E \left[ \left| \mathbb G^{N-1}_{S_{t+1}}(f) - \mathbb G^{N-1}_{S_{t}} P^\pi_{t,\mathbb G^{N-1}_{S_{t}}} (f) \right| \right]^2
    &= \mathbb E \left[ \mathbb E \left[ \left| \mathbb G^{N-1}_{S_{t+1}}(f) - \mathbb G^{N-1}_{S_{t}} P^\pi_{t,\mathbb G^{N-1}_{S_{t}}} (f) \right| \mid S_{t} \right] \right]^2 \\
    &= \mathbb E \left[ \mathbb E \left[ \left| \frac{1}{{N-1}} \sum_{i=2}^N \left( f(S_{t+1}^i) - \mathbb E \left[ f(S_{t+1}^i) \mid S_{t} \right] \right) \right| \mid S_{t} \right] \right]^2 \\
    &\leq \frac{1}{(N-1)^2} \sum_{i=2}^N \mathbb E \left[ \mathbb E \left[ \left( f(S_{t+1}^i) - \mathbb E \left[ f(S_{t+1}^i) \mid S_{t} \right] \right)^2 \mid S_{t} \right] \right] \\
    &\leq \frac{1}{{N-1}} \cdot 4M_f^2 \to 0 \, .
\end{align*}

For the third term, we again have as $N \to \infty$
\begin{align*}
    \mathbb E \left[ \left| \mathbb G^{N-1}_{S_{t}} P_{t, \mathbb G^{N}_{S_{t}}}^\pi (f) - \mathbb G^{N}_{S_{t}} P_{t, \mathbb G^{N}_{S_{t}}}^\pi (f) \right| \right]
    &= \mathbb E \left[ \left| \sum_{s \in \mathcal S} \left( \mathbb G^{N-1}_{S_{t}}(s) - \mathbb G^{N}_{S_{t}}(s) \right) \sum_{a \in \mathcal A} \pi_t(a \mid s) \sum_{s' \in \mathcal S} p(s' \mid s, a, \mathbb G^{N}_{S_{t}}) f(s') \right| \right] \\
    &\leq \mathbb E \left[ \left| \left( \frac{1}{{N-1}} - \frac{1}{N} \right) \sum_{i=2}^N \sum_{a \in \mathcal A} \pi_t(a \mid S_t^i) \sum_{s' \in \mathcal S} p(s' \mid S_t^i, a, \mathbb G^{N}_{S_{t}}) f(s') \right| \right] \\
    &\quad + \mathbb E \left[ \left| \frac{1}{{N}} \sum_{a \in \mathcal A} \pi_t(a \mid S_t^1) \sum_{s' \in \mathcal S} p(s' \mid S_t^1, a, \mathbb G^{N}_{S_{t}}) f(s') \right| \right] \\
    &\leq \left( \frac{N-1}{N (N-1)} + \frac{1}{N} \right) \max_{s \in \mathcal S} | f(s) | \to 0 \, .
\end{align*}

For the fourth term, define $F: \mathcal P(\mathcal S) \to \mathbb R$, $F(\nu) = \nu P^\pi_{t,\nu}(f)$ and observe that $F$ is continuous, since $\nu \to \nu'$ if and only if $\nu(s) \to \nu'(s)$ for all $s \in \mathcal S$, and therefore (as $p$ is assumed continuous by Assumption~1)
\begin{align*}
    F(\nu) = \nu P^\pi_{t,\nu}(f) = \sum_{s \in \mathcal S} \nu(s) \sum_{a \in \mathcal A} \pi_t(a \mid s) \sum_{s' \in \mathcal S} p(s' \mid s, a, \nu) f(s')
\end{align*}
is continuous. By Lemma~\ref{lem:equiv-conv}, we have from the induction hypothesis $\mathbb G^N_{S_{t}} \to \mu_t$ that
\begin{align*}
    \mathbb E \left[ \left| \mathbb G^N_{S_{t}} P^\pi_{t,\mathbb G^N_{S_{t}}} (f) - \mu_{t} P_{t,\mu_{t}}^\pi (f) \right| \right] \to 0 \ .
\end{align*}

Therefore, $\mathbb E \left[ \left| \mathbb G^N_{S_{t+1}}(f) - \mu_{t+1}(f) \right| \right] \to 0$ which implies the desired result by induction.
\end{subproof}

Consider the case where all agents follow a set of policies $(\pi^N, \pi, \ldots, \pi) \in \Pi^N$ for each $N \in \mathbb N$. Define new single-agent random variables $S_t^{\mu}$ and $A_t^{\mu}$ with $S_0^{\mu} \sim \mu_0$ and
\begin{align*} 
    \mathbb P(A_{t}^{\mu} = a \mid S_{t}^{\mu} = s) &= \pi^N_t(a \mid s), \\
    \mathbb P(S_{t+1}^{\mu} = s' \mid S_{t}^{\mu} = s, A_{t}^{\mu} = a) &= p(s' \mid s, a, \mu_t) \, ,
\end{align*}
where the deterministic mean field $\mu$ is used instead of the empirical distribution.

\begin{lemma} \label{lem:equicont}
Consider an equicontinuous, uniformly bounded family of functions $\mathcal F$ on $\mathcal P(\mathcal S)$ and define
\begin{align*}
    F_t(\nu) \equiv \sup_{f \in \mathcal F} | f(\nu) - f(\mu_t) | 
\end{align*}
for any $t \in \mathcal T$. Then, $F_t$ is continuous and bounded and by Lemma~\ref{lem:equiv-conv} we have
\begin{align*}
    \lim_{N \to \infty} \mathbb E \left[ \sup_{f \in \mathcal F} \left| f(\mathbb G^N_{S_t}) - f(\mu) \right| \right] = 0
\end{align*}
\end{lemma}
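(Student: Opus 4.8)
The plan is to verify the two regularity claims about $F_t$ — boundedness and continuity — and then feed $F_t$ into the equivalence of Lemma~\ref{lem:equiv-conv}, using the law-of-large-numbers convergence already established in Lemma~\ref{lem:law-conv}. Boundedness is immediate: since $\mathcal F$ is uniformly bounded, say $|f(\nu)| \le M$ for all $f \in \mathcal F$ and all $\nu \in \mathcal P(\mathcal S)$, the triangle inequality gives $|f(\nu) - f(\mu_t)| \le 2M$ uniformly in $f$, so that $0 \le F_t(\nu) \le 2M$ for every $\nu$.

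For continuity I would fix an arbitrary point $\nu_0 \in \mathcal P(\mathcal S)$ and exploit the elementary inequality $|\sup_f a_f - \sup_f b_f| \le \sup_f |a_f - b_f|$. Writing $g_f(\nu) \equiv |f(\nu) - f(\mu_t)|$, this gives $|F_t(\nu) - F_t(\nu_0)| \le \sup_{f \in \mathcal F} |g_f(\nu) - g_f(\nu_0)| \le \sup_{f \in \mathcal F} |f(\nu) - f(\nu_0)|$, where the last step is the reverse triangle inequality applied termwise. Equicontinuity of $\mathcal F$ then supplies, for every $\varepsilon > 0$, a single $\delta > 0$ such that $d_{TV}(\nu, \nu_0) < \delta$ forces $|f(\nu) - f(\nu_0)| < \varepsilon$ simultaneously for all $f \in \mathcal F$, whence $|F_t(\nu) - F_t(\nu_0)| \le \varepsilon$. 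Thus $F_t$ is continuous at $\nu_0$, and since $\nu_0$ was arbitrary, $F_t$ is continuous. This transfer of a common modulus of continuity through the supremum is the only delicate point; I expect it to be the main obstacle, as it is precisely where uniform boundedness is not enough and the equicontinuity hypothesis is essential.

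Finally, I would note that $F_t(\mu_t) = \sup_{f \in \mathcal F} |f(\mu_t) - f(\mu_t)| = 0$ and that the target quantity is exactly $\mathbb E[F_t(\mathbb G^N_{S_t})]$. Because $F_t$ is continuous and bounded, and because Lemma~\ref{lem:law-conv} yields $W_1(\mathcal L(\mathbb G^N_{S_t}), \delta_{\mu_t}) \to 0$, applying the implication (a) $\implies$ (b) of Lemma~\ref{lem:equiv-conv} with the choice $F = F_t$, the sequence $X_N = \mathbb G^N_{S_t}$, and the constant limiting random variable $X \equiv \mu_t$ (whose law is $\delta_{\mu_t}$) gives $\mathbb E[|F_t(\mathbb G^N_{S_t}) - F_t(\mu_t)|] \to 0$. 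Since $F_t \ge 0$ and $F_t(\mu_t) = 0$, this is precisely the assertion $\mathbb E[\sup_{f \in \mathcal F} |f(\mathbb G^N_{S_t}) - f(\mu_t)|] \to 0$, which completes the argument.
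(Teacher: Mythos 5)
Your proposal is correct and follows essentially the same route as the paper's proof: continuity of $F_t$ via the inequality $|\sup_f a_f - \sup_f b_f| \le \sup_f |a_f - b_f|$ combined with equicontinuity, boundedness from uniform boundedness of $\mathcal F$, and then Lemma~\ref{lem:law-conv} together with the (a)~$\implies$~(b) implication of Lemma~\ref{lem:equiv-conv}. Your explicit observation that $F_t(\mu_t) = 0$ (so the conclusion is exactly $\mathbb E[F_t(\mathbb G^N_{S_t})] \to 0$, with the lemma's $f(\mu)$ read as $f(\mu_t)$) just spells out what the paper leaves implicit.
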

\begin{subproof}
$F_t$ is continuous, since for $\nu_n \to \nu$
\begin{align*}
    | F_t(\nu_n) - F_t(\nu) | = \left| \sup_{f \in \mathcal F} | f(\nu_n) - f(\mu_t) | - \sup_{f \in \mathcal F} | f(\nu) - f(\mu_t) | \right| \leq \sup_{f \in \mathcal F} | f(\nu_n) - f(\nu) | \to 0
\end{align*}
by equicontinuity. Further, $F_t$ is bounded since $|F_t(\nu)| \leq \sup_{f \in \mathcal F} | f(\nu) | + | f(\mu_t) |$ is uniformly bounded. By Lemma~\ref{lem:law-conv}, we have $W_1(\mathbb G^N_{S_t}, \delta_{\mu_t}) \to 0$ as $N \to \infty$, therefore Lemma~\ref{lem:equiv-conv} applies. 
\end{subproof}

\begin{lemma} \label{lem:equiconv}
Suppose that at some time $t \in \mathcal T$, it holds that
\begin{align*}
    \lim_{N \to \infty} \left| \mathcal L(S_t^1)(g_N) - \mathcal L(S_t^{\mu})(g_N) \right| = 0
\end{align*}
for any sequence of functions $\{ g_N \}_{N \in \mathbb N}$ from $\mathcal S$ to $\mathbb R$ that is uniformly bounded. Then, we have
\begin{align*}
    \lim_{N \to \infty} \left| \mathcal L(S_t^1, \mathbb G^N_{S_t})(T_N) - \mathcal L(S_t^{\mu}, \mu_t)(T_N) \right| = 0
\end{align*}
for any sequence of functions $\{ T_N \}_{N \in \mathbb N}$ from $\mathcal S \times \mathcal P(\mathcal S)$ to $\mathbb R$ that is equicontinuous and uniformly bounded.
\end{lemma}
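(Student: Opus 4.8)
The plan is to split the target difference by inserting a hybrid term that keeps the true agent-1 marginal but replaces the random empirical measure by the deterministic mean field. Concretely, I would bound
\begin{align*}
    &\left| \mathcal L(S_t^1, \mathbb G^N_{S_t})(T_N) - \mathcal L(S_t^{\mu}, \mu_t)(T_N) \right| \\
    &\quad \leq \mathbb E\left[ \left| T_N(S_t^1, \mathbb G^N_{S_t}) - T_N(S_t^1, \mu_t) \right| \right] + \left| \mathbb E[T_N(S_t^1, \mu_t)] - \mathbb E[T_N(S_t^{\mu}, \mu_t)] \right|
\end{align*}
and handle the two terms separately: the first performs the replacement of the random measure argument by $\mu_t$, while the second reduces the remaining marginal difference to the standing hypothesis.

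For the second term I would define $g_N(s) \equiv T_N(s, \mu_t)$. Since $\mu_t$ is a fixed element of $\mathcal P(\mathcal S)$ and $\{T_N\}$ is uniformly bounded, $\{g_N\}$ is a uniformly bounded sequence of functions from $\mathcal S$ to $\mathbb R$. The second term is then exactly $|\mathcal L(S_t^1)(g_N) - \mathcal L(S_t^{\mu})(g_N)|$, which vanishes as $N \to \infty$ directly by the assumed single-marginal convergence.

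The first term is where I expect the real work. The key observation is that, because $\mathcal S$ carries the discrete metric, equicontinuity of $\{T_N\}$ on $\mathcal S \times \mathcal P(\mathcal S)$ passes to the family $\mathcal F \equiv \{ \nu \mapsto T_N(s, \nu) : N \in \mathbb N, s \in \mathcal S \}$ of functions on $\mathcal P(\mathcal S)$: in a sufficiently small neighbourhood the $\mathcal S$-coordinate is forced to be constant, so closeness collapses to closeness in the measure argument alone, and uniform boundedness is inherited likewise. With $\mathcal F$ thus equicontinuous and uniformly bounded, I would invoke Lemma~\ref{lem:equicont} to get $\mathbb E[\sup_{f \in \mathcal F} |f(\mathbb G^N_{S_t}) - f(\mu_t)|] \to 0$. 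For each fixed $N$ and each realization, the function $T_N(S_t^1, \cdot)$ lies in $\mathcal F$, giving the pointwise bound $|T_N(S_t^1, \mathbb G^N_{S_t}) - T_N(S_t^1, \mu_t)| \leq \sup_{f \in \mathcal F} |f(\mathbb G^N_{S_t}) - f(\mu_t)|$; taking expectations dominates the first term by the quantity from Lemma~\ref{lem:equicont}, so it also vanishes. Combining the two bounds yields the claim. The main obstacle is precisely this first term: the measure argument $\mathbb G^N_{S_t}$ is random and correlated with $S_t^1$, so the single-marginal hypothesis cannot be applied there directly, and one must uniformize over both the state and the index $N$ through $\mathcal F$ and rely on the concentration of the empirical measure around the mean field encoded in Lemma~\ref{lem:equicont}.
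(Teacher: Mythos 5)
Your proposal is correct and follows essentially the same route as the paper's proof: the same hybrid decomposition through $\mathcal L(S_t^1, \mu_t)(T_N)$, the same reduction of the second term to the hypothesis via $g_N(\cdot) \equiv T_N(\cdot, \mu_t)$, and the same treatment of the first term by bounding it with $\mathbb E \left[ \sup_{f \in \{T_N(x,\cdot)\}_{x \in \mathcal S, N \in \mathbb N}} \left| f(\mathbb G^N_{S_t}) - f(\mu_t) \right| \right]$ and invoking Lemma~\ref{lem:equicont}. Your explicit remark that the discrete metric on $\mathcal S$ lets equicontinuity of $\{T_N\}$ pass to the family $\{\nu \mapsto T_N(s,\nu)\}$ is a detail the paper leaves implicit, but it is the correct justification.
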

\begin{subproof}
We have
\begin{align*}
    \left| \mathcal L(S_t^1, \mathbb G^N_{S_t})(T_N) - \mathcal L(S_t^{\mu}, \mu_t)(T_N) \right| 
    &\leq \left| \mathcal L(S_t^1, \mathbb G^N_{S_t})(T_N) - \mathcal L(S_t^1, \mu_t)(T_N) \right| + \left| \mathcal L(S_t^1, \mu_t)(T_N) - \mathcal L(S_t^{\mu}, \mu_t)(T_N) \right| 
\end{align*}
The first term becomes 
\begin{align*}
    \left| \mathcal L(S_t^1, \mathbb G^N_{S_t})(T_N) - \mathcal L(S_t^1, \mu_t)(T_N) \right|
    &= \left| \int T_N(x,\nu) \mathcal L(S_t^1, \mathbb G^N_{S_t})(dx,d\nu) - \int T_N(x,\nu) \mathcal L(S_t^1, \mu_t)(dx,d\nu) \right| \\
    &\leq \mathbb E \left[ \mathbb E \left[ \left| T_N(S_t^1, G^N_{S_t}) - T_N(S_t^1,\mu_t) \right| S_t^1 \right] \right] \\
    &\leq \mathbb E \left[ \sup_{f \in \{ T_N(x, \cdot) \}_{x \in \mathcal X, N \in \mathbb N}} \left| f(G^N_{S_t}) - f(\mu_t) \right| \right] \to 0
\end{align*}
by Lemma~\ref{lem:equicont}, since $\{ T_N \}_{N \in \mathbb N}$ is equicontinuous and uniformly bounded. Similarly for the second term,
\begin{align*}
    \left| \mathcal L(S_t^1, \mu_t)(T_N) - \mathcal L(S_t^{\mu}, \mu_t)(T_N) \right| 
    &= \left| \mathbb E \left[ T_N(S_t^1,\mu_t) - T_N(S_t^{\mu},\mu_t) \right] \right| \to 0
\end{align*}
by the assumption, since $T_N$ fulfills the condition of being uniformly bounded.
\end{subproof}

\begin{lemma} \label{lem:agent-conv}
For any sequence $\{ g_N \}_{N \in \mathbb N}$ of functions from $\mathcal S$ to $\mathbb R$ that is uniformly bounded, we have 
\begin{align*}
    \lim_{N \to \infty} \left| \mathcal L(S_t^1)(g_N) - \mathcal L(S_t^{\mu})(g_N) \right| = 0
\end{align*}
for all times $t \in \mathcal T$.
\end{lemma}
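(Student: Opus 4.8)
The plan is to prove the claim by induction on $t \in \mathcal T$, using Lemma~\ref{lem:equiconv} to pass from time $t$ to time $t+1$. For the base case $t = 0$, both $S_0^1$ and $S_0^{\mu}$ are distributed according to $\mu_0$ by construction, so $\mathcal L(S_0^1) = \mathcal L(S_0^{\mu})$ and the difference vanishes identically for every $g_N$. For the inductive step I fix an arbitrary uniformly bounded sequence $\{ g_N \}_{N \in \mathbb N}$ and show the bound at time $t+1$, which since $\{ g_N \}$ was arbitrary re-establishes the induction hypothesis at $t+1$.

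The key reduction is to rewrite the one-step transition as an integral against the joint law at time $t$. Conditioning on the state and empirical measure at time $t$, I would define for each $N$ the function $T_N : \mathcal S \times \mathcal P(\mathcal S) \to \mathbb R$ by
\begin{align*}
    T_N(s, \nu) \equiv \sum_{a \in \mathcal A} \pi^N_t(a \mid s) \sum_{s' \in \mathcal S} p(s' \mid s, a, \nu) g_N(s') \, .
\end{align*}
Since both $S^1$ and $S^{\mu}$ use $\pi^N_t$ for the agent's action and differ only in that $S^1$ transitions through the empirical measure $\mathbb G^N_{S_t}$ while $S^{\mu}$ transitions through the deterministic $\mu_t$, the tower property gives $\mathcal L(S_{t+1}^1)(g_N) = \mathcal L(S_t^1, \mathbb G^N_{S_t})(T_N)$ and $\mathcal L(S_{t+1}^{\mu})(g_N) = \mathcal L(S_t^{\mu}, \mu_t)(T_N)$. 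Hence the target difference is exactly $\left| \mathcal L(S_t^1, \mathbb G^N_{S_t})(T_N) - \mathcal L(S_t^{\mu}, \mu_t)(T_N) \right|$, precisely the quantity controlled by Lemma~\ref{lem:equiconv}.

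To invoke that lemma, two properties of the family $\{ T_N \}_{N \in \mathbb N}$ must be checked: uniform boundedness and equicontinuity. Uniform boundedness is immediate, since $\sum_a \pi^N_t(a \mid s) = 1$, $p(\cdot \mid s,a,\nu)$ is a probability mass function, and $\{ g_N \}$ is uniformly bounded, yielding $|T_N(s,\nu)| \leq \sup_N \max_{s' \in \mathcal S} |g_N(s')|$. For equicontinuity I would exploit that $\mathcal S$ carries the discrete metric, so joint equicontinuity on $\mathcal S \times \mathcal P(\mathcal S)$ reduces to equicontinuity in $\nu$, uniformly over $s$ and $N$. Bounding the convex combination over $a$ by its maximum, $|T_N(s,\nu) - T_N(s,\nu')|$ is controlled by $\sup_N \max_{s'} |g_N(s')| \cdot \sum_{s' \in \mathcal S} \max_{a \in \mathcal A} |p(s' \mid s,a,\nu) - p(s' \mid s,a,\nu')|$. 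Since $p$ is continuous by Assumption~\ref{assumption} on the compact simplex $\mathcal P(\mathcal S)$, it is uniformly continuous there, and taking a common modulus over the finitely many triples $(s,a,s')$ yields a single modulus of continuity valid for all $T_N$ simultaneously.

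With uniform boundedness and equicontinuity of $\{ T_N \}$ in hand, Lemma~\ref{lem:equiconv} applied together with the induction hypothesis at time $t$ gives $\left| \mathcal L(S_t^1, \mathbb G^N_{S_t})(T_N) - \mathcal L(S_t^{\mu}, \mu_t)(T_N) \right| \to 0$, which completes the induction. The main obstacle I anticipate is the equicontinuity verification: one must ensure that the $N$-dependence entering through $\pi^N_t$ and $g_N$ does not destroy a uniform modulus. This is exactly why the convex-combination bound (absorbing $\pi^N_t$ into a factor bounded by $1$) and the uniform bound on $\{ g_N \}$ are needed, while the modulus of continuity itself is $N$-independent and originates entirely from the continuity of $p$ on the compact measure space.
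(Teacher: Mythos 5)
Your proposal is correct and follows essentially the same route as the paper's proof: you define the same auxiliary kernel (the paper calls it $l_{N,t}$, you call it $T_N$), use the same induction with base case $\mathcal L(S_0^1) = \mathcal L(S_0^{\mu}) = \mu_0$, verify the same uniform boundedness and equicontinuity of the family (your appeal to uniform continuity of $p$ on the compact simplex is a slightly more careful justification of the paper's limit argument), and conclude via Lemma~\ref{lem:equiconv} exactly as the paper does.
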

\begin{subproof}
Define $l_{N,t}$ as
\begin{align*}
    l_{N,t}(s, \nu) \equiv \sum_{a \in \mathcal A} \pi^N_t(a \mid s) \sum_{s' \in \mathcal S} p(s' \mid s, a, \nu) g_N(s') \, .
\end{align*}
$\{ l_{N,t}(s, \cdot) \}_{s \in \mathcal S, N \in \mathbb N}$ is equicontinuous, since for any $\nu, \nu' \in \mathcal M$ with $d_{TV}(\nu, \nu') \to 0$,
\begin{align*}
    \sup_{s \in \mathcal S, N \in \mathbb N} \left| l_{N,t}(s, \nu) - l_{N,t}(s, \nu') \right| &\leq M_g \sup_{s \in \mathcal S, N \in \mathbb N} \left| \sum_{a \in \mathcal A} \pi^N_t(a \mid s) \sum_{s' \in \mathcal S} \left( p(s' \mid s, a, \nu) - p(s' \mid s, a, \nu') \right) \right| \\ 
    &\leq M_g |\mathcal S| \max_{s \in \mathcal S} \max_{a \in \mathcal A} \max_{s' \in \mathcal S} | p(s' \mid s, a, \nu) - p(s' \mid s, a, \nu') | \to 0
\end{align*}
since $|g_N| < M_g$ is uniformly bounded and $p$ is continuous by assumption. Furthermore, $l_{N,t}(s, \nu)$ is always uniformly bounded by $M_g$. Now the result can be shown by induction.

For $t=0$, $\mathcal L(S_0^{\mu}) = \mathcal L(S_0^1)$ fulfills the hypothesis. Assume this holds for $t$, then
\begin{align*}
    \left| \mathcal L(S_{t+1}^1)(g_N) - \mathcal L(S_{t+1}^{\mu})(g_N) \right|
    &= \left| \mathcal L(S_t^1, \mathbb G^N_{S_t})(l_{N,t}) - \mathcal L(S_t^{\mu}, \mu_t)(l_{N,t}) \right| \to 0
\end{align*}
as $N \to \infty$ by Lemma~\ref{lem:equiconv}.
\end{subproof}

Thus, for any sequence of policies $\{ \pi^N \}_{N \in \mathbb N}$ with $\pi^N \in \Pi$ for all $N \in \mathbb N$, the achieved return of the $N$-agent game converges to the return of the mean field game under the mean field generated by the other agent's policy $\pi$ as $N \to \infty$.

\begin{lemma} \label{lem:obj-conv}
Let $\{ \pi^N \}_{N \in \mathbb N}$ with $\pi^N \in \Pi$ for all $N \in \mathbb N$ be an arbitrary sequence of policies and $\pi \in \Pi$ an arbitrary policy. Further, let the mean field $\mu = \Psi(\pi)$ be generated by $\pi$. Then, under the joint policy $(\pi^N, \pi, \ldots, \pi)$, we have as $N \to \infty$ that
\begin{align*}
    \left| J_1^N(\pi^N, \pi, \ldots, \pi) - J^{\mu}(\pi^N) \right| \to 0 \, .
\end{align*}
\end{lemma}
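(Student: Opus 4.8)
The plan is to reduce the objective difference to a sum of per-time-step, measure-valued differences and then invoke the machinery built up in Lemmas~\ref{lem:equiconv} and~\ref{lem:agent-conv}. Writing out both objectives and applying the triangle inequality, it suffices to bound, for each of the finitely many times $t \in \mathcal T$, the quantity $\left| \mathbb E[r(S_t^1, A_t^1, \mathbb G^N_{S_t})] - \mathbb E[r(S_t^{\mu}, A_t^{\mu}, \mu_t)] \right|$ and then sum over $t$. Since agent $1$ uses the policy $\pi^N$ in the $N$-agent game and the single-agent process $(S_t^{\mu}, A_t^{\mu})$ also uses $\pi^N$ (only the random empirical measure being replaced by the deterministic $\mu$), I would integrate out the action in order to remove the dependence on $A_t$ and land on a functional of the joint state--measure law.

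Concretely, I would first define the sequence of functions $T_N \colon \mathcal S \times \mathcal P(\mathcal S) \to \mathbb R$ by
\begin{align*}
    T_N(s, \nu) \equiv \sum_{a \in \mathcal A} \pi^N_t(a \mid s)\, r(s, a, \nu) \, ,
\end{align*}
so that the two expectations become exactly $\mathcal L(S_t^1, \mathbb G^N_{S_t})(T_N)$ and $\mathcal L(S_t^{\mu}, \mu_t)(T_N)$. The key step is then to verify that $\{ T_N \}_{N \in \mathbb N}$ is equicontinuous and uniformly bounded, mirroring the treatment of $l_{N,t}$ in Lemma~\ref{lem:agent-conv}. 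Uniform boundedness is immediate from boundedness of $r$ under Assumption~\ref{assumption}, since the weights $\pi^N_t(\cdot \mid s)$ sum to one. For equicontinuity, I would use that $r$ is continuous on the compact space $\mathcal S \times \mathcal A \times \mathcal P(\mathcal S)$ and hence uniformly continuous in $\nu$, with a modulus of continuity that may be taken uniform over the finite set $\mathcal A$; because $T_N(s, \cdot)$ is a convex combination of the maps $\nu \mapsto r(s, a, \nu)$, that modulus is inherited uniformly in both $s$ and $N$.

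With this in hand, the hypothesis required by Lemma~\ref{lem:equiconv}---namely $\left| \mathcal L(S_t^1)(g_N) - \mathcal L(S_t^{\mu})(g_N) \right| \to 0$ for every uniformly bounded sequence $\{ g_N \}$---is supplied directly by Lemma~\ref{lem:agent-conv}. Applying Lemma~\ref{lem:equiconv} to the family $\{ T_N \}$ then yields $\left| \mathcal L(S_t^1, \mathbb G^N_{S_t})(T_N) - \mathcal L(S_t^{\mu}, \mu_t)(T_N) \right| \to 0$ for each fixed $t$, and summing these finitely many vanishing terms over $t = 0, \ldots, T-1$ gives the claim.

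I expect the main obstacle to be the equicontinuity verification: because $T_N$ carries an explicit dependence on $N$ through $\pi^N$, ordinary continuity of $r$ as a single fixed function is insufficient, and one must argue that the modulus of continuity is genuinely uniform in $N$. This is precisely where the convex-combination structure of the policy weighting and the finiteness of $\mathcal A$ are essential, and it is the step that legitimizes the reduction to Lemma~\ref{lem:equiconv}.
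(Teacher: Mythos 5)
Your proposal is correct and follows essentially the same route as the paper: the paper likewise defines $r_{\pi^N_t}(s,\nu) \equiv \sum_{a \in \mathcal A} r(s,a,\nu)\,\pi^N_t(a \mid s)$ (your $T_N$), verifies uniform boundedness and equicontinuity of the family $\{r_{\pi^N_t}(s,\cdot)\}_{s \in \mathcal S, N \in \mathbb N}$ via the convex-combination structure and (uniform) continuity of $r$, and then applies Lemma~\ref{lem:equiconv} with the hypothesis supplied by Lemma~\ref{lem:agent-conv} before summing over the finitely many $t \in \mathcal T$. Your explicit remark that the modulus of continuity must be uniform in $N$ is exactly the point the paper handles by taking the supremum over $s \in \mathcal S$ and $N \in \mathbb N$ in its equicontinuity estimate.
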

\begin{subproof}
Define for any $t \in \mathcal T$, $N \in \mathbb N$
\begin{align*}
    r_{\pi^N_t}(s, \nu) \equiv \sum_{a \in \mathcal A} r(s,a,\nu) \pi^N_t(a \mid s)
\end{align*}
such that the family $\{ r_{\pi^N_t}(s, \cdot) \}_{s \in \mathcal S, N \in \mathbb N}$ is equicontinuous, since for any $\nu_n, \nu' \in \mathcal M$ as $d_{\mathcal M}(\nu_n, \nu') \to 0$,
\begin{align*}
    \max_{s \in \mathcal S, N \in \mathbb N} \left| r_{\pi^N_t}(s, \nu_n) - r_{\pi^N_t}(s, \nu') \right| \leq \max_{s \in \mathcal S, a \in \mathcal A} \left|  r(s,a,\nu_n) -  r(s,a,\nu') \right| \to 0
\end{align*}
by continuity of $r$. The function $r_{\pi^N_t}$ is uniformly bounded for all $N \in \mathbb N$ by assumption of uniformly bounded $r$. By Lemma~\ref{lem:equiconv} and Lemma~\ref{lem:agent-conv}, 
\begin{align*}
    &\lim_{N \to \infty} \left| \mathbb E \left[ r(S_t^1, A_t^1, \mathbb G^N_{S_t}) \right] - \mathbb E \left[ r(S_t^{\mu}, A_t^{\mu}, \mu_t) \right] \right| \\
    &= \lim_{N \to \infty} \left| \mathbb E \left[ r_{\pi^N_t}(S_t^1, \mathbb G^N_{S_t}) \right] - \mathbb E \left[ r_{\pi^N_t}(S_t^{\mu}, \mu_t) \right] \right| = 0 \, .
\end{align*}
such that we have
\begin{align*}
    \lim_{N \to \infty} \left| J_1^N(\pi^N, \pi, \ldots, \pi) - J^{\mu}(\pi^N) \right|
    &\leq \sum_{t \in \mathcal T} \lim_{N \to \infty} \left | \mathbb E \left[ r(S_t^1, A_t^1, \mathbb G^N_{S_t}) \right] - \mathbb E \left[ r(S_t^{\mu}, A_t^{\mu}, \mu_t) \right] \right| = 0 \, .
\end{align*}
which is the desired result.
\end{subproof}

From Lemma~\ref{lem:obj-conv}, it follows that for any sequence of optimal exploiting policies $\{ \pi^N \}_{N \in \mathbb N}$ with $\pi^N \in \Pi$ for all $N \in \mathbb N$ and
\begin{align*}
    \pi^N \in \argmax_{\pi \in \Pi} J_1^N(\pi, \pi^*, \ldots, \pi^*)
\end{align*}
for all $N \in \mathbb N$, it holds that for any MFE $(\pi^*, \mu^*) \in \Pi \times \mathcal M$,
\begin{align*}
    \lim_{N \to \infty} J_1^N(\pi^N, \pi^*, \ldots, \pi^*)
    &\leq \max_{\pi \in \Pi} J^{\mu^*}(\pi) \\
    &= J^{\mu^*}(\pi^*) \\
    &= \lim_{N \to \infty} J_1^N(\pi^*, \ldots, \pi^*)
\end{align*}
and by instantiating for arbitrary $\epsilon > 0$, for sufficiently large $N$ we obtain 
\begin{align*}
    J_1^N(\pi^N, \pi^*, \ldots, \pi^*) - \epsilon &= \max_{\pi \in \Pi} J_1^N(\pi, \pi^*, \ldots, \pi^*) - \epsilon \\
    &\leq \max_{\pi \in \Pi} J^{\mu^*}(\pi) - \frac{\epsilon}{2} \\
    &= J^{\mu^*}(\pi^*) - \frac{\epsilon}{2} \\
    &= J_1^N(\pi^*, \pi^*, \ldots, \pi^*)
\end{align*}
which is the desired approximate Nash property that applies to all agents by symmetry.
\end{proof}

\subsection{Proof of Theorem~2} \label{app:int}
\begin{proof}
If $\Phi$ or $\Psi$ is constant, or if the restriction $\Psi \restriction_{\Pi_{\Phi}}$ of $\Psi$ to $\Pi_{\Phi}$ is constant, then $\Gamma = \Psi \circ \Phi$ is constant. Assume that this is not the case. 

Then there exist distinct $\pi, \pi' \in \Pi_{\Phi}$ such that $\Psi(\pi) \neq \Psi(\pi')$. By definition of $\Pi_{\Phi}$ there also exist distinct $\mu, \mu' \in \mathcal M$ such that $\Phi(\mu) = \pi$ and $\Phi(\mu') = \pi'$. Note that for any $\nu, \nu' \in \mathcal M$ with $\Gamma(\nu) \neq \Gamma(\nu')$,
\begin{align*}
    d_{\mathcal M}(\Gamma(\nu), \Gamma(\nu')) \geq \min_{\pi, \pi' \in \Pi_{\Phi}, \Psi(\pi) \neq \Psi(\pi')} d_{\mathcal M}(\Psi(\pi), \Psi(\pi'))
\end{align*} 
where the right-hand side is greater zero by finiteness of $\Pi_{\Phi}$. This holds for $\mu, \mu'$. 

To show that $\Gamma$ cannot be Lipschitz continuous, assume that $\Gamma$ has a Lipschitz constant $C > 0$. We can find an integer $N$ such that
\begin{align*}
    d_{\mathcal M}(\mu^i, \mu^{i+1}) &= \frac{d_{\mathcal M}(\mu, \mu')}{N-1} < \frac{\min_{\pi, \pi' \in \Pi_{\Phi}, \Psi(\pi) \neq \Psi(\pi')} d_{\mathcal M}(\Psi(\pi), \Psi(\pi'))}{C}
\end{align*} 
for all $i \in \{0, \ldots, N-1\}$ by defining 
\begin{align*}
    \mu^i = \frac{i}{N} \mu + \frac{N-i}{N} \mu'
\end{align*} 
for all $i \in \{0, \ldots, N\}$, and $\mu^i \in \mathcal M$ holds. By the triangle inequality 
\begin{align*}
    d_{\mathcal M}(\Gamma(\mu), \Gamma(\mu')) &\leq d_{\mathcal M}(\Gamma(\mu^{0}), \Gamma(\mu^{1})) + \ldots + d_{\mathcal M}(\Gamma(\mu^{N-1}), \Gamma(\mu^{N}))
\end{align*} 
there exists a pair $(\mu^i, \mu^{i+1})$ with $\Gamma(\mu^{i}) \neq \Gamma(\mu^{i+1})$. Therefore, for this pair, by the prequel
\begin{align*}
    d_{\mathcal M}(\Gamma(\mu^{i}), \Gamma(\mu^{i+1})) \geq \min_{\pi, \pi' \in \Pi_{\Phi}, \Psi(\pi) \neq \Psi(\pi')} d_{\mathcal M}(\Psi(\pi), \Psi(\pi')) \, .
\end{align*} 
On the other hand, since $\Gamma$ is Lipschitz with constant $C$, we have 
\begin{align*}
    d_{\mathcal M}(\Gamma(\mu^{i}), \Gamma(\mu^{i+1})) &\leq C \cdot d_{\mathcal M}(\mu^i, \mu^{i+1}) < \min_{\pi, \pi' \in \Pi_{\Phi}, \Psi(\pi) \neq \Psi(\pi')} d_{\mathcal M}(\Psi(\pi), \Psi(\pi'))
\end{align*}
which is a contradiction. Thus, $\Gamma$ cannot be Lipschitz continuous and by extension cannot be contractive.
\end{proof}

\subsection{Proof of Theorem~3} \label{app:th2}
\begin{proof}
For all $\eta > 0, \mu \in \mathcal M, t \in \mathcal T, s \in \mathcal S, a \in \mathcal A$, the soft action-value function of the MDP induced by $\mu \in \mathcal M$ is given by
\begin{align*}
    \tilde Q_\eta(\mu, t, s, a) = r(s, a, \mu_t) + \sum_{s' \in \mathcal S} p(s'\mid s, a, \mu_t) \eta \log \sum_{a' \in \mathcal A} q_{t+1}(a' \mid s') \exp \left( \frac{\tilde Q_\eta(\mu, t+1, s', a')}{\eta} \right)
\end{align*} 
and terminal condition $\tilde Q_\eta(\mu, T-1, s, a) \equiv r(s, a, \mu_{T-1})$. Analogously, the action-value function of the MDP induced by $\mu \in \mathcal M$ is given by
\begin{align*}
    Q^*(\mu, t,s,a) &= r(s,a,\mu_t) + \sum_{s' \in \mathcal S} p(s'\mid s, a, \mu_t) \max_{a' \in \mathcal A} Q^*(\mu, t+1,s',a')
\end{align*}
and the similarly defined policy action-value function for $\pi \in \Pi$ is given by
\begin{align*}
    Q^{\pi}(\mu, t,s,a) &= r(s,a,\mu_t) + \sum_{s' \in \mathcal S} p(s'\mid s, a, \mu_t) \sum_{a' \in \mathcal A} \pi_{t+1}(a' \mid s') Q^{\pi}(\mu, t+1,s',a') \, ,
\end{align*}
with terminal conditions $Q^*(\mu,T-1,s,a) \equiv Q^{\pi}(\mu,T-1,s,a) \equiv r(s,a,\mu_{T-1})$. 

We will show that we can find a Lipschitz constant $K_{\tilde Q_\eta}$ of $\tilde Q_\eta$ that is independent of $\eta$ if $\eta$ is not arbitrarily small. To show this, we will explicitly compute such a Lipschitz constant. Note first that $\tilde Q_\eta$, $Q^*$ and $Q^{\pi}$ are all uniformly bounded by $M_Q \equiv |\mathcal T| M_r$ by assumption, where $M_r$ is the uniform bound of $r$. 

\begin{lemma} \label{lem:qbound}
The functions $\tilde Q_\eta(\mu, t, s, a)$, $Q^*(\mu, t, s, a)$ and $Q^{\pi}(\mu, t, s, a)$ are uniformly bounded for all $\eta > 0, \mu \in \mathcal M, t \in \mathcal T, s \in \mathcal S, a \in \mathcal A$ by
\begin{align*}
    \left| \tilde Q_\eta(\mu, t, s, a) \right| \leq (T-t) M_r \leq T M_r =: M_Q
\end{align*}
where $M_r$ is the uniform bound of $| r(s, a, \mu_t) | \leq M_r$, and $T = |\mathcal T|$.
\end{lemma}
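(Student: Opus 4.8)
The plan is to prove all three bounds simultaneously by backward induction on the time index $t$, running from the terminal time $t = T-1$ down to $t = 0$. Since $T = |\mathcal T|$, establishing $|\tilde Q_\eta(\mu,t,s,a)| \le (T-t) M_r$ (and likewise for $Q^*$ and $Q^\pi$) for every $t$ immediately yields the claimed uniform bound $M_Q = T M_r$, because $(T-t) \le T$ for all $t \ge 0$. The induction is uniform in $\mu \in \mathcal M$, $s \in \mathcal S$, $a \in \mathcal A$, and---crucially for the soft case---in $\eta > 0$.

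For the base case at $t = T-1$, all three functions equal $r(s,a,\mu_{T-1})$ by their terminal conditions, so each is bounded in absolute value by $M_r = (T-(T-1)) M_r$, as desired. For the inductive step, suppose the bound $(T-(t+1)) M_r$ holds at time $t+1$. The cases of $Q^*$ and $Q^\pi$ are routine: the reward term contributes at most $M_r$, while the continuation term is a convex combination of values at time $t+1$. Concretely, $\sum_{s'} p(s' \mid s,a,\mu_t) = 1$, and $\max_{a'} Q^*(\mu,t+1,s',a')$ (respectively $\sum_{a'} \pi_{t+1}(a' \mid s') Q^\pi(\mu,t+1,s',a')$) is itself a value, or an average of values, bounded by $(T-t-1)M_r$. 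Hence the continuation term is bounded by $(T-t-1)M_r$, and the triangle inequality gives $M_r + (T-t-1)M_r = (T-t)M_r$.

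The one genuinely new ingredient---and the main obstacle, since a priori the factor $\eta$ multiplying the logarithm could blow up---is controlling the soft (log-sum-exp) continuation term in $\tilde Q_\eta$ uniformly in $\eta$. The key observation is that, because $q_{t+1}(\cdot \mid s')$ is a probability distribution over $a'$ (so its entries sum to one), the weighted log-sum-exp is sandwiched between the minimum and maximum of its arguments: for any reals $\{x_{a'}\}$ and any $\eta > 0$,
\[
    \min_{a'} x_{a'} \;\le\; \eta \log \sum_{a'} q_{t+1}(a' \mid s') \exp\!\left(\frac{x_{a'}}{\eta}\right) \;\le\; \max_{a'} x_{a'} \, .
\]
Both inequalities follow by replacing every $x_{a'}$ inside the exponential by the minimum (resp.\ maximum), pulling the resulting constant out of the now-trivial sum $\sum_{a'} q_{t+1}(a' \mid s') = 1$, and invoking monotonicity of $\eta \log(\cdot)$.

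Applying this sandwich with $x_{a'} = \tilde Q_\eta(\mu,t+1,s',a')$ and the induction hypothesis $|x_{a'}| \le (T-t-1)M_r$ bounds the soft continuation term by $(T-t-1)M_r$, with no residual dependence on $\eta$. Averaging over $s'$ against $p(\cdot \mid s,a,\mu_t)$ and adding the reward term then closes the induction exactly as in the other two cases, completing the proof.
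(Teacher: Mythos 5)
Your proof is correct and follows essentially the same route as the paper's: backward induction on $t$ with base case $t = T-1$, where the soft continuation term is handled by the log-sum-exp sandwich $\min_{a'} x_{a'} \le \eta \log \sum_{a'} q_{t+1}(a' \mid s')\exp(x_{a'}/\eta) \le \max_{a'} x_{a'}$, valid uniformly in $\eta > 0$ because the prior weights sum to one. Your write-up is if anything slightly more explicit than the paper's, which compresses the sandwich into a single bound $\eta\,\lvert \log \exp((T-t-1)M_r/\eta)\rvert$ and dispatches $Q^*$ and $Q^\pi$ as ``analogous.''
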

\begin{subproof}
Make the induction hypothesis for all $t \in \mathcal T$ that
\begin{align*}
    \left| \tilde Q_\eta(\mu, t, s, a) \right| \leq (T-t) M_r
\end{align*}
for all $\eta > 0, \mu \in \mathcal M, s \in \mathcal S, a \in \mathcal A$ and note that this holds for $t=T-1$, as by assumption
\begin{align*}
    \left| \tilde Q_\eta(\mu, T-1, s, a) \right| = | r(s, a, \mu_t) | \leq M_r \, .
\end{align*}
The induction step from $t+1$ to $t$ holds by
\begin{align*}
    \left| \tilde Q_\eta(\mu, t, s, a) \right| &= \left| r(s, a, \mu_t) + \sum_{s' \in \mathcal S} p(s'\mid s, a, \mu_t) \eta \log \sum_{a' \in \mathcal A} q_{t+1}(a' \mid s') \exp \left( \frac{\tilde Q_\eta(\mu, t+1, s', a')}{\eta} \right) \right| \\
    &\leq \left| r(s, a, \mu_t) \right| + \eta \max_{s' \in \mathcal S} \left| \log \sum_{a' \in \mathcal A} q_{t+1}(a' \mid s') \exp \left( \frac{\tilde Q_\eta(\mu, t+1, s', a')}{\eta} \right) \right| \\
    &\leq  M_r + \eta \left| \log \left( \exp \left( \frac{(T-t-1) M_r}{\eta} \right) \right) \right| \\
    &=  M_r + (T-t-1) M_r = (T-t) M_r \, .
\end{align*}
By maximizing over all $t \in \mathcal T$, we obtain the uniform bound. The other cases are analogous.
\end{subproof}

Now we can find a Lipschitz constant of $\tilde Q_\eta(\mu, t, s, a)$ that is independent of $\eta$.

\begin{lemma} \label{lem:qetalip}
Let $C_r$ be a Lipschitz constant of $\mu \to r(s,a,\mu_t)$ and $C_p$ a Lipschitz constant of $\mu \to p(s' \mid s,a,\mu_t)$. Further, let $\eta_{\mathrm{min}} > 0$. Then, for all $\eta > \eta_{\mathrm{min}}, t \in \mathcal T$, the map $\mu \mapsto \tilde Q_\eta(\mu, t, s, a)$ is Lipschitz for all $s \in \mathcal S, a \in \mathcal A$ with a Lipschitz constant $K_{\tilde Q_\eta}^t$ independent of $\eta$. Therefore, by picking $K_{\tilde Q_\eta} \equiv \max_{t \in \mathcal T} K_{\tilde Q_\eta}^t$, we have one single Lipschitz constant for all $\eta > \eta_{\mathrm{min}}, t \in \mathcal T, s \in \mathcal S, a \in \mathcal A$.
\end{lemma}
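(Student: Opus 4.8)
The plan is to prove the claim by backward induction on $t$, constructing the constants $K_{\tilde Q_\eta}^t$ explicitly through the recursion while checking at each step that no factor of $\eta$ or $1/\eta$ survives. The base case $t = T-1$ is immediate: since $\tilde Q_\eta(\mu, T-1, s, a) = r(s,a,\mu_{T-1})$ and $d_{\mathcal M}(\mu,\mu') \geq d_{TV}(\mu_{T-1}, \mu'_{T-1})$, the map $\mu \mapsto \tilde Q_\eta(\mu, T-1, s, a)$ is Lipschitz with constant $K_{\tilde Q_\eta}^{T-1} = C_r$, trivially independent of $\eta$.

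The crucial observation, and the one that makes the whole constant $\eta$-independent, concerns the soft-maximum (log-sum-exp) operator in the recursion. For fixed $s' \in \mathcal S$ I would write $V_{s'}(\mu) \equiv \eta \log \sum_{a' \in \mathcal A} q_{t+1}(a' \mid s') \exp(\tilde Q_\eta(\mu,t+1,s',a')/\eta)$, viewed as a function of the value vector $(\tilde Q_\eta(\mu,t+1,s',a'))_{a'}$. Differentiating with respect to $\tilde Q_\eta(\mu,t+1,s',a')$ yields exactly the Boltzmann weights $\tilde\pi_{t+1}^{\mu,\eta}(a' \mid s')$, which are nonnegative and sum to one; the explicit factors $\eta$ and $1/\eta$ cancel. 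Hence the gradient is a probability vector with unit $\ell_1$-norm, and by the mean value theorem the log-sum-exp operator is non-expansive in the $\ell_\infty$ norm of its arguments, uniformly over all $\eta > 0$:
\begin{align*}
  |V_{s'}(\mu) - V_{s'}(\mu')| \leq \max_{a' \in \mathcal A} |\tilde Q_\eta(\mu, t+1, s', a') - \tilde Q_\eta(\mu', t+1, s', a')| \leq K_{\tilde Q_\eta}^{t+1} d_{\mathcal M}(\mu, \mu'),
\end{align*}
the last step using the induction hypothesis. Since $q_{t+1}(\cdot \mid s')$ is a probability distribution, one also has $\min_{a'} \tilde Q_\eta \leq V_{s'} \leq \max_{a'} \tilde Q_\eta$, so $|V_{s'}| \leq M_Q$ as already established within the proof of Lemma~\ref{lem:qbound}.

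For the inductive step I would then bound $|\tilde Q_\eta(\mu, t, s, a) - \tilde Q_\eta(\mu', t, s, a)|$ by splitting the recursion into its reward term and its transition-weighted continuation term. The reward term contributes $C_r \, d_{\mathcal M}(\mu,\mu')$. For the continuation $\sum_{s'} p(s'\mid s,a,\mu_t) V_{s'}(\mu)$, I add and subtract $\sum_{s'} p(s'\mid s,a,\mu'_t) V_{s'}(\mu)$: the first difference is controlled by Lipschitz continuity of $p$ together with $|V_{s'}| \leq M_Q$, giving $M_Q |\mathcal S| C_p \, d_{\mathcal M}(\mu,\mu')$, while the second is controlled by $\sum_{s'} p(s'\mid s,a,\mu'_t)\,|V_{s'}(\mu) - V_{s'}(\mu')| \leq K_{\tilde Q_\eta}^{t+1} d_{\mathcal M}(\mu,\mu')$ because $p(\cdot\mid s,a,\mu'_t)$ sums to one. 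These are precisely the sum and product rules of Lemma~\ref{lem:lip}. Collecting terms yields the recursion $K_{\tilde Q_\eta}^t = C_r + M_Q |\mathcal S| C_p + K_{\tilde Q_\eta}^{t+1}$, all of whose terms are $\eta$-independent, and setting $K_{\tilde Q_\eta} \equiv \max_{t \in \mathcal T} K_{\tilde Q_\eta}^t = K_{\tilde Q_\eta}^0$ finishes the claim.

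The main obstacle is entirely the soft-maximum step: one must resist bounding the log-sum-exp crudely, which would leave a spurious factor of $\eta$, and instead recognize it as an $\ell_\infty$ non-expansion via its probability-vector gradient; everything else is a routine telescoping of Lipschitz constants across the finite horizon. I note in passing that the restriction $\eta > \eta_{\mathrm{min}}$ is not actually needed for this particular estimate, which holds for all $\eta > 0$, but I would retain it for consistency with the surrounding statement of Theorem~\ref{th:contractive}.
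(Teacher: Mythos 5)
Your proof is correct, and at the decisive step it takes a genuinely different -- and sharper -- route than the paper. Both arguments are backward inductions with the same splitting into a reward term ($C_r$), a transition term ($M_Q|\mathcal S|C_p$, via boundedness of the continuation value and Lipschitzness of $p$), and a continuation term; the difference is entirely in how the log-sum-exp difference is bounded. The paper also applies the mean value theorem to the weighted log-sum-exp, arriving at exactly the Boltzmann weights as partial derivatives, but then bounds each weight individually by its worst case $\frac{q_{\mathrm{max}}}{q_{\mathrm{min}}}\exp(2M_Q/\eta)$ using $\xi_{a'} \in [-M_Q, M_Q]$; this is precisely the ``crude bound'' you warn against, and it is what forces the hypothesis $\eta > \eta_{\mathrm{min}}$ and makes the paper's recursion $K_{\tilde Q_\eta}^{t} = C_r + \frac{q_{\mathrm{max}}}{q_{\mathrm{min}}} \exp\bigl(\frac{2M_Q}{\eta_{\mathrm{min}}}\bigr) K_{\tilde Q_\eta}^{t+1} + |\mathcal S| M_Q C_p$ depend (badly) on $\eta_{\mathrm{min}}$. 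You instead observe that the gradient is a probability vector, so its $\ell_1$-norm is exactly one and the weighted log-sum-exp is non-expansive in $\ell_\infty$ uniformly over all $\eta > 0$ (your verification is sound: the factors $\eta$ and $1/\eta$ cancel, the weights are nonnegative and sum to one since $q_{t+1}(\cdot \mid s')$ is a probability distribution and does not depend on $\mu$, and the segment form of the multivariate mean value theorem applies). This yields the cleaner recursion $K_{\tilde Q_\eta}^t = C_r + M_Q|\mathcal S|C_p + K_{\tilde Q_\eta}^{t+1}$, a strictly smaller constant with no exponential blow-up, and -- as you correctly note in passing -- it proves a stronger statement in which $\eta_{\mathrm{min}}$ is unnecessary and the single constant works for \emph{all} $\eta > 0$. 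What the paper's cruder bound buys is nothing for this lemma specifically (the $\eta_{\mathrm{min}}$ restriction is harmless downstream only because Theorem~\ref{th:contractive} anyway needs $\eta$ large); your version would in fact simplify the statement of Theorem~\ref{th:contractive} by removing the $\eta'$ clause for the RelEnt case.
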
 
\begin{subproof}
We show by induction that for all $t \in \mathcal T, s \in \mathcal S, a \in \mathcal A$, we can find Lipschitz constants such that $\tilde Q_\eta(\mu, t, s, a)$ is Lipschitz in $\mu$ with a Lipschitz constant that does not depend on $\eta$.

To see this, note that this is true for $t = T-1$ and any $s \in \mathcal S, a \in \mathcal A$, as for any $\mu, \mu'$ we have 
\begin{align*}
    \left| \tilde Q_\eta(\mu, T-1, s, a) - \tilde Q_\eta(\mu', T-1, s, a) \right| 
    &= \left| r(s, a, \mu_{T-1}) - r(s, a, \mu'_{T-1}) \right| \leq C_r d_{\mathcal M}(\mu, \mu') \, .
\end{align*}
The induction step from $t+1$ to $t$ is
\begin{align*}
    &\left| \tilde Q_\eta(\mu, t, s, a) - \tilde Q_\eta(\mu, t, s, a) \right| \\
    &\leq \left| r(s, a, \mu_t) - r(s, a, \mu'_t)\right| + \sum_{s' \in \mathcal S} \left| p(s'\mid s, a, \mu_t) \eta \log \sum_{a' \in \mathcal A} q_{t+1}(a' \mid s') \exp \left( \frac{\tilde Q_\eta(\mu, t+1, s', a')}{\eta} \right) \right. \\
    &\qquad \left. - p(s'\mid s, a, \mu'_t) \eta \log \sum_{a' \in \mathcal A} q_{t+1}(a' \mid s') \exp \left( \frac{\tilde Q_\eta(\mu', t+1, s', a')}{\eta} \right) \right| \\
    &\leq C_r d_{\mathcal M}(\mu, \mu') + \eta |\mathcal S| \max_{s' \in \mathcal S} 1 \cdot \left| \log \sum_{a' \in \mathcal A} q_{t+1}(a' \mid s') \exp \left( \frac{\tilde Q_\eta(\mu, t+1, s', a')}{\eta} \right) \right. \\
    &\qquad \left. - \log \sum_{a' \in \mathcal A} q_{t+1}(a' \mid s') \exp \left( \frac{\tilde Q_\eta(\mu', t+1, s', a')}{\eta} \right) \right| \\
    &\quad + \eta |\mathcal S| \max_{s' \in \mathcal S} \frac{M_Q}{\eta} \cdot \left| p(s'\mid s, a, \mu_t) - p(s'\mid s, a, \mu'_t) \right| \\
    &\leq C_r d_{\mathcal M}(\mu, \mu') + \eta |\mathcal S| \max_{s' \in \mathcal S} \sum_{a' \in \mathcal A} \left| \frac{ \frac{1}{\eta} q_{t+1}(a' \mid s') \exp \left( \frac{\xi_{a'}}{\eta} \right) }{ \sum_{a'' \in \mathcal A} q_{t+1}(a'' \mid s') \exp \left( \frac{\xi_{a''}}{\eta} \right) } \right| \left| \tilde Q_\eta(\mu, t+1, s', a') - \tilde Q_\eta(\mu', t+1, s', a') \right| \\
    &\quad + |\mathcal S| M_Q \cdot C_p d_{\mathcal M}(\mu, \mu') \\
    &\leq C_r d_{\mathcal M}(\mu, \mu') + \frac{| \mathcal A | q_{\mathrm{max}}}{| \mathcal A | q_{\mathrm{min}}} \exp \left(2 \cdot \frac{M_Q}{\eta} \right) K_{\tilde Q_\eta}^{t+1} d_{\mathcal M}(\mu, \mu') + |\mathcal S| M_Q C_p d_{\mathcal M}(\mu, \mu') \\
    &< \left( C_r + \frac{q_{\mathrm{max}}}{q_{\mathrm{min}}} \exp \left( \frac{2 M_Q}{\eta_{\mathrm{min}}} \right) K_{\tilde Q_\eta}^{t+1} + |\mathcal S| M_Q C_p \right) d_{\mathcal M}(\mu, \mu')
\end{align*}
where we use the mean value theorem to obtain some $\xi_a \in [-M_Q, M_Q]$ for all $a \in \mathcal A$ bounded by Lemma~\ref{lem:qbound}, Lemma~\ref{lem:lip} for the second inequality, and defined $q_{\mathrm{max}} = \max_{t \in \mathcal T, s \in \mathcal S, a \in \mathcal A} q_t(a \mid s)$, $q_{\mathrm{min}} = \min_{t \in \mathcal T, s \in \mathcal S, a \in \mathcal A} q_t(a \mid s)$. Since $s \in \mathcal S, a \in \mathcal A$ were arbitrary, this holds for all $s \in \mathcal S, a \in \mathcal A$.

Thus, as long as $\eta > \eta_{\mathrm{min}}$, we have the Lipschitz constant $K_{\tilde Q_\eta}^{t} \equiv \left( C_r + \frac{q_{\mathrm{max}}}{q_{\mathrm{min}}} \exp \left( \frac{2 M_Q}{\eta_{\mathrm{min}}} \right) K_{\tilde Q_\eta}^{t+1} + |\mathcal S| M_Q C_p \right) $ independent of $\eta$, since by induction assumption $K_{\tilde Q_\eta}^{t+1}$ is independent of $\eta$.
\end{subproof}

The optimal action-value function and the policy action-value function for any fixed policy are Lipschitz in $\mu$.

\begin{lemma} \label{lem:qlip}
The functions $\mu \mapsto Q^*(\mu, t, s, a)$ and $\mu \mapsto Q^\pi(\mu, t, s, a)$ for any fixed $\pi \in \Pi, t \in \mathcal T, s \in \mathcal S, a \in \mathcal A$ are Lipschitz continuous. Therefore, for any fixed $\pi \in \Pi$ we can choose a Lipschitz constant $K_Q$ for all $t \in \mathcal T, s \in \mathcal S, a \in \mathcal A$ by taking the maximum over all Lipschitz constants.
\end{lemma}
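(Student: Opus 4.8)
The plan is to mirror the argument of Lemma~\ref{lem:qetalip}, but for the hard-max Bellman recursion of $Q^*$ and the policy-averaged recursion of $Q^\pi$. Both functions are defined by backward recursion in time, sharing the terminal condition $Q^*(\mu, T-1, s, a) = Q^\pi(\mu, T-1, s, a) = r(s, a, \mu_{T-1})$, so I would proceed by backward induction on $t \in \mathcal T$, establishing a Lipschitz constant $K_Q^t$ for $\mu \mapsto Q^*(\mu, t, s, a)$ (respectively $Q^\pi$) that is uniform over $s \in \mathcal S, a \in \mathcal A$, and then set $K_Q \equiv \max_{t \in \mathcal T} K_Q^t$. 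For the base case $t = T-1$ the claim is immediate, since $\lvert Q^*(\mu, T-1, s, a) - Q^*(\mu', T-1, s, a) \rvert = \lvert r(s,a,\mu_{T-1}) - r(s,a,\mu'_{T-1}) \rvert \le C_r\, d_{\mathcal M}(\mu,\mu')$ by the Lipschitz constant $C_r$ of $\mu \mapsto r(s,a,\mu_t)$ guaranteed by Assumption~\ref{assumption2}, and identically for $Q^\pi$.

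For the induction step from $t+1$ to $t$, I would treat each ingredient of the recursion with Lemma~\ref{lem:lip}. By the induction hypothesis each $\mu \mapsto Q^*(\mu, t+1, s', a')$ is Lipschitz with constant $K_Q^{t+1}$, and by Lemma~\ref{lem:qbound} it is bounded by $M_Q$. The continuation term $\mu \mapsto \max_{a'} Q^*(\mu, t+1, s', a')$ is then Lipschitz with constant $K_Q^{t+1}$ (the maximum clause of Lemma~\ref{lem:lip}) and still bounded by $M_Q$. Multiplying by $p(s' \mid s, a, \mu_t)$, which is bounded by $1$ and Lipschitz with constant $C_p$, yields a product that is Lipschitz with constant $K_Q^{t+1} + M_Q C_p$ by the product clause of Lemma~\ref{lem:lip}; summing over the finite $\mathcal S$ and adding the reward term gives the recursion
\[
    K_Q^t = C_r + |\mathcal S|\bigl(K_Q^{t+1} + M_Q C_p\bigr),
\]
which is finite and independent of the choice of $s, a$. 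For $Q^\pi$ the only change is that the $\max_{a'}$ continuation is replaced by the convex combination $\sum_{a'} \pi_{t+1}(a' \mid s')\, Q^\pi(\mu, t+1, s', a')$; since the weights $\pi_{t+1}(\cdot \mid s')$ are nonnegative and sum to one, this average is again Lipschitz with constant at most $K_Q^{t+1}$, so the identical recursion applies.

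Unlike Lemma~\ref{lem:qetalip}, no temperature-dependent factor enters here, so no lower bound $\eta_{\mathrm{min}}$ is needed and the resulting constants are genuinely uniform; in fact this case is strictly easier, as the log-sum-exp smoothing is absent. The only point requiring care --- and the closest thing to an obstacle --- is justifying that the $\max_{a'}$ operation preserves Lipschitz continuity with the maximum of the individual constants. This is exactly the maximum clause of Lemma~\ref{lem:lip}, extended from two functions to the finite family $\{Q^*(\cdot, t+1, s', a')\}_{a' \in \mathcal A}$ by a trivial induction on $|\mathcal A|$; everything else is a bounded, finite combination of sums and products to which Lemma~\ref{lem:lip} applies directly. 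Taking $K_Q = \max_{t \in \mathcal T} K_Q^t$ then yields the single Lipschitz constant claimed.
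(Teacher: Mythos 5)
Your proposal is correct and follows essentially the same route as the paper: a backward induction over $t$ applying the sum, product and maximum clauses of Lemma~\ref{lem:lip} to the Bellman recursion, with the terminal condition handled by Assumption~\ref{assumption2} and the convex-combination case for $Q^\pi$ treated analogously. The paper states this more tersely (invoking Lemma~\ref{lem:lip} without tracking constants), whereas you additionally make the constant recursion $K_Q^t = C_r + |\mathcal S|(K_Q^{t+1} + M_Q C_p)$ explicit, which is a harmless refinement of the same argument.
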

\begin{subproof}
The action-value function is given by the recursion
\begin{align*}
    Q^*(\mu, t,s,a) &= r(s,a,\mu_t) + \sum_{s' \in \mathcal S} p(s'\mid s, a, \mu_t) \max_{a' \in \mathcal A} Q^*(\mu, t+1,s',a')
\end{align*}
with terminal condition $Q^*(\mu,T-1,s,a) \equiv r(s,a,\mu_{T-1})$. The functions $r(s,a,\mu_t)$ and $p(s' \mid s,a,\mu_t)$ are Lipschitz continuous by Assumption~2. Note that for any $\mu, \mu' \in \mathcal M$ and any $t \in \mathcal T$, $d_{TV}(\mu_t, \mu'_t) \leq d_{M}(\mu, \mu')$. Therefore, the terminal condition and all terms in the above recursion are Lipschitz. Further, $Q^*(\mu, t,s,a)$ is uniformly bounded, since $r$ is assumed uniformly bounded.

Since a finite maximum, product and sum of Lipschitz and bounded functions is again Lipschitz and bounded by Lemma~\ref{lem:lip}, we obtain Lipschitz constants $K_{Q,t,s,a}$ of the maps $\mu \to Q^*(\mu, t, s, a)$ for any $t \in \mathcal T, s \in \mathcal S, a \in \mathcal A$ and define $K_Q \equiv \max_{t \in \mathcal T,s \in \mathcal S,a \in \mathcal A} K_{Q,t,s,a}$. The case for $Q^\pi$ with fixed $\pi \in \Pi$ is analogous.
\end{subproof}

The same holds for $\Psi(\pi)$ mapping from policy $\pi$ to its induced mean field.

\begin{lemma} \label{lem:mulip}
The function $\Psi(\pi)$ is Lipschitz with some Lipschitz constant $K_{\Psi}$.
\end{lemma}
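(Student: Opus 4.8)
The plan is to establish Lipschitz continuity of $\Psi$ directly from its recursive definition by induction on the time index, bounding $d_{TV}\big((\Psi(\pi))_t, (\Psi(\pi'))_t\big)$ by a fixed multiple of $d_{\Pi}(\pi, \pi')$ uniformly in $t$. Write $\mu = \Psi(\pi)$ and $\mu' = \Psi(\pi')$. The base case is immediate, since $\mu_0 = \mu'_0$ is the fixed initial distribution and hence $d_{TV}(\mu_0, \mu'_0) = 0$. Note that the $N$-agent machinery (the empirical-measure convergence lemmas) is not needed here, as $\Psi$ is a purely deterministic recursion.

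For the inductive step I would expand
\[
    \mu_{t+1}(s') = \sum_{s \in \mathcal S} \mu_t(s) \sum_{a \in \mathcal A} \pi_t(a \mid s) p(s' \mid s, a, \mu_t)
\]
and the analogous expression for $\mu'_{t+1}$, then add and subtract two hybrid quantities so that $\mu_{t+1}(s') - \mu'_{t+1}(s')$ telescopes into three terms, each changing exactly one ingredient at a time: first the outer state measure $\mu_t$ versus $\mu'_t$, then the policy $\pi_t$ versus $\pi'_t$, and finally the transition kernel through its mean field argument, $p(\cdot \mid \cdot, \cdot, \mu_t)$ versus $p(\cdot \mid \cdot, \cdot, \mu'_t)$. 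Summing $|\cdot|$ over $s'$ and swapping the order of summation, the first two terms are controlled using $\sum_{s'} p(s' \mid s, a, \cdot) = 1$ and $\sum_a \pi_t(a \mid s) = 1$, giving contributions bounded by $d_{TV}(\mu_t, \mu'_t)$ and by $\max_s d_{TV}(\pi_t(\cdot \mid s), \pi'_t(\cdot \mid s)) \le d_{\Pi}(\pi, \pi')$ respectively. The third term is where Assumption~\ref{assumption2} enters: Lipschitz continuity of $\mu \mapsto p(s' \mid s, a, \mu)$ with some constant $C_p$ yields $\sum_{s'} | p(s' \mid s, a, \mu_t) - p(s' \mid s, a, \mu'_t) | \le |\mathcal S| C_p \, d_{TV}(\mu_t, \mu'_t)$, and after averaging against $\mu'_t(s) \pi'_t(a \mid s)$ this contributes a further multiple of $d_{TV}(\mu_t, \mu'_t)$.

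Collecting the three bounds, and carefully tracking the factor $\tfrac12$ in the definition of $d_{TV}$, produces a one-step recursion of the form
\[
    d_{TV}(\mu_{t+1}, \mu'_{t+1}) \le L \, d_{TV}(\mu_t, \mu'_t) + d_{\Pi}(\pi, \pi')
\]
with $L = 1 + \tfrac12 |\mathcal S| C_p$ independent of $\pi$ and $\pi'$. Unrolling this from $d_{TV}(\mu_0, \mu'_0) = 0$ gives $d_{TV}(\mu_t, \mu'_t) \le \frac{L^t - 1}{L - 1} d_{\Pi}(\pi, \pi')$ for every $t$, and taking the maximum over the finite horizon $t \in \mathcal T$ yields $d_{\mathcal M}(\mu, \mu') \le K_{\Psi} \, d_{\Pi}(\pi, \pi')$ with $K_{\Psi} = \frac{L^{T-1} - 1}{L - 1}$, which is finite since $T = |\mathcal T| < \infty$. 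This is exactly the claimed Lipschitz bound.

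I expect the only delicate point to be the bookkeeping in the telescoping decomposition together with the consistent handling of the $\tfrac12$ factors in the total variation distance; everything else reduces to the normalization of probability distributions and a single application of the Lipschitz property of $p$ from Assumption~\ref{assumption2}.
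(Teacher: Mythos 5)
Your proof is correct, but it takes a genuinely different route from the paper's. The paper argues abstractly: it views each coordinate $\pi \mapsto \mu_t(s)$ of the recursion as built from finitely many sums, products and compositions of Lipschitz, bounded functions ($p$ being Lipschitz in its measure argument by Assumption~\ref{assumption2}, the initial condition $\pi \mapsto \mu_0(s)$ constant), invokes the closure properties of Lemma~\ref{lem:lip} inductively along the recursion, and sets $K_{\Psi} \equiv \max_{t,s} K_{\Psi,t,s}$ over the finitely many coordinatewise constants --- no explicit constant is ever computed. You instead work directly at the level of the total variation metric, telescoping the one-step update into three hybrid differences and exploiting the normalizations $\sum_{s'} p(s' \mid s, a, \cdot) = 1$ and $\sum_{a} \pi_t(a \mid s) = 1$ to obtain the explicit recursion $d_{TV}(\mu_{t+1}, \mu'_{t+1}) \le L\, d_{TV}(\mu_t, \mu'_t) + d_{\Pi}(\pi, \pi')$ with $L = 1 + \tfrac12 |\mathcal S| C_p$, which unrolls to a concrete $K_{\Psi}$. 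Your version buys two things the paper's does not: a self-contained argument that avoids the bounded-Lipschitz calculus, and an explicit, comparatively sharp value of $K_{\Psi}$ --- which is genuinely useful downstream, since the contraction threshold for $\eta$ in Theorem~\ref{th:contractive} scales linearly with $K_{\Psi}$, and your use of the probability normalizations keeps the constant much smaller than what the generic product bounds of Lemma~\ref{lem:lip} would produce. The only blemish is the closed form $K_{\Psi} = \frac{L^{T-1}-1}{L-1}$, which degenerates when $C_p = 0$ (so $L = 1$, i.e.\ mean-field-independent dynamics); writing the unrolled bound as the geometric sum $\sum_{k=0}^{t-1} L^k$ covers that case without a split and changes nothing else in your argument.
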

\begin{subproof}
Recall that $\Psi(\pi)$ maps to the mean field $\mu$ starting with $\mu_0$ and obtained by the recursion
\begin{align*}
    \mu_{t+1}(s') = \sum_{s \in \mathcal S} \sum_{a \in \mathcal A} p(s' \mid s, a, \mu_t) \pi_t(a \mid s) \mu_t(s) \, .
\end{align*}
We proceed analogously to Lemma~\ref{lem:qlip}. $\mu$ is uniformly bounded by normalization. The constant function $\pi \mapsto \mu_0(s)$ is Lipschitz and bounded for any $s \in \mathcal S$. The functions $r(s,a,\mu_t)$ and $p(s' \mid s,a,\mu_t)$ are Lipschitz continuous by Assumption~2. Since a finite sum, product and composition of Lipschitz and bounded functions is again Lipschitz and bounded by Lemma~\ref{lem:lip}, we obtain Lipschitz constants $K_{\Psi,t,s}$ of the maps $\pi \to \mu_t(s)$ for any $t \in \mathcal T, s \in \mathcal S$ and define $K_{\Psi} \equiv \max_{t \in \mathcal T,s \in \mathcal S} K_{\Psi,t,s}$, which is the desired Lipschitz constant of $\Psi$. 
\end{subproof}

Finally, the map from an energy function to its associated Boltzmann distribution is Lipschitz for any $\eta > 0$ with a Lipschitz constant explicitly depending on $\eta$.

\begin{lemma} \label{lem:boltzmannlip}
Let $\eta > 0$ arbitrary and $f_a: \mathcal M \to \mathbb R$ be a Lipschitz continuous function with Lipschitz constant $K_f$ for any $a \in \mathcal A$. Further, let $g: \mathcal A \to \mathbb R$ be bounded by $g_{\mathrm{max}} > g(a) > g_{\mathrm{min}} > 0$ for any $a \in \mathcal A$. The function 
\begin{align*}
    \mu \mapsto \frac{ g(a) \exp \left( \frac{f_{a}(\mu)}{\eta} \right) }{ \sum_{a' \in \mathcal A} g(a') \exp \left( \frac{f_{a'}(\mu)}{\eta} \right) }
\end{align*}
is Lipschitz with Lipschitz constant $K = \frac{(\left| \mathcal A \right| - 1) K_f g_{\mathrm{max}}^2}{ 2 \eta g_{\mathrm{min}}^2}$ for any $a \in \mathcal A$.
\end{lemma}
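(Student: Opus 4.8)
The plan is to regard the stated map as a composition and handle the two factors separately. Write $F(\mu) \equiv (f_{a'}(\mu))_{a' \in \mathcal A} \in \mathbb R^{|\mathcal A|}$ and let $\sigma_a \colon \mathbb R^{|\mathcal A|} \to \mathbb R$ denote the weighted softmax $\sigma_a(v) \equiv g(a) \exp(v_a/\eta) \big/ \sum_{a'} g(a') \exp(v_{a'}/\eta)$, so that the map in question equals $\sigma_a \circ F$. Since each $f_{a'}$ is $K_f$-Lipschitz, $F$ is Lipschitz from $(\mathcal M, d_{\mathcal M})$ into $(\mathbb R^{|\mathcal A|}, \lVert \cdot \rVert_\infty)$ with constant $K_f$, because $\lVert F(\mu) - F(\mu') \rVert_\infty = \max_{a'} |f_{a'}(\mu) - f_{a'}(\mu')| \le K_f d_{\mathcal M}(\mu,\mu')$. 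It thus suffices to bound the Lipschitz constant of $\sigma_a$ with respect to $\lVert \cdot \rVert_\infty$ and multiply by $K_f$.

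For the softmax factor I would differentiate directly, obtaining $\partial \sigma_a/\partial v_b = \tfrac{1}{\eta}\sigma_a(\delta_{ab} - \sigma_b)$ and hence $\lVert \nabla \sigma_a \rVert_1 = \tfrac{1}{\eta}\bigl(\sigma_a(1-\sigma_a) + \sum_{b \ne a}\sigma_a\sigma_b\bigr) = \tfrac{2}{\eta}\sigma_a(1-\sigma_a)$, where I used $\sum_{b\ne a}\sigma_b = 1 - \sigma_a$. Applying the mean value theorem along the segment joining $F(\mu)$ and $F(\mu')$ in $\mathbb R^{|\mathcal A|}$ and bounding the directional derivative by $\lVert \nabla\sigma_a \rVert_1\,\lVert F(\mu)-F(\mu') \rVert_\infty$ ($\ell^1$--$\ell^\infty$ duality), I obtain $|\sigma_a(F(\mu)) - \sigma_a(F(\mu'))| \le \tfrac{2}{\eta}\bigl(\sup_v \sigma_a(1-\sigma_a)\bigr)\,K_f\, d_{\mathcal M}(\mu,\mu')$, with the supremum taken over all energy vectors $v \in \mathbb R^{|\mathcal A|}$.

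The key remaining step, and the only place the weight bounds enter, is to control $\sigma_a(1-\sigma_a)$ uniformly. I would rewrite $\sigma_a(1-\sigma_a) = \sum_{b \ne a}\sigma_a\sigma_b$ and bound each product. Comparing $\sigma_a$ to the unweighted softmax $p_a \equiv \exp(v_a/\eta)\big/\sum_{a'}\exp(v_{a'}/\eta)$ gives $\tfrac{g_{\mathrm{min}}}{g_{\mathrm{max}}} p_a \le \sigma_a \le \tfrac{g_{\mathrm{max}}}{g_{\mathrm{min}}} p_a$, whence $\sigma_a \sigma_b \le \tfrac{g_{\mathrm{max}}^2}{g_{\mathrm{min}}^2}\,p_a p_b \le \tfrac{g_{\mathrm{max}}^2}{4 g_{\mathrm{min}}^2}$ using $p_a p_b \le 1/4$. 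Summing the $|\mathcal A|-1$ terms yields $\sigma_a(1-\sigma_a) \le (|\mathcal A|-1)\tfrac{g_{\mathrm{max}}^2}{4 g_{\mathrm{min}}^2}$, and substituting this into the previous display gives exactly $K = \tfrac{(|\mathcal A|-1)K_f g_{\mathrm{max}}^2}{2\eta g_{\mathrm{min}}^2}$.

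The main obstacle is bookkeeping rather than analysis: one must ensure the bound on $\sigma_a(1-\sigma_a)$ is uniform over all $v \in \mathbb R^{|\mathcal A|}$, so that the mean value theorem over the connecting segment is legitimate even though intermediate points need not be images of mean fields, and one must keep the constant factors consistent — the factor $2$ from $\lVert \nabla \sigma_a \rVert_1$, the $1/4$ from AM-GM, and the $|\mathcal A|-1$ from the number of summands — so that they combine to precisely the stated $K$. It is worth noting that the resulting constant is deliberately loose (the true softmax Lipschitz constant needs only $\sigma_a(1-\sigma_a)\le 1/4$); it is the comparison to the unweighted softmax that introduces the $g_{\mathrm{max}}^2/g_{\mathrm{min}}^2$ factor appearing in the statement.
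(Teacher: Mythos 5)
Your proof is correct and arrives at exactly the stated constant, but by a genuinely different decomposition than the paper's. The paper fixes $a$, divides numerator and denominator by $g(a)\exp(f_a(\mu)/\eta)$, and works in the difference coordinates $\Delta_a f_{a'} \equiv f_{a'} - f_a$ (each $2K_f$-Lipschitz, which is where its factor of $2$ comes from): it applies the mean value theorem to $v \mapsto \bigl(1+\sum_{a'\neq a}\tfrac{g(a')}{g(a)}\exp(v_{a'}/\eta)\bigr)^{-1}$ in those $|\mathcal A|-1$ variables and bounds each coefficient by the maximum $\tfrac{1}{4c}$ of the scalar function $x\mapsto \exp(x/\eta)\big/(1+c\exp(x/\eta))^2$ with $c=g_{\mathrm{min}}/g_{\mathrm{max}}$. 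You instead keep all $|\mathcal A|$ coordinates and use the softmax Jacobian identity $\partial\sigma_a/\partial v_b=\tfrac{1}{\eta}\sigma_a(\delta_{ab}-\sigma_b)$, so your factor of $2$ comes from $\lVert\nabla\sigma_a\rVert_1=\tfrac{2}{\eta}\sigma_a(1-\sigma_a)$ together with $\ell^1$--$\ell^\infty$ duality, and the weight ratio enters only through the comparison $\sigma_a\le\tfrac{g_{\mathrm{max}}}{g_{\mathrm{min}}}\,p_a$ with the unweighted softmax plus $p_ap_b\le 1/4$. Both gradient bounds are uniform over all of $\mathbb R^{|\mathcal A|}$, so the MVT along the connecting segment is legitimate in either argument even though intermediate points need not be images of mean fields — you were right to flag and resolve this. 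Your route has one extra payoff that you correctly note: since $\sigma_a\in[0,1]$ gives $\sigma_a(1-\sigma_a)\le 1/4$ directly, the Jacobian identity in fact yields the sharper Lipschitz constant $K_f/(2\eta)$, with no dependence on $|\mathcal A|$ or on $g_{\mathrm{max}}/g_{\mathrm{min}}$; the paper's term-by-term maximization cannot detect this because the per-coordinate maxima are not simultaneously attainable, and your deliberately loosened comparison is needed only to reproduce the constant as stated (which still suffices, as any function Lipschitz with a smaller constant is Lipschitz with the stated one).
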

\begin{subproof}
Let $\mu, \mu' \in \mathcal M$ be arbitrary and define
\begin{align*}
    \Delta_a f_{a'}(\mu) \equiv f_{a'}(\mu) - f_a(\mu) 
\end{align*}
for any $a' \in \mathcal A$, which is Lipschitz with constant $2K_f$. Then, we have
\begin{align*}
    &\left| \frac{ g(a) \exp \left( \frac{f_a(\mu)}{\eta} \right) }{ \sum_{a' \in \mathcal A} g(a') \exp \left( \frac{f_{a'}(\mu)}{\eta} \right) } - \frac{  g(a) \exp \left( \frac{f_a(\mu')}{\eta} \right) }{ \sum_{a' \in \mathcal A} g(a') \exp \left( \frac{f_{a'}(\mu')}{\eta} \right) } \right| \\
    &= \left| \frac{ 1 }{ 1 + \sum_{a' \neq a} \frac{g(a')}{g(a)} \exp \left( \frac{\Delta_a f_{a'}(\mu)}{\eta} \right) } - \frac{ 1 }{ 1 + \sum_{a' \neq a} \frac{g(a')}{g(a)} \exp \left( \frac{\Delta_a f_{a'}(\mu')}{\eta}  \right) } \right| \\
    &\leq \left| \sum_{a' \neq a} \frac{ \frac{g(a')}{g(a)} \cdot \frac{1}{\eta} \exp \left( \frac{\xi_{a'}}{\eta} \right) }{ \left( 1 + \sum_{a'' \neq a} \frac{g(a'')}{g(a)} \exp \left( \frac{\xi_{a''}}{\eta} \right) \right)^2 } \cdot \left( \Delta_a f_{a'}(\mu) - \Delta_a f_{a'}(\mu')  \right) \right| \\
    &\leq  \sum_{a' \neq a} \left| \frac{ \frac{g_{\mathrm{max}}}{g_{\mathrm{min}}} \cdot \frac{1}{\eta} \exp \left( \frac{\xi_{a'}}{\eta} \right) }{ \left( 1 + \frac{g_{\mathrm{min}}}{g_{\mathrm{max}}} \exp \left( \frac{\xi_{a'}}{\eta} \right) \right)^2 } \right| \cdot \left| \Delta_a f_{a'}(\mu) - \Delta_a f_{a'}(\mu') \right| \\
    &\leq \frac{g_{\mathrm{max}}^2}{4 \eta g_{\mathrm{min}}^2} \cdot \sum_{a' \neq a} 2 K_f d_{\mathcal M}(\mu, \mu') = \frac{(\left| \mathcal A \right| - 1) K_f g_{\mathrm{max}}^2}{ 2 \eta g_{\mathrm{min}}^2} \cdot d_{\mathcal M}(\mu, \mu')
\end{align*}
where we applied the mean value theorem to obtain some $\xi_{a'} \in \mathbb R$ for all $a' \in \mathcal A$ and used the maximum $\frac{1}{4c}$ of the function $\tilde f(x) = \frac{\exp (x/\eta)}{(1 + c \cdot \exp (x/\eta))^2}$ at $x = 0$.
\end{subproof}

For RelEnt MFE, by Lemma~\ref{lem:qetalip} we obtain a Lipschitz constant $K_{\tilde Q_\eta}$ of $\mu \to \tilde Q_\eta(\mu, t, s, a)$ as long as $\eta > \eta_{\mathrm{min}}$ for some $\eta_{\mathrm{min}} > 0$. Furthermore, note that for $\tilde \pi^{\mu,\eta} \equiv \tilde \Phi_\eta(\mu)$, we have
\begin{align*}
    \left| \tilde \pi_t^{\mu,\eta}(a \mid s) - \tilde \pi_t^{\mu',\eta}(a \mid s)) \right| &= \left| \frac{ q_{t}(a \mid s) \exp \left( \frac{\tilde Q_\eta(\mu, t, s, a)}{\eta} \right) }{ \sum_{a' \in \mathcal A} q_{t}(a' \mid s) \exp \left( \frac{\tilde Q_\eta(\mu, t, s, a')}{\eta} \right) } - \frac{ q_{t}(a \mid s) \exp \left( \frac{\tilde Q_\eta(\mu', t, s, a)}{\eta} \right) }{ \sum_{a' \in \mathcal A} q_{t}(a' \mid s) \exp \left( \frac{\tilde Q_\eta(\mu', t, s, a')}{\eta} \right) } \right| \, .
\end{align*}
We obtain the Lipschitz constant of $\tilde \Phi_\eta$ by applying Lemma~\ref{lem:boltzmannlip} to each of the maps given by
\begin{align*}
    \mu \mapsto \frac{ q_{t}(a \mid s) \exp \left( \frac{\tilde Q_\eta(\mu, t, s, a)}{\eta} \right) }{ \sum_{a' \in \mathcal A} q_{t}(a' \mid s) \exp \left( \frac{\tilde Q_\eta(\mu, t, s, a')}{\eta} \right) }
\end{align*} 
for all $t \in \mathcal T, s \in \mathcal S, a \in \mathcal A$, resulting in the Lipschitz property
\begin{align*}
    d_{\Pi}(\tilde \Phi_\eta(\mu), \tilde \Phi_\eta(\mu')) &= \max_{s \in \mathcal S} \max_{t \in \mathcal T} \sum_{a \in \mathcal A} \left| \tilde \pi_t^{\mu,\eta}(a \mid s) - \tilde \pi_t^{\mu',\eta}(a \mid s)) \right| \\
    &\leq \sum_{a \in \mathcal A}\frac{(\left| \mathcal A \right| - 1) K_{\tilde Q_\eta} q_{\mathrm{max}}^2}{ 2 \eta q_{\mathrm{min}}^2} \cdot d_{\mathcal M}(\mu, \mu') = \frac{\left| \mathcal A \right| (\left| \mathcal A \right| - 1) K_{\tilde Q_\eta} q_{\mathrm{max}}^2}{ 2 \eta q_{\mathrm{min}}^2} \cdot d_{\mathcal M}(\mu, \mu') \, ,
\end{align*} 
where we define $q_{\mathrm{max}} = \max_{t \in \mathcal T, s \in \mathcal S, a \in \mathcal A} q_t(a \mid s)$ and analogously $q_{\mathrm{min}} = \min_{t \in \mathcal T, s \in \mathcal S, a \in \mathcal A} q_t(a \mid s)$.

By Lemma~\ref{lem:mulip}, $\Psi(\pi)$ is Lipschitz with some Lipschitz constant $K_{\Psi}$. Therefore, the resulting Lipschitz constant of the composition $\tilde \Gamma_\eta = \Psi \circ \tilde \Phi_\eta$ is $\frac{\left| \mathcal A \right| (\left| \mathcal A \right| - 1) K_{\tilde Q_\eta} K_{\Psi} q_{\mathrm{max}}^2}{ 2 \eta q_{\mathrm{min}}^2}$ and leads to a contraction for any
\begin{align*}
    \eta > \max \left( \eta_{\mathrm{min}}, \frac{\left| \mathcal A \right| (\left| \mathcal A \right| - 1) K_{\tilde Q_\eta} K_{\Psi} q_{\mathrm{max}}^2}{ 2 q_{\mathrm{min}}^2} \right) \, .
\end{align*}

Analogously for Boltzmann MFE, by Lemma~\ref{lem:qlip} the mapping $\mu \to Q^*(\mu, t, s, a)$ is Lipschitz with some Lipschitz constant $K_{Q^*}$ for all $t \in \mathcal T, s \in \mathcal S, a \in \mathcal A$. For $\pi^{\mu,\eta} \equiv \Phi_\eta(\mu)$, we have
\begin{align*}
    \left|  \pi_t^{\mu,\eta}(a \mid s) - \pi_t^{\mu',\eta}(a \mid s)) \right| &= \left| \frac{ q_{t}(a \mid s) \exp \left( \frac{Q^*(\mu, t, s, a)}{\eta} \right) }{ \sum_{a' \in \mathcal A} q_{t}(a' \mid s) \exp \left( \frac{Q^*(\mu, t, s, a')}{\eta} \right) } - \frac{ q_{t}(a \mid s) \exp \left( \frac{Q^*(\mu', t, s, a)}{\eta} \right) }{ \sum_{a' \in \mathcal A} q_{t}(a' \mid s) \exp \left( \frac{Q^*(\mu', t, s, a')}{\eta} \right) } \right| \, .
\end{align*}
We obtain the Lipschitz constant of $\Phi_\eta$ by applying Lemma~\ref{lem:boltzmannlip} to each of the maps given by 
\begin{align*}
    \mu \mapsto \frac{ q_{t}(a \mid s) \exp \left( \frac{Q^*(\mu, t, s, a)}{\eta} \right) }{ \sum_{a' \in \mathcal A} q_{t}(a' \mid s) \exp \left( \frac{Q^*(\mu, t, s, a')}{\eta} \right) }
\end{align*}
for all $t \in \mathcal T, s \in \mathcal S, a \in \mathcal A$, resulting in the Lipschitz property
\begin{align*}
    d_{\Pi}(\Phi_\eta(\mu), \Phi_\eta(\mu')) &= \max_{s \in \mathcal S} \max_{t \in \mathcal T} \sum_{a \in \mathcal A} \left| \pi_t^{\mu,\eta}(a \mid s) - \pi_t^{\mu',\eta}(a \mid s)) \right| \\
    &\leq \sum_{a \in \mathcal A}\frac{(\left| \mathcal A \right| - 1) K_{Q^*} q_{\mathrm{max}}^2}{ 2 \eta q_{\mathrm{min}}^2} \cdot d_{\mathcal M}(\mu, \mu') = \frac{\left| \mathcal A \right| (\left| \mathcal A \right| - 1) K_{Q^*} q_{\mathrm{max}}^2}{ 2 \eta q_{\mathrm{min}}^2} \cdot d_{\mathcal M}(\mu, \mu') \, .
\end{align*} 
By Lemma~\ref{lem:mulip}, $\Psi(\pi)$ is Lipschitz with some Lipschitz constant $K_{\Psi}$. The resulting Lipschitz constant of the composition $\Gamma_\eta = \Psi \circ \Phi_\eta$ is $\frac{\left| \mathcal A \right| (\left| \mathcal A \right| - 1) K_{Q^*} K_{\Psi} q_{\mathrm{max}}^2}{ 2 \eta q_{\mathrm{min}}^2}$ and leads to a contraction for any 
\begin{align*}
    \eta > \frac{\left| \mathcal A \right| (\left| \mathcal A \right| - 1) K_{Q^*} K_{\Psi} q_{\mathrm{max}}^2}{ 2 q_{\mathrm{min}}^2}
\end{align*} 
where for the uniform prior policy, $q_{\mathrm{max}} = q_{\mathrm{min}}$. If required, the Lipschitz constants can be computed recursively according to Lemma~\ref{lem:lip}.
\end{proof}

\subsection{Proof of Theorem~4} \label{app:epsepsopt}
\begin{proof}
Consider any sequence $(\pi^*_{n}, \mu^*_{n})_{n \in \mathbb N}$ of ${\eta_n}$-Boltzmann or ${\eta_n}$-RelEnt MFE with $\eta_n \to 0^+$ as $n \to \infty$. Note that a pair $(\pi^*_{n}, \mu^*_{n})$ is completely specified by $\mu^*_{n}$, since $\pi^*_{n} = \Phi_{\eta_n}(\mu^*_{n})$ or $\pi^*_{n} = \tilde \Phi_{\eta_n}(\mu^*_{n})$ uniquely. Therefore, it suffices to show that the associated functions $(\mu \mapsto Q^{\Phi_{\eta_n}(\mu)}(\mu, t, s, a))_{n \in \mathbb N}$ and $(\mu \mapsto Q^{\tilde \Phi_{\eta_n}(\mu)}(\mu, t, s, a))_{n \in \mathbb N}$ converge uniformly to $\mu \mapsto Q^*(\mu, t, s, a)$, from which the desired result will follow. For definitions of the different action-value functions, see Appendix~\ref{app:th2}.

Note that pointwise convergence is insufficient, since there is no guarantee that $\mu^*_{n}$ itself will converge as $n \to \infty$. However, we can obtain uniform convergence by pointwise convergence and equicontinuity. For RelEnt MFE, we will additionally require uniform convergence of the sequence $(\mu \mapsto \tilde Q_{\eta_n}(\mu, t, s, a))_{n \in \mathbb N}$ with $\eta_n \to 0^+$. We begin with pointwise convergence of $(\mu \mapsto Q^{\Phi_{\eta_n}(\mu)}(\mu, t, s, a))_{n \in \mathbb N}$ to the optimal action-value function $\mu \mapsto Q^*(\mu, t, s, a)$.

\begin{lemma} \label{lem:boltzmann-pointwise}
Any sequence of functions $(\mu \mapsto Q^{\Phi_{\eta_n}(\mu)}(\mu, t, s, a))_{n \in \mathbb N}$ with $\eta_n \to 0^+$ converges pointwise to $\mu \mapsto Q^*(\mu, t, s, a)$ for all $t \in \mathcal T, s \in \mathcal S, a \in \mathcal A$.
\end{lemma}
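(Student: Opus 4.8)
The plan is to fix $\mu \in \mathcal M$ and proceed by backward induction on the time index $t$, using the recursions for $Q^{\pi}$ and $Q^*$ from Appendix~\ref{app:th2}. The base case is immediate: at $t = T-1$ the terminal conditions give $Q^{\Phi_{\eta_n}(\mu)}(\mu, T-1, s, a) = r(s,a,\mu_{T-1}) = Q^*(\mu, T-1, s, a)$ for every $s, a$ and every $n$, so convergence holds with equality. For the inductive step I assume $Q^{\Phi_{\eta_n}(\mu)}(\mu, t+1, s', a') \to Q^*(\mu, t+1, s', a')$ for all $s', a'$ and establish the same at time $t$.

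Subtracting the two recursions and noting that neither the reward $r(s,a,\mu_t)$ nor the transition weights $p(s'\mid s,a,\mu_t)$ depend on $\eta$, the claim at time $t$ reduces, for each $s'$, to
\begin{align*}
    \sum_{a' \in \mathcal A} \pi^{\mu,\eta_n}_{t+1}(a' \mid s') \, Q^{\Phi_{\eta_n}(\mu)}(\mu, t+1, s', a') \longrightarrow \max_{a' \in \mathcal A} Q^*(\mu, t+1, s', a') =: V^*
\end{align*}
as $n \to \infty$. Since $\pi^{\mu,\eta_n}_{t+1}(\cdot \mid s')$ is a probability distribution, $V^* = \sum_{a'} \pi^{\mu,\eta_n}_{t+1}(a' \mid s') V^*$, and I would bound the distance of the left-hand side from $V^*$ by
\begin{align*}
    \sum_{a'} \pi^{\mu,\eta_n}_{t+1}(a' \mid s') \left| Q^{\Phi_{\eta_n}(\mu)}(\mu,t+1,s',a') - Q^*(\mu,t+1,s',a') \right| + \sum_{a'} \pi^{\mu,\eta_n}_{t+1}(a' \mid s') \left( V^* - Q^*(\mu,t+1,s',a') \right) \, ,
\end{align*}
using $Q^* \leq V^*$. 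The first sum is at most $\max_{a'} | Q^{\Phi_{\eta_n}(\mu)}(\mu,t+1,s',a') - Q^*(\mu,t+1,s',a') |$, which tends to $0$ by the induction hypothesis since $\mathcal A$ is finite (pointwise convergence is uniform over the finite action set).

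For the second sum, the crucial point is that the Boltzmann policy is built from the $\eta$-independent optimal value $Q^*$, so it concentrates on the maximizer set $\mathcal A^*(s') \equiv \argmax_{a'} Q^*(\mu, t+1, s', a')$ as $\eta_n \to 0^+$. I would make this precise by dividing the numerator and denominator of $\pi^{\mu,\eta_n}_{t+1}(a' \mid s')$ by $\exp(V^*/\eta_n)$, so that every non-maximizing action carries a factor $\exp((Q^*(\mu,t+1,s',a') - V^*)/\eta_n) \to 0$ while maximizing actions carry factor $1$; strict positivity of the prior $q$ keeps the denominator bounded below by $q_{\mathrm{min}} > 0$. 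Hence $\pi^{\mu,\eta_n}_{t+1}(a' \mid s') \to 0$ for every $a' \notin \mathcal A^*(s')$. Because $V^* - Q^*(\mu,t+1,s',a')$ vanishes on $\mathcal A^*(s')$ and is bounded and nonnegative off it, the second sum vanishes in the limit. Combining the two estimates yields the displayed convergence, and reinserting it into the recursion for $Q^{\Phi_{\eta_n}(\mu)}(\mu, t, s, a)$ gives exactly $Q^*(\mu, t, s, a)$ in the limit, closing the induction.

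I expect the main obstacle to be the joint dependence on $\eta_n$ of both the policy $\pi^{\mu,\eta_n}$ and the value $Q^{\Phi_{\eta_n}(\mu)}$, which forbids passing to the limit factor by factor. The two-term split above decouples the effects: the induction hypothesis (upgraded to uniform convergence over the finite $\mathcal A$) controls the value drift, while the explicit $\exp((Q^* - V^*)/\eta_n)$ estimate controls the policy concentration. Finiteness of $\mathcal A$ and strict positivity of $q$ are the two structural facts that make this decoupling work.
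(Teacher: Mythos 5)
Your proposal is correct and follows essentially the same route as the paper's proof: backward induction from the trivial terminal case, with the error at each step split into a value-drift term controlled by the induction hypothesis (uniform over the finite action set) and a policy-concentration term controlled by the strictly positive action gap and the fact that the Boltzmann policy is built from the $\eta$-independent $Q^*$. Your two-term split anchored at $V^*$ is a slightly cleaner packaging of the paper's three-term estimate (which separately bounds the drift on optimal actions and the suboptimal probability mass twice), but the ingredients and quantifier structure are the same, including the observation that the all-actions-optimal case is handled trivially.
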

\begin{subproof}
Fix $\mu \in \mathcal M$. We make the induction hypothesis for arbitrary $t \in \mathcal T$ that for all $s \in \mathcal S, a \in \mathcal A, \varepsilon > 0$, there exists $n' \in \mathbb N$ such that for any $n > n'$ we have
\begin{align*}
    \left| Q^{\Phi_{\eta_n}(\mu)}(\mu, t,s,a) - Q^*(\mu, t,s,a) \right| < \varepsilon \, .
\end{align*}

The induction hypothesis is fulfilled for $t=T-1$, as by definition
\begin{align*}
    \left| Q^{\Phi_{\eta_n}(\mu)}(\mu, t,s,a) - Q^*(\mu, t,s,a) \right| = \left| r(s,a,\mu_t) - r(s,a,\mu_t) \right| = 0 \, .
\end{align*}

Assume that the induction hypothesis is fulfilled for $t+1$, then at time $t$ let $s \in \mathcal S, a \in \mathcal A, \varepsilon > 0$ arbitrary. Furthermore, let $s' \in \mathcal S$ arbitrary. Collect all optimal actions into a set $\mathcal A_{\mathrm{opt}}^{s'} \subseteq \mathcal A$, i.e. for $a' \in \mathcal A_{\mathrm{opt}}^{s'}$ we have
\begin{align*}
    Q^*(\mu, t, s', a_{\mathrm{opt}}) = \max_{a \in \mathcal A} Q^*(\mu, t, s', a) \, .
\end{align*}

We define the minimal action gap
\begin{align*}
    \Delta Q_{\mathrm{min}}^{s', \mu} \equiv \min_{a_{\mathrm{opt}} \in \mathcal A_{\mathrm{opt}}^{s'}, a_{\mathrm{sub}} \in \mathcal A \setminus \mathcal A_{\mathrm{opt}}^{s'}} \left( Q^*(\mu, t, s', a_{\mathrm{opt}}) - Q^*(\mu, t, s', a_{\mathrm{sub}}) \right) > 0
\end{align*}
such that for arbitrary suboptimal actions $a_{\mathrm{sub}} \in \mathcal A \setminus \mathcal A_{\mathrm{opt}}^{s'}$ and optimal actions $a_{\mathrm{opt}} \in \mathcal A_{\mathrm{opt}}^{s'}$,
\begin{align*}
    Q^*(\mu, t, s', a_{\mathrm{opt}}) - Q^*(\mu, t, s', a_{\mathrm{sub}}) \geq \Delta Q_{\mathrm{min}}^{s', \mu} \, .
\end{align*}

This is well defined if there are suboptimal actions, since there is always at least one optimal action. If all actions are optimal, we can skip bounding the probability of taking suboptimal actions and the result will hold trivially. Thus, we assume henceforth that there exists a suboptimal action.

It follows that the probability of taking suboptimal actions $a_{\mathrm{sub}} \in \mathcal A \setminus \mathcal A_{\mathrm{opt}}^{s'}$ disappears, since
\begin{align*}
    (\Phi_{\eta_n}(\mu))_t(a_{\mathrm{sub}} \mid s')
    &= \frac{ q_t(a_{\mathrm{sub}} \mid s) }{ \sum_{a' \in \mathcal A} q_t(a' \mid s) \exp \left( \frac{Q^*(\mu, t, s, a') - Q^*(\mu, t, s, a_{\mathrm{sub}})}{\eta} \right) } \\
    &\leq \frac{ 1 }{ 1 + \sum_{a' \in \mathcal A} \frac{q_t(a' \mid s)}{q_t(a_{\mathrm{sub}} \mid s)} \exp \left( \frac{Q^*(\mu, t, s, a') - Q^*(\mu, t, s, a_{\mathrm{sub}})}{\eta} \right) } \\
    &\leq \frac{ 1 \mid s) }{ 1 + \frac{q_t(a_{\mathrm{opt}} \mid s)}{q_t(a_{\mathrm{sub}} \mid s)} \exp \left( \frac{Q^*(\mu, t, s, a_{\mathrm{opt}}) - Q^*(\mu, t, s, a_{\mathrm{sub}})}{\eta} \right) } \\
    &\leq \frac{ 1 \mid s) }{ 1 + \frac{q_t(a_{\mathrm{opt}} \mid s)}{q_t(a_{\mathrm{sub}} \mid s)} \exp \left( \frac{\Delta Q_{\mathrm{min}}^{s', \mu}}{\eta} \right) } \to 0
\end{align*}
as $\eta \to 0^+$ for some arbitrary optimal action $a_{\mathrm{opt}} \in \mathcal A_{\mathrm{opt}}^{s'}$. Since $s' \in \mathcal S$ was arbitrary, this holds for all $s' \in \mathcal S$. Therefore, by finiteness of $\mathcal S$ and $\mathcal A$ we can choose $n_1 \in \mathbb N$ such that for all $n > n_1$ and for all $a_{\mathrm{sub}} \in \mathcal A \setminus \mathcal A_{\mathrm{opt}}^{s'}$ we have $\eta_n$ sufficiently small such that
\begin{align*}
    (\Phi_{\eta_n}(\mu))_t(a_{\mathrm{sub}} \mid s') < \frac{\varepsilon}{2|\mathcal A|M_Q}
\end{align*}
where $M_Q$ is the uniform bound of $Q^{\Phi_{\eta_n}(\mu)}$.

Further, by induction assumption, we can choose $n_{s',a'}$ for any $s' \in \mathcal S, a' \in \mathcal A$ such that for all $n > n_{s',a'}$ we have
\begin{align*}
    \left| Q^{\Phi_{\eta_n}(\mu)}(\mu, t+1,s',a') - Q^*(\mu, t+1,s',a') \right| < \frac{\varepsilon}{3}
\end{align*}

Therefore, as long as $n > n' \equiv \max(n_1, \max_{s' \in \mathcal S, a' \in \mathcal A} n_{s',a'}) $, we have
\begin{align*}
    &\left| Q^{\Phi_{\eta_n}(\mu)}(\mu, t,s,a) - Q^*(\mu, t,s,a) \right| \\
    &= \left| \sum_{s' \in \mathcal S} p(s'\mid s, a, \mu_t) \left( \sum_{a' \in \mathcal A} (\Phi_{\eta_n}(\mu))_t(a' \mid s') Q^{\Phi_{\eta_n}(\mu)}(\mu, t+1,s',a') - \max_{a'' \in \mathcal A} Q^*(\mu, t+1,s',a'') \right) \right| \\
    &\leq \max_{s' \in \mathcal S} \left| \sum_{a' \in \mathcal A} (\Phi_{\eta_n}(\mu))_t(a' \mid s') Q^{\Phi_{\eta_n}(\mu)}(\mu, t+1,s',a') - \max_{a'' \in \mathcal A} Q^*(\mu, t+1,s',a'') \right| \\
    &\leq \max_{s' \in \mathcal S} \left| \sum_{a' \in \mathcal A_{\mathrm{opt}}^{s'}} (\Phi_{\eta_n}(\mu))_t(a' \mid s') Q^{\Phi_{\eta_n}(\mu)}(\mu, t+1,s',a') - \max_{a'' \in \mathcal A} Q^*(\mu, t+1,s',a'') \right| \\
    &\quad + \max_{s' \in \mathcal S} \left| \sum_{a' \in \mathcal A \setminus \mathcal A_{\mathrm{opt}}^{s'}} (\Phi_{\eta_n}(\mu))_t(a' \mid s') Q^{\Phi_{\eta_n}(\mu)}(\mu, t+1,s',a') \right| \\
    &\leq \max_{s' \in \mathcal S} \left| \sum_{a' \in \mathcal A_{\mathrm{opt}}^{s'}} (\Phi_{\eta_n}(\mu))_t(a' \mid s') Q^{\Phi_{\eta_n}(\mu)}(\mu, t+1,s',a') - \sum_{a' \in \mathcal A_{\mathrm{opt}}^{s'}} (\Phi_{\eta_n}(\mu))_t(a' \mid s') \max_{a'' \in \mathcal A} Q^*(\mu, t+1,s',a'') \right| \\
    &\quad + \max_{s' \in \mathcal S} \left| \sum_{a' \in \mathcal A_{\mathrm{opt}}^{s'}} (\Phi_{\eta_n}(\mu))_t(a' \mid s') \max_{a'' \in \mathcal A} Q^*(\mu, t+1,s',a'') - \max_{a'' \in \mathcal A} Q^*(\mu, t+1,s',a'') \right| \\
    &\quad + \max_{s' \in \mathcal S} \left| \sum_{a' \in \mathcal A \setminus \mathcal A_{\mathrm{opt}}^{s'}} (\Phi_{\eta_n}(\mu))_t(a' \mid s') Q^{\Phi_{\eta_n}(\mu)}(\mu, t+1,s',a') \right| \\
    &\leq \max_{s' \in \mathcal S} \max_{a' \in \mathcal A_{\mathrm{opt}}^{s'}} \left|  Q^{\Phi_{\eta_n}(\mu)}(\mu, t+1,s',a') - \max_{a'' \in \mathcal A} Q^*(\mu, t+1,s',a'') \right| \\
    &\quad + \max_{s' \in \mathcal S} M_Q \left| -\sum_{a' \in \mathcal A \setminus \mathcal A_{\mathrm{opt}}^{s'}} (\Phi_{\eta_n}(\mu))_t(a' \mid s') \right| + \max_{s' \in \mathcal S} M_Q \left| \sum_{a' \in \mathcal A \setminus \mathcal A_{\mathrm{opt}}^{s'}} (\Phi_{\eta_n}(\mu))_t(a' \mid s') \right| \\
    &< \frac{\varepsilon}{3} + \frac{\varepsilon}{3|\mathcal A|M_Q} \cdot |\mathcal A|M_Q + \frac{\varepsilon}{3|\mathcal A|M_Q} \cdot |\mathcal A|M_Q = \varepsilon \, .
\end{align*}

Since $s \in \mathcal S, a \in \mathcal A, \varepsilon > 0$ were arbitrary, the desired result follows immediately by induction.
\end{subproof}

As we have no control over $\mu^*_{n}$ and the sequence $(\pi^*_{n}, \mu^*_{n})_{n \in \mathbb N}$ may not even converge, pointwise convergence is insufficient. To obtain uniform convergence, we shall use compactness of $\mathcal M$ and equicontinuity.

\begin{lemma} \label{lem:boltzmann-equi}
The family of functions $\mathcal F \equiv \{\mu \mapsto Q^{\Phi_{\eta}(\mu)}(\mu, t, s, a)\}_{\eta > 0, t \in \mathcal T, s \in \mathcal S, a \in \mathcal A}$ is equicontinuous, i.e. for any $\varepsilon > 0$ and any $\mu \in \mathcal M$, we can choose a $\delta > 0$ such that for all $\mu' \in \mathcal M$ with $d_{\mathcal M}(\mu, \mu') < \delta$ and any $f \in \mathcal F$ we have
\begin{align*}
    \left| f(\mu) - f(\mu') \right| < \varepsilon \, .
\end{align*}
\end{lemma}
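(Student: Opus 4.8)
The plan is to prove equicontinuity by backward induction on the time index $t$, simultaneously over all $\eta > 0$, $s \in \mathcal S$ and $a \in \mathcal A$, reducing everything to a one-step estimate on the value functions. For the base case $t = T-1$ the members of $\mathcal F$ are the maps $\mu \mapsto r(s,a,\mu_{T-1})$, which are Lipschitz with a constant independent of $\eta$ by Assumption~\ref{assumption2}, hence equicontinuous. For the step I would use the Bellman recursion
\[
    Q^{\Phi_\eta(\mu)}(\mu,t,s,a) = r(s,a,\mu_t) + \sum_{s' \in \mathcal S} p(s'\mid s,a,\mu_t)\, V_\eta(\mu,t+1,s') \, ,
\]
with $V_\eta(\mu,t+1,s') \equiv \sum_{a'} (\Phi_\eta(\mu))_{t+1}(a'\mid s')\, Q^{\Phi_\eta(\mu)}(\mu,t+1,s',a')$. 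Since $r$ and $p$ are Lipschitz and uniformly bounded (uniformly in $\eta$), and since finite sums and products of uniformly bounded, equicontinuous families are again equicontinuous (the equicontinuity analogue of Lemma~\ref{lem:lip}), the step reduces to showing that the family $\{\mu \mapsto V_\eta(\mu,t+1,s')\}_{\eta,s'}$ is equicontinuous.

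For the reduced claim I would fix $\mu$, compare to a nearby $\mu'$, abbreviate $R(\mu,a') \equiv Q^{\Phi_\eta(\mu)}(\mu,t+1,s',a')$ and $\pi(\mu,a') \equiv (\Phi_\eta(\mu))_{t+1}(a'\mid s')$, and split the difference into a value-change part and a policy-change part,
\begin{multline*}
    V_\eta(\mu,t+1,s') - V_\eta(\mu',t+1,s') = \underbrace{\sum_{a'} \pi(\mu,a')\big(R(\mu,a')-R(\mu',a')\big)}_{A} \\
    + \underbrace{\sum_{a'} \big(\pi(\mu,a')-\pi(\mu',a')\big) R(\mu',a')}_{B} \, .
\end{multline*}
Term $A$ is immediately bounded by $\max_{a'} |R(\mu,a')-R(\mu',a')|$, which is small when $d_{\mathcal M}(\mu,\mu')$ is small, uniformly in $\eta$, straight from the induction hypothesis.

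The hard part will be term $B$. Here the softmax policy $\Phi_\eta$ has modulus of continuity that degrades like $1/\eta$ (cf. Lemma~\ref{lem:boltzmannlip}), so the naive product bound blows up as $\eta \to 0$ and one must argue by cancellation. The plan is first to use $\sum_{a'}(\pi(\mu,a')-\pi(\mu',a'))=0$ to re-centre the returns $R(\mu',a')$ about the value of a fixed near-optimal action, and then to partition $\mathcal A$ into actions that are near-optimal for $Q^*(\mu,t+1,s',\cdot)$ and the rest. Following the mechanism of Lemma~\ref{lem:boltzmann-pointwise}, the suboptimal actions carry probabilities \emph{and} probability differences that are exponentially small in the minimal action gap over $\eta$, hence uniformly negligible; for the near-optimal actions one applies the bounded softmax-sensitivity estimate (the mean-value-theorem bound with the universal factor $\tfrac14$ used in the proof of Lemma~\ref{lem:boltzmannlip}), which trades the $1/\eta$ factor against the vanishing $Q^*$-gap. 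The genuinely delicate point — the real crux — is that the return-gap $R(\mu',a_1)-R(\mu',a_2)$ between two actions that the softmax cannot resolve need not vanish just because their $Q^*$-values coincide; one must show that as both the gap and $\eta$ shrink these returns are themselves forced together through the Bellman recursion and the inductive equicontinuity, so that the product of the $O(1/\eta)$ reweighting and the suitably controlled return difference stays bounded uniformly in $\eta$.

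Once equicontinuity is in hand, the remaining use is routine: on the compact space $(\mathcal M, d_{\mathcal M})$ I would combine equicontinuity with the pointwise convergence of Lemma~\ref{lem:boltzmann-pointwise} to upgrade to uniform convergence $Q^{\Phi_{\eta_n}(\mu)} \to Q^*$ (an equicontinuity-plus-pointwise, Arzelà–Ascoli-type argument), which is exactly what the proof of Theorem~\ref{th:epsepsopt} requires; the RelEnt case follows by running the same estimate for $\tilde Q_\eta$ together with the uniform convergence $\tilde Q_{\eta_n} \to Q^*$ as $\eta_n \to 0^+$.
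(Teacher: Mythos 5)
Your architecture is the paper's: backward induction on $t$ uniformly in $\eta$, the base case from Lipschitz continuity of $r$, the one-step split into reward, transition and continuation terms, the value-change part $A$ handled by the induction hypothesis, the partition of $\mathcal A$ into $\mathcal A_{\mathrm{opt}}^{s'}$ and its complement, the exponential smallness of suboptimal masses governed by the minimal action gap $\Delta Q_{\mathrm{min}}^{s',\mu}$, and the zero-sum cancellation $\sum_{a'}(\pi(\mu,a')-\pi(\mu',a'))=0$. The gap is in term $B$ on the near-optimal actions — and you have located it yourself. The mechanism you offer there, the Lemma~\ref{lem:boltzmannlip}-type mean-value bound ``trading the $1/\eta$ factor against the vanishing $Q^*$-gap,'' cannot close it: for two actions exactly tied under $Q^*(\mu,t+1,s',\cdot)$ the gap at $\mu'$ is only $O(K_Q\, d_{\mathcal M}(\mu,\mu'))$, and once $\eta \ll d_{\mathcal M}(\mu,\mu')$ the Boltzmann weights of the tied actions can shift by order one no matter how small $\delta$ is chosen — no bound on individual probability differences uniform in $\eta$ exists, and the sensitivity estimate contains no gap to trade against. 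You then correctly shift the burden onto the return spread $R(\mu',a_1)-R(\mu',a_2)$ across softmax-unresolvable actions, but you only assert that ``one must show'' these returns are forced together; that assertion is the entire content of the induction step, and establishing it uniformly in $\eta$ and locally uniformly in $\mu'$ comes dangerously close to presupposing the uniform convergence $Q^{\Phi_{\eta}}\to Q^*$ that Lemmas~\ref{lem:boltzmann-pointwise} and~\ref{lem:uniform} are meant to extract \emph{from} this equicontinuity. So as a proof the proposal has a genuine, unclosed hole precisely at its crux.

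For contrast, the paper closes this step by a dichotomy in $\eta$ at a threshold $\eta_{\mathrm{min}}^{s'}$ manufactured from $\Delta Q_{\mathrm{min}}^{s',\mu}$ and the target accuracy: for $\eta \geq \eta_{\mathrm{min}}^{s'}$ the softmax difference is estimated by the mean value theorem with the \emph{fixed} constant proportional to $\frac{1}{\eta_{\mathrm{min}}^{s'}} \exp\left( \frac{2M_Q}{\eta_{\mathrm{min}}^{s'}} \right) K_Q$, which $\delta$ can absorb; for $\eta < \eta_{\mathrm{min}}^{s'}$ it never bounds probability differences at all — each suboptimal mass is individually exponentially small at both $\mu$ and $\mu'$ (Lipschitz continuity of $Q^*$ from Lemma~\ref{lem:qlip} keeps the gap at $\mu'$ above $\Delta Q_{\mathrm{min}}^{s',\mu}/2$), while on $\mathcal A_{\mathrm{opt}}^{s'}$ the probability-difference vector is contracted against a constant (the asserted common value of $Q^{\Phi_\eta(\mu)}(\mu,t+1,s',\cdot)$ on optimal actions), so the zero-sum property converts that piece into the already-controlled suboptimal aggregate. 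It is worth noting that your crux puts a finger on the thinnest point of the paper's own argument: the identity $Q^{\Phi_\eta(\mu)}(\mu,t+1,s',a')=\max_{a''}Q^{\Phi_\eta(\mu)}(\mu,t+1,s',a'')$ for $a' \in \mathcal A_{\mathrm{opt}}^{s'}$ is not literally true for all $\eta$ (tied-$Q^*$ actions may have different transition kernels and hence different continuation values under the Boltzmann policy) and is only approximately valid for small $\eta$ — but identifying an obstruction is not overcoming it. To repair your route you would need either the paper's case split supplemented by a quantitative small-$\eta$ version of that near-equality, or a strengthened induction hypothesis that carries a bound on the return spread over tied actions alongside equicontinuity.
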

\begin{subproof}
Fix an arbitrary $\mu \in \mathcal M$. We make the (backwards in time) induction hypothesis for all $t \in \mathcal T$ that for any $s \in \mathcal S, a \in \mathcal A, \varepsilon_{t,s,a} > 0$, there exists $\delta_{t,s,a} > 0$ such that for any $\mu' \in \mathcal M$ with $d_{\mathcal M}(\mu, \mu') < \delta_{t,s,a}$ and any $f \in \mathcal F$ we have
\begin{align*}
    \left| Q^{\Phi_{\eta}(\mu)}(\mu, t,s,a) - Q^{\Phi_{\eta}(\mu')}(\mu', t,s,a) \right| < \varepsilon_{t,s,a} \, .
\end{align*}

The induction hypothesis is fulfilled for $t=T-1$, as by assumption, $\nu \to r(s,a,\nu_t)$ is Lipschitz with constant $C_r > 0$. Therefore, for all $s \in \mathcal S, a \in \mathcal A$ we can choose $\delta_{T-1,s,a} = \frac{\varepsilon_{t,s,a}}{C_r}$ such that for any $\mu, \mu'$ with $d_{\mathcal M}(\mu, \mu') < \delta'$ we have
\begin{align*}
    \left| Q^{\Phi_{\eta}(\mu)}(\mu, t,s,a) - Q^{\Phi_{\eta}(\mu')}(\mu', t,s,a) \right| 
    &= \left| r(s,a,\mu_t) - r(s,a,\mu'_t) \right| \leq C_r d_{\mathcal M}(\mu, \mu') < \varepsilon_{t,s,a} \, .
\end{align*}

Assume that the induction hypothesis holds for $t+1$, then at time $t$ let $\varepsilon_{t,s,a} > 0, s \in \mathcal S, a \in \mathcal A$ arbitrary. By definition, we have
\begin{align*}
    &\left| Q^{\Phi_{\eta}(\mu)}(\mu, t,s,a) - Q^{\Phi_{\eta}(\mu')}(\mu', t,s,a) \right| \\
    &= \left| r(s,a,\mu_t) + \sum_{s' \in \mathcal S} p(s'\mid s, a, \mu_t) \sum_{a' \in \mathcal A} (\Phi_{\eta}(\mu))_{t+1}(a' \mid s') Q^{\Phi_{\eta}(\mu)}(\mu, t+1,s',a') \right. \\
    &\quad \left. - r(s,a,\mu'_t) - \sum_{s' \in \mathcal S} p(s'\mid s, a, \mu'_t) \sum_{a' \in \mathcal A} (\Phi_{\eta}(\mu'))_{t+1}(a' \mid s') Q^{\Phi_{\eta}(\mu')}(\mu', t+1,s',a') \right| \\
    &\leq \left| r(s,a,\mu_t) - r(s,a,\mu'_t) \right| \\
    &\quad + \sum_{s' \in \mathcal S} \left| \left( p(s'\mid s, a, \mu_t) - p(s'\mid s, a, \mu'_t) \right) \sum_{a' \in \mathcal A} (\Phi_{\eta}(\mu))_{t+1}(a' \mid s') Q^{\Phi_{\eta}(\mu)}(\mu, t+1,s',a') \right| \\
    &\quad + \sum_{s' \in \mathcal S} \left| p(s'\mid s, a, \mu'_t) \sum_{a' \in \mathcal A} \left( (\Phi_{\eta}(\mu))_{t+1}(a' \mid s') Q^{\Phi_{\eta}(\mu)}(\mu, t+1,s',a') - (\Phi_{\eta}(\mu'))_{t+1}(a' \mid s') Q^{\Phi_{\eta}(\mu')}(\mu', t+1,s',a') \right) \right| \\
    &\leq \left| r(s,a,\mu_t) - r(s,a,\mu'_t) \right| \\
    &\quad + \sum_{s' \in \mathcal S} \left| \left( p(s'\mid s, a, \mu_t) - p(s'\mid s, a, \mu'_t) \right) \sum_{a' \in \mathcal A} (\Phi_{\eta}(\mu))_{t+1}(a' \mid s') Q^{\Phi_{\eta}(\mu)}(\mu, t+1,s',a') \right| \\
    &\quad + \max_{s' \in \mathcal S} \left| \sum_{a' \in \mathcal A_{\mathrm{opt}}^{s'}} \left( (\Phi_{\eta}(\mu))_{t+1}(a' \mid s') Q^{\Phi_{\eta}(\mu)}(\mu, t+1,s',a') - (\Phi_{\eta}(\mu'))_{t+1}(a' \mid s') Q^{\Phi_{\eta}(\mu')}(\mu', t+1,s',a') \right) \right| \\
    &\quad + \max_{s' \in \mathcal S} \left| \sum_{a' \in \mathcal A \setminus \mathcal A_{\mathrm{opt}}^{s'}} \left( (\Phi_{\eta}(\mu))_{t+1}(a' \mid s') Q^{\Phi_{\eta}(\mu)}(\mu, t+1,s',a') - (\Phi_{\eta}(\mu'))_{t+1}(a' \mid s') Q^{\Phi_{\eta}(\mu')}(\mu', t+1,s',a') \right) \right|
\end{align*}
where we define $\mathcal A_{\mathrm{opt}}^{s'} \subseteq \mathcal A$ for any $s' \in \mathcal S$ to include all optimal actions $a_{\mathrm{opt}} \in \mathcal A_{\mathrm{opt}}^{s'}$ such that
\begin{align*}
    Q^*(\mu, t, s', a_{\mathrm{opt}}) = \max_{a \in \mathcal A} Q^*(\mu, t, s', a) \, .
\end{align*}
We bound each of the four terms separately.

For the first term, we choose $\delta_{t,s,a}^1 = \frac{\varepsilon_{t,s,a}}{4C_r}$ by Lipschitz continuity such that 
\begin{align*}
    \left| r(s,a,\mu_t) - r(s,a,\mu'_t) \right| < \frac{\varepsilon_{t,s,a}}{4}
\end{align*} 
for all $\mu'$ with $d_{\mathcal M}(\mu, \mu') < \delta_{t,s,a}^1$.

For the second term, we choose $\delta_{t,s,a}^2 = \frac{1}{4|\mathcal S| M_Q C_p}$ such that for any $\mu' \in \mathcal M$ with $d_{\mathcal M}(\mu, \mu') < \delta_{t,s,a}^2$ we have
\begin{align*}
    &\sum_{s' \in \mathcal S} \left| \left( p(s'\mid s, a, \mu_t) - p(s'\mid s, a, \mu'_t) \right) \sum_{a' \in \mathcal A} (\Phi_{\eta}(\mu))_{t+1}(a' \mid s') Q^{\Phi_{\eta}(\mu)}(\mu, t+1,s',a') \right| \\
    &\leq |\mathcal S| C_p d_{\mathcal M}(\mu, \mu') M_Q 
    < \frac{\varepsilon_{t,s,a}}{4}
\end{align*}
where $M_Q$ denotes the uniform bound of $Q$ and $C_p$ is the Lipschitz constant of $\nu \mapsto p(s'\mid s, a, \nu_t)$.

For the third and fourth term, we first fix $s' \in \mathcal S$ and define the minimal action gap as
\begin{align*}
    \Delta Q_{\mathrm{min}}^{s', \mu} \equiv \min_{a_{\mathrm{opt}} \in \mathcal A_{\mathrm{opt}}^{s'}, a_{\mathrm{sub}} \in \mathcal A \setminus \mathcal A_{\mathrm{opt}}^{s'}} \left( Q^*(\mu, t, s', a_{\mathrm{opt}}) - Q^*(\mu, t, s', a_{\mathrm{sub}}) \right) \, .
\end{align*}

This is well defined if there are suboptimal actions, since there is always at least one optimal action. If all actions are optimal, we can skip bounding the probability of taking suboptimal actions and the result will still hold. Henceforth, we assume that there exists a suboptimal action.

By Lipschitz continuity of $\mu \mapsto Q^*(\mu, t, s, a)$ from Lemma~\ref{lem:qlip} implying uniform continuity, there exists some $\delta_{t,s,a}^{3,s'} > 0$ such that
\begin{align*}
    \left| Q^*(\mu', t, s', a) - Q^*(\mu, t, s', a) \right| < \frac{\Delta Q_{\mathrm{min}}^{s', \mu}}{4}
\end{align*}
for all $\mu' \in \mathcal M, a \in \mathcal A$ where $d_{\mathcal M}(\mu, \mu') < \delta_{t,s,a}^{3,s'}$, and thus
\begin{align*}
    \Delta Q_{\mathrm{min}}^{s', \mu'} = \min_{a_{\mathrm{opt}} \in \mathcal A_{\mathrm{opt}}^{s'}, a_{\mathrm{sub}} \in \mathcal A \setminus \mathcal A_{\mathrm{opt}}^{s'}} \left( Q^*(\mu', t, s', a_{\mathrm{opt}}) - Q^*(\mu', t, s', a_{\mathrm{sub}}) \right) > \frac{\Delta Q_{\mathrm{min}}^{s', \mu}}{2} \, .
\end{align*}
Under this condition, we can now show that the probability of any suboptimal action can be controlled. Define $R_q^{\mathrm{min}} \equiv \min_{t \in \mathcal T, s \in \mathcal S, a \in \mathcal A, a' \in \mathcal A} \frac{q_t(a' \mid s)}{q_t(a \mid s)} > 0$ and $R_q^{\mathrm{max}} \equiv \max_{t \in \mathcal T, s \in \mathcal S, a \in \mathcal A, a' \in \mathcal A} \frac{q_t(a' \mid s)}{q_t(a \mid s)} > 0$. Let $a_{\mathrm{sub}} \in \mathcal A \setminus \mathcal A_{\mathrm{opt}}^{s'}$, then we either have
\begin{align*}
    &\left| (\Phi_{\eta}(\mu))_{t+1}(a_{\mathrm{sub}} \mid s') - (\Phi_{\eta}(\mu'))_{t+1}(a_{\mathrm{sub}} \mid s') \right| \\
    &= \left| \frac{ 1 }{ 1 + \sum_{a' \neq a_{\mathrm{sub}}} \frac{q_t(a' \mid s')}{q_t(a_{\mathrm{sub}} \mid s')} \exp \left( \frac{Q^*(\mu, t, s', a') - Q^*(\mu, t, s', a_{\mathrm{sub}})}{\eta} \right) } \right. \\
    &\quad \left. - \frac{ 1 }{ 1 + \sum_{a' \neq a_{\mathrm{sub}}} \frac{q_t(a' \mid s')}{q_t(a_{\mathrm{sub}} \mid s')} \exp \left( \frac{Q^*(\mu', t, s', a') - Q^*(\mu', t, s', a_{\mathrm{sub}})}{\eta} \right) } \right| \\
    &\leq \frac{ 1 }{ 1 + \max_{a' \neq a_{\mathrm{sub}}} R_q^{\mathrm{min}} \exp \left( \frac{Q^*(\mu, t, s', a') - Q^*(\mu, t, s', a_{\mathrm{sub}})}{\eta} \right) } \\
    &\quad + \frac{ 1 }{ 1 + \max_{a' \neq a_{\mathrm{sub}}} R_q^{\mathrm{min}} \exp \left( \frac{Q^*(\mu', t, s', a') - Q^*(\mu', t, s', a_{\mathrm{sub}})}{\eta} \right) } \\
    &< \frac{ 1 }{ 1 + R_q^{\mathrm{min}} \exp \left( \frac{\Delta Q_{\mathrm{min}}^{s', \mu}}{\eta} \right) } + \frac{ 1 }{ 1 + R_q^{\mathrm{min}} \exp \left( \frac{\Delta Q_{\mathrm{min}}^{s', \mu}}{2\eta} \right) } \\ 
    &\leq \frac{ 2 }{ 1 + R_q^{\mathrm{min}} \exp \left( \frac{\Delta Q_{\mathrm{min}}^{s', \mu}}{2\eta} \right) }
    < \frac{\varepsilon_{t,s,a}}{8M_Q |\mathcal A|} 
\end{align*}
if $\varepsilon_{t,s,a} > 16M_Q |\mathcal A|$ trivially, or otherwise if $\eta < \eta_{\mathrm{min}}^{s'}$ with
\begin{align*}
    \eta_{\mathrm{min}}^{s'} \equiv \frac{\Delta Q_{\mathrm{min}}^{s', \mu}}{2\log \left( \frac{16M_Q |\mathcal A|}{\varepsilon_{t,s,a}R_q^{\mathrm{min}}} - \frac{1}{R_q^{\mathrm{min}}} \right)} \, ,
\end{align*}
in which case we arbitrarily define $\delta_{t,s,a}^{4,s'} = 1$, or if neither apply, then $\eta \geq \eta_{\mathrm{min}}^{s'}$ and thus
\begin{align*} 
    &\left| (\Phi_{\eta}(\mu))_{t+1}(a_{\mathrm{sub}} \mid s') - (\Phi_{\eta}(\mu'))_{t+1}(a_{\mathrm{sub}} \mid s') \right| \\
    &= \left| \frac{ 1 }{ 1 + \sum_{a' \neq a_{\mathrm{sub}}} \frac{q_t(a' \mid s')}{q_t(a_{\mathrm{sub}} \mid s')} \exp \left( \frac{Q^*(\mu, t, s', a') - Q^*(\mu, t, s', a_{\mathrm{sub}})}{\eta} \right) } \right. \\
    &\quad \left. - \frac{ 1 }{ 1 + \sum_{a' \neq a_{\mathrm{sub}}} \frac{q_t(a' \mid s')}{q_t(a_{\mathrm{sub}} \mid s')} \exp \left( \frac{Q^*(\mu', t, s', a') - Q^*(\mu', t, s', a_{\mathrm{sub}})}{\eta} \right) }\right| \\
    &= \left| \frac{ \sum_{a' \neq a_{\mathrm{sub}}} \frac{q_t(a' \mid s)}{q_t(a_{\mathrm{sub}} \mid s')} \left( \exp \left( \frac{Q^*(\mu', t, s', a') - Q^*(\mu', t, s', a_{\mathrm{sub}})}{\eta} \right) - \exp \left( \frac{Q^*(\mu, t, s', a') - Q^*(\mu, t, s', a_{\mathrm{sub}})}{\eta} \right) \right) }{ \left( 1 + \cdots \right) \cdot \left( 1 + \cdots \right) } \right| \\
    &\leq R_q^{\mathrm{max}} \sum_{a' \neq a_{\mathrm{sub}}} \left| \exp \left( \frac{Q^*(\mu', t, s', a') - Q^*(\mu', t, s', a_{\mathrm{sub}})}{\eta} \right) - \exp \left( \frac{Q^*(\mu, t, s', a') - Q^*(\mu, t, s', a_{\mathrm{sub}})}{\eta} \right) \right| \\
    &\leq R_q^{\mathrm{max}} \sum_{a' \neq a_{\mathrm{sub}}} \left| \frac{1}{\eta} \exp \left( \frac{\xi_{a'}}{\eta} \right) \right| | (Q^*(\mu', t, s', a') - Q^*(\mu', t, s', a_{\mathrm{sub}})) - (Q^*(\mu, t, s', a') - Q^*(\mu, t, s', a_{\mathrm{sub}})) | \\
    &\leq R_q^{\mathrm{max}} |\mathcal A| \cdot \frac{1}{\eta_{\mathrm{min}}^{s'}} \exp \left( \frac{2 M_Q}{\eta_{\mathrm{min}}^{s'}} \right) \left( | Q^*(\mu', t, s', a') - Q^*(\mu, t, s', a') | + | Q^*(\mu, t, s', a_{\mathrm{sub}}) - Q^*(\mu', t, s', a_{\mathrm{sub}}) | \right) \\
    &\leq R_q^{\mathrm{max}} |\mathcal A| \cdot \frac{1}{\eta_{\mathrm{min}}^{s'}} \exp \left( \frac{2 M_Q}{\eta_{\mathrm{min}}^{s'}} \right) \cdot 2 K_Q d_{\mathcal M}(\mu, \mu') 
    < \frac{\varepsilon_{t,s,a}}{8M_Q |\mathcal A|} 
\end{align*}
by the mean value theorem with some $\xi_{a'} \in [-2M_Q, 2M_Q]$ for all $a' \in \mathcal A$, where we abbreviated the denominator $\left( 1 + \cdots \right) \cdot \left( 1 + \cdots \right) \geq 1$, as long as we choose
\begin{align*}
    \delta_{t,s,a}^{4,s'} = \frac{\varepsilon_{t,s,a} \eta_{\mathrm{min}}^{s'}}{8M_Q |\mathcal A|^2 R_q^{\mathrm{max}} \cdot \exp \left( \frac{2 M_Q}{\eta_{\mathrm{min}}^{s'}} \right) \cdot 2 K_Q} 
\end{align*}
and $d_{\mathcal M}(\mu, \mu') < \delta_{t,s,a}^{4,s'}$, where $K_Q$ is the Lipschitz constant of $\mu \mapsto Q^*(\mu, t, s, a)$ given by Lemma~\ref{lem:qlip}.

Since $s' \in \mathcal S$ was arbitrary, we now define $\delta_{t,s,a}^3 \equiv \min_{s' \in \mathcal S} \delta_{t,s,a}^{3,s'}$, $\delta_{t,s,a}^4 \equiv \min_{s' \in \mathcal S} \delta_{t,s,a}^{4,s'}$ and let $d_{\mathcal M}(\mu, \mu') < \min(\delta_{t,s,a}^3, \delta_{t,s,a}^4)$. Under these assumptions, for the third term we have approximate optimality for all optimal actions in $\mathcal A_{\mathrm{opt}}^{s'}$, since by induction assumption we can choose $\delta_{t+1, s', a'}$ for all $s' \in \mathcal S, a' \in \mathcal A$ such that for all $\mu' \in \mathcal M$ with $d_{\mathcal M}(\mu, \mu') < \delta_{t+1, s', a'}$ it holds that
\begin{align*}
    \left| Q^{\Phi_{\eta}(\mu)}(\mu, t+1,s',a') - Q^{\Phi_{\eta}(\mu')}(\mu', t+1,s',a') \right| < \frac{\varepsilon_{t,s,a}}{16 |\mathcal A| + 8} \, .
\end{align*}
and therefore for all $\mu' \in \mathcal M$, as long as $d_{\mathcal M}(\mu, \mu') < \min_{s' \in \mathcal S, a' \in \mathcal A} \delta_{t+1, s', a'}$, we have
\begin{align*}
    &\max_{s' \in \mathcal S} \left| \sum_{a' \in \mathcal A_{\mathrm{opt}}^{s'}} (\Phi_{\eta}(\mu))_{t+1}(a' \mid s') Q^{\Phi_{\eta}(\mu)}(\mu, t+1,s',a') - \sum_{a' \in \mathcal A_{\mathrm{opt}}^{s'}} (\Phi_{\eta}(\mu'))_{t+1}(a' \mid s') Q^{\Phi_{\eta}(\mu')}(\mu', t+1,s',a') \right| \\
    &\leq \max_{s' \in \mathcal S} \left| \sum_{a' \in \mathcal A_{\mathrm{opt}}^{s'}} (\Phi_{\eta}(\mu))_{t+1}(a' \mid s') Q^{\Phi_{\eta}(\mu)}(\mu, t+1,s',a') - \sum_{a' \in \mathcal A_{\mathrm{opt}}^{s'}} (\Phi_{\eta}(\mu))_{t+1}(a' \mid s') Q^{\Phi_{\eta}(\mu')}(\mu', t+1,s',a') \right| \\
    &\quad + \max_{s' \in \mathcal S} \left| \sum_{a' \in \mathcal A_{\mathrm{opt}}^{s'}} (\Phi_{\eta}(\mu))_{t+1}(a' \mid s') Q^{\Phi_{\eta}(\mu')}(\mu', t+1,s',a') - \sum_{a' \in \mathcal A_{\mathrm{opt}}^{s'}} (\Phi_{\eta}(\mu'))_{t+1}(a' \mid s') Q^{\Phi_{\eta}(\mu')}(\mu', t+1,s',a') \right| \\
    &\leq \max_{s' \in \mathcal S} \max_{a' \in \mathcal A} \left| Q^{\Phi_{\eta}(\mu)}(\mu, t+1,s',a') - Q^{\Phi_{\eta}(\mu')}(\mu', t+1,s',a') \right| \\
    &\quad + \max_{s' \in \mathcal S} \left| \sum_{a' \in \mathcal A_{\mathrm{opt}}^{s'}} \left( (\Phi_{\eta}(\mu))_{t+1}(a' \mid s') - (\Phi_{\eta}(\mu'))_{t+1}(a' \mid s') \right) \left( Q^{\Phi_{\eta}(\mu')}(\mu', t+1,s',a') - Q^{\Phi_{\eta}(\mu)}(\mu, t+1,s',a') \right) \right| \\
    &\quad + \max_{s' \in \mathcal S} \left| \sum_{a' \in \mathcal A_{\mathrm{opt}}^{s'}} \left( (\Phi_{\eta}(\mu))_{t+1}(a' \mid s') - (\Phi_{\eta}(\mu'))_{t+1}(a' \mid s') \right) Q^{\Phi_{\eta}(\mu)}(\mu, t+1,s',a') \right| \\
    &\leq \max_{s' \in \mathcal S} \max_{a' \in \mathcal A} \left| Q^{\Phi_{\eta}(\mu)}(\mu, t+1,s',a') - Q^{\Phi_{\eta}(\mu')}(\mu', t+1,s',a') \right| \\
    &\quad + \max_{s' \in \mathcal S} \max_{a' \in \mathcal A} 2 |\mathcal A| \left| Q^{\Phi_{\eta}(\mu')}(\mu', t+1,s',a') - Q^{\Phi_{\eta}(\mu)}(\mu, t+1,s',a') \right| \\
    &\quad + \max_{s' \in \mathcal S} \max_{a'' \in \mathcal A} \left| Q^{\Phi_{\eta}(\mu)}(\mu, t+1,s',a'') \right| \cdot \left| \sum_{a' \in \mathcal A \setminus \mathcal A_{\mathrm{opt}}^{s'}} \left( (\Phi_{\eta}(\mu'))_{t+1}(a' \mid s') - (\Phi_{\eta}(\mu))_{t+1}(a' \mid s') \right) \right| \\
    &< \left( 1 + 2 |\mathcal A| \right) \cdot \frac{\varepsilon_{t,s,a}}{16 |\mathcal A| + 8} + M_Q |\mathcal A| \cdot \frac{\varepsilon_{t,s,a}}{8M_Q |\mathcal A|} < \frac{\varepsilon_{t,s,a}}{4}
\end{align*}
where we use that for any $a' \in \mathcal A_{\mathrm{opt}}^{s'}$ we have
\begin{align*}
    Q^{\Phi_{\eta}(\mu)}(\mu, t+1,s',a') = \max_{a'' \in \mathcal A} Q^{\Phi_{\eta}(\mu)}(\mu, t+1,s',a'') \, .
\end{align*}

Analogously, for the fourth term we have 
\begin{align*}
    &\max_{s' \in \mathcal S} \left| \sum_{a' \in \mathcal A \setminus \mathcal A_{\mathrm{opt}}^{s'}} ( (\Phi_{\eta}(\mu))_{t+1}(a' \mid s') Q^{\Phi_{\eta}(\mu)}(\mu, t+1,s',a') - (\Phi_{\eta}(\mu'))_{t+1}(a' \mid s') Q^{\Phi_{\eta}(\mu')}(\mu', t+1,s',a') ) \right| \\
    &\leq \max_{s' \in \mathcal S} \sum_{a' \in \mathcal A \setminus \mathcal A_{\mathrm{opt}}^{s'}} \left| (\Phi_{\eta}(\mu))_{t+1}(a' \mid s') Q^{\Phi_{\eta}(\mu)}(\mu, t+1,s',a') - (\Phi_{\eta}(\mu))_{t+1}(a' \mid s') Q^{\Phi_{\eta}(\mu')}(\mu', t+1,s',a') \right| \\
    &\quad + \max_{s' \in \mathcal S} \sum_{a' \in \mathcal A \setminus \mathcal A_{\mathrm{opt}}^{s'}} \left| (\Phi_{\eta}(\mu))_{t+1}(a' \mid s') Q^{\Phi_{\eta}(\mu')}(\mu', t+1,s',a') - (\Phi_{\eta}(\mu'))_{t+1}(a' \mid s') Q^{\Phi_{\eta}(\mu')}(\mu', t+1,s',a') \right| 
    \\
    &\leq \max_{s' \in \mathcal S} \max_{a' \in \mathcal A} \left| Q^{\Phi_{\eta}(\mu)}(\mu, t+1,s',a') - Q^{\Phi_{\eta}(\mu')}(\mu', t+1,s',a') \right| \\
    &\quad + \max_{s' \in \mathcal S} M_Q \sum_{a' \in \mathcal A \setminus \mathcal A_{\mathrm{opt}}^{s'}} \left| (\Phi_{\eta}(\mu))_{t+1}(a' \mid s') - (\Phi_{\eta}(\mu'))_{t+1}(a' \mid s') \right|   
    \\
    &< \frac{\varepsilon_{t,s,a}}{8} + M_Q |\mathcal A| \cdot \frac{\varepsilon_{t,s,a}}{8M_Q |\mathcal A|}
    = \frac{\varepsilon_{t,s,a}}{4}
\end{align*}
under the previous conditions, since as long as we have $d_{\mathcal M}(\mu, \mu') < \delta_{t+1, s', a'}$ for all $s' \in \mathcal S, a' \in \mathcal A$ from before, we have
\begin{align*}
    \left| Q^{\Phi_{\eta}(\mu)}(\mu, t+1,s',a') - Q^{\Phi_{\eta}(\mu')}(\mu', t+1,s',a') \right| < \frac{\varepsilon_{t,s,a}}{16 |\mathcal A| + 8} < \frac{\varepsilon_{t,s,a}}{8} \, .
\end{align*}

Finally, by choosing $\delta_{t,s,a}$ such that all conditions are fulfilled, i.e.
\begin{align*}
    \delta_{t,s,a} \equiv \min \left( \delta_{t,s,a}^1, \delta_{t,s,a}^2, \delta_{t,s,a}^3, \delta_{t,s,a}^4, \min_{s' \in \mathcal S, a' \in \mathcal A} \delta_{t+1, s', a'} \right) > 0 \, ,
\end{align*}
the induction hypothesis is fulfilled, since then for any $\mu'$ with $d_{\mathcal M}(\mu, \mu') < \delta_{t,s,a}$ we have
\begin{align*}
    \left| Q^{\Phi_{\eta}(\mu)}(\mu, t,s,a) - Q^{\Phi_{\eta}(\mu')}(\mu', t,s,a) \right| < \varepsilon_{t,s,a} \, .
\end{align*}

Since $\eta > 0$ is arbitrary, the desired result follows immediately, as we can set $\varepsilon_{t,s,a} = \varepsilon$ for each $t \in \mathcal T, s \in \mathcal S, a \in \mathcal A$ and obtain $\delta \equiv \max_{t \in \mathcal T, s \in \mathcal S, a \in \mathcal A} \delta_{t,s,a}$, fulfilling the required equicontinuity property at $\mu$.
\end{subproof}

From equicontinuity, we get the desired uniform convergence via compactness.

\begin{lemma} \label{lem:uniform}
If $(f_n)_{n \in \mathbb N}$ with $f_n: \mathcal M \to \mathbb R$ is an equicontinuous sequence of functions and for all $\mu \in \mathcal M$ we have $f_n(\mu) \to f(\mu)$ pointwise, then $f_n(\mu) \to f(\mu)$ uniformly.
\end{lemma}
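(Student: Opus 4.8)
The plan is to run the standard $\varepsilon/3$ argument combining equicontinuity with compactness of the domain $\mathcal M$. Recall that $\mathcal M$ is identified with a finite-dimensional simplex and is therefore compact, as already used in the proofs of Proposition~1 and Proposition~3. I would fix an arbitrary $\varepsilon > 0$ and first exploit equicontinuity to cover $\mathcal M$ by finitely many balls on which the entire family $(f_n)_{n \in \mathbb N}$ oscillates by less than $\varepsilon/3$.

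Concretely, for each $\mu \in \mathcal M$ equicontinuity yields a radius $\delta_\mu > 0$ such that $d_{\mathcal M}(\mu, \mu') < \delta_\mu$ implies $|f_n(\mu) - f_n(\mu')| < \varepsilon/3$ for every $n \in \mathbb N$. The open balls $\{ B(\mu, \delta_\mu) \}_{\mu \in \mathcal M}$ form an open cover of the compact set $\mathcal M$, so I would extract a finite subcover centered at points $\mu_1, \ldots, \mu_k$. At each of these finitely many centers, pointwise convergence $f_n(\mu_i) \to f(\mu_i)$ lets me choose $N_i$ with $|f_n(\mu_i) - f(\mu_i)| < \varepsilon/3$ for all $n > N_i$; setting $N \equiv \max_i N_i$ handles all centers simultaneously.

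For the uniform estimate at an arbitrary point $\mu$, I would locate an index $i$ with $\mu \in B(\mu_i, \delta_{\mu_i})$ and apply the triangle inequality
\begin{align*}
    |f_n(\mu) - f(\mu)| &\leq |f_n(\mu) - f_n(\mu_i)| + |f_n(\mu_i) - f(\mu_i)| + |f(\mu_i) - f(\mu)| \, .
\end{align*}
The first term is below $\varepsilon/3$ by equicontinuity, and the second is below $\varepsilon/3$ for $n > N$. For the third term I would observe that $f$ inherits the equicontinuity estimate: since $|f_n(\mu) - f_n(\mu_i)| < \varepsilon/3$ holds for all $n$, passing to the pointwise limit gives $|f(\mu) - f(\mu_i)| \leq \varepsilon/3$. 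Summing, $|f_n(\mu) - f(\mu)| < \varepsilon$ for all $\mu \in \mathcal M$ and all $n > N$, which is exactly uniform convergence.

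There is essentially no substantive obstacle here, as this is the classical fact that an equicontinuous, pointwise-convergent sequence on a compact metric space converges uniformly. The only points requiring mild care are explicitly invoking compactness of $\mathcal M$ to pass from the pointwise radii $\delta_\mu$ to a finite subcover, and transferring the equicontinuity estimate to the limit function $f$ so that the third term can be controlled without separately assuming continuity of $f$.
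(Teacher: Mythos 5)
Your proposal is correct and follows essentially the same route as the paper's proof: equicontinuity yields radii $\delta_\mu$, compactness of $\mathcal M$ gives a finite subcover, pointwise convergence is invoked at the finitely many centers, and the $\varepsilon/3$ triangle-inequality estimate finishes, including the same device of passing the equicontinuity bound to the limit $f$ to control the third term. The only cosmetic difference is that the paper extracts its finite subcover from balls centered at a countable dense subset, whereas you take the subcover directly from the full cover, which is an equivalent use of compactness.
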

\begin{subproof}
Let $\varepsilon > 0$ arbitrary, then there exists by equicontinuity for any point $\mu \in \mathcal M$ a $\delta(\mu)$ such that for all $\mu' \in \mathcal M$ with $d_{\mathcal M}(\mu, \mu') < \delta(\mu)$ we have for all $n \in \mathbb N$
\begin{align*}
    \left| f_n(\mu) - f_n(\mu') \right| < \frac{\varepsilon}{3}
\end{align*} 
which via pointwise convergence implies 
\begin{align*}
    \left| f(\mu) - f(\mu') \right| \leq \frac{\varepsilon}{3} \, .
\end{align*} 
Since $\mathcal M$ is compact, it is separable, i.e. there exists a countable dense subset $(\mu_j)_{j \in \mathbb N}$ of $\mathcal M$. Let $\delta(\mu)$ be as defined above and cover $\mathcal M$ by the open balls $(B_{\delta(\mu_j)}(\mu_j))_{j \in \mathbb N}$. By the compactness of $\mathcal M$, finitely many of these balls $B_{\delta(\mu_{n_1})}(\mu_{n_1}), \ldots, B_{\delta(\mu_{n_k})}(\mu_{n_k})$ cover $\mathcal M$. By pointwise convergence, for any $i = 1, \ldots, k$ we can find an integer $n_i$ such that for all $n > n_i$ we have 
\begin{align*}
    \left| f_n(\mu_{n_i}) - f(\mu_{n_i}) \right| < \frac{\varepsilon}{3} \, .
\end{align*}
Taken together, we find that for $n > \max_{i = 1, \ldots, k} n_i$ and arbitrary $\mu \in \mathcal M$, we have
\begin{align*}
    \left| f_n(\mu) - f(\mu) \right| < \left| f_n(\mu) - f_n(\mu_{n_i}) \right| + \left| f_n(\mu_{n_i}) - f(\mu_{n_i}) \right| + \left| f(\mu_{n_i}) - f(\mu) \right| < \frac{\varepsilon}{3} + \frac{\varepsilon}{3} + \frac{\varepsilon}{3} = \varepsilon
\end{align*} 
for some center point $\mu_{n_i}$ of a ball containing $\mu$ from the finite cover.
\end{subproof}

Therefore, a sequence of Boltzmann MFE with vanishing $\eta$ is approximately optimal in the MFG.

\begin{lemma} \label{lem:boltzmann-expl}
For any sequence $(\pi^*_{n}, \mu^*_{n})_{n \in \mathbb N}$ of ${\eta_n}$-Boltzmann MFE with $\eta_n \to 0^+$ and for any $\varepsilon > 0$ there exists integer $N \in \mathbb N$ such that for all integers $n > N$ we have
\begin{align*}
    J^{\mu^*_{n}}(\pi^*_{n}) \geq \max_\pi J^{\mu^*_{n}}(\pi) - \varepsilon \ .
\end{align*}
\end{lemma}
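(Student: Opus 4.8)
The plan is to upgrade the three preceding lemmas into \emph{uniform} convergence of the action-value functions, reduce the exploitability to a per-initial-state value gap, and then absorb the residual Boltzmann suboptimality using the elementary bound $\sup_{g \ge 0} g e^{-g/\eta} = \eta/e$.

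First I would combine Lemma~\ref{lem:boltzmann-pointwise} (pointwise convergence), Lemma~\ref{lem:boltzmann-equi} (equicontinuity) and Lemma~\ref{lem:uniform} (equicontinuity plus pointwise convergence gives uniform convergence) to conclude that $\mu \mapsto Q^{\Phi_{\eta_n}(\mu)}(\mu, t, s, a)$ converges uniformly to $\mu \mapsto Q^*(\mu, t, s, a)$ for each fixed $t \in \mathcal T, s \in \mathcal S, a \in \mathcal A$. Since $\mathcal T, \mathcal S, \mathcal A$ are finite, maximizing over them yields a single sequence $\rho_n \to 0$ with $\sup_{\mu, t, s, a} | Q^{\Phi_{\eta_n}(\mu)}(\mu, t, s, a) - Q^*(\mu, t, s, a) | \le \rho_n$. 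This uniformity is exactly what compensates for the complete lack of control over the sequence $\mu^*_{n}$, which need not converge.

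Next I would rewrite both terms of the exploitability at time $0$. By standard finite-horizon dynamic programming (cf.~\citet{puterman2014markov}), $J^\mu(\pi) = \sum_{s} \mu_0(s) V^\pi(\mu, 0, s)$ and $\max_\pi J^\mu(\pi) = \sum_{s} \mu_0(s) V^*(\mu, 0, s)$; since $\mu_0$ is a probability measure it suffices to bound $V^*(\mu^*_{n}, 0, s) - V^{\pi^*_{n}}(\mu^*_{n}, 0, s)$ uniformly in $s$. Writing $\pi^*_{n} = \Phi_{\eta_n}(\mu^*_{n})$, we have $V^*(\mu^*_{n}, 0, s) = \max_a Q^*(\mu^*_{n}, 0, s, a)$ and $V^{\pi^*_{n}}(\mu^*_{n}, 0, s) = \sum_a (\pi^*_{n})_0(a \mid s)\, Q^{\Phi_{\eta_n}(\mu^*_{n})}(\mu^*_{n}, 0, s, a)$. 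The uniform bound $\rho_n$ lets me replace $Q^{\Phi_{\eta_n}(\mu^*_{n})}$ by $Q^*$ at the cost of $\rho_n$, leaving the residual $\max_a Q^*(\mu^*_{n}, 0, s, a) - \sum_a (\pi^*_{n})_0(a \mid s)\, Q^*(\mu^*_{n}, 0, s, a) = \sum_a (\pi^*_{n})_0(a \mid s)\, g^n_a$, where $g^n_a \equiv \max_{a'} Q^*(\mu^*_{n}, 0, s, a') - Q^*(\mu^*_{n}, 0, s, a) \ge 0$ is the action gap.

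The main obstacle is that the gaps $g^n_a$ admit no uniform positive lower bound as $\mu^*_{n}$ ranges over $\mathcal M$, so one cannot simply argue that suboptimal actions carry exponentially small mass (the argument used for a \emph{fixed} $\mu$ in Lemma~\ref{lem:boltzmann-pointwise} breaks down here). I would resolve this by bounding each summand directly: comparing the denominator of the Boltzmann policy against any optimal action gives $(\pi^*_{n})_0(a \mid s) \le \frac{q_{\mathrm{max}}}{q_{\mathrm{min}}} \exp(-g^n_a / \eta_n)$, hence $(\pi^*_{n})_0(a \mid s)\, g^n_a \le \frac{q_{\mathrm{max}}}{q_{\mathrm{min}}}\, g^n_a \exp(-g^n_a / \eta_n) \le \frac{q_{\mathrm{max}}}{q_{\mathrm{min}}} \cdot \frac{\eta_n}{e}$, where the last step uses $\sup_{g \ge 0} g e^{-g/\eta_n} = \eta_n/e$ attained at $g = \eta_n$. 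Summing over the at most $|\mathcal A|$ actions bounds the residual by $\frac{|\mathcal A| q_{\mathrm{max}}}{q_{\mathrm{min}}} \cdot \frac{\eta_n}{e}$, so the exploitability is at most $\rho_n + \frac{|\mathcal A| q_{\mathrm{max}}}{q_{\mathrm{min}}} \cdot \frac{\eta_n}{e}$, independently of $s$ and of the realized $\mu^*_{n}$. Both terms vanish as $n \to \infty$, so choosing $N$ such that each is below $\varepsilon/2$ for all $n > N$ completes the argument.
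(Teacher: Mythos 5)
Your proof is correct, and its backbone --- combining Lemma~\ref{lem:boltzmann-pointwise}, Lemma~\ref{lem:boltzmann-equi} and Lemma~\ref{lem:uniform} into uniform convergence of $\mu \mapsto Q^{\Phi_{\eta_n}(\mu)}(\mu,t,s,a)$ to $\mu \mapsto Q^*(\mu,t,s,a)$, precisely because $\mu^*_{n}$ need not converge --- is exactly the paper's route. Where you genuinely diverge is the last step. The paper passes from the uniform bound $Q^{\pi^*_{n}}(\mu^*_{n},t,s,a) \geq Q^*(\mu^*_{n},t,s,a) - \varepsilon$ directly to the statement about $J$, invoking Lemma~\ref{lem:policyoptimality} and writing $J^{\mu^*_{n}}(\pi^*_{n}) = \sum_{s} \mu_0(s) \sum_{a} Q^{\pi^*_{n}}(\mu^*_{n},t,s,a)$ --- an identity that is not literally correct (the time-$0$ policy weights $(\pi^*_{n})_0(a \mid s)$ are missing), and one that silently absorbs the residual suboptimality of the time-$0$ Boltzmann action selection. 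You instead isolate that residual explicitly as $\sum_a (\pi^*_{n})_0(a \mid s)\, g^n_a$ and dispose of it via $(\pi^*_{n})_0(a \mid s) \leq (q_{\mathrm{max}}/q_{\mathrm{min}})\, e^{-g^n_a/\eta_n}$ together with $\sup_{g \geq 0} g e^{-g/\eta_n} = \eta_n/e$; as you correctly observe, this product bound requires no positive lower bound on the action gaps and is therefore uniform over $\mu \in \mathcal M$, which is exactly the point where the fixed-$\mu$ gap argument of Lemma~\ref{lem:boltzmann-pointwise} would fail. What your variant buys is twofold: it makes rigorous a step the paper's write-up leaves rough, and it produces an explicit exploitability rate $\rho_n + |\mathcal A|\, q_{\mathrm{max}}\, \eta_n /(e\, q_{\mathrm{min}})$ where the paper's argument is purely qualitative. (Alternatively one could repeat the bookkeeping of the induction step of Lemma~\ref{lem:boltzmann-pointwise} once more at a virtual pre-initial time, but your bound is cleaner.)
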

\begin{subproof}
By Lemma~\ref{lem:boltzmann-equi}, $\mathcal F \equiv (\mu \mapsto Q^{\Phi_{\eta}(\mu)}(\mu, t, s, a))_{\eta > 0, t \in \mathcal T, s \in \mathcal S, a \in \mathcal A}$ is equicontinuous. Therefore, any sequence $(\mu \mapsto Q^{\Phi_{\eta_n}(\mu)}(\mu, t, s, a))_{n \in \mathbb N}$ with $\eta_n \to 0^+$ is also equicontinuous for any $t \in \mathcal T, s \in \mathcal S, a \in \mathcal A$.

Furthermore, by Lemma~\ref{lem:boltzmann-pointwise}, the sequence $(\mu \mapsto Q^{\Phi_{\eta_n}(\mu)}(\mu, t, s, a))_{n \in \mathbb N}$ converges pointwise to $\mu \to Q^*(\mu, t, s, a)$ for any $t \in \mathcal T, s \in \mathcal S, a \in \mathcal A$.

By Lemma~\ref{lem:uniform}, we thus have $\left| Q^{\Phi_{\eta_n}(\mu)}(\mu, t, s, a) - Q^*(\mu, t, s, a) \right| \to 0$ uniformly. Therefore, for any $\varepsilon > 0$, there exists an integer $N$ by uniform convergence such that for all integers $n > N$ we have
\begin{align*}
    Q^{\pi^*_{n}}(\mu^*_{n}, t,s,a) \geq Q^*(\mu^*_{n}, t,s,a) - \varepsilon = \max_{\pi \in \Pi} Q^{\pi}(\mu^*_{n}, t,s,a) - \varepsilon \, ,
\end{align*}
and since by Lemma~\ref{lem:policyoptimality} we have
\begin{align*}
    J^{\mu^*_{n}}(\pi^*_{n}) = \sum_{s \in \mathcal S} \mu_0(s) \cdot \sum_{a \in \mathcal A} Q^{\pi^*_{n}}(\mu^*_{n}, t,s,a) \geq \sum_{s \in \mathcal S} \mu_0(s) \cdot \max_{\pi \in \Pi} \sum_{a \in \mathcal A} Q^{\pi}(\mu^*_{n}, t,s,a) - \varepsilon = \max_{\pi \in \Pi} J^{\mu^*_{n}}(\pi) - \varepsilon \, ,
\end{align*}
the desired result follows immediately.
\end{subproof}

Finally, we show approximate optimality in the actual $N$-agent game as long as a pair $(\pi^*, \mu^*) \in \Pi \times \mathcal M$ with $\mu^* = \Psi(\pi^*)$ has vanishing exploitability in the MFG. By Lemma~\ref{lem:boltzmann-expl}, for any sequence $(\pi^*_{n}, \mu^*_{n})_{n \in \mathbb N}$ of ${\eta_n}$-Boltzmann MFE with $\eta_n \to 0^+$ and for any $\varepsilon > 0$ there exists an integer $n' \in \mathbb N$ such that for all integers $n > n'$ we have
\begin{align*}
    J^{\mu^*_{n}}(\pi^*_{n}) \geq \max_\pi J^{\mu^*_{n}}(\pi) - \varepsilon \ .
\end{align*}
Let $\varepsilon' > 0$ be arbitrary and choose a sequence of optimal policies $\{ \pi^N \}_{N \in \mathbb N}$ such that for all $N \in \mathbb N$ we have 
\begin{align*}
    \pi^N \in \argmax_{\pi \in \Pi} J_1^N(\pi, {\pi^*_{n}}, \ldots, {\pi^*_{n}}) \, .
\end{align*} 
By Lemma~\ref{lem:obj-conv} there exists $N' \in \mathbb N$ such that for all $N > N'$ and all $n > n'$, we have
\begin{align*}
    \max_{\pi \in \Pi} J_1^N(\pi, {\pi^*_{n}}, \ldots, {\pi^*_{n}}) - \varepsilon - \varepsilon' &\leq \max_{\pi \in \Pi} J^{\mu^*_{n}}(\pi) - \varepsilon - \frac{\varepsilon'}{2} \\
    &\leq J^{\mu^*_{n}}({\pi^*_{n}}) - \frac{\varepsilon'}{2} \\
    &\leq J_1^N({\pi^*_{n}}, {\pi^*_{n}}, \ldots, {\pi^*_{n}})
\end{align*} 
which is the desired approximate Nash equilibrium property since $\varepsilon, \varepsilon'$ are arbitrary. This applies by symmetry to all agents. 

For RelEnt MFE, the same can be done by first showing the uniform convergence of the soft action-value function to the usual action-value function. For this, note that the smooth maximum Bellman recursion converges to the hard maximum Bellman recursion for any fixed $\mu$.

\begin{lemma} \label{lem:max}
For any $f: \mathcal A \to \mathbb R$ and any $g: \mathcal A \to \mathbb R$ with $g(a) > 0$ for all $a \in \mathcal A$, we have 
\begin{align*}
    \lim_{\eta \to 0^+} \eta \log \sum_{a \in \mathcal A} g(a) \exp \frac{f(a)}{\eta} = \max_{a \in \mathcal A} f(a) \, .
\end{align*}
\end{lemma}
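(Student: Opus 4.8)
The plan is to reduce the claim to the elementary fact that $\eta \log C \to 0$ as $\eta \to 0^+$ for any fixed positive constant $C$, applying a squeeze argument after isolating the maximum. Write $M \equiv \max_{a \in \mathcal A} f(a)$, which is attained because $\mathcal A$ is finite, and fix some maximizer $a^* \in \mathcal A$ with $f(a^*) = M$.

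First I would sandwich the inner sum between two multiples of $\exp(M/\eta)$. Dropping all but the maximizing term gives the lower bound $\sum_{a \in \mathcal A} g(a) \exp(f(a)/\eta) \ge g(a^*) \exp(M/\eta)$, valid since every summand is strictly positive by the hypothesis $g(a) > 0$. For the upper bound I would replace each exponent by its largest possible value and exploit finiteness of $\mathcal A$: setting $g_{\max} \equiv \max_{a \in \mathcal A} g(a) > 0$, we obtain $\sum_{a \in \mathcal A} g(a) \exp(f(a)/\eta) \le |\mathcal A|\, g_{\max} \exp(M/\eta)$.

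Next I would apply the map $x \mapsto \eta \log x$, which is monotone increasing for $\eta > 0$, to this chain of inequalities, yielding
\[
M + \eta \log g(a^*) \;\le\; \eta \log \sum_{a \in \mathcal A} g(a) \exp\frac{f(a)}{\eta} \;\le\; M + \eta \log\bigl(|\mathcal A|\, g_{\max}\bigr).
\]
Because $g(a^*) > 0$ and $|\mathcal A|\, g_{\max} > 0$ are constants independent of $\eta$, both $\eta \log g(a^*)$ and $\eta \log(|\mathcal A|\, g_{\max})$ tend to $0$ as $\eta \to 0^+$. The squeeze theorem then forces the middle expression to converge to $M = \max_{a \in \mathcal A} f(a)$, which is exactly the assertion.

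There is no genuinely hard step; the only point requiring care is guaranteeing that the two bounding constants are strictly positive so that their logarithms are finite, and this is precisely what the hypotheses $g(a) > 0$ and the finiteness of $\mathcal A$ secure. An alternative derivation would factor out $\exp(M/\eta)$ to rewrite the quantity as $M + \eta \log \sum_{a \in \mathcal A} g(a)\exp((f(a)-M)/\eta)$ and note that the residual sum converges to $\sum_{a : f(a)=M} g(a) > 0$; I prefer the squeeze argument, however, since it reaches the conclusion without having to identify this limiting constant explicitly.
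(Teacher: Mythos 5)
Your proof is correct, and it takes a genuinely different route from the paper. The paper substitutes $\delta = 1/\eta$ and applies L'Hospital's rule to $\frac{1}{\delta}\log \sum_{a \in \mathcal A} g(a) \exp(\delta f(a))$, obtaining the ratio $\frac{\sum_a g(a) \exp(\delta f(a)) f(a)}{\sum_a g(a) \exp(\delta f(a))}$, then factoring out $\exp(\delta \max_a f(a))$ so that only the maximizing terms survive in the limit. Your squeeze argument is more elementary: it needs no differentiation and no verification of the $\infty/\infty$ hypothesis for L'Hospital, and it buys something the paper's computation does not, namely an explicit quantitative error bound, since $M + \eta \log g(a^*) \leq \eta \log \sum_{a \in \mathcal A} g(a) \exp(f(a)/\eta) \leq M + \eta \log\left(\left|\mathcal A\right| g_{\mathrm{max}}\right)$ shows the convergence is $O(\eta)$ with explicit constants. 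Your alternative remark (factoring out $\exp(M/\eta)$ so the residual sum tends to $\sum_{a : f(a) = M} g(a) > 0$) is essentially the mechanism inside the paper's L'Hospital step, and in fact it is slightly more careful than the paper's write-up: the paper's final displayed ratio reads $\frac{|\mathcal A_{\mathrm{max}}| \max_a f(a)}{|\mathcal A_{\mathrm{max}}|}$, silently dropping the $g$-weights, where the correct numerator and denominator carry $\sum_{a : f(a) = M} g(a)$; this is harmless since the weights cancel, but your version states the limiting constant correctly.
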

\begin{subproof}
Let $\delta = \frac{1}{\eta} \to +\infty$. Then, by L'Hospital's rule we have
\begin{align*}
    \lim_{\delta \to +\infty} \frac{\log \sum_{a \in \mathcal A} g(a) \exp \left( \delta f(a) \right) }{\delta} &= \lim_{\delta \to +\infty} \frac{\sum_{a \in \mathcal A} g(a) \exp \left( \delta f(a) \right) f(a) }{\sum_{a \in \mathcal A} g(a) \exp \left( \delta f(a) \right) } \\
    &= \lim_{\delta \to +\infty} \frac{\sum_{a \in \mathcal A} g(a) \exp \left( \delta (f(a) - \max_{a \in \mathcal A} f(a)) \right) f(a) }{\sum_{a \in \mathcal A} g(a) \exp \left( \delta (f(a) - \max_{a \in \mathcal A} f(a)) \right) } \\
    &= \frac{ |\mathcal A_{\mathrm{max}}| \max_{a \in \mathcal A} f(a) }{|\mathcal A_{\mathrm{max}}|} = \max_{a \in \mathcal A} f(a)
\end{align*}
where $|\mathcal A_{\mathrm{max}}|$ is the number of elements in $\mathcal A$ that maximize $f$.
\end{subproof}

Using this result, we can show pointwise convergence of the soft action-value function to the action-value function.

\begin{lemma} \label{lem:tildeqpointwise}
Any sequence of functions $(\mu \mapsto \tilde Q_{\eta_n}(\mu, t, s, a))_{n \in \mathbb N}$ with $\eta_n \to 0^+$ converges pointwise to $\mu \mapsto Q^*(\mu, t, s, a)$ for all $t \in \mathcal T, s \in \mathcal S, a \in \mathcal A$.
\end{lemma}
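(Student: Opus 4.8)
The plan is to fix an arbitrary $\mu \in \mathcal M$ and argue by backward induction on $t$, exactly paralleling the structure of Lemma~\ref{lem:boltzmann-pointwise}. The base case $t = T-1$ is immediate: both $\tilde Q_{\eta_n}(\mu, T-1, s, a)$ and $Q^*(\mu, T-1, s, a)$ equal $r(s, a, \mu_{T-1})$ by their respective terminal conditions, so the difference is identically zero for every $n$. For the inductive step I assume $\tilde Q_{\eta_n}(\mu, t+1, s', a') \to Q^*(\mu, t+1, s', a')$ for all $s' \in \mathcal S, a' \in \mathcal A$ and must push this convergence through the smooth maximum appearing in the soft Bellman recursion.

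The main obstacle is precisely this smooth maximum. Lemma~\ref{lem:max} supplies the limit of $\eta \log \sum_{a'} g(a') \exp(f(a')/\eta)$ only for a \emph{fixed} function $f$, whereas in the recursion the inner argument $\tilde Q_{\eta_n}(\mu, t+1, s', \cdot)$ itself varies with $n$ together with $\eta_n$. To decouple these two effects, I would compare the true smooth maximum against the one obtained by freezing the argument at the limiting optimal value $Q^*(\mu, t+1, s', \cdot)$. The key tool is that the map $x \mapsto \eta \log \sum_{a'} q_{t+1}(a' \mid s') \exp(x_{a'}/\eta)$ is $1$-Lipschitz with respect to the $\ell^\infty$ norm in $x$, uniformly in $\eta > 0$ and in the positive weights $q_{t+1}(\cdot \mid s')$: its gradient in $x$ has components $q_{t+1}(a' \mid s') \exp(x_{a'}/\eta) / \sum_{a''} q_{t+1}(a'' \mid s') \exp(x_{a''}/\eta)$, which are nonnegative and sum to one, so the gradient is a probability vector over $\mathcal A$ and the function is nonexpansive in the sup norm.

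Using this, the difference between the smooth maximum evaluated at $\tilde Q_{\eta_n}(\mu, t+1, s', \cdot)$ and at $Q^*(\mu, t+1, s', \cdot)$ is bounded by $\max_{a'} | \tilde Q_{\eta_n}(\mu, t+1, s', a') - Q^*(\mu, t+1, s', a') |$, which vanishes as $n \to \infty$ by the induction hypothesis and finiteness of $\mathcal A$. Simultaneously, the frozen-argument smooth maximum $\eta_n \log \sum_{a'} q_{t+1}(a' \mid s') \exp(Q^*(\mu, t+1, s', a')/\eta_n)$ converges to $\max_{a'} Q^*(\mu, t+1, s', a')$ directly by Lemma~\ref{lem:max} applied with $f = Q^*(\mu, t+1, s', \cdot)$ and $g = q_{t+1}(\cdot \mid s')$. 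Adding and subtracting the frozen version and invoking the triangle inequality then shows the bracketed term converges to $\max_{a'} Q^*(\mu, t+1, s', a')$ for each fixed $s'$.

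Finally I would reassemble the recursion. Since $\mathcal S$ is finite and each $p(s' \mid s, a, \mu_t) \in [0,1]$, the finite combination $\sum_{s'} p(s' \mid s, a, \mu_t) [\cdots]$ preserves the limit, yielding $\tilde Q_{\eta_n}(\mu, t, s, a) \to r(s, a, \mu_t) + \sum_{s'} p(s' \mid s, a, \mu_t) \max_{a'} Q^*(\mu, t+1, s', a') = Q^*(\mu, t, s, a)$. As $s \in \mathcal S, a \in \mathcal A$ were arbitrary, the induction closes and pointwise convergence follows at every $t \in \mathcal T$. Everything outside the $1$-Lipschitz decoupling step is routine; that uniform nonexpansiveness of log-sum-exp is the crux that tames the joint dependence of argument and temperature on $n$.
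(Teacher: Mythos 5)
Your proof is correct and takes essentially the same route as the paper: backward induction with the same base case, decoupling the $n$-dependence of the inner argument from the temperature by comparing against the smooth maximum frozen at $Q^*(\mu, t+1, s', \cdot)$, and applying Lemma~\ref{lem:max} to the frozen term before reassembling the finite recursion. The paper obtains the same uniform-in-$\eta$ perturbation bound via a monotone sandwich with arguments shifted by $\pm\varepsilon/2$ (using the exact translation property of log-sum-exp), which is precisely your $\ell^\infty$-nonexpansiveness observation in disguise.
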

\begin{subproof}
Fix $\mu \in \mathcal M$. We show by induction that for any $\varepsilon > 0$, there exists $\eta_t > 0$ such that for all $\eta < \eta_t$ we have $\left| \tilde Q_\eta(\mu, t, s, a) - Q^*(\mu, t, s, a) \right| < \varepsilon$ for all $t \in \mathcal T, s \in \mathcal S, a \in \mathcal A$. This holds for $t = T-1$ and arbitrary $s \in \mathcal S, a \in \mathcal A$ by Lemma~\ref{lem:max}, since $r(s, a, \mu_{T-1})$ is independent of $\eta$. Assume this holds for $t+1$ and consider $t$. Then, by the induction assumption we can choose $\eta_{t+1} > 0$ such that for $\eta < \eta_{t+1}$, as $\eta \to 0^+$ we have
\begin{align*}
    \tilde Q_\eta(\mu, t, s, a) &= r(s, a, \mu_t) + \sum_{s' \in \mathcal S} p(s'\mid s, a, \mu_t) \eta \log \sum_{a' \in \mathcal A} q_{t+1}(a' \mid s') \exp \left( \frac{\tilde Q_\eta(\mu, t+1, s', a')}{\eta} \right) \\
    &\leq r(s, a, \mu_t) + \sum_{s' \in \mathcal S} p(s'\mid s, a, \mu_t) \eta \log \sum_{a' \in \mathcal A} q_{t+1}(a' \mid s') \exp \left( \frac{Q^*(\mu, t+1, s', a') + \frac{\varepsilon}{2}}{\eta} \right) \\
    &\to r(s, a, \mu_t) + \sum_{s' \in \mathcal S} p(s'\mid s, a, \mu_t) \max_{a' \in \mathcal A} Q^*(\mu, t+1, s', a') + \frac{\varepsilon}{2}
\end{align*} 
by Lemma~\ref{lem:max} and monotonicity of $\log$ and $\exp$. Analogously,
\begin{align*}
    \tilde Q_\eta(\mu, t, s, a) 
    &\geq r(s, a, \mu_t) + \sum_{s' \in \mathcal S} p(s'\mid s, a, \mu_t) \eta \log \sum_{a' \in \mathcal A} q_{t+1}(a' \mid s') \exp \left( \frac{Q^*(\mu, t+1, s', a') - \frac{\varepsilon}{2}}{\eta} \right) \\
    &\to r(s, a, \mu_t) + \sum_{s' \in \mathcal S} p(s'\mid s, a, \mu_t) \max_{a' \in \mathcal A} Q^*(\mu, t+1, s', a') - \frac{\varepsilon}{2} \, .
\end{align*} 
Therefore, we can choose $\eta_t < \eta_{t+1}$ such that for all $\eta < \eta_{t}$ we have
\begin{align*}
    &\left| \tilde Q_\eta(\mu, t, s, a) - Q^*(\mu, t, s, a) \right| = \left| \tilde Q_\eta(\mu, t, s, a) - \left( r(s, a, \mu_t) + \sum_{s' \in \mathcal S} p(s'\mid s, a, \mu_t) \max_{a' \in \mathcal A} Q^*(\mu, t+1, s', a') \right) \right| < \varepsilon 
\end{align*} 
which is the desired result.
\end{subproof}

We can now show that the soft action-value function converges uniformly to the action-value function as $\eta \to 0^+$.

\begin{lemma} \label{lem:tildequniform}
Any sequence of functions $(\mu \mapsto \tilde Q_{\eta_n}(\mu, t, s, a))_{n \in \mathbb N}$ with $\eta_n \to 0^+$ converges uniformly to $\mu \mapsto Q^*(\mu, t, s, a)$ for all $t \in \mathcal T, s \in \mathcal S, a \in \mathcal A$.
\end{lemma}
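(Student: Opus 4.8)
Here is how I would approach Lemma~\ref{lem:tildequniform}.

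The plan is to deduce uniform convergence from pointwise convergence together with equicontinuity, exactly in the pattern of Lemma~\ref{lem:uniform}. Pointwise convergence $\tilde Q_{\eta_n}(\mu,t,s,a) \to Q^*(\mu,t,s,a)$ for each fixed $\mu$ is already supplied by Lemma~\ref{lem:tildeqpointwise}, so the only thing left is to show that, for each $t \in \mathcal T, s \in \mathcal S, a \in \mathcal A$, the sequence $(\mu \mapsto \tilde Q_{\eta_n}(\mu,t,s,a))_{n \in \mathbb N}$ is equicontinuous; Lemma~\ref{lem:uniform} then yields the claim. The main obstacle is that the Lipschitz constant $K_{\tilde Q_\eta}$ produced by Lemma~\ref{lem:qetalip} contains a factor $\exp(2 M_Q / \eta_{\mathrm{min}})$ and hence blows up as $\eta \to 0^+$; since $\eta_n \to 0^+$, that bound is useless for establishing equicontinuity of the sequence. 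The crux of the proof is therefore to replace it by a Lipschitz bound that is uniform over all $\eta > 0$.

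The key observation is that the smooth-maximum (log-sum-exp) operator is non-expansive in the supremum norm, uniformly in $\eta$: for positive weights $q(a')$, the map $x \mapsto \eta \log \sum_{a'} q(a') \exp(x_{a'}/\eta)$ has gradient $\big(q(a')\exp(x_{a'}/\eta) / \sum_{a''} q(a'')\exp(x_{a''}/\eta)\big)_{a'}$, which is a probability vector at every point, so the map is $1$-Lipschitz with respect to $\max_{a'}|x_{a'} - y_{a'}|$ for every $\eta > 0$. This is exactly the estimate that Lemma~\ref{lem:qetalip} discarded when it bounded each softmax weight individually by $q_{\mathrm{max}}/q_{\mathrm{min}} \cdot \exp(2M_Q/\eta)$ instead of exploiting that they sum to one.

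Using this, I would prove by backward induction on $t$ that $\mu \mapsto \tilde Q_\eta(\mu,t,s,a)$ is Lipschitz with a constant $L_t$ independent of $\eta$. The base case $t = T-1$ is immediate with $L_{T-1} = C_r$, the Lipschitz constant of $r$ from Assumption~\ref{assumption2}. For the inductive step, write $\tilde Q_\eta(\mu,t,s,a) = r(s,a,\mu_t) + \sum_{s'} p(s'\mid s,a,\mu_t)\,\mathrm{LSE}_{s'}(\mu)$, where $\mathrm{LSE}_{s'}(\mu)$ denotes the log-sum-exp of $\tilde Q_\eta(\mu,t+1,s',\cdot)$, and add and subtract $p(s'\mid s,a,\mu_t)\,\mathrm{LSE}_{s'}(\mu')$ in each summand. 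Using $\sum_{s'} p(s'\mid s,a,\mu_t) = 1$, the non-expansiveness of $\mathrm{LSE}_{s'}$ and the induction hypothesis bound the resulting $[\mathrm{LSE}_{s'}(\mu)-\mathrm{LSE}_{s'}(\mu')]$ group by $L_{t+1}\,d_{\mathcal M}(\mu,\mu')$, while Lipschitz continuity of $p$ (Assumption~\ref{assumption2}) and the uniform bound $|\mathrm{LSE}_{s'}| \le M_Q$ from Lemma~\ref{lem:qbound} bound the remaining $[p(s'\mid s,a,\mu_t)-p(s'\mid s,a,\mu'_t)]$ group by $|\mathcal S|\,C_p\,M_Q\,d_{\mathcal M}(\mu,\mu')$. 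Together with the reward term this gives $L_t = C_r + |\mathcal S|\,C_p\,M_Q + L_{t+1}$, manifestly independent of $\eta$.

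Setting $L \equiv \max_{t \in \mathcal T} L_t$ then shows that the whole family $\{\mu \mapsto \tilde Q_\eta(\mu,t,s,a)\}_{\eta > 0}$ is $L$-Lipschitz with one $\eta$-independent constant, hence equicontinuous; in particular every sequence $(\mu \mapsto \tilde Q_{\eta_n}(\mu,t,s,a))_{n \in \mathbb N}$ is equicontinuous. Combining this with the pointwise convergence of Lemma~\ref{lem:tildeqpointwise} and invoking Lemma~\ref{lem:uniform} yields the desired uniform convergence of $\tilde Q_{\eta_n}$ to $Q^*$. The only delicate point is the uniform-in-$\eta$ Lipschitz estimate via non-expansiveness; everything else is routine.
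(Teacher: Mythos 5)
Your proof is correct, but it takes a genuinely different route from the paper. The paper proves by backward induction (using Jensen's inequality applied to $\phi(x) = x \log x$) that $\tilde Q_\eta(\mu,t,s,a)$ is monotonically decreasing in $\eta$, and then combines monotone pointwise convergence, continuity of $\tilde Q_\eta$ and $Q^*$ in $\mu$, and compactness of $\mathcal M$ to invoke Dini's theorem; no Lipschitz estimate on $\tilde Q_\eta$ enters at all. You instead establish a uniform-in-$\eta$ Lipschitz constant via the observation that the weighted log-sum-exp is non-expansive in the sup norm (its gradient is a probability vector for every $\eta > 0$), which repairs exactly the defect of Lemma~\ref{lem:qetalip} whose constant carries the factor $\exp(2M_Q/\eta_{\mathrm{min}})$, and then you conclude through equicontinuity plus pointwise convergence via Lemma~\ref{lem:uniform}. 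Your inductive recursion $L_t = C_r + |\mathcal S|\,C_p\,M_Q + L_{t+1}$ checks out, as does the bound $|\mathrm{LSE}_{s'}| \le M_Q$ (the prior weights sum to one, so the argument inside Lemma~\ref{lem:qbound} applies). Each approach buys something: the paper's Dini argument is shorter, though strictly speaking Dini requires a monotone sequence, so for a non-monotone $\eta_n \to 0^+$ it needs the small supplementary remark that $\sup_{\mu} \left( \tilde Q_\eta(\mu,t,s,a) - Q^*(\mu,t,s,a) \right)$ is monotone in $\eta$ — a subtlety your argument avoids entirely, since equicontinuity handles arbitrary sequences directly. Moreover, your sharper estimate is quantitative and would actually strengthen the paper: it shows $K_{\tilde Q}$ can be chosen independent of $\eta$ for all $\eta > 0$, which would remove the $\eta > \eta'$ proviso from Theorem~\ref{th:contractive} and tighten the temperature threshold there.
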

\begin{subproof}
First, we show that $\tilde Q_\eta(\mu, t, s, a)$ is monotonically decreasing in $\eta$ for $\eta > 0$, i.e. $\frac{\partial}{\partial \eta} \tilde Q_\eta(\mu, t, s, a) \leq 0$ for all $t \in \mathcal T, s \in \mathcal S, a \in \mathcal A$.  This is the case for $t = T-1$ and arbitrary $s \in \mathcal S, a \in \mathcal A$, since $\tilde Q_\eta(\mu, T-1, s, a)$ is constant. Assume this holds for $t+1$, then for $t$ and arbitrary $s \in \mathcal S, a \in \mathcal A$ we have
\begin{align*}
    &\frac{\partial}{\partial \eta} \tilde Q_\eta(\mu, t, s, a) 
    = \sum_{s' \in \mathcal S} p(s'\mid s, a, \mu_t) \log \sum_{a' \in \mathcal A} q_{t+1}(a' \mid s') \exp \left( \frac{\tilde Q_\eta(\mu, t+1, s', a')}{\eta} \right) \\
    &\quad + \sum_{s' \in \mathcal S} p(s'\mid s, a, \mu_t) \eta \frac{ \sum_{a' \in \mathcal A} q_{t+1}(a' \mid s') \exp \left( \frac{\tilde Q_\eta(\mu, t+1, s', a')}{\eta} \right) \left( - \frac{\tilde Q_\eta(\mu, t+1, s', a')}{\eta^2} + \frac{1}{\eta} \frac{\partial}{\partial \eta} \tilde Q_\eta(\mu, t+1, s', a') \right) }{ \sum_{a' \in \mathcal A} q_{t+1}(a' \mid s') \exp \left( \frac{\tilde Q_\eta(\mu, t+1, s', a')}{\eta} \right) } \\
    &\leq \max_{s' \in \mathcal S} \left( \log \sum_{a' \in \mathcal A} q_{t+1}(a' \mid s') \exp \left( \frac{\tilde Q_\eta(\mu, t+1, s', a')}{\eta} \right) - \frac{ \sum_{a' \in \mathcal A} q_{t+1}(a' \mid s') \exp \left( \frac{\tilde Q_\eta(\mu, t+1, s', a')}{\eta} \right) \frac{\tilde Q_\eta(\mu, t+1, s', a')}{\eta} }{ \sum_{a' \in \mathcal A} q_{t+1}(a' \mid s') \exp \left( \frac{\tilde Q_\eta(\mu, t+1, s', a')}{\eta} \right) } \right) 
\end{align*}
by induction hypothesis. Let $\xi_{a'} \equiv \frac{\tilde Q_\eta(\mu, t+1, s', a')}{\eta} \in \mathbb R$ and $s' \in \mathcal S$ arbitrary, then by Jensen's inequality applied to the convex function $\phi(x) = x \log x$ we have 
\begin{align*}
    &\sum_{a' \in \mathcal A} q_{t+1}(a' \mid s') \phi( \exp \xi_{a'} ) \geq \phi \left( \sum_{a' \in \mathcal A} q_{t+1}(a' \mid s') \exp \xi_{a'}  \right) \\
    \iff &\sum_{a' \in \mathcal A} q_{t+1}(a' \mid s') \xi_{a'} \exp \xi_{a'} \geq \left( \sum_{a' \in \mathcal A} q_{t+1}(a' \mid s') \exp \xi_{a'} \right) \log \left( \sum_{a' \in \mathcal A} q_{t+1}(a' \mid s') \exp \xi_{a'} \right) \\
    \iff &\log \left( \sum_{a' \in \mathcal A} q_{t+1}(a' \mid s') \exp \xi_{a'} \right) - \frac{\sum_{a' \in \mathcal A} q_{t+1}(a' \mid s') \xi_{a'} \exp \xi_{a'}}{\left( \sum_{a' \in \mathcal A} q_{t+1}(a' \mid s') \exp \xi_{a'} \right)} \leq 0 \, ,
\end{align*}
such that $\tilde Q_\eta(\mu, t, s, a)$ is monotonically decreasing for all $t \in \mathcal T, s \in \mathcal S, a \in \mathcal A$ by induction. 

Furthermore, $\mathcal M$ is compact and both $\tilde Q_\eta$ and $Q$ are compositions, sums, products and finite maxima of continuous functions in $\mu$ and therefore continuous in $\mu$ by the standing assumptions. Since $(\mu \mapsto \tilde Q_{\eta_n}(\mu, t, s, a))_{n \in \mathbb N}$ with $\eta_n \to 0^+$ converges pointwise to $\mu \mapsto Q^*(\mu, t, s, a)$ for all $t \in \mathcal T, s \in \mathcal S, a \in \mathcal A$ by Lemma~\ref{lem:tildeqpointwise}, by Dini's theorem the convergence is uniform.
\end{subproof}

Now that $\tilde Q_\eta$ converges uniformly against $Q$, we can show that RelEnt MFE have vanishing exploitability by replicating the proof for Boltzmann MFE.

\begin{lemma} \label{lem:relent-pointwise}
Any sequence of functions $(\mu \mapsto Q^{\tilde \Phi_{\eta_n}(\mu)}(\mu, t, s, a))_{n \in \mathbb N}$ with $\eta_n \to 0^+$ converges pointwise to $\mu \mapsto Q^*(\mu, t, s, a)$ for all $t \in \mathcal T, s \in \mathcal S, a \in \mathcal A$.
\end{lemma}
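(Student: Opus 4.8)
The plan is to transplant the backwards induction of Lemma~\ref{lem:boltzmann-pointwise} almost verbatim, the only genuinely new feature being that the softmax energy driving $\tilde\Phi_{\eta_n}(\mu)$ is now the $\eta$-dependent soft action-value function $\tilde Q_{\eta_n}$ rather than the fixed $Q^*$. Fix $\mu \in \mathcal M$ and induct backwards on $t$, claiming that for every $s,a,\varepsilon>0$ there is $n'$ with $\bigl| Q^{\tilde\Phi_{\eta_n}(\mu)}(\mu,t,s,a) - Q^*(\mu,t,s,a) \bigr| < \varepsilon$ for all $n>n'$. The base case $t=T-1$ is immediate, since both functions equal $r(s,a,\mu_{T-1})$. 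For the step from $t+1$ to $t$ I would expand the Bellman recursion for $Q^{\tilde\Phi_{\eta_n}(\mu)}$ and subtract the recursion for $Q^*$, reducing the problem to controlling, for each $s'$, the quantity $\sum_{a'} \tilde\pi_{t+1}^{\mu,\eta_n}(a' \mid s')\, Q^{\tilde\Phi_{\eta_n}(\mu)}(\mu,t+1,s',a') - \max_{a''} Q^*(\mu,t+1,s',a'')$, splitting the action sum into the set $\mathcal A_{\mathrm{opt}}^{s'}$ of actions optimal for $Q^*(\mu,t+1,s',\cdot)$ and its complement, exactly as before.

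The crux is showing that the probability mass on any action $a_{\mathrm{sub}}$ suboptimal for $Q^*$ vanishes under the RelEnt policy. Lower-bounding the partition function by the term of an arbitrary optimal action $a_{\mathrm{opt}} \in \mathcal A_{\mathrm{opt}}^{s'}$ yields
\begin{align*}
    \tilde\pi_{t+1}^{\mu,\eta_n}(a_{\mathrm{sub}} \mid s') \leq \frac{q_{t+1}(a_{\mathrm{sub}} \mid s')}{q_{t+1}(a_{\mathrm{opt}} \mid s')} \exp\left( \frac{\tilde Q_{\eta_n}(\mu,t+1,s',a_{\mathrm{sub}}) - \tilde Q_{\eta_n}(\mu,t+1,s',a_{\mathrm{opt}})}{\eta_n} \right) .
\end{align*}
This is precisely where the new ingredient enters: by Lemma~\ref{lem:tildeqpointwise} the exponent gap converges to $Q^*(\mu,t+1,s',a_{\mathrm{sub}}) - Q^*(\mu,t+1,s',a_{\mathrm{opt}}) \leq -\Delta Q_{\mathrm{min}}^{s',\mu} < 0$, so for all large $n$ it is at most $-\tfrac12 \Delta Q_{\mathrm{min}}^{s',\mu}$, and hence the exponential, together with the suboptimal probability, tends to $0$ as $\eta_n \to 0^+$. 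By finiteness of $\mathcal S$ and $\mathcal A$ this holds uniformly over suboptimal actions and states, so the suboptimal mass can be pushed below any prescribed threshold.

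Once the suboptimal mass is shown to vanish, the remainder of the argument copies Lemma~\ref{lem:boltzmann-pointwise} line for line: the $Q^*$-optimal actions carry total probability tending to $1$, each continuation value $Q^{\tilde\Phi_{\eta_n}(\mu)}(\mu,t+1,s',a')$ with $a' \in \mathcal A_{\mathrm{opt}}^{s'}$ is within the induction tolerance of $\max_{a''} Q^*(\mu,t+1,s',a'')$, and $Q^{\tilde\Phi_{\eta_n}(\mu)}$ is uniformly bounded by $M_Q$ via Lemma~\ref{lem:qbound}; a standard $\varepsilon/3$ split then gives the bound at time $t$. I expect the main (and only nontrivial) obstacle to be the coupling of two limits — the vanishing temperature and the drifting energy $\tilde Q_{\eta_n}$ — which is resolved exactly by invoking Lemma~\ref{lem:tildeqpointwise} to keep the $\tilde Q_{\eta_n}$ action gaps bounded away from zero so that temperature-driven concentration on the $Q^*$-optimal actions persists. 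The degenerate case in which every action is $Q^*$-optimal requires no suboptimal bound and is handled trivially, just as in Lemma~\ref{lem:boltzmann-pointwise}.
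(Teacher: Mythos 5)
Your proposal is correct and matches the paper's proof, which likewise transplants the induction of Lemma~\ref{lem:boltzmann-pointwise} verbatim and adds exactly one new ingredient: choosing $n$ large enough that $\lvert \tilde Q_{\eta_n}(\mu,t,s,a) - Q^*(\mu,t,s,a) \rvert$ is at most a fixed fraction of $\Delta Q_{\mathrm{min}}^{s',\mu}$, so that the $\tilde Q_{\eta_n}$ action gaps stay bounded away from zero and the suboptimal-action mass still vanishes as $\eta_n \to 0^+$. The only cosmetic difference is that the paper cites the uniform convergence of Lemma~\ref{lem:tildequniform} where you invoke the pointwise Lemma~\ref{lem:tildeqpointwise}; since $\mu$ is fixed throughout this lemma and $\mathcal T, \mathcal S, \mathcal A$ are finite, the two are interchangeable here.
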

\begin{subproof}
The proof is the same as in Lemma~\ref{lem:boltzmann-pointwise}. The only difference is that we additionally choose $n_2 \in \mathbb N$ in each induction step such that for all $n > n_2$ we have
\begin{align*}
    \left| \tilde Q_\eta(\mu, t, s, a) - Q^*(\mu, t, s, a) \right| \leq \frac{\Delta Q_{\mathrm{min}}^{s', \mu}}{4}
\end{align*}
for all $t \in \mathcal T, s \in \mathcal S, a \in \mathcal A$, which is possible, since by Lemma~\ref{lem:tildequniform}, $\tilde Q_\eta$ converges uniformly against $Q$. As long as we choose $n' \equiv \max(n_1, n_2, \max_{s' \in \mathcal S, a' \in \mathcal A} n_{s',a'})$, the rest of the proof will apply.
\end{subproof}

\begin{lemma} \label{lem:relent-equi}
Any sequence of functions $(\mu \mapsto Q^{\tilde \Phi_{\eta_n}(\mu)}(\mu, t, s, a))_{n \in \mathbb N}$ with $\eta_n \to 0^+$ fulfills equicontinuity for large enough $n$: For any $\varepsilon > 0$ and any $\mu \in \mathcal M$, we can choose a $\delta > 0$ and an integer $n' \in \mathbb N$ such that for all $\mu' \in \mathcal M$ with $d_{\mathcal M}(\mu, \mu') < \delta$ and for all $n > n'$ we have
\begin{align*}
    \left| Q^{\tilde \Phi_{\eta_n}(\mu)}(\mu, t, s, a) - Q^{\tilde \Phi_{\eta_n}(\mu')}(\mu', t, s, a) \right| < \varepsilon \, .
\end{align*}
\end{lemma}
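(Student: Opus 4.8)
The plan is to replicate the backward-induction argument of Lemma~\ref{lem:boltzmann-equi} almost verbatim, substituting the RelEnt policy $\tilde \Phi_{\eta_n}$ for the Boltzmann policy $\Phi_{\eta_n}$ and the soft action-value function $\tilde Q_{\eta_n}$ for $Q^*$ wherever the latter appears inside the Boltzmann weights. The only genuinely new ingredient is the uniform convergence $\tilde Q_{\eta_n} \to Q^*$ established in Lemma~\ref{lem:tildequniform}, which is precisely the reason the conclusion is asserted only for $n > n'$ rather than for all temperatures at once.

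First I would fix $\mu \in \mathcal M$ and perform the same backward induction over $t \in \mathcal T$. The base case $t = T-1$ is identical to the Boltzmann proof, since $Q^{\tilde \Phi_{\eta}(\mu)}(\mu, T-1, s, a) = r(s, a, \mu_{T-1})$ is policy-independent and $\mu \mapsto r(s,a,\mu_t)$ is Lipschitz by Assumption~\ref{assumption2}. For the induction step I would decompose $\lvert Q^{\tilde \Phi_{\eta_n}(\mu)}(\mu, t, s, a) - Q^{\tilde \Phi_{\eta_n}(\mu')}(\mu', t, s, a) \rvert$ into the same four terms as in Lemma~\ref{lem:boltzmann-equi}: the reward difference (Lipschitz in $\mu$), the transition-kernel difference weighted by a bounded $Q$ (Lipschitz $p$ together with the bound from Lemma~\ref{lem:qbound}), and two policy-weighted terms split over the optimal set $\mathcal A_{\mathrm{opt}}^{s'}$ (defined with respect to $Q^*$, exactly as before) and its complement.

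The crux is controlling the RelEnt probabilities of suboptimal actions, since $\tilde \pi^{\mu,\eta}_{t+1}(a_{\mathrm{sub}} \mid s')$ now carries $\tilde Q_{\eta_n}$ rather than $Q^*$ in its Boltzmann weights. Here I would invoke Lemma~\ref{lem:tildequniform} to choose $n_2$ so large that $\lvert \tilde Q_{\eta_n} - Q^* \rvert < \Delta Q_{\mathrm{min}}^{s', \mu}/4$ uniformly for all $n > n_2$ — the very step already used in Lemma~\ref{lem:relent-pointwise}. This forces the soft action gap between any optimal and any suboptimal action to be at least $\Delta Q_{\mathrm{min}}^{s', \mu}/2 > 0$, so the suboptimal-action probabilities still vanish as $\eta_n \to 0^+$ and the entire case analysis of Lemma~\ref{lem:boltzmann-equi} transfers: the trivial regime, the small-$\eta$ regime where the probability is bounded directly, and the moderate regime $\eta \geq \eta_{\mathrm{min}}^{s'}$ handled via the mean value theorem. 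In that moderate regime I would replace the Lipschitz constant of $\mu \mapsto Q^*$ (Lemma~\ref{lem:qlip}) by the $\eta$-independent Lipschitz constant $K_{\tilde Q_\eta}$ of $\mu \mapsto \tilde Q_{\eta}$ from Lemma~\ref{lem:qetalip}; this is legitimate because the regime enforces $\eta \geq \eta_{\mathrm{min}}^{s'} > 0$, so one may apply Lemma~\ref{lem:qetalip} with $\eta_{\mathrm{min}} := \eta_{\mathrm{min}}^{s'}$. The optimal-action term is then bounded exactly as before, combining the induction hypothesis on $Q$-value differences with the fact that optimal-action probability differences are dominated by the vanishing suboptimal-action probability differences through conservation of total mass.

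The main obstacle I anticipate is bookkeeping rather than any new idea: one must check that the threshold $n' \equiv \max(n_1, n_2, \max_{s' \in \mathcal S, a' \in \mathcal A} n_{s',a'})$ together with $\delta_{t,s,a} \equiv \min(\delta^{1}_{t,s,a}, \delta^{2}_{t,s,a}, \delta^{3}_{t,s,a}, \delta^{4}_{t,s,a}, \min_{s' \in \mathcal S, a' \in \mathcal A} \delta_{t+1,s',a'})$ simultaneously satisfies every inequality, and that the extra $n_2$ from uniform convergence does not clash with the $\mu$-dependent but $n$-independent radii $\delta^{1}_{t,s,a}, \ldots, \delta^{4}_{t,s,a}$. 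Because the reduced gap $\Delta Q_{\mathrm{min}}^{s', \mu}/2$ now plays the role that $\Delta Q_{\mathrm{min}}^{s', \mu}$ played in the Boltzmann proof, the numerical constants entering the $\delta$'s and $\eta_{\mathrm{min}}^{s'}$ must be rescaled accordingly, but the argument closes in precisely the same manner and yields the claimed equicontinuity uniformly over all $n > n'$.
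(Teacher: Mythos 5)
Your proposal is correct and follows the paper's proof in all essentials: both replicate the backward induction and four-term decomposition of Lemma~\ref{lem:boltzmann-equi}, keep $\mathcal A_{\mathrm{opt}}^{s'}$ defined via $Q^*$, and use the uniform convergence $\tilde Q_{\eta_n} \to Q^*$ from Lemma~\ref{lem:tildequniform} to restore the action-gap control (the paper splits the gap as $\Delta Q_{\mathrm{min}}^{s',\mu}/12$ three ways via an extra threshold $N_{t,s,a}^{s'}$ and a strengthened $\hat\delta_{t,s,a}^3$; your $n_2$ with tolerance $\Delta Q_{\mathrm{min}}^{s',\mu}/4$ is the same move modulo the constant rescaling you explicitly flag). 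The one genuine divergence is the moderate regime $\eta \geq \eta_{\mathrm{min}}^{s'}$: you invoke the $\eta$-independent Lipschitz constant of $\mu \mapsto \tilde Q_\eta$ from Lemma~\ref{lem:qetalip} (instantiated with $\eta_{\mathrm{min}} := \eta_{\mathrm{min}}^{s'}$, or a slightly smaller value to cover the closed endpoint), whereas the paper deliberately avoids asserting Lipschitzness of $\tilde Q_{\eta_n}$ and instead transfers the Lipschitz bound of $Q^*$ through uniform convergence, paying an additive slack $4\Delta_Q^{s'}$ controlled by a second threshold $\tilde N_{t,s,a}^{s'}$. Both routes are valid: yours is slightly cleaner in that it removes the second $n$-threshold from that step and keeps $\delta_{t,s,a}^{4,s'}$ manifestly $n$-independent (since $K_{\tilde Q_\eta}$ depends only on $\eta_{\mathrm{min}}^{s'}$, not on $n$), while the paper's funnels all RelEnt-specific control through the single tool of uniform convergence and so never needs Lemma~\ref{lem:qetalip} in this proof. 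Your closing bookkeeping, taking $n'$ and $\delta_{t,s,a}$ as maxima respectively minima over the accumulated thresholds, matches the paper's conclusion.
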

\begin{subproof}
To obtain the desired property, we replicate the proof of Lemma~\ref{lem:boltzmann-equi} by setting $\mathcal F = (\mu \mapsto Q^{\tilde \Phi_{\eta_n}(\mu)}(\mu, t, s, a))_{n \in \mathbb N}$. Any bounds for $\tilde Q_\eta$ can be instantiated by the corresponding bound for $Q$ and then bounding the distance between both by uniform convergence. The only differences lie in bounding the terms
\begin{align*}
    \left| (\tilde \Phi_{\eta_n}(\mu)(a_{\mathrm{sub}} \mid s') - (\tilde \Phi_{\eta_n}(\mu')(a_{\mathrm{sub}} \mid s') \right|
\end{align*}
where the action-value function has been replaced with the soft action-value function. Since $\tilde Q_{\eta_n}$ uniformly converges to $Q$, we instantiate additional requirements $N_{t,s,a}^{s'}, \tilde N_{t,s,a}^{s'}$ to let $n > N_{t,s,a}^{s'}, n > \tilde N_{t,s,a}^{s'}$ large enough such that $\eta$ is sufficiently small enough.

The first difference is to obtain
\begin{align*}
    \left| \tilde Q_{\eta_n}(\mu', t, s, a) - \tilde Q_{\eta_n}(\mu, t, s, a) \right| < \frac{\Delta Q_{\mathrm{min}}^{s', \mu}}{4}
\end{align*}
for all $\mu' \in \mathcal M, t \in \mathcal T, s \in \mathcal S, a \in \mathcal A$ with $d_{\mathcal M}(\mu, \mu')$ sufficiently small. We choose $\hat \delta_{t,s,a}^3$ slightly stronger than in the original proof, such that if $d_{\mathcal M}(\mu, \mu') < \hat \delta_{t,s,a}^3$, we have
\begin{align*}
    \left| Q^*(\mu', t, s, a) - Q^*(\mu, t, s, a) \right| < \frac{\Delta Q_{\mathrm{\mathrm{min}}}^{s', \mu}}{12} \, .
\end{align*}
We must then additionally choose $N_{t,s,a}^{s'} \in \mathbb N$ for each induction step via uniform convergence from Lemma~\ref{lem:tildequniform} such that as long as $n > N_{t,s,a}^{s'}$, we have 
\begin{align*}
    \left| \tilde Q_{\eta_n}(\mu, t, s, a) - Q^*(\mu, t, s, a) \right| < \frac{\Delta Q_{\mathrm{min}}^{s', \mu}}{12} \, .
\end{align*}
This implies the required inequality
\begin{align*}
    \left| \tilde Q_{\eta_n}(\mu', t, s, a) - \tilde Q_{\eta_n}(\mu, t, s, a) \right|
    &\leq \left| \tilde Q_{\eta_n}(\mu', t, s, a) - Q^*(\mu', t, s, a) \right| + \left| Q^*(\mu', t, s, a) - Q^*(\mu, t, s, a) \right| \\
    &\quad + \left| Q^*(\mu, t, s, a) - \tilde Q_{\eta_n}(\mu, t, s, a) \right|
    < \frac{\Delta Q_{\mathrm{min}}^{s', \mu}}{4}
\end{align*}
and we can proceed as in the original proof. 

The second difference lies in choosing $\delta_{t,s,a}^{4,s'}$. Note that $\tilde Q_{\eta_n}$ is still bounded by $M_Q$, see Lemma~\ref{lem:qbound}. However, since $\tilde Q_{\eta_n}$ might no longer be Lipschitz with the same constant as $Q^*$, we choose an additional integer $\tilde N_{t,s,a}^{s'} \in \mathbb N$ for each induction step by Lemma~\ref{lem:tildequniform}, such that as long as $n > \tilde N_{t,s,a}^{s'}$, we have
\begin{align*}
    \left| \tilde Q_{\eta_n}(\mu, t, s, a) - Q^*(\mu, t, s, a) \right| \leq \Delta_Q^{s'} \equiv \frac{\frac{\varepsilon_{t,s,a}}{16M_Q |\mathcal A|}}{4 R_q^{\mathrm{max}} |\mathcal A| \cdot \frac{1}{\eta_{\mathrm{min}}^{s'}} \exp \left( \frac{2 M_Q}{\eta_{\mathrm{min}}^{s'}} \right)}
\end{align*}
for any $\mu' \in \mathcal M, t \in \mathcal T, s \in \mathcal S, a \in \mathcal A$. The required bound then follows immediately from
\begin{align*}
    &\left| (\Phi_{\eta_n}(\mu)(a_{\mathrm{sub}} \mid s') - (\Phi_{\eta_n}(\mu')(a_{\mathrm{sub}} \mid s') \right| \\
    &\leq R_q^{\mathrm{max}} \sum_{a' \neq a_{\mathrm{sub}}} \left| \exp \left( \frac{\tilde Q_{\eta_n}(\mu', t, s', a') - \tilde Q_{\eta_n}(\mu', t, s', a_{\mathrm{sub}})}{\eta} \right) - \exp \left( \frac{\tilde Q_{\eta_n}(\mu, t, s', a') - \tilde Q_{\eta_n}(\mu, t, s', a_{\mathrm{sub}})}{\eta} \right) \right| \\
    &\leq R_q^{\mathrm{max}} \sum_{a' \neq a_{\mathrm{sub}}} \left| \frac{1}{\eta} \exp \left( \frac{\xi_{a'}}{\eta} \right) \right| \left| (\tilde Q_{\eta_n}(\mu', t, s', a') - \tilde Q_{\eta_n}(\mu', t, s', a_{\mathrm{sub}})) - (\tilde Q_{\eta_n}(\mu, t, s', a') - \tilde Q_{\eta_n}(\mu, t, s', a_{\mathrm{sub}})) \right| \\
    &\leq R_q^{\mathrm{max}} |\mathcal A| \cdot \frac{1}{\eta_{\mathrm{min}}^{s'}} \exp \left( \frac{2 M_Q}{\eta_{\mathrm{min}}^{s'}} \right) \left( \left| \tilde Q_{\eta_n}(\mu', t, s', a') - \tilde Q_{\eta_n}(\mu, t, s', a') \right| + \left| \tilde Q_{\eta_n}(\mu, t, s', a_{\mathrm{sub}}) - \tilde Q_{\eta_n}(\mu', t, s', a_{\mathrm{sub}}) \right| \right) \\
    &\leq R_q^{\mathrm{max}} |\mathcal A| \cdot \frac{1}{\eta_{\mathrm{min}}^{s'}} \exp \left( \frac{2 M_Q}{\eta_{\mathrm{min}}^{s'}} \right) \cdot \left( 2 K_Q d_{\mathcal M}(\mu, \mu') + 4 \Delta_Q^{s'} \right) \\
    &\leq R_q^{\mathrm{max}} |\mathcal A| \cdot \frac{1}{\eta_{\mathrm{min}}^{s'}} \exp \left( \frac{2 M_Q}{\eta_{\mathrm{min}}^{s'}} \right) \cdot \left( 2 K_Q d_{\mathcal M}(\mu, \mu') \right) + \frac{\varepsilon_{t,s,a}}{16M_Q |\mathcal A|} 
    < \frac{\varepsilon_{t,s,a}}{8M_Q |\mathcal A|} 
\end{align*}
as in the original proof by letting $d_{\mathcal M}(\mu, \mu') < \delta_{t,s,a}^{4,s'}$ and choosing
\begin{align*}
    \delta_{t,s,a}^{4,s'} = \frac{\varepsilon_{t,s,a} \eta_{\mathrm{min}}^{s'}}{16M_Q |\mathcal A|^2 R_q^{\mathrm{max}} \cdot \exp \left( \frac{2 M_Q}{\eta_{\mathrm{min}}^{s'}} \right) \cdot 2 K_Q} \, .
\end{align*}

The rest of the proof is analogous. We obtain the additional requirement $n > N_{t,s,a}^{s'}$, $n > \tilde N_{t,s,a}^{s'}$ for some integers $N_{t,s,a}^{s'}, \tilde N_{t,s,a}^{s'}$ and each $t \in \mathcal T, s \in \mathcal S, s' \in \mathcal S, a \in \mathcal A$. By choosing $n' \equiv \max_{t \in \mathcal T, s \in \mathcal S, s' \in \mathcal S, a \in \mathcal A} \max(N_{t,s,a}^{s'}, \tilde N_{t,s,a}^{s'})$, the desired result holds as long as $n > n'$.
\end{subproof}

From this property, we again obtain the desired uniform convergence via compactness of $\mathcal M$.

\begin{lemma} \label{lem:uniformsoftq}
Any sequence of functions $(\mu \mapsto Q^{\tilde \Phi_{\eta_n}(\mu)}(\mu, t, s, a))_{n \in \mathbb N}$ with $\eta_n \to 0^+$ converges uniformly to $\mu \mapsto Q^*(\mu, t, s, a)$ for all $t \in \mathcal T, s \in \mathcal S, a \in \mathcal A$.
\end{lemma}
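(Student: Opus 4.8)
The plan is to combine the pointwise convergence established in Lemma~\ref{lem:relent-pointwise} with the equicontinuity from Lemma~\ref{lem:relent-equi}, exactly mirroring the Boltzmann case inside Lemma~\ref{lem:boltzmann-expl}. Fix $t \in \mathcal T, s \in \mathcal S, a \in \mathcal A$ and abbreviate $f_n \equiv (\mu \mapsto Q^{\tilde \Phi_{\eta_n}(\mu)}(\mu, t, s, a))$ and $f \equiv (\mu \mapsto Q^*(\mu, t, s, a))$. The natural move would be to invoke the abstract Lemma~\ref{lem:uniform} verbatim. However, there is a slight mismatch I would have to address: Lemma~\ref{lem:uniform} assumes genuine equicontinuity of the sequence $(f_n)_{n \in \mathbb N}$, whereas Lemma~\ref{lem:relent-equi} only guarantees equicontinuity \emph{for large enough $n$}, i.e.\ the equicontinuity modulus carries a threshold $n'$ that may be nontrivial. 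I therefore plan to reprove the compactness argument of Lemma~\ref{lem:uniform} while carrying this threshold along; since uniform convergence is itself a statement about large $n$, this causes no real difficulty.

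Concretely, fix $\varepsilon > 0$. By Lemma~\ref{lem:relent-equi}, for each $\mu \in \mathcal M$ there exist $\delta(\mu) > 0$ and an integer $n'(\mu)$ such that $d_{\mathcal M}(\mu, \mu') < \delta(\mu)$ and $n > n'(\mu)$ together imply $\lvert f_n(\mu) - f_n(\mu') \rvert < \varepsilon/3$; letting $n \to \infty$ and using the pointwise convergence $f_n \to f$ from Lemma~\ref{lem:relent-pointwise} then yields $\lvert f(\mu) - f(\mu') \rvert \le \varepsilon/3$ on the same ball. The open balls $(B_{\delta(\mu)}(\mu))_{\mu \in \mathcal M}$ cover the compact space $\mathcal M$, so finitely many of them, centered at $\mu_1, \ldots, \mu_k$, suffice. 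By Lemma~\ref{lem:relent-pointwise} again, for each $i$ there is an integer $m_i$ with $\lvert f_n(\mu_i) - f(\mu_i) \rvert < \varepsilon/3$ for all $n > m_i$.

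Finally I would set $N \equiv \max_{i=1,\ldots,k}\max(n'(\mu_i), m_i)$, which is finite precisely because the subcover is finite. For any $n > N$ and any $\mu \in \mathcal M$, choosing a center $\mu_i$ with $\mu \in B_{\delta(\mu_i)}(\mu_i)$ and applying the triangle inequality gives
\begin{align*}
    \lvert f_n(\mu) - f(\mu) \rvert &\le \lvert f_n(\mu) - f_n(\mu_i) \rvert + \lvert f_n(\mu_i) - f(\mu_i) \rvert + \lvert f(\mu_i) - f(\mu) \rvert < \frac{\varepsilon}{3} + \frac{\varepsilon}{3} + \frac{\varepsilon}{3} = \varepsilon \, ,
\end{align*}
where the first term is controlled by eventual equicontinuity (valid since $n > n'(\mu_i)$), the second by pointwise convergence at the center (valid since $n > m_i$), and the third by the limiting inequality derived above. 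As $\varepsilon > 0$ and $t, s, a$ were arbitrary, this is the claimed uniform convergence. The single point requiring care — and the only departure from Lemma~\ref{lem:uniform} — is the bookkeeping of the equicontinuity threshold $n'(\mu_i)$: it must be absorbed into the finite maximum $N$, which is legitimate only because compactness of $\mathcal M$ reduces the cover to finitely many centers.
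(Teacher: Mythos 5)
Your proof is correct and takes essentially the same route as the paper's own proof of Lemma~\ref{lem:uniformsoftq}, which likewise reruns the compactness argument of Lemma~\ref{lem:uniform} directly (pointwise convergence at finitely many ball centers plus the eventual-equicontinuity bound, combined via the triangle inequality) rather than invoking Lemma~\ref{lem:uniform} verbatim, precisely because of the threshold $n'$ in Lemma~\ref{lem:relent-equi}. If anything, your explicit maximum $N = \max_{i}\max(n'(\mu_i), m_i)$ is slightly more careful bookkeeping than the paper's final bound, which writes a single $n'$ where a maximum over the finitely many centers is meant.
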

\begin{subproof}
Fix $\varepsilon > 0, t \in \mathcal T, s \in \mathcal S, a \in \mathcal A$. Then, there exists by Lemma~\ref{lem:relent-equi} for any point $\mu \in \mathcal M$ both $\delta(\mu)$ and $n'$ such that for all $\mu' \in \mathcal M$ with $d_{\mathcal M}(\mu, \mu') < \delta(\mu)$ for all $n > n'$ we have 
\begin{align*}
    \left| Q^{\tilde \Phi_{\eta_n}(\mu)}(\mu, t, s, a) - Q^{\tilde \Phi_{\eta_n}(\mu')}(\mu', t, s, a) \right| < \frac{\varepsilon}{3}
\end{align*} 
which via pointwise convergence from Lemma~\ref{lem:relent-pointwise} implies 
\begin{align*}
    \left| Q^*(\mu, t, s, a) - Q^*(\mu', t, s, a) \right| \leq \frac{\varepsilon}{3} \, .
\end{align*} 
Since $\mathcal M$ is compact, it is separable, i.e. there exists a countable dense subset $(\mu_j)_{j \in \mathbb N}$ of $\mathcal M$. Let $\delta(\mu)$ be as defined above and cover $\mathcal M$ by the open balls $(B_{\delta(\mu_j)}(\mu_j))_{j \in \mathbb N}$. By the compactness of $\mathcal M$, finitely many of these balls $B_{\delta(\mu_{n_1})}(\mu_{n_1}), \ldots, B_{\delta(\mu_{n_k})}(\mu_{n_k})$ cover $\mathcal M$. By pointwise convergence from Lemma~\ref{lem:relent-pointwise}, for any $i = 1, \ldots, k$ we can find integers $m_i$ such that for all $n > m_i$ we have 
\begin{align*}
    \left| Q^{\tilde \Phi_{\eta_n}(\mu_{n_i})}(\mu_{n_i}, t, s, a) - Q^*(\mu_{n_i}, t, s, a) \right| < \frac{\varepsilon}{3} \, .
\end{align*}
Taken together, we find that for $n > \max(n', \max_{i = 1, \ldots, k} m_i)$ and arbitrary $\mu \in \mathcal M$, we have
\begin{align*}
    \left| Q^{\tilde \Phi_{\eta_n}(\mu)}(\mu, t, s, a) - Q^*(\mu, t, s, a) \right| &< \left| Q^{\tilde \Phi_{\eta_n}(\mu)}(\mu, t, s, a) - Q^{\tilde \Phi_{\eta_n}(\mu_{n_i})}(\mu_{n_i}, t, s, a) \right| \\
    &\quad + \left| Q^{\tilde \Phi_{\eta_n}(\mu_{n_i})}(\mu_{n_i}, t, s, a) - Q^*(\mu_{n_i}, t, s, a) \right| \\
    &\quad + \left| Q^*(\mu_{n_i}, t, s, a) - Q^*(\mu, t, s, a) \right| \\
    &< \frac{\varepsilon}{3} + \frac{\varepsilon}{3} + \frac{\varepsilon}{3} = \varepsilon
\end{align*} 
for some center point $\mu_{n_i}$ of a ball containing $\mu$ from the finite cover.
\end{subproof}

As a result, a sequence of RelEnt MFE with $\eta \to 0^+$ is approximately optimal in the MFG.

\begin{lemma} \label{lem:relent-expl}
For any sequence $(\pi^*_{n}, \mu^*_{n})_{n \in \mathbb N}$ of ${\eta_n}$-RelEnt MFE with $\eta_n \to 0^+$ and for any $\varepsilon > 0$ there exists integer $n' \in \mathbb N$ such that for all integers $n > n'$ we have
\begin{align*}
    J^{\mu^*_{n}}(\pi^*_{n}) \geq \max_\pi J^{\mu^*_{n}}(\pi) - \varepsilon \ .
\end{align*}
\end{lemma}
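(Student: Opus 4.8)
The plan is to mirror the proof of Lemma~\ref{lem:boltzmann-expl} verbatim, replacing the Boltzmann uniform-convergence input with its RelEnt counterpart. The essential point is that all the delicate work has already been discharged by the preceding lemmas: Lemma~\ref{lem:relent-pointwise} gives pointwise convergence of $\mu \mapsto Q^{\tilde \Phi_{\eta_n}(\mu)}(\mu,t,s,a)$ to $\mu \mapsto Q^*(\mu,t,s,a)$, Lemma~\ref{lem:relent-equi} supplies equicontinuity for large $n$, and combining these via compactness of $\mathcal M$ (exactly as in Lemma~\ref{lem:uniform}) yields the uniform convergence stated in Lemma~\ref{lem:uniformsoftq}. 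Thus the critical ingredient — uniform rather than merely pointwise convergence, which is needed precisely because we have no control over the equilibrium sequence $\mu^*_{n}$ and cannot assume it converges — is available off the shelf.

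First I would fix $\varepsilon > 0$ and invoke Lemma~\ref{lem:uniformsoftq} to obtain, for each $t \in \mathcal T, s \in \mathcal S, a \in \mathcal A$, an integer past which $Q^{\tilde \Phi_{\eta_n}(\mu)}(\mu,t,s,a) \geq Q^*(\mu,t,s,a) - \varepsilon$ holds \emph{uniformly} over all $\mu \in \mathcal M$; taking the maximum over the finitely many triples $(t,s,a)$ yields a single threshold $n'$ valid for all of them simultaneously. Next I would instantiate this inequality at the equilibrium mean field $\mu = \mu^*_{n}$, using that $\pi^*_{n} = \tilde \Phi_{\eta_n}(\mu^*_{n})$ by definition of an $\eta_n$-RelEnt MFE, to get for all $n > n'$ and all $t,s,a$
\begin{align*}
    Q^{\pi^*_{n}}(\mu^*_{n}, t, s, a) \geq Q^*(\mu^*_{n}, t, s, a) - \varepsilon = \max_{\pi \in \Pi} Q^{\pi}(\mu^*_{n}, t, s, a) - \varepsilon \, ,
\end{align*}
where the final equality uses the characterization of $Q^*$ as the optimal action-value function (Lemma~\ref{lem:policyoptimality} and the definitions in Appendix~\ref{app:th2}).

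Finally I would pass from action-values to the objective by weighting with $\mu_0$ at the initial time, exactly as in the closing display of Lemma~\ref{lem:boltzmann-expl}, to obtain $J^{\mu^*_{n}}(\pi^*_{n}) \geq \max_{\pi} J^{\mu^*_{n}}(\pi) - \varepsilon$, which is the claim. There is no genuine obstacle remaining in this lemma: the entire burden was offloaded onto establishing Lemma~\ref{lem:uniformsoftq}, and in particular onto proving equicontinuity (Lemma~\ref{lem:relent-equi}) uniformly across the regularized policies despite the soft action-value function $\tilde Q_{\eta_n}$ failing to retain a fixed Lipschitz constant as $\eta_n \to 0^+$ — the bound of Lemma~\ref{lem:qetalip} degenerates like $\exp(2 M_Q / \eta_{\mathrm{min}})$ once $\eta_n < \eta_{\mathrm{min}}$, which is why the uniform-convergence argument of Lemma~\ref{lem:tildequniform} must stand in for the Lipschitz control used in the Boltzmann case. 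The only residual care needed here is the simultaneous invocation of uniform convergence over the finitely many $(t,s,a)$, which is immediate.
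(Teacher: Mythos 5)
Your proposal is correct and follows the paper's own proof essentially verbatim: it invokes the uniform convergence of $\mu \mapsto Q^{\tilde \Phi_{\eta_n}(\mu)}(\mu,t,s,a)$ to $\mu \mapsto Q^*(\mu,t,s,a)$ from Lemma~\ref{lem:uniformsoftq} (correctly identifying that uniformity is what makes instantiation at the uncontrolled sequence $\mu^*_{n}$ legitimate), sets $\pi^*_{n} = \tilde \Phi_{\eta_n}(\mu^*_{n})$, and passes from action-values to $J^{\mu^*_{n}}$ via Lemma~\ref{lem:policyoptimality} exactly as in the closing display of Lemma~\ref{lem:boltzmann-expl}. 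No gaps.
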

\begin{subproof}
By Lemma~\ref{lem:uniformsoftq}, we have $\left| Q^{\tilde \Phi_{\eta_n}(\mu)}(\mu, t, s, a) - Q^*(\mu, t, s, a) \right| \to 0$ uniformly. Therefore, for any $\varepsilon > 0$, there exists by uniform convergence an integer $n'$ such that for all integers $n > n'$ we have
\begin{align*}
    Q^{\pi^*_{n}}(\mu^*_{n}, t,s,a) \geq Q^*(\mu^*_{n}, t,s,a) - \varepsilon = \max_{\pi \in \Pi} Q^{\pi}(\mu^*_{n}, t,s,a) - \varepsilon \, ,
\end{align*}
and since by Lemma~\ref{lem:policyoptimality}, we have
\begin{align*}
    J^{\mu^*_{n}}(\pi^*_{n}) = \sum_{s \in \mathcal S} \mu_0(s) \cdot \sum_{a \in \mathcal A} Q^{\pi^*_{n}}(\mu^*_{n}, t,s,a) \geq \sum_{s \in \mathcal S} \mu_0(s) \cdot \max_{\pi \in \Pi} \sum_{a \in \mathcal A} Q^{\pi}(\mu^*_{n}, t,s,a) - \varepsilon = \max_{\pi \in \Pi} J^{\mu^*_{n}}(\pi) - \varepsilon \, ,
\end{align*}
the desired result follows immediately.
\end{subproof}

By repeating the previous argumentation for Boltzmann MFE with Lemma~\ref{lem:obj-conv} and replacing Lemma~\ref{lem:boltzmann-expl} with Lemma~\ref{lem:relent-expl}, we obtain the desired result for RelEnt MFE.
\end{proof}

\section{Relative entropy mean field games} \label{app:relent}
We show that the necessary conditions for optimality hold for the candidate solution. (For further insight, see also \citet{neu2017unified}, \citet{haarnoja2017reinforcement} and references therein.) Fix a mean field $\mu \in \mathcal M$ and formulate the induced problem as an optimization problem, with $\rho_t(s)$ as the probability of our representative agent visiting state $s \in \mathcal S$ at time $t \in \mathcal T$, to obtain 
\begin{subequations}
    \begin{alignat*}{2}
    &\! \max_{\rho, \pi} &\qquad& \sum_{t=0}^{T-1} \sum_{s \in \mathcal S} \rho_t(s) \sum_{a \in \mathcal A} \pi_t(a \mid s) r(s, a, \mu_t) \\
    &\text{subject to} & & \rho_{t+1}(s') = \sum_{s \in \mathcal S} \rho_t(s) \sum_{a \in \mathcal A} \pi_t(a \mid s) p(s'\mid s, a, \mu_t) \qquad \forall s' \in \mathcal S, t \in \{ 0,\ldots,T-2 \},  \\
    &                  & & 1 = \sum_{s \in \mathcal S} \rho_t(s) \qquad \forall t \in \{ 0,\ldots,T-1 \}, \\
    &                  & & 1 = \sum_{a \in \mathcal A} \pi_t(a \mid s) \qquad \forall s \in \mathcal S, t \in \{ 0,\ldots,T-1 \}, \\
    &                  & & 0 \leq \rho_t(s), 0 \leq \pi_t(a \mid s) \qquad \forall s \in \mathcal S, a \in \mathcal A, t \in \{ 0,\ldots,T-1 \}, \\
    &                  & & \mu_0(s) = \rho_0(s) \qquad \forall s \in \mathcal S.  
    \end{alignat*}
\end{subequations}
Note that if the agent follows the mean field policy of the other agents, we have $\rho_t = \mu_t$. The optimized objective is just the expectation $\mathbb E \left[ \sum_{t=0}^{T-1} r(S_{t}, A_{t}) \right]$. As in \citet{belousov2019entropic}, we change this objective to include a KL-divergence penalty weighted by the state-visitation distribution $\rho_t(\cdot)$ by introducing the temperature $\eta > 0$ and prior policy $q \in \Pi$ to obtain
\begin{subequations}
    \begin{alignat*}{2}
    &\! \max_{\rho_t, \pi_t} &\qquad& \sum_{t=0}^{T-1} \sum_{s \in \mathcal S} \rho_t(s) \sum_{a \in \mathcal A} \pi_t(a \mid s) r(s, a, \mu_t) - \eta \sum_{t=0}^{T-1} \sum_{s \in \mathcal S} \rho_t(s) \KL{\pi_t(\cdot \mid s)}{q_t(\cdot \mid s)} \\
    &\text{subject to} & & \rho_{t+1}(s') = \sum_{s \in \mathcal S} \rho_t(s) \sum_{a \in \mathcal A} \pi_t(a \mid s) p(s'\mid s, a, \mu_t) \qquad \forall s' \in \mathcal S, t \in \{ 0,\ldots,T-2 \},  \\
    &                  & & 1 = \sum_{s \in \mathcal S} \rho_t(s) \qquad \forall t \in \{ 0,\ldots,T-1 \}, \\
    &                  & & 1 = \sum_{a \in \mathcal A} \pi_t(a \mid s) \qquad \forall s \in \mathcal S, t \in \{ 0,\ldots,T-1 \}, \\
    &                  & & 0 \leq \rho_t(s), 0 \leq \pi_t(a \mid s) \qquad \forall s \in \mathcal S, a \in \mathcal A, t \in \{ 0,\ldots,T-1 \}, \\
    &                  & & \mu_0(s) = \rho_0(s) \qquad \forall s \in \mathcal S. 
    \end{alignat*}
\end{subequations}

We ignore the constraints $0 \leq \pi_t(a \mid s)$ and $0 \leq \rho_t(s)$ and see later that they will hold automatically. This results in the simplified optimization problem
\begin{subequations}
    \begin{alignat*}{2}
    &\! \max_{\rho_t, \pi_t} &\qquad& \sum_{t=0}^{T-1} \sum_{s \in \mathcal S} \rho_t(s) \sum_{a \in \mathcal A} \pi_t(a \mid s) r(s, a, \mu_t) - \eta \sum_{t=0}^{T-1} \sum_{s \in \mathcal S} \rho_t(s) \KL{\pi_t(\cdot \mid s)}{q_t(\cdot \mid s)} \\
    &\text{subject to} & & \rho_{t+1}(s') = \sum_{s \in \mathcal S} \rho_t(s) \sum_{a \in \mathcal A} \pi_t(a \mid s) p(s'\mid s, a, \mu_t) \qquad \forall s' \in \mathcal S, t \in \{ 0,\ldots,T-2 \},  \\
    &                  & & 1 = \sum_{s \in \mathcal S} \rho_t(s) \qquad \forall t \in \{ 0,\ldots,T-1 \}, \\
    &                  & & 1 = \sum_{a \in \mathcal A} \pi_t(a \mid s) \qquad \forall s \in \mathcal S, t \in \{ 0,\ldots,T-1 \}, \\
    &                  & & \mu_0(s) = \rho_0(s) \qquad \forall s \in \mathcal S,
    \end{alignat*}
\end{subequations}
for which we introduce Lagrange multipliers $\lambda_1(t,s)$, $\lambda_2(t)$, $\lambda_3(t,s)$, $\lambda_4(s)$ and the Lagrangian 
\begin{align*}
    L(\rho, \pi, \lambda_1, \lambda_2, \lambda_3, \lambda_4) &= \sum_{t=0}^{T-1} \sum_{s \in \mathcal S} \rho_t(s) \sum_{a \in \mathcal A} \pi_t(a \mid s) \left( r(s, a, \mu_t) - \eta \log \frac{\pi_t(a \mid s)}{q_t(a \mid s)} \right) \\
    &- \sum_{t=0}^{T-1} \sum_{s' \in \mathcal S} \lambda_1(t,s') \left( \rho_{t+1}(s') - \sum_{s \in \mathcal S} \rho_t(s) \sum_{a \in \mathcal A} \pi_t(a \mid s) p(s'\mid s, a, \mu_t) \right) \\
    &- \sum_{t=0}^{T-1} \lambda_2(t) \left( 1 - \sum_{s \in \mathcal S} \rho_t(s) \right) \\
    &- \sum_{t=0}^{T-1} \sum_{s \in \mathcal S} \lambda_3(t,s) \left( \sum_{a \in \mathcal A} \pi_t(a \mid s) - 1 \right) \\
    &- \sum_{s \in \mathcal S} \lambda_4(s) \left( \mu_0(s) - \rho_0(s) \right)
\end{align*}
with the artificial constraint $\lambda_1(T-1,s) \equiv 0$, which allows us to formulate the following necessary conditions for optimality. For $\nabla_{\pi_t(a \mid s)}L$ and all $s \in \mathcal S, a \in \mathcal A, t \in \{ 0,\ldots,T-1 \}$, we obtain
\begin{align*}
    &\nabla_{\pi_t(a \mid s)} L = \rho_t(s) \left( r(s,a,\mu_t) - \eta \log \frac{\pi_t(a \mid s)}{q_t(a \mid s)} - \eta + \sum_{s' \in \mathcal S} \lambda_1(t,s') p(s'\mid s, a, \mu_t) \right) - \lambda_3(t,s) \stackrel{!}{=} 0 \\
    \implies &\pi_t^*(a \mid s) = q_t(a \mid s) \exp \left( \frac{r(s,a,\mu_t) - \eta + \sum_{s' \in \mathcal S} \lambda_1(t,s') p(s'\mid s, a, \mu_t) - \frac{\lambda_3(t,s)}{\rho_t(s)}}{\eta} \right) \, .
\end{align*}
For $\nabla_{\lambda_3}L$ and all $s \in \mathcal S, t \in \{ 0,\ldots,T-1 \}$, by inserting $\pi_t^*$ we obtain
\begin{align*}
    &\nabla_{\lambda_3(t,s)} L = 1 - \sum_{a \in \mathcal A} \pi_t(a \mid s) \stackrel{!}{=} 0 \\
    \iff &1 = \sum_{a \in \mathcal A} q_t(a \mid s) \exp \left( \frac{r(s,a,\mu_t) - \eta + \sum_{s' \in \mathcal S} \lambda_1(t,s') p(s'\mid s, a, \mu_t) - \frac{\lambda_3(t,s)}{\rho_t(s)}}{\eta} \right)
\end{align*} 
which is fulfilled by choosing
\begin{align*}
    \lambda_3^*(t,s) = \eta \rho_t(s) \log \sum_{a \in \mathcal A} q_t(a \mid s) \exp \left( \frac{r(s,a,\mu_t) - \eta + \sum_{s' \in \mathcal S} \lambda_1(t,s') p(s'\mid s, a, \mu_t)}{\eta} \right)
\end{align*}
since it fulfills the required equation
\begin{align*}
    &\quad \sum_{a \in \mathcal A} q_t(a \mid s) \exp \left( \frac{r(s,a,\mu_t) - \eta + \sum_{s' \in \mathcal S} \lambda_1(t,s') p(s'\mid s, a, \mu_t) - \frac{\lambda_3^*(t,s)}{\rho_t(s)}}{\eta} \right) \\
    &= \sum_{a \in \mathcal A} q_t(a \mid s) \exp \left( \frac{r(s,a,\mu_t) - \eta + \sum_{s' \in \mathcal S} \lambda_1(t,s') p(s'\mid s, a, \mu_t)}{\eta} \right) \\
    &\quad \cdot \left( \sum_{a \in \mathcal A} q_t(a \mid s) \exp \left( \frac{r(s,a,\mu_t) - \eta + \sum_{s' \in \mathcal S} \lambda_1(t,s') p(s'\mid s, a, \mu_t)}{\eta} \right) \right)^{-1} = 1 \ .
\end{align*}
Finally, inserting $\lambda_3^*$ and $\pi^*$, for $\nabla_{\rho_t(s)}L$ we obtain 
\begin{align*}
    \nabla_{\rho_t(s)} L 
    &= \sum_{a \in \mathcal A} \pi_t(a \mid s) \left( r(s, a, \mu_t) - \eta \log \frac{\pi_t(a \mid s)}{q_t(a \mid s)} + \sum_{s' \in \mathcal S} \lambda_1(t,s') p(s'\mid s, a, \mu_t) + \lambda_2(t) \right) - \lambda_1(t-1,s) \\
    &= \sum_{a \in \mathcal A} \pi_t(a \mid s) \left( \eta + \lambda_2(t) + \frac{\lambda_3(t,s)}{\rho_t(s)} \right) - \lambda_1(t-1,s) \stackrel{!}{=} 0
\end{align*}
which implies
\begin{align*}
    \lambda_1^*(t-1,s) &= \eta + \lambda_2(t) + \eta \log \sum_{a \in \mathcal A} q_t(a \mid s) \exp \left( \frac{r(s,a,\mu_t) - \eta + \sum_{s' \in \mathcal S} \lambda_1(t,s') p(s'\mid s, a, \mu_t)}{\eta} \right) \, .
\end{align*}
We can subtract $\lambda_2(t)$ and shift the time index to obtain the soft value function $\tilde V_\eta(\mu,t,s)$ defined via terminal condition $\tilde V_\eta(\mu,T,s) \equiv 0$ and the recursion
\begin{align*} 
    \tilde V_\eta(\mu,t,s) &= \eta \log \sum_{a \in \mathcal A} q_t(a \mid s) \exp \left( \frac{r(s,a,\mu_t) + \sum_{s' \in \mathcal S} \tilde V_\eta(\mu,t+1,s') p(s'\mid s, a, \mu_t)}{\eta} \right)
\end{align*}
since then, by normalization the optimal policy for all $s \in \mathcal S, a \in \mathcal A, t \in \{ 0,\ldots,T-1 \}$ is equivalent to
\begin{align*}
    \pi_t^*(a \mid s) &= \frac{ q_t(a \mid s) \exp \left( \frac{r(s,a,\mu_t) + \sum_{s' \in \mathcal S} \lambda_1(t,s') p(s'\mid s, a, \mu_t)}{\eta} \right) }{ \sum_{a' \in \mathcal A} q_t(a' \mid s) \exp \left( \frac{r(s,a',\mu_t) + \sum_{s' \in \mathcal S} \lambda_1(t,s') p(s'\mid s, a', \mu_t)}{\eta} \right) } \\
    &= \frac{ q_t(a \mid s) \exp \left( \frac{r(s,a,\mu_t) + \sum_{s' \in \mathcal S} \tilde V_\eta(\mu,t+1,s') p(s'\mid s, a, \mu_t)}{\eta} \right) }{ \sum_{a' \in \mathcal A} q_t(a' \mid s) \exp \left( \frac{r(s,a',\mu_t) + \sum_{s' \in \mathcal S} \tilde V_\eta(\mu,t+1,s') p(s'\mid s, a', \mu_t)}{\eta} \right) } \, .
\end{align*}
To obtain a recursion in $\tilde Q_\eta$, define
\begin{align*}
    \tilde Q_\eta(\mu, t, s, a) &\equiv r(s, a, \mu_t) + \sum_{s' \in \mathcal S} p(s'\mid s, a, \mu_t) \eta \log \sum_{a' \in \mathcal A} q_{t+1}(a' \mid s') \exp \left( \frac{\tilde Q_\eta(\mu, t+1, s', a')}{\eta} \right)
\end{align*} 
with terminal condition $\tilde Q_\eta(\mu, T, s, a) \equiv 0$ to obtain
\begin{align*}
    \pi_t^*(a \mid s) &= \frac{ q_t(a \mid s) \exp \left( \frac{\tilde Q_\eta(\mu, t, s, a)}{\eta} \right) }{ \sum_{a' \in \mathcal A} q_t(a' \mid s) \exp \left( \frac{\tilde Q_\eta(\mu, t, s, a')}{\eta} \right) }
\end{align*}
which is the desired result as $\pi^*$ fulfills all constraints and determines $\rho$ uniquely. For the uniform prior $q_t(a \mid s) = 1/|\mathcal A|$, we obtain the maximum entropy solution.

\end{document}